\newcommand\tom[1]{}
\newcommand\tominline[1]{}
\newcommand{\isempty}[1]%
{
  \ifthenelse{\equal{#1}{}}%
    {EMPTY}
    {FULL, it contains the string '#1'}
}
\tikzset{every arrow/.append style = -{Computer Modern Rightarrow[]}}
\tikzset{
  labelslt/.style n args={2}{%
    labelat={[left]{$\scriptstyle #1$}}{.45},%
    labelat={[right]{$\scriptstyle #2$}}{.55}%
    } %
  }
\tikzset{
  labelsltat/.style n args={4}{%
    labelat={[left]{$\scriptstyle #1$}}{#3},%
    labelat={[right]{$\scriptstyle #2$}}{#4}%
    } %
  }
\tikzset{twoof/.style={twocenter={#1}}}
\newcounter{trou}[figure]
\newcommand{\psh}[1][ℂ]{\widehat{#1}}
\newcommand{\xto}[1]{\xrightarrow{#1}}
\newcommand{\xinto}[1]{\xarrow[into]{#1}}
\newcommand{\xonto}[1]{\xarrow[onto]{#1}}
\newcommand{\xot}[1]{\xleftarrow{#1}}
\newcommand{\restr}[2]{{#1}_{{|}#2}}
\newcommand{\mrestr}[2]{{#1}_{{↾}#2}}
\newcommand{\wbotright}[1]{{#1}^⋔}
\newcommand{\wbotleft}[1]{{{}^⋔{#1}}}
\newcommand{\wbotrightleft}[1]{\wbotleft{(\wbotright{#1})}}
\newcommand{\source}{𝐬}
\newcommand{\labels}{𝐥}
\newcommand{\but}{𝐭}
\newcommand{\op}[1]{{#1}^{\mathit{op}}}
\DeclareMathOperator{\alg}{-\,\mathbf{alg}}
\DeclareMathOperator{\Alg}{-\,\mathbf{Alg}}
\DeclareMathOperator{\Mod}{-\,\mathbf{Mod}}
\DeclareMathOperator{\Trans}{-\mathbf{Trans}}
\DeclareMathOperator{\id}{id}
\DeclareMathOperator{\el}{el}
\DeclareMathOperator{\colim}{colim}
\DeclareMathOperator{\im}{im}
\DeclareTextMath\tmlambda[lambda]{\lambda}
\newcommand{\ajustedroit}[2][1]{\adjustbox{max width=#1\columnwidth,max height=.95\textheight}{#2}}%
\newcommand{\alert}[1]{\textbf{#1}}
\NewDocumentCommand{\doGammaInv}{mm}{%
  \reflectbox{$\m@th#1\Gamma$}%
}
\newtheorem{terminology}{Terminology}
\definecolor{bois}{rgb}{.5,0,0}
\definecolor{vert}{rgb}{0,0.6,0}
\definecolor{rouge}{rgb}{0.8,0,0}
\definecolor{violet}{rgb}{0.8,0,0.4}
\definecolor{marron}{rgb}{0.4,0.4,0}
\definecolor{bleu}{rgb}{0,0,0.6}
\newcommand{\enhanced}{enhanced\xspace}
\newcommand{\Enhanced}{Enhanced\xspace}
\newcommand{\anenhanced}{an enhanced\xspace}
\newcommand{\Fwow}[1][X]{Σ₀^{{+};{∼_{#1}}}}
\keywords{syntax ;  variable binding ; substitution ; category theory}
\newcommand{\citet}[2][]{\ifx&#1&\cite{#2}\else\cite[#1]{#2}\fi}
\newcommand{\augmented}{enhanced\xspace}
\newcommand{\augmentedness}{enhancedness\xspace}
\newcommand{\anaugmented}{an enhanced\xspace}
\newtheorem{notation}{Notation}
\crefname{enumi}{}{}
\Crefname{enumi}{}{}
\title{A more general categorical framework for congruence of
  applicative bisimilarity} 
\author{Tom Hirschowitz}{Univ. Savoie Mont Blanc, CNRS, LAMA, \\
  73000, Chambéry, France}{}{https://orcid.org/0000-0002-7220-4067}{}
\author{Ambroise Lafont}{University of Cambridge, United Kingdom}{}{https://orcid.org/0000-0002-9299-641X}{}
\authorrunning{T. Hirschowitz and A. Lafont} 
\keywords{applicative bisimilarity, higher-order languages, congruence, category theory}
\begin{document}

\maketitle

\begin{abstract}
  We prove a general congruence result for bisimilarity in
  higher-order languages, which generalises previous
  work~\cite{BHL,HL} to languages specified by a labelled transition
  system in which programs may occur as labels, and which may rely on
  operations on terms other than capture-avoiding substitution. This
  is typically the case for PCF, $λ$-calculus with delimited
  continuations, and early-style bisimilarity in higher-order
  process calculi.
\end{abstract}

\section{Introduction}\label{s:intro}
General congruence results for bisimilarity based on category theory
date back at least to Turi and Plotkin's seminal
paper~\cite{plotkin:turi:bialgebraic}, which covers labelled
transition systems in a categorical version of the Positive GSOS
format~\cite{GSOS}.  The result was then extended to languages with
variable binding and renaming like the
$π$-calculus~\cite{DBLP:conf/lics/FioreT01,DBLP:conf/lics/Staton08}.
More recently, Borthelle et al.~\cite{BHL,HL} managed to deal with a
 wider class of languages, whose operational semantics may rely
not only on renaming, but also on capture-avoiding substitution.

However, their result fails to cover significant languages to which
Howe's method has been adapted, such as (variants of)
PCF~\cite{DBLP:journals/tcs/Gordon99}, $λ$-calculus with delimited
continuations~\cite{DBLP:conf/lfp/DanvyF90,DBLP:conf/fossacs/BiernackiL12},
or (early-style) higher-order process
calculi~\cite{DBLP:books/daglib/0004377,DBLP:conf/concur/LengletS15}.
The reason Borthelle et al.'s framework does not cover such
applications is that they are specified by  labelled transition
systems
\begin{itemize}
\item in which programs may occur as labels, or
\item which rely on operations on terms other than capture-avoiding
  substitution.
\end{itemize}

In this paper, we extend Borthelle et al.'s result to such languages,
which requires a non-trivial extension of the proof method,
essentially abstracting over ideas from
Bernstein~\cite{DBLP:conf/lics/Bernstein98}.

We introduce \alert{algebraic transition systems}, which model
transition systems whose vertices (=states) bear some algebraic
structure, and which may have arbitrary vertices as labels.  For such
transition systems, we define \alert{\enhanced bisimilarity} as an
abstract counterpart to applicative bisimilarity

We introduce \alert{operational signatures}, which allow us to
generate algebraic transition systems of interest, including all
above-mentioned languages. Following initial-algebra
semantics~\cite{goguen1974initial}, an operational signature specifies
algebraic structure and transition rules, and, in applications, the
initial object in the category of models of an operational signature
is the desired syntactic transition system.

Finally, we prove (Theorem~\ref{thm:main}) that, under suitable
conditions, \enhanced bisimilarity in the algebraic transition system
generated by an operational signature is a congruence for the
considered algebraic structure.  We also exhibit
(Theorem~\ref{thm:cellular}) sufficient conditions that are easier to
check in practice.  This covers all above-mentioned applications,
except higher-order process calculi, whose operational signatures fail
to satisfy the required conditions.

\subsection*{Related work}
Beyond Borthelle et al.~\cite{HL}, which was discussed above, the most
closely related work is Goncharov et al.'s~\cite{10.1145/3571215}
bialgebraic framework for higher-order operational semantics. They
upgrade Turi and Plotkin's~\cite{plotkin:turi:bialgebraic} original
presentation of operational semantics as a natural transformation into
a dinatural transformation, which allows them to cover transitions
with programs as labels.  Their main applications are strong variants
of applicative bisimilarity for pure $λ$-calculus (call-by-name and
call-by-value). In its current state, their framework cannot handle
non-deterministic computation, hence in particular weak variants of
bisimilarity.

\subsection*{Plan}
We start in~\cref{ssoverview} with an overview of the development.
In~\cref{serguei}, we then present our running example, which will be
used as the basis of our abstraction process.  We then introduce our
abstract notions of transition systems (\cref{ltss}), and algebraic
transition systems (\cref{atss}), together with bisimilarity and its
\enhanced variant.  Finally, we introduce operational signatures and
state our main results in~\cref{s:howecontexts}, and conclude
in~\cref{sconclu}.

\subsection*{Prerequisites and notations}
We assume some basic knowledge of category theory~\cite{MacLane:cwm},
notably including factorisation systems and monad distributive
laws~\cite{BeckDistlaws}.  Additionally, we rely in places on locally
presentable categories~\cite{Adamek}, but this may be taken as
technical, and ignored on a first reading.  We often conflate natural
numbers $n$ with sets $\{1,…,n\}$.  We denote by $\mathbb{n}$ the
corresponding ordinal viewed as a category, so that, e.g., $𝒞^𝟚$ is
the usual category of morphisms in $𝒞$.  We let $𝐂𝐀𝐓$ denote the
category of locally small categories. Moreover, we denote by $\psh$
the category of (contravariant) presheaves over a given category $ℂ$,
and by $𝐲∶ ℂ → \psh$ the Yoneda embedding.  Furthermore, we recall
that endofunctor algebras differ from monad algebras.  (A monad
algebra structure must be suitably compatible with unit and
multiplication.)  We write $F\alg$ for endofunctor algebras, and
$T\Alg$ for monad algebras (capital `A'!).  Finally, for any
endofunctor $F$ on a sufficiently nice category, e.g., a presheaf
category, we write $F^*$ for the free monad on $F$, which is
furthermore \alert{algebraically free} in the sense that
$F\alg ≅ F^*\Alg$.

\section{Overview}\label{ssoverview}
The development roughly follows~\cite{HL}. We summarise it here,
emphasising the differences.  Our running example throughout is a pure
$λ$-calculus with delimited
continuations~\cite{DBLP:conf/fossacs/BiernackiL12}.

\subsection{Transition systems}
Let us first sketch our notion of transition system, starting from
graphs.  Consider the diagonal functor $Δ∶ 𝐒𝐞𝐭 → 𝐒𝐞𝐭$, defined by
$Δ(X) = X²$.  A graph consists of two sets $E$ and $V$, equipped with
two maps $E → V$, or equivalently a map $E → Δ(V)$.

Hirschowitz and Lafont~\cite{HL} propose a ``typed'' generalisation:
they postulate a category $𝕍𝕋$ of \alert{vertex types}, a category
$𝔼𝕋$ of \alert{edge types}, and two functors $𝐬,𝐭∶ 𝔼𝕋 → 𝕍𝕋$
associating to each edge type the types of its source and target.  A
transition system in their sense consists of a \alert{vertex object}
$V$ in $\psh[𝕍𝕋]$, a \alert{edge object} $E ∈ \psh[𝔼𝕋]$, and a
morphism $E → Δ(V)$, where $Δ∶ \psh[𝕍𝕋] → \psh[𝔼𝕋]$ maps any $V$ to
$Δ(V)(α) = V(𝐬(α)) × V(𝐭(α))$, for all $α ∈ 𝔼𝕋$.
Taking $𝕍𝕋 = 𝔼𝕋 = 1$, one recovers plain graphs.

In this paper, in order to account for labels, we generalise this by
adding a functor $𝐥$ associating to each edge type $α ∈ 𝔼𝕋$ a sequence
$𝐥(α) = (𝐥^α₁,…,𝐥^α_{n_α})$ of vertex types.  A tuple $(𝕍𝕋,𝔼𝕋,𝐬,𝐭,𝐥)$
is called a \alert{Howe context}.  Let us fix one for the rest of this
section.

We modify $Δ$ accordingly, defining it by
$$Δ(V)(α) \qquad = \qquad V(𝐬(α)) \quad × \quad \left ( ∏_{i=1}^{n_α} V(𝐥^αᵢ) \right ) \quad × \quad V(𝐭(α)).$$
A transition system again consists of objects $V ∈ \psh[𝕍𝕋]$ and
$E ∈ \psh[𝔼𝕋]$, together with a morphism $E → Δ(V)$, which means that,
to each edge, we associate a source, a target, and a sequence of
labels of suitable types.  For such transition systems, we define a
generalisation of bisimulation, straightforwardly.

\subsection{Algebraic transition systems and \enhanced bisimilarity}\label{ssalts}
Let us now briefly explain the notion of algebraic structure that we
adopt. Following Fiore et
al.~\cite{fiore:presheaf,DBLP:conf/lics/Fiore08}, Borthelle et
al.~\cite{BHL,HL} use $Σ$-monoids, which are designed to model syntax
with substitution.  In this paper, relying on Hirschowitz and
Lafont~\cite{admissible}, we adopt a different notion of algebraic
structure designed to cover syntax with more general additional operations.
\begin{definition}\label{def:enhanced:syntax}
  An \alert{\enhanced syntax} (on $\psh[𝕍𝕋]$) consists of
\begin{itemize}
\item finitary functors $Σ∶ \psh[𝕍𝕋] → \psh[𝕍𝕋]$ and
  $Γ∶ \psh[𝕍𝕋]² → \psh[𝕍𝕋]$ such that $Γ$ is
  \alert{left-cocontinuous}, i.e., cocontinuous in its first argument,
  equipped with
\item  a distributive
  law $δ∶ T ∘ S → S ∘ T$, where $S = Σ^*$ denotes the monad freely
  generated by $Σ$ and $T = Γ_S^*$ the one generated by $X ↦ Γ(X,S(X))$.
\end{itemize}
\end{definition}
Here, $Σ$ models basic syntax, and $Γ$ models additional operations
like substitution. The fact that $Γ$ is a bifunctor is for
distinguishing a ``main'' occurrence in its arity, which is used below
in the definition of \enhanced bisimilarity.  The distributive law
models commutation of additional operations with basic ones, at the
main occurrence (typically $(M\ N)[σ] = M[σ]\ N[σ]$).

Following initial-algebra semantics~\cite{goguen1974initial}, the main
object of interest here is the initial $Σ$-algebra $S(∅)$, and the
main point is that it automatically possesses $T$-algebra structure,
given by the composite $T (S (∅)) \xto{δ_∅} S (T (∅)) ≅ S (∅)$ (the
initial object is a $T$-algebra by cocontinuity, hence $T(∅) ≅
∅$). This algebra structure in fact makes $S(∅)$ into an initial
algebra for the composite monad $ST$.

Fixing some \enhanced syntax $σ = (Σ,Γ,δ)$, for us, an algebraic
transition system is thus a transition system $E → Δ(V)$, equipped
with $ST$-algebra structure on $V$.  We call such transition systems
\alert{$σ$-algebraic}.

Finally, for any $σ$-algebraic transition system $G = (E,V,∂)$, we
define \alert{\enhanced bisimilarity}, denoted by $∼^σ_G$, as the
greatest bisimulation $R$ which is \alert{\enhanced}, in the sense
that $Γ(R,V) ⊆ R$ -- this is where we use the fact that $Γ$ is a
bifunctor. In concrete instances, as noticed by Borthelle et
al.~\cite{BHL,HL}, \enhanced bisimilarity agrees with applicative
bisimilarity.

The goal is then to prove that, in algebraic transition systems $G$ of
interest, \enhanced bisimilarity $∼^σ_G$ is a \alert{congruence},
i.e., $Σ(∼^σ_G) ⊆ {∼^σ_G}$.

\subsection{Operational signatures}
For this, we restrict attention to algebraic transition systems
generated by a suitable notion of \alert{operational signature}, which
we now describe.  Operational signatures comprise two components, one
for generating \anenhanced syntax, the other for specifying transition
rules.

\begin{definition}\label{def:syntacticsig}
  A \alert{syntactic signature} is an endofunctor $Σ$
  equipped with a sequence
  \begin{equation}
    \hfil T₀ = \id \xto{(Γ₁,d₁)} T₁ \quad … \quad T_{n-1} \xto{(Γₙ,dₙ)} Tₙ\label{eq:syntacticsig}
  \end{equation}
  of \alert{incremental structural laws}~\cite{admissible}.  An
  incremental structural law $T → T'$ consists of a finitary,
  left-cocontinuous bifunctor $Γ∶ \psh[𝕍𝕋]² → \psh[𝕍𝕋]$, together with
  a natural transformation
  $d_{X,Y}∶ Γ (Σ (X),Y) → S (T (Γ (X,S (T (Y))) + X + Y))$, such that
  $T' = T ⊕ Γ_S^*$, where $⊕$ denotes monad coproduct.
\end{definition}
In examples, a natural transformation $d_{X,Y}$ amounts to a
definition by structural recursion, where the first argument of $Γ$
models the decreasing occurrence of the argument, and the second
argument models other occurrences.  Given any syntactic
signature~\cref{eq:syntacticsig}, the given incremental structural
laws induce distributive laws $δᵢ∶ Tᵢ ∘ S → S ∘ Tᵢ$, hence in
particular $δₙ∶ Tₙ ∘ S → S ∘ Tₙ$, and furthermore we have
$Tₙ = (∑ᵢ Γᵢ)_S^*$. Thus, letting $𝐝$ denote the given syntactic
signature, the triple $σ(𝐝) = (Σ,∑ᵢ Γᵢ,δₙ)$ forms \anenhanced syntax.
As a bonus, one can show that algebras for the composite monad $S Tₙ$
are equivalently objects equipped with suitably coherent algebra
structure for $Σ$ and each functor $X ↦ Γᵢ(X,X)$.

The next step is to specify the dynamics of algebraic transition
systems of interest. This is done by introducing \alert{dynamic
  signatures}. Roughly, a dynamic signature over \anenhanced syntax
$σ$ is an endofunctor on $σ$-algebraic transition systems, which is
required to preserve the vertex object and satisfy a suitable
``structuralness'' condition inspired by structural operational
semantics~\cite{PlotkinSOS}. Intuitively, a dynamic signature $Σ₁$ is
a family of transition rules, and structuralness demands that, in each
transition rule, the source of the conclusion has depth at most one.

Pursuing the analogy, $σ$-algebraic transition systems satisfying the
rules are a special kind of $Σ₁$-algebras which we call
\alert{vertical}. Verticality means that the algebra structure is
trivial on vertices: this enforces that satisfying the rules is only
about edges, not vertices. 

Finally, an \alert{operational signature} consists of a syntactic
signature $𝐝$, and a dynamic signature $Σ₁$ on $σ(𝐝)$.  The real
object of interest is here the initial vertical $Σ₁$-algebra, say
$𝐙 = (E_𝐙,V_𝐙,∂_𝐙)$, which in applications is the desired syntactic
transition system.

\subsection{Congruence of \enhanced bisimilarity}
Our goal is then to prove that, under suitable hypotheses, \enhanced
bisimilarity $∼^{σ(𝐝)}_𝐙$ in the initial vertical $Σ₁$-algebra is a
congruence.  For this, abstracting over
Bernstein's~\cite{DBLP:conf/lics/Bernstein98} proof, we start by
defining \alert{flexible bisimulation}, a variant of Sangiorgi's
BA-bisimulation~\cite{DBLP:conf/fsen/SangiorgiKS07}.  Flexible
bisimulation is like plain bisimulation: given related elements $e$
and $e'$, any transition from $e$ should be matched by some transition
from $e'$. The difference is that, instead of having the same label,
the matching transition should exist for any related label.  Defining
\alert{functional flexible bisimulations} to be morphisms of algebraic
transition systems whose graph is a flexible bisimulation, our main
result (Theorem~\ref{thm:main}) states that if the dynamic
signature $Σ₁$ preserves functional flexible bisimulations, then
$∼^{σ(𝐝)}_𝐙$ is a congruence.

Finally, preservation of functional flexible bisimulations is quite an
abstract condition, so we set out to design a more concrete criterion
for making the result easier to apply.  In fact, if the considered
dynamic signature $Σ₁$ is
\alert{familial}~\cite{Diers1978Spectres,DBLP:journals/mscs/CarboniJ95,WeberPhD,garner:hal-01246365},
then preservation of functional flexible bisimulations becomes quite
tractable, as we now explain.  Following Joyal et
al.~\cite{DBLP:conf/lics/JoyalNW93}, we first characterise functional
flexible bisimulations as the right class of a weak factorisation
system~\cite{Hovey,riehl} -- we call the left class
\alert{cofibrations}.  Furthermore, when the dynamic signature is
familial, a transition rule with conclusion of type any $α$, is
intutively an element of $Σ₁(1)(α)$, and we extract for each rule two
algebraic transition systems $A$ and $B$, and a morphism $ϕ∶ A → B$,
such that, intuitively, $A$ describes the metavariables occurring in
the source and label of the conclusion, $B$ describes all
metavariables in the rule, including transition premises, and $ϕ$
embeds the former into the latter. We call $ϕ$ the \alert{border
  arity} of the rule.  The main point is then that a familial $Σ₁$
preserves functional flexible bisimulations iff all border arities are
cofibrations (\cref{thm:cellular}).  How is this any more concrete?
Well, cofibrations are well-known to be closed under composition and
cobase change, so in order to check preservation of functional
flexible bisimulations, it suffices to reconstruct the border arity of
each rule from generating cofibrations, by composition and cobase
change.  This reconstruction process is close in spirit to usual
acyclicity
criteria~\cite{DBLP:journals/iandc/Howe96,DBLP:conf/lics/Bernstein98}.
\begin{example}\label{ex:graphborderarity}
  Taking algebraic transition systems to be just plain graphs, for a
  rule like $\inferrule{a → b \\ b → c}{a → c}$, $A$ would be the
  one-vertex graph, $B$ would consist of two composable edges
  $x → y → z$, and $ϕ$ would pick $x$.  To check that it is a
  cofibration, we reconstruct it as the bottom composite in
\qquad    \Diag|baseline=(m-1-2.base)|{%
      \pbk{m-2-2}{m-2-3}{m-1-3} %
    }{%
      \& {[0]} \& {[1]} \\
      {\llap{$A = {}$} [0]} \& {[1]} \& B.
    }{%
      (m-1-2) edge[labela={s}] (m-1-3) %
      edge[labell={}] (m-2-2) %
      (m-2-2) edge[labela={}] (m-2-3) %
      (m-1-3) edge[labelr={}] (m-2-3) %
      (m-2-1) edge[labela={s}] (m-2-2) %
    }
\end{example}
As an application, we recover congruence of applicative bisimilarity in
the considered $λ$-calculus with delimited
continuations~\cite{DBLP:conf/fossacs/BiernackiL12}.

\section{A concrete example}\label{serguei}
As a concrete example result that we want to abstract over, let us
recall the case of $λ$-calculus with delimited continuations.  We
present it in a non-standard way in order for it to fit the abstract
framework.  Indeed, the framework is based on \alert{structural}
operational semantics~\cite{PlotkinSOS}, in the sense that, in each
transition rule, the source of the conclusion has depth at most
one. Following~\cite{BHL,HL}, we also present the definition of the
open extension of applicative bisimilarity to make it compatible with
the abstract developments to come.

The syntax, presented in the usual, informal way, is as below left,
\begin{align}
  \mbox{Values} ∋ v & ::=  x  ｜  λx.e                         &          ▫[e] & = e  \label{eq:wbox} \\
  \mbox{Programs} ∋ e & ::=  v  ｜  e₁\ e₂  ｜  𝒮x.e  ｜  ⟨e⟩  &     (v\ E)[e] & =  v\ E[e] \label{eq:appv} \\
  \mbox{Evaluation contexts} ∋ E & ::=  ▫  ｜ E\ e  ｜  v\ E          &      (E\ e')[e] & =  E[e]\ e'. \label{eq:appe}
\end{align}
where $x$ binds in $e$, in both $λx.e$ and $𝒮x.e$.
Capture-avoiding substitution and context application are defined as
usual. E.g., context application is defined as above right.  The
dynamics are governed by the rules in~\cref{fig:trans:shiftreset}.
\begin{figure}[htp]
\begin{mathpar}
  \inferrule{e₁ \xto{v} e₂}{e₁\ v \xto{τ} e₂}~(β') \and %
  \inferrule{ }{λx.e \xto{v} e[x↦v]} \and
  \inferrule{e₁ \xto{τ} e'₁}{e₁\ e₂ \xto{τ} e'₁\ e₂}
  \and
  \inferrule{e₂ \xto{τ} e'₂}{v\ e₂ \xto{τ} v\ e'₂} \and %
  \inferrule{ }{⟨v⟩ \xto{τ} v} \and
  \inferrule{e \xto{τ} e'}{⟨e⟩ \xto{τ} ⟨e'⟩} \and
  \inferrule{e \xto{▫} e'}{⟨e⟩ \xto{τ} e'} \and
  \inferrule{e₁ \xto{E[▫\ e₂]} e₃}{e₁\ e₂ \xto{E} e₃} \and
  \inferrule[(SA)]{e₁ \xto{E[v\ ▫]} e₂}{v\ e₁ \xto{E} e₂} \and
  \inferrule{ }{𝒮k.e \xto{E} ⟨e [ k ↦ λx.⟨E[x]⟩ ]⟩} \and
  \inferrule{ }{e \xto{τ} e} \and
  \inferrule{e₁ \xto{τ} e₂ \xto{α} e₃}{e₁ \xto{α} e₃} \and
  \inferrule{e₁ \xto{α} e₂ \xto{τ} e₃}{e₁ \xto{α} e₃} \and %
  \mbox{Deriving $β$:}\quad \inferrule{
    \inferrule*{
    }{λx.e \xto{v} e[x↦v]         }
  }{(λx.e)\ v \xto{τ} e[x↦v]}~(β')   
\end{mathpar}     
\caption{Transition rules}
\label{fig:trans:shiftreset}
\end{figure}
There are three kinds of transitions, of types $e \xto{τ} e'$,
$e \xto{v} e'$, $e \xto{E} e'$, where all expressions are closed.  The
first four rules deal with functions. The first two rules suffice to
make (β) derivable, as shown in~\cref{fig:trans:shiftreset}. The next
two rules are the usual context rules.  The last three rules, where
$α$ ranges over all labels, enforce that we work with weak
bisimulation: we close transitions under composition with silent
transitions.  The remaining rules describe the dynamics of $𝒮x.e$ and
$⟨e⟩$, which are respectively called \alert{shift} and \alert{reset}.
The first two of them enforce that silent computation occurs normally
inside any reset, and if it succeeds, i.e., if it results in a value,
then the reset disappears.  The next rules describe how shift captures
the ambient context up to the enclosing reset, say $E$, and
substitutes its reification $λk.⟨E[k]⟩$ as a value for the bound
variable, placing a new reset around the result.

Bisimulation is then as expected:
\begin{definition}
  A binary relation $R$ between closed programs is a
  \alert{simulation} iff for all $e \mathrel{R} e'$ and transitions
  $e \xto{α} e₁$, there exists a transition $e' \xto{α} e'₁$ such
  that $e₁ \mathrel{R} e'₁$.  A \alert{bisimulation} is a
  simulation whose converse relation also is a simulation.
\end{definition}
\begin{definition}[{\cite{BHL,HL}}]
  A relation $R$ on potentially open expressions is \alert{\enhanced}
  iff it is closed under substitution, context composition, and
  context application, i.e., $a \mathrel{R} a'$ entails
  $a[σ] \mathrel{R} a'[σ]$ for all substitutions $σ$,
  $E \mathrel{R} E'$ entails $E[e] \mathrel{R} E'[e]$ and
  $E[E''] \mathrel{R} E'[E'']$, for all $e$ and $E''$.

  An \alert{\enhanced bisimulation} is \anenhanced relation $R$ whose
  restriction to closed programs is a bisimulation.
\end{definition}
\begin{proposition}\label{prop:enhancedbisim}
  There is a largest \enhanced bisimulation, called
  \alert{applicative bisimilarity}.
\end{proposition}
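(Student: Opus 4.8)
The plan is to exhibit the largest \enhanced bisimulation as the union of all \enhanced bisimulations, following the standard coinductive recipe. Concretely, let $\mathcal{F}$ be the family of all \enhanced bisimulations; this family is non-empty, since the empty relation qualifies vacuously. Set $R = \bigcup_{S ∈ \mathcal{F}} S$. I would then verify that $R$ is itself \anenhanced bisimulation; since it visibly contains every member of $\mathcal{F}$, it is automatically the largest, and we may name it applicative bisimilarity. The whole verification rests on a single elementary observation: whenever $a \mathrel{R} a'$, there is, by definition of the union, some $S ∈ \mathcal{F}$ with $a \mathrel{S} a'$, and any witness or closure property supplied by $S$ then holds a fortiori in the larger relation $R$.

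For \emph{enhancedness}, suppose $a \mathrel{R} a'$ and pick $S ∈ \mathcal{F}$ with $a \mathrel{S} a'$. Since $S$ is closed under substitution, $a[σ] \mathrel{S} a'[σ]$ for every $σ$, hence $a[σ] \mathrel{R} a'[σ]$; the two clauses for context composition and context application are handled in exactly the same way. Thus $R$ is \enhanced.

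For the \emph{bisimulation} condition on closed programs, I would first note that restriction to closed expressions commutes with union, so that the closed restriction of $R$ equals the union of the closed restrictions of the members of $\mathcal{F}$, each of which is a bisimulation. It then suffices to check that a union of bisimulations is a bisimulation. Given closed $e \mathrel{R} e'$ and a transition $e \xto{α} e₁$, choose $S$ with $e \mathrel{S} e'$; as $S$ restricted to closed programs is a simulation, there is a matching transition $e' \xto{α} e'₁$ with $e₁ \mathrel{S} e'₁$, whence $e₁ \mathrel{R} e'₁$. The converse-simulation clause is symmetric, using that each $S$ is a bisimulation and not merely a simulation.

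The argument is essentially routine, and there is no genuinely hard step; the only point requiring a little care---and the closest thing to an obstacle---is the interaction between the two halves of the definition. The \enhanced condition constrains \emph{potentially open} expressions, whereas the bisimulation requirement concerns only \emph{closed} programs, so I must keep these two quantifier ranges separate and observe that each is independently stable under unions, precisely because membership in the union always factors through membership in a single family member. Once this is made explicit, closure of the class of \enhanced bisimulations under arbitrary unions follows, and the existence of a largest element is immediate.
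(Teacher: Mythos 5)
Your proposal is correct and matches the paper's own argument: the paper proves the abstract counterpart (and the underlying Proposition~\ref{prop:bisimilarity}) by exactly this route, namely that (\enhanced) bisimulations are stable under unions, so the union of all of them is the largest one. Your explicit verification that each closure clause factors through a single member of the family is the concrete instance of that stability argument.
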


The result that we want to abstract over is:
\begin{theorem}[generalised variant of {\cite[Theorem 1]{DBLP:conf/fossacs/BiernackiL12}}]\label{thm:serguei}
  Applicative bisimilarity is a \alert{congruence}, in the sense that
  it is preserved by all constructions of the language.
\end{theorem}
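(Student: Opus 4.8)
The plan is to prove Theorem~\ref{thm:serguei} by an instance of \emph{Howe's method}, which is the standard tool for congruence of applicative bisimilarity. Since applicative bisimilarity $\sim$ is already \enhanced (hence closed under substitution, context composition, and context application) by construction, the only thing left to establish is that it is closed under the syntactic constructors $\lambda x.e$, application, shift, and reset. First I would define the \alert{Howe closure} $\widehat{\sim}$ of $\sim$ as the least relation on potentially open expressions that is \enhanced, contains $\sim$, and is compatible with all syntactic operations, built inductively by the usual rule $\inferrule{a \mathrel{\widehat{\sim}} b \\ b \mathrel{\sim} c}{a \mathrel{\widehat{\sim}} c}$ together with one closure clause per constructor. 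By construction $\widehat{\sim}$ is reflexive, contains $\sim$, is a congruence, and is \enhanced; moreover it absorbs $\sim$ on the right, i.e.\ ${\widehat{\sim}} \circ {\sim} \subseteq {\widehat{\sim}}$. The goal then reduces to showing $\widehat{\sim} \subseteq {\sim}$ on closed programs, for then $\widehat{\sim}$ and $\sim$ coincide, and the congruence property of $\widehat{\sim}$ transfers to $\sim$.

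The heart of the argument is the \alert{key lemma}: the restriction of $\widehat{\sim}$ to closed programs is a simulation. I would prove this by induction on the derivation of a transition $e \xto{\alpha} e_1$, assuming $e \mathrel{\widehat{\sim}} e'$, and exhibiting a matching transition $e' \xto{\alpha} e'_1$ with $e_1 \mathrel{\widehat{\sim}} e'_1$. This requires a case analysis following the transition rules of~\cref{fig:trans:shiftreset}. The $\beta'$ and abstraction rules use the substitution-closure of $\widehat{\sim}$ (the substitutivity clause $a \mathrel{\widehat{\sim}} a'$ implies $a[\sigma]\mathrel{\widehat{\sim}}a'[\sigma]$); the two context rules for application use compatibility with $\app$; and the reset rules use compatibility with the reset constructor and the $\widehat{\sim}\circ\sim$ absorption to close the matching step back up. The silent-transition rules (reflexivity, and the two composition-with-$\tau$ rules) are handled by transitivity bookkeeping inherent to the weak setting.

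The genuinely delicate cases, and the place I expect the main obstacle, are the shift rule $\mathcal{S}k.e \xto{E} \langle e[k \mapsto \lambda x.\langle E[x]\rangle]\rangle$ and the two rules $(SA)$ and $e_1 \xto{E[\square\ e_2]} e_3$ that reify evaluation contexts as labels. Because programs (here, the context $E$ and its reification) appear \emph{inside labels} and must themselves be matched up to bisimilarity, the ordinary Howe argument does not go through directly: matching a transition with label $E$ against a related label $E'$ requires $\widehat{\sim}$ to interact correctly with context application and with the substitution $k \mapsto \lambda x.\langle E[x]\rangle$. This is exactly the feature that~\cite{HL} could not handle and that the abstract machinery of this paper (\enhanced syntax, the distributive law $\delta$, and \alert{flexible} rather than plain bisimulation) is engineered to address. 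So I would set up the Howe closure so that it is simultaneously closed under context application and context composition, verify the substitution lemma for the reified-context substitution, and then discharge the shift/context cases using these enhanced-closure properties.

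Once the key lemma is established, the finish is routine: the restriction of $\widehat{\sim}$ to closed programs is a simulation, and by a standard symmetrisation argument (taking the transitive-symmetric closure, or showing $\widehat{\sim}^{\mathit{op}}$ also simulates) it refines into an \enhanced bisimulation. By maximality of applicative bisimilarity (\cref{prop:enhancedbisim}) we get $\widehat{\sim} \subseteq {\sim}$ on closed programs, whence $\widehat{\sim} = {\sim}$, and since $\widehat{\sim}$ is a congruence by construction, so is $\sim$. The point of the rest of the paper is precisely to replace this hand-crafted, case-by-case Howe argument by a single abstract theorem (Theorem~\ref{thm:main}), so that the delicate shift/context reasoning is absorbed into the condition that the dynamic signature preserve functional flexible bisimulations.
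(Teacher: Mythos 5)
Your overall strategy (a hand-crafted Howe's-method argument) is a legitimate but genuinely different route from the paper's: the paper proves Theorem~\ref{thm:serguei} by exhibiting the calculus as an instance of an operational signature (\cref{ex:srcii,ex:srciii,ex:srciv}) and then invoking Theorem~\ref{thm:main} together with Theorem~\ref{thm:cellular} and Proposition~\ref{prop:cx}, so that the only calculus-specific work is reconstructing each rule's border arity as a cofibration (done explicitly for rule (SA)). Your plan is essentially a concrete unfolding of what the paper's appendix does abstractly, and is close in spirit to Biernacki and Lenglet's original proof; that is a perfectly reasonable thing to attempt, and it buys independence from the categorical machinery at the cost of a long case analysis.

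There is, however, a genuine gap in your key lemma. As stated --- assuming $e \mathrel{\widehat{\sim}} e'$ and a transition $e \xto{\alpha} e_1$, exhibit a matching transition $e' \xto{\alpha} e'_1$ with the \emph{same} label $\alpha$ --- the induction does not close. Already in the $(\beta')$ case: from $e_1\ v \mathrel{\widehat{\sim}} f$ one unpacks (via the Howe-closure clause for application and right-absorption of $\sim$) some $e_1'\ v'$ with $e_1 \mathrel{\widehat{\sim}} e_1'$ and $v \mathrel{\widehat{\sim}} v'$; the induction hypothesis applied to the premise $e_1 \xto{v} e_2$ only yields $e_1' \xto{v} e_2'$ with the \emph{old} label $v$, whereas rebuilding the conclusion for $e_1'\ v'$ requires $e_1' \xto{v'} e_2'$. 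The same mismatch occurs in (SA), in the $E[\square\ e_2]$ rule, and in the shift rule. The correct induction hypothesis is the \emph{flexible} (BA-) simulation statement: whenever $e \mathrel{\widehat{\sim}} e'$, $e \xto{\alpha(l)} e_1$, and $l \mathrel{\widehat{\sim}} l'$, there is a transition $e' \xto{\alpha(l')} e_1'$ with $e_1 \mathrel{\widehat{\sim}} e_1'$. You correctly diagnose that labels must be matched up to the Howe closure and even name flexible bisimulation as the relevant device, but you never reformulate the key lemma accordingly, and the delicate cases are deferred (``I would set up\ldots{} verify\ldots{} discharge'') rather than carried out. Since this strengthening --- together with the fact that a reflexive flexible bisimulation restricts to a genuine one (Proposition~\ref{prop:flexbisim:bisim} in the paper) --- is precisely the non-routine content of the proof, the proposal as written does not yet constitute one.
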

\begin{remark}
  It is not entirely trivial that this agrees with Biernacki and
  Lenglet's presentation.  In fact, their transition system only
  differs in that they replace rule $(β')$ with the standard rule
  $(β)$.  We have already seen that $(β)$ is derivable from $(β')$,
  and conversely $(β')$ is admissible in their transition system.
  Indeed, suppose given any transition $e₁ \xto{v} e₂$.  By an easy
  induction, there exist transitions
  $e₁ \xto{τ} λx.e₃ \xto{v} e₃[x↦v] \xto{τ} e₂$.  Hence, grouping
  saturation rules, we derive $(β)$ as follows.
  \begin{mathpar}
    \inferrule*{%
      \inferrule*{
      e₁ \xto{τ} λx.e₃
      }{
      e₁\ v \xto{τ} (λx.e₃)\ v         
        } \\
      (λx.e₃)\ v \xto{τ} e₃[x↦v] \\
      e₃[x↦v] \xto{τ} e₂
    }{%
      e₁\ v \xto{τ} e₂
      }
    \end{mathpar}
\end{remark}
Our problem is that this result is not an instance of Borthelle et
al.'s~\cite[Theorem~6.15]{HL}, because
the dynamics rely on two features that are not handled:
\begin{alphaenumerate}
\item \label{item:contapp} operations on terms, context application
  and composition, which differ from substitution, 
\item \label{item:contlab} and contexts and values occurring as
  labels.
\end{alphaenumerate}
For~\cref{item:contapp}, context application and composition might be
encodable in Borthelle et al.'s setting, perhaps by resorting to the
skew monoidal variant~\cite{BHL}. But this is quite artificial, and
requires extra work that should not be necessary.
For~\cref{item:contlab}, it appears to be a hard obstruction.

\section{Transition systems in the abstract}\label{ltss}
In this section, we start to abstract over the development
of~\cref{serguei}, by introducing a notion of labelled transition
system, together with its associated notion of bisimilarity.

\subsection{Howe contexts}
Let us start by formally introducing Howe contexts, as sketched
in~\cref{ssoverview}.
\begin{definition}\label{def:howcont}
  A \alert{Howe context} consists of
  \begin{itemize}
  \item a small category $𝕍𝕋$ of \alert{state types},
  \item a small category $𝔼𝕋$ of \alert{transition types},
  \item \alert{source} and \alert{target} functors
      $\source,\but∶ 𝔼𝕋 → 𝕍𝕋$, and
    \item a \alert{label} functor $\labels∶ 𝔼𝕋 → \psh[𝕍𝕋]$, such that
      each $\labels(c)$ is a finite coproduct of representables.
\end{itemize}
\end{definition}

  \begin{example}
    For plain graphs, we would take:
    \begin{itemize}
    \item $𝕍𝕋$ to be the terminal category, because there is just one
      kind of vertex,
    \item $𝔼𝕋$ to also be the terminal category, because there is just one
      kind of edge, 
    \item the source and target functors both are the unique functor
      $1 → 1$, and
    \item the label functor to map the unique object to the empty
      coproduct, i.e., $∅$.
    \end{itemize}
  \end{example}

  \begin{example}\label{ex:src}
    For modelling the transition system of~\cref{serguei}, we need a
    presheaf on $𝕍𝕋$ to be equivalent to a triple of functors
    $V_𝐩,V_𝐯,V_𝐜∶ 𝔽 → 𝐒𝐞𝐭$, where $𝔽$ denotes a skeleton of the
    category of finite sets, e.g., finite ordinals and all maps
    between them, equipped with a natural transformation
    $ι∶ V_𝐯 → V_𝐩$, or otherwise said to a functor $𝔽 → 𝐒𝐞𝐭^{1 + 𝟚}$.
    We think of $V_𝐩(n)$, $V_𝐯(n)$, and $V_𝐜(n)$ as sets of programs,
    values, and contexts with $n$ free variables, respectively.  For
    making this into a presheaf category, let us first observe that
    such tuples $(V_𝐩,V_𝐯,V_𝐜,ι)$ are precisely the objects of the
    oplax limit of the functor
    $Δ_{𝐲 in₁}∶ \psh[\op{𝔽}+\op{𝔽}] → \psh[\op{𝔽}]$ mapping any
    copairing $[V_𝐩,V_𝐜]$ to $V_𝐩$.  But, as we now recall, oplax
    limits of this form are equivalent to presheaf categories.
    \begin{definition}\label{def:collage}
      For any small categories $𝕏$ and $𝕐$, and functor
      $F∶ 𝕏 → \psh[𝕐]$, the \alert{collage} of $F$, denoted by
      $𝕐[𝕏]_F$, or merely $𝕐[𝕏]$ when $F$ is clear from context, has
      as objects the disjoint union of those of $𝕏$ and $𝕐$, and
      morphisms defined by cases as follows.
      \begin{center}
        $\begin{array}{rcl}
           𝕏[𝕐](x, x')  & = &  𝕏(x, x') \\
           𝕏[𝕐](y, y')  & = &  𝕐(y, y')
         \end{array}$ \hfil
         $\begin{array}{rcl}
            𝕏[𝕐](y, x)  & = &  F(x)(y) \\
            𝕏[𝕐](x, y)  & = &  ∅
          \end{array}$
        \end{center}
        Composition is defined as in $𝕏$ and $𝕐$ in both left-hand cases, and
        otherwise by action of $F$.
      \end{definition}
\begin{proposition}[{\cite[Lemma~4.9]{DBLP:journals/mscs/CarboniJ95}}]
  \label{prop:coll}
  For any small categories $𝕏$ and $𝕐$, and functor $F∶ 𝕏 → \psh[𝕐]$,
  letting $Δ_F(Y)(x) = \psh[𝕐](F(x),Y)$ denote the induced
  \alert{nerve} functor $\psh[𝕐] → \psh[𝕏]$, the oplax limit
  $\psh[𝕏]/Δ_F$ is equivalent to the category $\psh[𝕐[𝕏]]$ of
  presheaves on the collage of $F$.
\end{proposition}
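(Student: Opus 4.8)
The plan is to produce an explicit \emph{isomorphism} of categories between $\psh[{𝕐[𝕏]}]$ and the oplax limit $\psh[𝕏]/Δ_F$ (a comma category), by unbundling each presheaf on the collage into its two restrictions together with a single comparison morphism; the asserted equivalence then follows a fortiori. Let $I_𝕏∶ 𝕏 → 𝕐[𝕏]$ and $I_𝕐∶ 𝕐 → 𝕐[𝕏]$ be the two full embeddings supplied by \cref{def:collage} (full and faithful because the collage leaves the hom-sets of $𝕏$ and of $𝕐$ unchanged). For a presheaf $P ∈ \psh[{𝕐[𝕏]}]$, restriction along these inclusions gives $P_𝕏 ∈ \psh[𝕏]$ and $P_𝕐 ∈ \psh[𝕐]$. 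The one piece of $P$ not recorded by $P_𝕏$ and $P_𝕐$ is its action on the mixed hom-sets $𝕐[𝕏](y,x) = F(x)(y)$, which in $\op{(𝕐[𝕏])}$ point from $x$ to $y$; thus each $m ∈ F(x)(y)$ induces a function $P(m)∶ P_𝕏(x) → P_𝕐(y)$.

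First I would repackage this mixed action as a map $φ∶ P_𝕏 → Δ_F(P_𝕐)$ in $\psh[𝕏]$. Fixing $x$ and $p ∈ P_𝕏(x)$, the rule $(y,m) ↦ P(m)(p)$ is a natural transformation $F(x) → P_𝕐$, hence an element of $Δ_F(P_𝕐)(x) = \psh[𝕐](F(x),P_𝕐)$; its naturality in $y$ is exactly functoriality of $P$ on the composites $m ∘ h = F(x)(h)(m)$ of a mixed morphism with a $𝕐$-morphism. Setting $φ_x(p)$ to be this transformation defines $φ_x$, and naturality of $φ$ in $x$ is functoriality of $P$ on the composites $g ∘ m = F(g)_y(m)$ with an $𝕏$-morphism. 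The assignment $P ↦ (P_𝕏,P_𝕐,φ)$ is plainly functorial in $P$, yielding a functor $Φ∶ \psh[{𝕐[𝕏]}] → \psh[𝕏]/Δ_F$.

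For the inverse I would glue: from $(P_𝕏,P_𝕐,φ) ∈ \psh[𝕏]/Δ_F$, build $P$ agreeing with $P_𝕏$ on $𝕏$-objects and with $P_𝕐$ on $𝕐$-objects, with $P(m)(p) = φ_x(p)_y(m)$ for $m ∈ F(x)(y)$ and $p ∈ P_𝕏(x)$. The functoriality axioms for $P$ on $\op{(𝕐[𝕏])}$ fall into the four cases of \cref{def:collage}: the pure $𝕏$- and $𝕐$-cases hold because $P_𝕏$ and $P_𝕐$ are presheaves, whereas the two mixed cases are precisely the naturality of $φ$ in each of its variables. A short Yoneda computation shows that $Φ$ and this gluing are mutually inverse, on objects and on morphisms, so $Φ$ is an isomorphism of categories, giving the desired equivalence.

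I expect no deep obstacle: the whole content is the dictionary between the action on mixed morphisms and the comparison map $φ$, and the only point demanding care is to check that the collage's composition law -- which \cref{def:collage} specifies via the action of $F$ -- matches, on the nose, the two naturality squares of $φ$. Conceptually this is the familiar fact that $𝕐[𝕏]_F$ is the cograph of the profunctor $\op{𝕐} × 𝕏 → 𝐒𝐞𝐭$ transposing $F$, and that the presheaf construction carries this oplax colimit of small categories to the corresponding oplax limit (comma category) of presheaf categories; the explicit verification above is simply that principle spelled out, which is what I would write rather than invoke.
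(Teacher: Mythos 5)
Your proof is correct. The paper gives no proof of this proposition at all --- it is imported wholesale as Lemma~4.9 of Carboni and Johnstone --- so there is nothing to diverge from; your explicit unpacking of a presheaf on the collage into its two restrictions together with the comparison map $φ∶ P_𝕏 → Δ_F(P_𝕐)$, with the two mixed composition laws of the collage accounting precisely for the two naturality conditions on $φ$, is exactly the content of the cited lemma.
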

Now, the above functor $Δ_{𝐲 in₁}$ is indeed the nerve of
$\op{𝔽} \xto{in₁} \op{𝔽}+\op{𝔽} \xto{𝐲} \psh[\op{𝔽}+\op{𝔽}]$
since we have
$Δ_{𝐲 in₁}[V_𝐩,V_𝐜](n) = V_𝐩(n) = [V_𝐩,V_𝐜](in₁(n)) = 
\psh[\op{𝔽}+\op{𝔽}](𝐲 (in₁ (n)),[V_𝐩,V_𝐜])$. We obtain:
\begin{corollary}
  Letting $𝕍𝕋 = {(\op{𝔽}+\op{𝔽})[\op{𝔽}]_{𝐲 in₁}}$, we have
  $[𝔽,𝐒𝐞𝐭^{1 + 𝟚}] ≃ \psh[𝕍𝕋]$.
\end{corollary}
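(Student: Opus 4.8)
The plan is to present the corollary as a direct instance of Proposition~\ref{prop:coll}; the only genuine work is to recognise the functor category $[𝔽,𝐒𝐞𝐭^{1+𝟚}]$ as the oplax limit $\psh[\op{𝔽}]/Δ_{𝐲 in₁}$ appearing in that proposition, after which the two equivalences simply compose.

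First I would unfold the exponential on the left-hand side. Since the indexing category is a coproduct, a functor out of it is a pair of functors, so $𝐒𝐞𝐭^{1+𝟚} ≃ 𝐒𝐞𝐭 × 𝐒𝐞𝐭^𝟚$. Post-composing with $[𝔽,-]$ and using that it commutes with products and with the exponential by $𝟚$ (both $[𝔽,𝐒𝐞𝐭^𝟚]$ and $[𝔽,𝐒𝐞𝐭]^𝟚$ being $[𝔽 × 𝟚,𝐒𝐞𝐭]$), together with $[𝔽,𝐒𝐞𝐭] = \psh[\op{𝔽}]$, this gives $[𝔽,𝐒𝐞𝐭^{1+𝟚}] ≃ \psh[\op{𝔽}] × (\psh[\op{𝔽}])^𝟚$. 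Concretely, an object is a functor $V_𝐜$ together with a morphism $ι∶ V_𝐯 → V_𝐩$ in $\psh[\op{𝔽}]$, and a morphism is a map on the $V_𝐜$ component together with a naturality square for $ι$ — precisely the tuples $(V_𝐩,V_𝐯,V_𝐜,ι)$ we started from.

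Next I would match these tuples with the objects of the oplax limit $\psh[\op{𝔽}]/Δ_{𝐲 in₁}$, i.e.\ the comma category whose objects are triples $(V_𝐯,\,[V_𝐩,V_𝐜],\,ι∶ V_𝐯 → Δ_{𝐲 in₁}[V_𝐩,V_𝐜])$ and whose morphisms are commuting squares. Since $\psh[\op{𝔽}+\op{𝔽}] ≃ (\psh[\op{𝔽}])²$, the middle component is just a pair $(V_𝐩,V_𝐜)$, and the computation already recorded above, $Δ_{𝐲 in₁}[V_𝐩,V_𝐜] = V_𝐩$, turns $ι$ into a morphism $V_𝐯 → V_𝐩$; the comma-category morphisms unfold to the same square data as before. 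Hence $[𝔽,𝐒𝐞𝐭^{1+𝟚}] ≃ \psh[\op{𝔽}]/Δ_{𝐲 in₁}$, the equivalence being essentially the identity on underlying tuples.

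Finally, the excerpt also checks that $Δ_{𝐲 in₁}$ is the nerve $Δ_F$ of $F = 𝐲 in₁ ∶ \op{𝔽} → \psh[\op{𝔽}+\op{𝔽}]$, so Proposition~\ref{prop:coll} applies verbatim with $𝕏 = \op{𝔽}$ and $𝕐 = \op{𝔽}+\op{𝔽}$, yielding $\psh[\op{𝔽}]/Δ_{𝐲 in₁} ≃ \psh[𝕍𝕋]$ with $𝕍𝕋 = (\op{𝔽}+\op{𝔽})[\op{𝔽}]_{𝐲 in₁}$ the collage. Composing the two equivalences proves the claim. I expect the one delicate point to be pure bookkeeping: keeping the variances $\op{(-)}$ consistent and confirming that the comma category is oriented so that $ι$ points out of the $\psh[\op{𝔽}]$-factor into the nerve, matching the direction $V_𝐯 → V_𝐩$; there is no substantive categorical difficulty once Proposition~\ref{prop:coll} is invoked.
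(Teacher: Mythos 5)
Your proposal is correct and takes essentially the same route as the paper: the text leading up to the corollary identifies $[𝔽,𝐒𝐞𝐭^{1+𝟚}]$ with the tuples $(V_𝐩,V_𝐯,V_𝐜,ι)$ forming the oplax limit $\psh[\op{𝔽}]/Δ_{𝐲 in₁}$, verifies that $Δ_{𝐲 in₁}$ is the nerve of $𝐲 ∘ in₁$, and concludes by Proposition~\ref{prop:coll}. You simply make the bookkeeping (the product decomposition of the functor category and the orientation of the comma category) more explicit than the paper does.
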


\begin{notation}
  We denote objects $in₁ n$, $in₂ n$, and $in₃ n$ of $𝕍𝕋$ by $n_𝐯$,
  $n_𝐩$, $n_𝐜$, respectively, for values, programs, and contexts. For
  any $V ∈ \psh[𝕍𝕋]$, we denote the corresponding functors $𝔽 → 𝐒𝐞𝐭$
  by $V_𝐯$, $V_𝐩$, and $V_𝐜$, so that, e.g., $V(n_𝐯) = V_𝐯(n)$.
  \end{notation}

  Let us now define $𝔼𝕋 = 3 = \{ [τ], [𝐯], [𝐜] \} $, where $[α]$
  indicates a label of type $α$. Accordingly, writing $c∶ a \xto{L} b$
  for $\source(c) = a$, $\labels(c) = L$, and $\but(c) = b$, and
  respectively interpreting $τ$, $𝐯$, and $𝐜$ as $∅$, $𝐲_𝐯$, and
  $𝐲_𝐜$, we put: $[α]∶ 0_𝐩 \xto{α} 0_𝐩$, for all $α ∈ \{τ,𝐯,𝐜\}$.
  \end{example}

\subsection{Generalised transition systems}  
Let us now introduce transition systems.  Let us fix a Howe context
$ℍ = (𝕍𝕋,𝔼𝕋,\source,\but,\labels)$ for the whole subsection, and start
by relating both categories $\psh[𝕍𝕋]$ and $\psh[𝔼𝕋]$.
\begin{definition}
We define four functors
$
\psh[𝕍𝕋] → \psh[𝔼𝕋]$
as follows,  for all  $V ∈ \psh[𝕍𝕋]$ and $α ∈ 𝔼𝕋$.
\begin{center}
$\begin{array}[t]{rcll}
  Δ_\source(V)(α) & = & V(\source(α)) \\
  Δ_\but(V)(α) & = & V(\but(α))
\end{array}$ \hfil
$\begin{array}[t]{rcll}
  Δ_\labels(V)(α) & = & \psh[𝕍𝕋](\labels(α),V) \\
  Δ_ℍ(V) & = & Δ_\source(V) × Δ_\labels(V) × Δ_\but(V)\rlap{,}
\end{array}$
\end{center}
\end{definition}
\begin{notation}\label{not:Delta}
  We often abbreviate $Δ_ℍ$ to $Δ$ when $ℍ$ is clear from context.  We
  also use juxtaposition of indices to denote product of the
  corresponding functors, e.g., $Δ_{𝐬,𝐥} ≔ Δ_𝐬 × Δ_𝐥$.
\end{notation}

\begin{definition}
  An \alert{$ℍ$-transition system} $G$ consists of a \alert{vertex}
  presheaf $V_G ∈ \psh[𝕍𝕋]$, an \alert{edge} presheaf
  $E_G ∈ \psh[𝔼𝕋]$, and a \alert{border} natural transformation
  $∂_G∶ E_G → Δ(V_G)$.
\end{definition}

\begin{remark}
  Letting $\labels(α) = ∑_{i ∈ n_α} 𝐲_{𝐥^αᵢ}$, we have
  $Δ_\labels(V)(α) = [∑_{i ∈ n_α} 𝐲_{𝐥^αᵢ},V] ≅ ∏_{i ∈ n_α} V(𝐥^αᵢ)$
  for any $α ∈ 𝔼𝕋$ and $V ∈ \psh[𝕍𝕋]$.  The border natural
  transformation thus has type
  \begin{center}
    $E(α) → V(\source(α)) × (∏_{i ∈ n_α} V(𝐥^αᵢ)) × V(\but(α)).$
  \end{center}
\end{remark}
\begin{example}
  Let us unfold the definition for the Howe context of
  \cref{ex:src}: a transition system
  consists of presheaves $V ∈ \psh[𝕍𝕋]$ and $E ∈ \psh[𝔼𝕋]$,
  equipped with maps
  \begin{mathpar}
    E[τ] → V_𝐩(0)² \and
    E[𝐯] → V_𝐩(0) × V_𝐯(0) × V_𝐩(0) \and
    E[𝐜] → V_𝐩(0) × V_𝐜(0) × V_𝐩(0).
\end{mathpar}
\end{example}

We now equip $ℍ$-transition systems with morphisms:
\begin{proposition}
  \label{prop:h-trans-comma}
  $ℍ$-transition systems
  are precisely the objects of the oplax limit category $\psh[𝔼𝕋]/Δ$ of
  the functor $\psh[𝕍𝕋] \xto{Δ} \psh[𝔼𝕋]$ in $𝐂𝐀𝐓$, or equivalently the
  comma category $\id_{\psh[𝔼𝕋]} ↓ Δ$.
\end{proposition}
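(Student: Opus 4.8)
The plan is to prove this by directly unfolding the two categorical descriptions and matching them against the defining data of an $ℍ$-transition system. The content of the statement is twofold: first, that the objects of the comma category $\id_{\psh[𝔼𝕋]} ↓ Δ$ coincide with $ℍ$-transition systems; and second, that this comma category is the oplax limit of the single functor $\psh[𝕍𝕋] \xto{Δ} \psh[𝔼𝕋]$ in $𝐂𝐀𝐓$. Neither part requires genuine computation: the work is purely in fixing definitions, and in fact this proposition is precisely what equips $ℍ$-transition systems with their morphisms.

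First I would spell out the comma category. By definition, an object of $\id_{\psh[𝔼𝕋]} ↓ Δ$ is a triple $(E, V, ∂)$ consisting of an object $E ∈ \psh[𝔼𝕋]$ (in the domain of $\id_{\psh[𝔼𝕋]}$), an object $V ∈ \psh[𝕍𝕋]$ (in the domain of $Δ$), and a morphism $∂ ∶ \id_{\psh[𝔼𝕋]}(E) = E → Δ(V)$ in $\psh[𝔼𝕋]$. This is, verbatim, the data of an edge presheaf, a vertex presheaf, and a border natural transformation, i.e.\ exactly an $ℍ$-transition system. A morphism $(E, V, ∂) → (E', V', ∂')$ is then a pair $(g ∶ E → E',\ h ∶ V → V')$ such that $∂' ∘ g = Δ(h) ∘ ∂$; here one uses that $Δ$ is a genuine functor, so that $Δ(h)$ is defined, which holds since $Δ$ is built by product from the precomposition/nerve functors $Δ_\source, Δ_\labels, Δ_\but$. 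I would simply adopt this as the definition of morphisms of $ℍ$-transition systems.

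The remaining point is the identification with the oplax limit. I would invoke the standard fact that the oplax limit of a diagram of shape $𝟚$ picking out a single functor $F ∶ 𝒞 → 𝒟$ is precisely the comma category $\id_𝒟 ↓ F$. Indeed, an oplax cone over this diagram with vertex a category $𝒳$ amounts to a pair of legs $(p_𝒞, p_𝒟)$ together with an oplax-naturality $2$-cell, which in this one-arrow case is exactly a natural transformation $p_𝒟 → F ∘ p_𝒞$; universality among such cones is then the defining universal property of the comma category, so the two agree. This is the same identification already used in the paper for the nerve functor $Δ_F$ in Proposition~\ref{prop:coll}, and I would cite it in that form. Applying it with $𝒞 = \psh[𝕍𝕋]$, $𝒟 = \psh[𝔼𝕋]$, and $F = Δ$ yields $\psh[𝔼𝕋]/Δ ≃ (\id_{\psh[𝔼𝕋]} ↓ Δ)$, completing the proof.

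As for obstacles, there is essentially no hard step: the only care needed is bookkeeping, namely getting the variance of the comma category right (the border points $E → Δ(V)$, not the reverse) and confirming the functoriality of $Δ$ so that the morphism condition is well-posed. If one insisted on treating $\psh[𝔼𝕋]/Δ$ as a genuine $2$-categorical oplax limit in $𝐂𝐀𝐓$ rather than as notation for the comma category, the main (still routine) task would be to verify that the comma category satisfies the oplax-limit universal property, but this is standard and wholly independent of the particulars of $ℍ$.
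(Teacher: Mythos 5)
Your proposal is correct and follows the same route as the paper, whose entire proof is the one-line observation that an object of the oplax limit is by definition a triple $(E,V,{∂∶ E → Δ(V)})$; you simply spell out the definitional unfolding (including the morphisms and the standard identification of the oplax limit of a single arrow with the comma category) that the paper leaves implicit.
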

\begin{proof}
  An object of the oplax limit is by definition a triple $(E,V,{∂∶ E → Δ(V)})$.
\end{proof}

\begin{definition}\label{def:HTrans}
 Let $ℍ\Trans = \psh[𝔼𝕋]/Δ_ℍ$.
\end{definition}

\subsection{Bisimulation and bisimilarity}\label{ss:bisimex}
We now want to define bisimulation and bisimilarity, for any fixed
Howe context $ℍ = (𝕍𝕋,𝔼𝕋,𝐬,𝐭,𝐥)$.  Let us start with the notion of
simulation.
\begin{notation}
A \alert{span} is a pair of morphisms with the same source.
In a category with binary products, we often write spans
$X ← R → Y$ as their pairings $R → X×Y$.
The \alert{converse} of a span $⟨f,g⟩∶ R → X×Y$ is the composite
$⟨g,f⟩∶ R → X×Y$.

In a presheaf category $\psh$, for any span $j∶ R → X×Y$, object
$c ∈ ℂ$, and element $r ∈ R(c)$, we write $r∶ x \mathrel{R} y$ when
$j_c(r) = (x,y)$.  We call $r$ a \alert{witness} that $x$ and $y$ are
related by $R$.

Finally, in any $ℍ$-transition system $G$, for any transition type $α$
with $\labels(α) ≅ ∑_{i ∈ n_α} 𝐲_{𝐥^αᵢ}$, we write
$e∶ x \xto{α(l₁,…,l_{n_α})} y$ to mean that
$e ∈ E_G(α)$ and
$∂_G(e) = (x,(l₁,…,l_{n_α}),y)$.
\end{notation}
\begin{definition}\label{def:simulation}
  For any $ℍ$-transition system $G = (V,E,{∂∶ E → ΔV})$, a given span
  $j∶ R → V²$ is a \alert{simulation} when, for any transition
  $e∶ x \xto{α(l₁,…,l_{n_α})} x'$ and witness $r∶ x \mathrel{R} y$,
  there exists a transition $f∶ y \xto{α(l₁,…,l_{n_α})} y'$ and a witness
  $r'∶ x' \mathrel{R} y'$, as in
\begin{equation}
\hfil  \diag{%
    x \& R(𝐬(α)) \& y \\
    x' \& R(𝐭(α)) \&  y'. %
  }{%
    (m-1-1) 
    edge[labell={e∶ α(l₁,…,l_{n_α})}] (m-2-1) %
    (m-1-3) edge[labelr={f∶ α(l₁,…,l_{n_α})}] (m-2-3) %
  }\label{eq:compattuple}
\end{equation}
  A span is a \alert{bisimulation} when it is a simulation and so is its
  converse.
  A \alert{bisimulation relation} is a bisimulation which is also a relation,
  i.e., a mono $R ↪ V²$.
\end{definition}

\begin{proposition}\label{prop:bisimilarity}
  The full subcategory $𝐁𝐢𝐬𝐢𝐦(G)$ of $\psh[𝕍𝕋]/V²$ spanning
  bisimulations admits a terminal object, which we call
  \alert{bisimilarity} and denote by $∼_G$.
\end{proposition}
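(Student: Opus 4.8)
The plan is to construct $∼_G$ as the greatest bisimulation relation and then verify that, as an object of $\psh[𝕍𝕋]/V^2$, it is terminal among \emph{all} bisimulations (arbitrary spans), not merely among relations. I rely throughout on the fact that $\psh[𝕍𝕋]$ is a presheaf topos: it is well-powered, the subobject lattices $\mathrm{Sub}(V^2)$ are complete, and epimorphisms, image factorisations, and joins of subobjects are all computed pointwise (object-wise in $𝕍𝕋$). This pointwise behaviour is what licenses the element-level reasoning of \cref{def:simulation}.

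First I would record two closure lemmas. \emph{(i) The join in $\mathrm{Sub}(V^2)$ of any family $(R_i)_i$ of bisimulation relations is again a bisimulation relation.} Given a witness $r∶ x \mathrel{R} y$ in the join $R$ at an object $\source(α)$, pointwise computation of joins places $r$ in some $R_i(\source(α))$; the simulation property of $R_i$ then yields a matching transition $f∶ y \xto{α} y'$ and a witness $r'∶ x' \mathrel{R_i} y'$, which lands in $R$ since $R_i ⊆ R$. The converse direction is symmetric. \emph{(ii) The image of any bisimulation span $j∶ R → V^2$ is a bisimulation relation.} Factor $j$ as an epimorphism $R → I$ followed by a mono $I ↪ V^2$. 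A witness in $I$ at $\source(α)$ lifts, by pointwise surjectivity of $R → I$, to a witness in $R$; applying the simulation property of $R$ and pushing the resulting target witness forward along $R → I$ supplies the witness required for $I$. Once more, the converse direction follows by symmetry, noting that the converse of the image subobject coincides with the image of the converse span.

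Next, using well-poweredness, let $∼_G$ be the join in $\mathrm{Sub}(V^2)$ of \emph{all} bisimulation relations; this is a legitimate set-indexed join since subobjects of $V^2$ form a set. By lemma~(i) it is itself a bisimulation relation, and by construction it is the greatest one. For terminality in $𝐁𝐢𝐬𝐢𝐦(G)$, take any bisimulation $j∶ R → V^2$: lemma~(ii) makes its image $I$ a bisimulation relation, so $I ⊆ {∼_G}$ by maximality, and composing $R → I ↪ {∼_G}$ gives a morphism $R → {∼_G}$ over $V^2$. Uniqueness is automatic, because $∼_G ↪ V^2$ is monic, so any two morphisms $R → {∼_G}$ that commute with the spans into $V^2$ must agree. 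Hence $∼_G$ is the desired terminal object.

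The only genuinely delicate point is justifying the element chase in the two closure lemmas, and this reduces entirely to the pointwise behaviour of joins, images, and epimorphisms in the presheaf topos $\psh[𝕍𝕋]$; granting that, both lemmas are routine. The other thing to watch is size: the class of all bisimulation \emph{spans} is large, but the bisimulation \emph{relations} form a set by well-poweredness, which is exactly what makes the defining join for $∼_G$ well posed.
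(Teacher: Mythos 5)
Your proof is correct and follows essentially the same route as the paper, which simply notes that bisimulation relations are stable under unions and takes the union of them all; your additional lemmas (pointwise closure under joins, and that the image of a bisimulation span is a bisimulation relation, which supplies the mediating morphism from an arbitrary span) are exactly the details the paper leaves implicit.
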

\begin{proof}
  Bisimulation relations are stable under unions, so that
  a terminal object is given by the union of them all.
\end{proof}

\section{Algebraic transition systems}\label{atss}
In this section, we explain \enhanced syntax, algebraic transition
systems, and \enhanced bisimulation in a bit more detail than
in~\cref{ssalts}.  The notion of \enhanced syntax has already been
introduced (\cref{def:enhanced:syntax}), and we fix a Howe context
$ℍ = (𝕍𝕋,𝔼𝕋,𝐬,𝐭,\labels)$ and \anenhanced syntax
$σ=(Σ,Γ,{δ∶ TS → ST})$, where, we recall, $S=Σ^*$ and $T=Γ_S^*$.
\subsection{\Enhanced syntax}
\begin{definition}
  We call $ST$-algebras \alert{$σ$-algebras} for short, and let
  $σ\Alg = ST\Alg$.
\end{definition}

\begin{proposition}
  The initial $Σ$-algebra $S∅$ is automatically a $T$-algebra, with
  structure map $TS∅ \xto{δ_∅} ST∅ \xto{≅} S∅$.
\end{proposition}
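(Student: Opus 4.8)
The plan is to recognise the claimed structure map as the value, at the initial $T$-algebra, of the monad obtained by lifting $S$ along the distributive law $δ$~\cite{BeckDistlaws}.

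First I would check that $∅$ is the initial $T$-algebra, with structure map an isomorphism $t∶ T∅ \xto{≅} ∅$. Writing $Γ_S$ for the generating endofunctor $X ↦ Γ(X,SX)$ of $T = Γ_S^*$, left-cocontinuity of $Γ$ says that $Γ(-,SX)$ preserves all colimits, in particular the empty one, so $Γ_S(∅) = Γ(∅,S∅) ≅ ∅$. Since the free monad evaluated at the initial object is the initial algebra of its generating endofunctor, and an endofunctor that preserves $∅$ has $∅$ (with its canonical iso) as initial algebra, we obtain $T∅ ≅ ∅$. The $T$-algebra laws for $(∅,t)$ are then automatic, every competitor having initial domain (e.g.\ $TT∅ ≅ ∅$), hence being unique.

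Next I would invoke the standard lifting: the distributive law $δ∶ TS → ST$ lifts $S$ to a monad $\hat S$ on $T\Alg$, acting on a $T$-algebra $(A,a∶ TA → A)$ by $\hat S(A,a) = (SA,\; Sa ∘ δ_A)$, where $Sa ∘ δ_A∶ TSA \xto{δ_A} STA \xto{Sa} SA$. The unit law for this lifted structure follows from the $T$-unit axiom of $δ$ and the unit law of $(A,a)$; the multiplication law follows from the $T$-multiplication axiom of $δ$ together with naturality of $δ$ and the multiplication law of $(A,a)$.

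Finally, applying $\hat S$ to $(∅,t)$ yields the $T$-algebra $\hat S(∅,t) = (S∅,\; St ∘ δ_∅)$. As $St∶ ST∅ → S∅$ is the image under $S$ of the isomorphism $t∶ T∅ ≅ ∅$, it is precisely the isomorphism $ST∅ ≅ S∅$, so $St ∘ δ_∅$ is the claimed composite $TS∅ \xto{δ_∅} ST∅ \xto{≅} S∅$; this exhibits $S∅$ as a $T$-algebra with the stated structure, and the monad-algebra laws are inherited from the lifting for free. I expect the only non-routine point to be the identity $T∅ ≅ ∅$: one must unwind that the free monad at $∅$ is the initial algebra of $Γ_S$ and note that it is left-cocontinuity of $Γ$, rather than mere finitarity, that forces $Γ_S$ to preserve $∅$. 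Everything after that is a direct reading-off of the lifting, with no real obstacle.
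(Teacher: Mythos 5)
Your proof is correct and follows essentially the same route as the paper's: the paper's (much terser) proof likewise reduces everything to the observation that left-cocontinuity gives $Γ(∅,S∅)≅∅$, hence $∅$ is the initial $Γ_S$-algebra and thus the initial $T$-algebra, leaving the standard Beck lifting of $S$ to $T\Alg$ along $δ$ implicit. You have simply spelled out that lifting step explicitly, which matches what the paper intends.
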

\begin{proof}
  By cocontinuity, $∅$ is a $Γ_S$-algebra: we have $Γ(∅,S(∅)) ≅ ∅$. It
  is thus an initial $Γ_S$-algebra, hence an initial $T$-algebra
  since $Γ_S\alg ≅ T\Alg$.
\end{proof}

\begin{example}\label{ex:srcalg}
  Following up on \cref{ex:src}, the syntax and additional operations
  of \cref{serguei} may be presented by an incremental structural law
  on $\psh[𝕍𝕋]$, as follows.  First, basic operations are specified by
  the endofunctor $Σ₀$ defined as follows (recalling original notation
  on the right).

\noindent$\begin{array}{r@{\ =\ }lr@{\ ::=\ }l}
Σ₀(X)_𝐯(n) &   n  +  X_𝐩(n + 1) & v &  x  ｜  λx.e    \\
     Σ₀(X)_𝐩(n) &   Σ₀(X)_𝐯(n) + X_𝐯(n)  +  X_𝐩(n)²  +  X_𝐩(n + 1)  +  X_𝐩(n) &  e &  v  ｜  e₁\ e₂  ｜  𝒮x.e  ｜  ⟨e⟩  \\
     Σ₀(X)_𝐜(n) &   1  +  X_𝐯(n) × X_𝐜(n)  +  X_𝐜(n) × X_𝐩(n)
                & E &  ▫  ｜ E\ e  ｜  v\ E 
   \end{array}$

   \noindent
   We then want to define the arity of additional operations, namely
   substitution, context application, and context composition.  Since
   these three additional operations are independent, we may specify
   them at once by the bifunctor $Γ∶ \psh[𝕍𝕋]² → \psh[𝕍𝕋]$ defined as
   follows.

   \noindent$\begin{array}{rclrcl}
   Γ(X,Y)_𝐯(n) & = &   \textstyle ∑_{m ∈ ℕ} X_𝐯(m) × Y_𝐯(n)ᵐ &
                                          v & {+}\Coloneqq & v[σ]    \\
   Γ(X,Y)_𝐩(n) & = &   \textstyle ∑_{m ∈ ℕ} X_𝐩(m) × Y_𝐯(n)ᵐ  +  X_𝐜(n) × Y_𝐩(n) &
                                                              e & {+}\Coloneqq & e[σ] ｜ E[e]    \\   
   Γ(X,Y)_𝐜(n) & = &   X_𝐜(n) × Y_𝐜(n) &
    E & {+}\Coloneqq & E[E']
 \end{array}$

 \noindent That the actual definition of additional operations induces
 a distributive law of $Γ_S^*$ over $Σ^*$ is harder to see, and will
 follow from the theory of syntactic signatures below
 (\cref{ex:srcii}).
\end{example}

\subsection{Algebraic transition systems}
Let us now introduce algebraic transition systems.
\begin{definition}
  A \alert{$σ$-transition system} is an $ℍ$-transition systems
  equipped with $σ$-algebra structure on its vertex object.  A
  $σ$-transition system morphism is a morphism of $ℍ$-transition
  systems whose vertex component is a $σ$-algebra morphism.  Let
  $σ\Trans$ denote the category of $σ$-transition systems and
  morphisms between them.
\end{definition}
\begin{proposition}\label{prop:L}
  The forgetful functor $𝒰$ has a left adjoint, say
  $ℒ∶ ℍ\Trans → σ\Trans$.
\end{proposition}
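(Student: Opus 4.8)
The plan is to present both $σ\Trans$ and $ℍ\Trans$ as comma categories and to derive $ℒ$ from a left adjoint at the level of vertex objects. Recall from \cref{prop:h-trans-comma} that $ℍ\Trans ≃ \id_{\psh[𝔼𝕋]} ↓ Δ$, with objects the triples $(E,V,{∂∶ E → Δ(V)})$. Writing $U∶ σ\Alg → \psh[𝕍𝕋]$ for the forgetful functor on vertex algebras (not to be confused with $𝒰$), I would first check that the very definition of $σ\Trans$ exhibits it as the comma category $\id_{\psh[𝔼𝕋]} ↓ (Δ∘U)$, whose objects are triples $(E,A,{∂∶ E → Δ(U A)})$ with $A$ a $σ$-algebra and whose morphisms are pairs consisting of a map of edge presheaves and a $σ$-algebra morphism compatible with the borders. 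Under these identifications, $𝒰$ becomes precisely the functor induced by $U$, sending $(E,A,∂)$ to $(E,U A,∂)$ and acting as the identity on edge components.

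Next, I would observe that $U$ has a left adjoint. Indeed, $σ\Alg = ST\Alg$ is by definition the Eilenberg–Moore category of the composite monad $ST$ -- a monad thanks to the distributive law $δ∶ TS → ST$ -- so $U$ is its forgetful functor and admits the free-algebra functor $F∶ X ↦ (S T (X),μ_X)$ as a left adjoint, with some unit $η$. No presentability hypothesis is needed here, since free algebras for a monad always exist.

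The core of the argument is then a general lemma about comma categories, which I expect to be the only step requiring care: for any functor $G∶ 𝒞 → 𝒳$ and any adjunction $F ⊣ U∶ 𝒟 → 𝒞$ with unit $η$, the functor $\id_𝒳 ↓ (G∘U) → \id_𝒳 ↓ G$ induced by $U$ has a left adjoint, sending $(x,c,{f∶ x → G c})$ to $(x, F c, {G(η_c) ∘ f})$. To prove it I would fix objects $(x,c,f)$ in $\id_𝒳 ↓ G$ and $(x',d,g)$ in $\id_𝒳 ↓ (G∘U)$ and compare the two hom-sets directly: a morphism out of the candidate value is a pair $(u∶ x → x',\, v∶ F c → d)$ satisfying $g ∘ u = G(U v) ∘ G(η_c) ∘ f$, whereas a morphism $(x,c,f) → (x',U d,g)$ is a pair $(u,\, w∶ c → U d)$ satisfying $g ∘ u = G(w) ∘ f$. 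Transposing $v$ to $w = (U v) ∘ η_c$ under $F ⊣ U$ gives a bijection fixing $u$, and since $G(w) = G(U v) ∘ G(η_c)$ the two side conditions coincide; naturality of this bijection in both arguments is then routine. Instantiating the lemma with $𝒞 = \psh[𝕍𝕋]$, $𝒳 = \psh[𝔼𝕋]$, $𝒟 = σ\Alg$, $G = Δ$, and the adjunction $F ⊣ U$ above yields the desired $ℒ∶ ℍ\Trans → σ\Trans$.
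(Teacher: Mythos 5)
Your proposal is correct and arrives at exactly the paper's construction: the paper's proof simply states that $ℒ$ sends $∂∶ E → Δ(V)$ to $E \xto{∂} Δ(V) \xto{Δ(η^{ST}_V)} Δ(ST(V))$, which is precisely what your general comma-category lemma yields when instantiated at the free/forgetful adjunction for the monad $ST$. You merely make explicit the adjunction verification (via the transposition $v ↦ U(v)∘η_c$ and the compatibility of the side conditions) that the paper leaves implicit, and your lemma and its proof are sound.
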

\begin{proof}
  The left adjoint maps any $∂∶ E → Δ(V)$ to
  $E \xto{∂} Δ(V) \xto{Δ(η^{ST}_V)} Δ(S (T (V)))$.
\end{proof}

We conclude this section by defining the notion of congruence.
\begin{definition}
  For any $σ$-transition system $G = (V,E,∂)$, a \alert{congruence} is
  a span $R → V²$ for which there exists a morphism $Σ(R) → R$ making
  the first diagram of~\cref{fig:congenhance} commute.
  \begin{figure}[htp]
    \centering
    \diag{%
      Σ(R) \& \& R \\
      Σ(V²) \& Σ(V)² \& V² %
    }{%
      (m-1-1) edge[dashed,labela={}] (m-1-3) %
      edge[labell={}] (m-2-1) %
      (m-2-1) edge[loinb={⟨Σ (π₁),Σ (π₂)⟩}] (m-2-2) %
      (m-2-2) edge[labelb={}] (m-2-3) %
      (m-1-3) edge[labelr={}] (m-2-3) %
    }
    \hfil
        \diag{%
      Γ(R,V) \& \&  R \\
      Γ(V²,V) \& Γ(V,V)² \& V² }{%
      (m-1-1) edge[dashed,labela={}] (m-1-3) %
      edge[labell={},shorten >=2pt] (m-2-1) %
      (m-2-1) edge[loinb={⟨Γ (π₁,V),Γ (π₂,V)⟩}] (m-2-2) %
      (m-2-2) edge[labelb={}] (m-2-3) %
      (m-1-3) edge[labelr={}] (m-2-3) %
    }
    \caption{Congruence and enhancement}
    \label{fig:congenhance}
  \end{figure}
\end{definition}

\subsection{\Enhanced bisimilarity}
\label{s:enhancedbisim}

\begin{definition}
  \label{def:enhanced-span}
  For any $σ$-algebra $V$, a span $p∶ R → V²$ is \alert{\enhanced}
  when there exists a morphism $Γ(R,V) → R$ making the second diagram
  of~\cref{fig:congenhance} commute.
\end{definition}

\begin{definition}
  For any $σ$-transition system $G$, let $𝐁𝐢𝐬𝐢𝐦^σ(G)$ denote the full
  subcategory of $𝐁𝐢𝐬𝐢𝐦(G)$ on \enhanced spans.  We call such spans
  \alert{\enhanced bisimulations}.
\end{definition}

\begin{proposition}
  For any $σ$-transition system $G$, $𝐁𝐢𝐬𝐢𝐦^σ(G)$ admits a terminal
  object, which we call \alert{\enhanced} bisimilarity and denote by
  $∼^σ_G$.
\end{proposition}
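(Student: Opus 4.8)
The plan is to follow the proof of \cref{prop:bisimilarity}, additionally keeping track of the enhancement structure. Since $\psh[𝕍𝕋]$ is well-powered, the subobjects of $V²$ form a set, so the \enhanced bisimulation relations form a set and I can form their union $∼^σ_G ≔ \bigcup_R R$ in $\psh[𝕍𝕋]/V²$, taken over all \enhanced bisimulation relations $R ↪ V²$. I will show that this union is the desired terminal object. The whole argument hinges on $Γ$ being left-cocontinuous (\cref{def:enhanced:syntax}): this makes $Γ(-,V)$ cocontinuous, hence it preserves coproducts and regular epimorphisms, which is exactly what lets the enhancement structure survive the colimits below.

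First I would prove that the image of any \enhanced bisimulation is \anenhanced bisimulation relation. That the image is a bisimulation relation is as in the proof of \cref{prop:bisimilarity}. For enhancement, write $p∶ R → V²$ with its (regular epi, mono) factorisation $R \xto{q} \bar R \xto{m} V²$. Since $Γ(-,V)$ preserves regular epis, $Γ(q,V)∶ Γ(R,V) → Γ(\bar R,V)$ is again a regular epi. Precomposing the existing witness $Γ(R,V) → R$ with $q$ shows that the enhancement composite $Γ(R,V) → Γ(V²,V) → Γ(V,V)² → V²$ (the bottom of the second diagram of \cref{fig:congenhance}) factors through $m$; a diagonal fill-in against the regular epi $Γ(q,V)$ and the mono $m$ then produces the required map $Γ(\bar R,V) → \bar R$, so that $\bar R$ is \enhanced.

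Next I would show that the union $\bigcup_i R_i$ of any set-indexed family of \enhanced bisimulation relations is again \anenhanced bisimulation relation. This union is the image of the copairing $\coprod_i R_i → V²$. Because $Γ(-,V)$ preserves coproducts, $Γ(\coprod_i R_i, V) ≅ \coprod_i Γ(R_i,V)$, so copairing the individual witnesses $Γ(R_i,V) → R_i$ equips $\coprod_i R_i$ with the structure of \anenhanced span; it is moreover a bisimulation, being (straightforwardly) a coproduct of bisimulations. Its image is exactly $\bigcup_i R_i$, which is therefore \anenhanced bisimulation relation by the previous step.

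Finally, I would conclude. Applying the second step to the family of \emph{all} \enhanced bisimulation relations shows that $∼^σ_G$ is itself \anenhanced bisimulation relation, hence an object of $𝐁𝐢𝐬𝐢𝐦^σ(G)$. For terminality, any \enhanced bisimulation $R → V²$ factors through its image $\bar R$, which by the first step is \anenhanced bisimulation relation and hence satisfies $\bar R ⊆ ∼^σ_G$; the composite $R → \bar R ↪ ∼^σ_G$ is then a morphism over $V²$, and it is the unique one since $∼^σ_G ↪ V²$ is a mono. I expect the first step — transporting the enhancement structure across the epi part of the factorisation — to be the only genuine obstacle, and it is precisely there that left-cocontinuity of $Γ$, through preservation of regular epis, is indispensable.
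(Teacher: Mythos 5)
Your proof is correct and takes essentially the same route as the paper, which simply says ``similar to the plain bisimilarity case, using left-cocontinuity of $Γ$'': you form the union of all \enhanced bisimulation relations and use cocontinuity of $Γ(-,V)$ (preservation of coproducts and of regular epis, the latter giving the diagonal fill-in that transports the enhancement structure onto images) to check it is again \anenhanced bisimulation relation and terminal. Your write-up just makes explicit the details the paper leaves implicit.
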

\begin{proof}
  Similar to \cref{prop:bisimilarity}, using left-cocontinuity of $Γ$.
\end{proof}

\begin{example}\label{ex:srcenhancedbisim}
  In the setting of~\cref{ex:srcalg}, \enhanced bisimilarity is
  applicative bisimilarity.
\end{example}

\section{Signatures for operational semantics}\label{s:howecontexts}

\subsection{Syntactic signatures for \enhanced syntax}
Syntactic signatures have already been introduced
in~\cref{def:syntacticsig}.

\begin{example}\label{ex:srcii}
  Following up on \cref{ex:srcalg}, the syntax and additional
  operations of \cref{serguei} may be presented as an incremental
  structural law $d_{X,Y}∶ Γ_Y(Σ(X)) → S (Γ_{S(Y)}(X) + X + Y)$
  (taking $T₁ = \id$) on $\psh[𝕍𝕋]$, as follows.  For context
  application, \crefrange{eq:wbox}{eq:appe} may be interpreted as the
  component $Σ(X)_𝐜(n) × Y_𝐩(n) → S (Γ_{S(Y)}(X) + X + Y)_𝐩(n)$,
  namely we take them to mean
 $$\begin{array}{rcl}
   (in₁(⋆),y) & ↦ & in'₃(y) \\
   (in₂(v,E),y) & ↦ & ι(in'₂(v))\ in'₁(E,y) \\
   (in₃(E,e),y) & ↦ & in'₁(E,y)\ in'₂(e)\rlap{,}
   \end{array}$$
   where $in'ᵢ = η^S ∘ inᵢ$.
   For context composition,
   we define the component
   at $𝐜$ (for any $n$), by  
   the exact same formulas, only with $y ∈ Y_𝐜(n)$.
   Substitution is defined similarly~\cite{fiore:presheaf,BHL,HL}.
\end{example}

\begin{proposition}\label{prop:sigmad}
  For any syntactic signature $𝐝 = (Σ,(Γᵢ,dᵢ)_{i ∈ n})$ as
  in~\cref{eq:syntacticsig}, the given incremental structural laws
  induce distributive laws $δᵢ∶ Tᵢ ∘ S → S ∘ Tᵢ$, hence in particular
  $δₙ∶ Tₙ ∘ S → S ∘ Tₙ$, and furthermore we have $Tₙ = (∑ᵢ
  Γᵢ)_S^*$. Thus, the triple $σ(𝐝) = (Σ,∑ᵢ Γᵢ,δₙ)$ forms \anenhanced
  syntax.
\end{proposition}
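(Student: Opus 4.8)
The plan is to construct the distributive laws $δᵢ$ by induction on $i$, starting from the identity and absorbing one incremental structural law at each step, and then to identify $Tₙ$ through the standard interaction between free monads and monad coproducts. The technical engine throughout is the theory of admissible structural laws of~\cite{admissible}, which I would import as a black box rather than redevelop here.

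For the induction, the base case is $T₀ = \id$, equipped with the identity distributive law $δ₀ = \id∶ S → S$, which satisfies the coherence axioms trivially. For the inductive step, assume a distributive law $δᵢ∶ Tᵢ ∘ S → S ∘ Tᵢ$. By~\cref{def:syntacticsig}, the datum of the next incremental structural law is a natural transformation $d_{X,Y}∶ Γ_{i+1}(Σ(X),Y) → S(Tᵢ(Γ_{i+1}(X,S(Tᵢ(Y))) + X + Y))$, whose very shape is what the recursion principle of~\cite{admissible} consumes: its codomain refers to the already-built $Tᵢ$, so the commutation of $Γ_{i+1}$ past a single $Σ$-constructor can be specified using $δᵢ$ together with the recursive action of $Γ_{i+1}$ on subterms. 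Feeding $d_{i+1}$ and $δᵢ$ into that principle yields a distributive law $δ_{i+1}∶ T_{i+1} ∘ S → S ∘ T_{i+1}$ on $T_{i+1} = Tᵢ ⊕ (Γ_{i+1})_S^*$ which restricts to $δᵢ$ along the coprojection of the $Tᵢ$-summand. Iterating up to $i = n$ produces $δₙ$.

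It then remains to identify $Tₙ$ and check the requirements of~\cref{def:enhanced:syntax}. Telescoping the coproducts gives $Tₙ = \id ⊕ (Γ₁)_S^* ⊕ ⋯ ⊕ (Γₙ)_S^*$; since $\id$ is the unit for $⊕$ and the coproduct of free monads is the free monad on the coproduct of their generators, this equals $(∑ᵢ (Γᵢ)_S)^*$, and the pointwise identity $(∑ᵢ Γᵢ)_S(X) = ∑ᵢ Γᵢ(X,S(X)) = ∑ᵢ (Γᵢ)_S(X)$ rewrites it as $(∑ᵢ Γᵢ)_S^*$. With this in hand, $σ(𝐝) = (Σ,∑ᵢ Γᵢ,δₙ)$ is \anenhanced syntax: $Σ$ is finitary by hypothesis; $∑ᵢ Γᵢ$, being a finite coproduct of finitary left-cocontinuous bifunctors, is again finitary and, since coproducts commute with colimits, still cocontinuous in its first argument; and $δₙ$ is a distributive law of $T = (∑ᵢ Γᵢ)_S^*$ over $S = Σ^*$. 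I expect the real difficulty to sit entirely in the inductive step, namely in checking that the transformation extracted from $d_{i+1}$ genuinely obeys the unit and multiplication axioms of a distributive law; this is precisely what the admissibility conditions of~\cite{admissible} are designed to enforce, so the burden is discharged by that development, while the coproduct and finiteness bookkeeping is routine.
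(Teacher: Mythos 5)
Your proposal is correct and follows essentially the same route as the paper: the paper's proof is a one-liner invoking \cite[Theorem~4.2]{admissible} to get a distributive law of $Tᵢ = T_{i-1} ⊕ Γ_S^*$ over $S$ at each step, together with the identity $(F+G)^* ≅ F^* ⊕ G^*$ to recognise $Tₙ$ as $(∑ᵢ Γᵢ)_S^*$, which is exactly your induction and telescoping argument spelled out. The extra bookkeeping you include (finitarity and left-cocontinuity of $∑ᵢ Γᵢ$) is correct and left implicit in the paper.
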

\begin{proof}
  By \cite[Theorem~4.2]{admissible}, each incremental structural law
  $dᵢ$ induces a distributive law of $(T_{i-1} ⊕ Γ_S^*)$ over $S$,
  i.e., of $Tᵢ$ over $S$ by definition, using $(F+G)^* ≅ F^* ⊕ G^*$.
\end{proof}

Let us conclude this subsection by giving an explicit description of
the algebras of the composite monad $STₙ$ generated by a
syntactic signature.
\begin{definition}\label{def:enhanced}
  Consider any syntactic signature $𝐝 = (Σ,(Γᵢ,dᵢ)_{i ∈ n})$.  For
  $i ∈ n$, an \alert{\enhanced algebra} is an object equipped with
  algebra structures $a∶ ΣX → X$, $b₁∶ Γ₁(X,X) → X$, …,
  $bₙ∶ Γₙ(X,X) → X$ such that for all $i ∈ n$ the following diagram
  commutes,
  \begin{center}
\begin{tikzcd}[ampersand replacement=\&]
	{\Gamma_{i}(\Sigma X,X)} \& {ST_{i}(\Gamma_{i}(X,ST_{i}X)+X+X)} \&\& {ST_i(\Gamma_i(X,X)+X)} \& {ST_iX} \\
	{\Gamma_i(X,X)} \&\&\&\& X
	\arrow["{(d_{i})_{X,X}}", from=1-1, to=1-2]
	\arrow[loina={ST_i(\Gamma_i(X, \bar{a} \circ S\bar{a}_i)+[X,X])}, from=1-2, to=1-4]
	\arrow["{ST_i[b_i,X]}", from=1-4, to=1-5]
	\arrow["{\Gamma_i(a,X)}"', from=1-1, to=2-1]
	\arrow["{b_i}"', from=2-1, to=2-5]
	\arrow["{\bar{a} \circ S\bar{a}_i}", from=1-5, to=2-5]
\end{tikzcd}    
  \end{center}
  where $\bar{a}ᵢ∶ Tᵢ X → X$ and $\bar{a}∶SX → X$ denote the algebra
  structures induced by $(bⱼ)_{j < i}$, and $a$.  Let $𝐝\Alg$ denote
  the full subcategory of $(Σ+∑_{i ∈ n} Γᵢ)\alg$ spanned by \enhanced
  algebras.
\end{definition}

\begin{proposition}
  Let $𝐝 = (Σ,(Γᵢ,dᵢ)_{i ∈ n})$ denote any syntactic signature.  The
  forgetful functor $σ(𝐝)\Alg → (Σ+∑_{i ∈ n} Γᵢ)\alg$ lifts to
  $𝐝\Alg$, and the lifting is an isomorphism.  In short, we have
  $σ(𝐝)\Alg ≅ 𝐝\Alg$ over $\psh$.
\end{proposition}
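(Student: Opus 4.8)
The plan is to reduce to the standard correspondence between algebras for a composite monad and bialgebras for its distributive law, and then to turn the abstract compatibility into the concrete diagram of \cref{def:enhanced}. Recall from \cref{prop:sigmad} that $σ(\mathbf d) = (Σ, ∑_i Γ_i, δ_n)$ with $S = Σ^*$, $T_n = (∑_i Γ_i)_S^*$, and a distributive law $δ_n ∶ T_n S → S T_n$, so that $σ(\mathbf d)\Alg = S T_n\Alg$ is the category of algebras for the composite monad $S T_n$. By the theory of distributive laws~\cite{BeckDistlaws}, an $S T_n$-algebra is the same thing as a $δ_n$-bialgebra: an object $X$ carrying an $S$-algebra structure $\bar a ∶ SX → X$ and a $T_n$-algebra structure $\bar b ∶ T_n X → X$ subject to the pentagon $\bar b ∘ T_n \bar a = \bar a ∘ S\bar b ∘ (δ_n)_X$, the composite structure map being $\bar a ∘ S\bar b$. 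Algebraic freeness of $S = Σ^*$ turns $\bar a$ into a $Σ$-algebra $a ∶ ΣX → X$, and algebraic freeness of $T_n = (∑_i Γ_i)_S^*$ turns $\bar b$ into a family $c_i ∶ Γ_i(X, SX) → X$. Restricting along the unit, $b_i ≔ c_i ∘ Γ_i(X, η^S_X) ∶ Γ_i(X,X) → X$, yields exactly the data of an object of $(Σ + ∑_i Γ_i)\alg$; this exhibits the underlying $(Σ + ∑_i Γ_i)$-algebra of an $S T_n$-algebra, i.e.\ the action of the forgetful functor, and shows what its lift to $\mathbf d\Alg$ must be.

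I would organise the comparison along the incremental construction of $δ_n$, by induction on $i$. The base case $T_0 = \id$ is immediate, since $S\Alg ≅ Σ\alg$ and $\mathbf d\Alg$ then carries no additional operation. For the inductive step, the monad coproduct $T_i = T_{i-1} ⊕ (Γ_i)_S^*$ gives $T_i\Alg ≅ T_{i-1}\Alg ×_{\psh} (Γ_i)_S\alg$, so that a $T_i$-algebra is a $T_{i-1}$-algebra together with one further map $c_i ∶ Γ_i(X, SX) → X$; and, by the way \cref{prop:sigmad} assembles $δ_i$ from $d_i$ through \cite[Theorem~4.2]{admissible}, the law $δ_i$ restricts to $δ_{i-1}$ on the $T_{i-1}$-summand and to the part freely generated by $d_i$ on the new summand. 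Since the pentagon is equivalent to $\bar a$ commuting with all generating operations, it splits over the coproduct: the inductive hypothesis takes care of the $T_{i-1}$-generators, leaving only the single $(Γ_i)_S$-generator to analyse.

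The heart of the proof, and the step I expect to be the main obstacle, is to show that the pentagon, on the $(Γ_i)_S^*$-summand, is equivalent to commutativity of the single diagram of \cref{def:enhanced}. By the bialgebra description, this part of the pentagon says exactly that $\bar a$ is a $T_i$-algebra morphism from $SX$, equipped with the $T_i$-structure $S\bar b ∘ (δ_i)_X$, to $X$; equivalently, since $T_i$ is free on $(Γ_i)_S$, that $\bar a$ commutes with the generating operation, a single equation on $(Γ_i)_S(SX) = Γ_i(SX, SSX)$. Now $δ_i$ is generated from $d_i$ by recursion on the $Σ$-structure of its main argument (this is what \cite[Theorem~4.2]{admissible} provides), and $Γ_i$ is left-cocontinuous, so decomposing $SX ≅ X + Σ SX$ in the main slot splits this equation into a base case, where the main argument is a variable and $δ_i$ acts only through units, and a $Σ$-headed case, where $δ_i$ acts through $d_i$. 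The base case forces $c_i = b_i ∘ Γ_i(X, \bar a)$, so that $b_i$ and $c_i$ determine one another and the lift is a bijection on objects; the $Σ$-headed case, after rewriting $c_i$ via $b_i$, is exactly the diagram of \cref{def:enhanced}, its left-bottom leg being $b_i ∘ Γ_i(a, X)$. The delicate point is precisely this recursion: one must check that the single generator equation for $\bar a$ unfolds, through the freely generated $δ_i$ and the left-cocontinuity of $Γ_i$, into exactly the base case and that one hexagon, with no residual conditions, and that the substitution $c_i = b_i ∘ Γ_i(X, \bar a)$ is compatible with all intermediate layers.

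Granting this, a $δ_n$-bialgebra structure exists on $X$ if and only if, for each $i$, the corresponding diagram of \cref{def:enhanced} commutes, i.e.\ if and only if $(X, a, (b_i))$ is an enhanced algebra; hence the forgetful functor lands in $\mathbf d\Alg$, and it is a bijection on objects, with inverse sending $(a, (b_i))$ to the bialgebra determined by $c_i = b_i ∘ Γ_i(X, \bar a)$. Finally, a morphism of $S T_n$-algebras is exactly a map commuting with $\bar a$ and $\bar b$, equivalently with $a$ and every $b_i$ since these generate; so the correspondence is an isomorphism of categories over $\psh$, which is the desired $σ(\mathbf d)\Alg ≅ \mathbf d\Alg$.
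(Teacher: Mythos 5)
Your overall skeleton is the right one, and it matches what the paper implicitly relies on: Beck's correspondence between $ST_n$-algebras and $δ_n$-bialgebras, algebraic freeness of $S$ and $T_n$ to trade monad-algebra structure for endofunctor-algebra structure, and an induction over the monad-coproduct decomposition $T_i = T_{i-1} ⊕ (Γ_i)_S^*$. Note, however, that the paper's own proof consists of exactly two ingredients: the induction on $n$, and a citation of \cite[Theorem~4.13]{admissible} for the entire substantive content, namely that the compatibility pentagon for the distributive law generated by $d_i$ is equivalent to the hexagon of Definition~\ref{def:enhanced}. Your proposal sets out to prove precisely that content, and that is exactly where it stops short.

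Concretely, the step you yourself flag as ``the main obstacle'' is a genuine gap, and the one-step case split you offer does not close it. Restricting the pentagon to generators gives, for each $i$, one equation on $Γ_i(SX,SSX)$, whereas the hexagon of Definition~\ref{def:enhanced} lives on $Γ_i(ΣX,X)$. Decomposing $SX ≅ X + Σ SX$ in the main slot and invoking left-cocontinuity peels off only one layer of the first argument and does nothing to the $SSX$ in the second; passing from the generator equation down to a single condition on $Γ_i(ΣX,X)$, and conversely propagating the hexagon back up to all of $Γ_i(SX,SSX)$, requires a structural recursion over the free monad $S$ interleaved with the recursive definition of $δ_i$ from $d_i$ --- which is precisely the proof of \cite[Theorem~4.13]{admissible}. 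Similarly, the assertion that the base case ``forces $c_i = b_i ∘ Γ_i(X,\bar a)$'' concerns an equation on $Γ_i(X,SSX)$, not on $Γ_i(X,SX)$, and extracting the stated identity from it uses the monad-algebra axioms for $\bar b$ and naturality in a way you do not spell out. So the proposal correctly identifies the reduction but asserts, rather than proves, the one step that carries all the difficulty; the honest fix is either to carry out that recursion in full or to do what the paper does and delegate it to \cite[Theorem~4.13]{admissible}.
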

\begin{proof}
  By induction on $n$ and~\cite[Theorem~4.13]{admissible}.
\end{proof}

\subsection{Dynamic signatures}
Let us now introduce signatures for the dynamical part of an
operational semantics.  We want a dynamic signature to be something
like an endofunctor on $σ\Trans$, with built-in structuralness.  For
this, we introduce a variant of $ℍ$-transition systems called diplopic
$ℍ$-transition systems, which feature an object of distinguished
vertices, among which all sources of transitions must lie.  This will
enable structuralness, by allowing sources of conclusions of
transition rules to have a distinguished head constructor.  We fix
\anenhanced syntax $σ = (Σ,Γ,{δ∶TS → ST})$ for this subsection.
\begin{definition}
  A \alert{diplopic $ℍ$-transition system} $G$ consists of a
  \alert{vertex} object $V_G ∈ \psh[𝕍𝕋]$, a \alert{distinguished
    vertex} object $D_G ∈ \psh[𝕍𝕋]$, an \alert{edge} object
  $E_G ∈ \psh[𝔼𝕋]$, together with morphisms $γ_G∶ D_G → V_G$ and
  $∂_G∶ E_G → Δ_\source(D_G) × Δ_{\labels,\but}(V_G)$.

  A \alert{diplopic $σ$-transition system} is a diplopic
  $ℍ$-transition system $G$ equipped with $σ$-algebra structure on
  $V_G$.

  As before, we organise both notions into categories
  $ℍ\Trans_𝟚 = \psh[𝔼𝕋]/Δ_𝟚$ and
  $σ\Trans_𝟚 = ℍ\Trans_𝟚 ×_{\psh[𝕍]}\, σ\Alg$, where $Δ_𝟚$
  denotes the composite
  $\psh[𝕍𝕋]^𝟚 \xto{⟨π₁,π₂,π₂⟩} \psh[𝕍𝕋]³ \xto{Δ_\source × Δ_\labels ×
    Δ_\but} \psh[𝔼𝕋]$.
\end{definition}

\begin{definition}\label{def:dynsig}
  A \alert{dynamic signature} over $σ$ is a functor
  $Σ₁∶ σ\Trans → σ\Trans_𝟚$ such that, for all $G ∈ σ\Trans$,
  $V_{Σ₁(G)} = V_G$, $D_{Σ₁(G)} = V_G + Σ(V_G)$, and
  $γ_G∶ V_G + Σ(V_G) → V_G$ is the canonical morphism (and similarly
  on morphisms).
\end{definition}
\begin{example}\label{ex:srciii}
  Letting $𝐝$ denote the syntactic signature of \cref{ex:srcii}. The
  transition rules of \cref{serguei} define a dynamic signature
  $Σ₁∶ σ(𝐝)\Trans → σ(𝐝)\Trans_𝟚$. Its behaviour on the underlying
  $σ(𝐝)$-algebra is fixed, so we merely need to define it on
  transitions.  For any $G = (D,V,E,∂) ∈ σ(𝐝)\Trans$ and $α ∈ 𝔼𝕋$, we
  define $Σ₁(G)(α)$ to be a coproduct over all rules $ρ$ producing a
  transition of type $α$, of a set describing the premises of $ρ$.
  One non-trivial rule is \textsc{(SA)}, whose set of premises is
  $E[𝐜] ×_{V(0_𝐜)} (V(0_𝐜)×V(0_𝐯))$.  Concretely, it is the set of
  tuples $(r,(E,v))$, where $r$ is a transition $e₁ \xto{E'} e₂$, and
  the pullback condition imposes $E' = E[v\ ▫]$.  (We take the
  pullback of
  $E[𝐜] \xto{π₂∂} {V(0_𝐜)} \xot{E[v\ ▫]\ ↤\ E,v} V(0_𝐜)×V(0_𝐯)$.)  We
  define the source of $(r,(E,v))$ to be
  $in₂(in₂ (ι(v),e₁)) ∈ V_𝐩(0)+Σ₀(V)_𝐩(0)$, (i.e., recalling $Σ₀$
  from~\cref{ex:srcalg}, the formal application $ι(v)\ e₁$,) its label
  to be $E$, and its target to be $e₂$.
\end{example}

Returning to the abstract setting, let us now define the category of
models of a dynamic signature $Σ₁$.
For this, we need to build an endofunctor out of $Σ₁$,
hence a link between $σ\Trans$ and $σ\Trans_𝟚$.
\begin{definition}
  Let $ι\Trans∶ σ\Trans → σ\Trans_𝟚$ map any $E → Δ(V)$ to itself
  (with underlying arrow $V → V$).
\end{definition}

\begin{proposition}\label{prop:iota:rho:reflection}
  The functor $ι\Trans∶ σ\Trans → σ\Trans_𝟚$ is a (full) reflective
  embedding.  The left adjoint, say $ρ\Trans∶ σ\Trans_𝟚 → σ\Trans$
  maps any $E → Δ_𝐬(D)×Δ_{𝐥,𝐭}(V)$ to the composite
  $E → Δ_𝐬(D)×Δ_{𝐥,𝐭}(V) → Δ(V)$.
\end{proposition}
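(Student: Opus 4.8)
The plan is to verify directly that $ι\Trans$ is full and faithful, and then to exhibit the unit of the claimed adjunction $ρ\Trans ⊣ ι\Trans$ and check its universal property. Conceptually, everything rests on the underlying adjunction between $\psh[𝕍𝕋]^𝟚$ and $\psh[𝕍𝕋]$: the functor $V ↦ (\id_V∶ V → V)$ is right adjoint to the codomain functor $(γ∶ D → V) ↦ V$, realising $\psh[𝕍𝕋]$ as the full reflective subcategory of $\psh[𝕍𝕋]^𝟚$ on identity arrows. The functor $ι\Trans$ lifts this embedding to transition systems, sending each $σ$-transition system to the diplopic one with $D = V$ and $γ = \id_V$, and $ρ\Trans$ lifts the reflection by transporting the source of each edge along $γ$.

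First I would establish full faithfulness. A morphism in $σ\Trans_𝟚$ between two objects $ι\Trans(A)$ and $ι\Trans(B)$ consists of components $f_V$, $f_D$, $f_E$ on vertices, distinguished vertices, and edges, subject to compatibility with the two $γ$-maps and to the border square. Since both $ι\Trans(A)$ and $ι\Trans(B)$ have $γ = \id$, the compatibility condition reduces to $f_D = f_V$, so such a morphism is entirely determined by the pair $(f_V, f_E)$, which is precisely a morphism $A → B$ in $σ\Trans$. Hence $ι\Trans$ is full and faithful.

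Next I would construct the unit. For a diplopic $σ$-transition system $G = (D,V,E,γ,{∂∶ E → Δ_𝐬(D) × Δ_{𝐥,𝐭}(V)})$, set $ρ\Trans(G) = (V,E,∂')$ with $∂'$ the composite $E \xto{∂} Δ_𝐬(D) × Δ_{𝐥,𝐭}(V) \xto{Δ_𝐬(γ) × \id} Δ(V)$ as in the statement, and define $η_G∶ G → ι\Trans(ρ\Trans(G))$ with vertex component $\id_V$, distinguished-vertex component $γ$, and edge component $\id_E$. The compatibility with the $γ$-maps holds since the target has $γ = \id_V$, and the only substantial check is that the border square of $η_G$ commutes; but this amounts to $(Δ_𝐬(γ) × \id) ∘ ∂ = ∂'$, which is exactly the definition of $∂'$.

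Finally I would verify universality. Given a $σ$-transition system $H = (W,F,∂_H)$ and a morphism $h∶ G → ι\Trans(H)$ in $σ\Trans_𝟚$ with components $(h_V, h_D, h_E)$, the diplopic-morphism condition reads $h_D = h_V ∘ γ$ (using $γ = \id_W$ on the target). I would then take $\bar h∶ ρ\Trans(G) → H$ with components $(h_V, h_E)$ and verify its border square by expanding $Δ(h_V) ∘ ∂' = (Δ_𝐬(h_V ∘ γ) × Δ_{𝐥,𝐭}(h_V)) ∘ ∂ = (Δ_𝐬(h_D) × Δ_{𝐥,𝐭}(h_V)) ∘ ∂ = ∂_H ∘ h_E$, where the middle equality uses $h_D = h_V ∘ γ$ and the last is the border square of $h$. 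The identity $ι\Trans(\bar h) ∘ η_G = h$ and the uniqueness of $\bar h$ are then immediate by reading off components. This establishes $ρ\Trans ⊣ ι\Trans$ with $ι\Trans$ full, hence a reflective embedding. The one point requiring care is the twofold use of the source coordinate $Δ_𝐬(γ)$ — once to make the unit's border commute, and once, via $h_D = h_V ∘ γ$, to make the factorisation's border commute — which is precisely what ties the diplopic source condition to the reflection.
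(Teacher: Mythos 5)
Your proof is correct, and it is exactly the routine verification that the paper leaves implicit (the proposition is stated without proof): full faithfulness because $γ=\id$ forces $f_D=f_V$, the unit $(γ,\id_V,\id_E)$, and the bijection between morphisms $G → ι\Trans(H)$ and $ρ\Trans(G) → H$ via $h_D = h_V∘γ$. The only detail you elide — that the vertex component of the factorisation remains a $σ$-algebra morphism — is immediate since $ρ\Trans$ and $ι\Trans$ both leave the vertex object and its algebra structure untouched.
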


\begin{definition}
  For any $Σ₁$, let $\check{Σ}₁$ be the composite
  $σ\Trans \xto{Σ₁} σ\Trans_𝟚 \xto{ρ\Trans} σ\Trans$.
\end{definition}
Models of $Σ₁$ will almost be $\check{Σ}₁$-algebras.  The problem is
that a $\check{Σ}₁$-algebra structure includes in particular algebra
structure for the action of $\check{Σ}₁$ on the underlying
$σ$-algebra, i.e., algebra structure $V → V$ for the identity
endofunctor on $σ$-algebras.  This structure is not relevant for our
purposes, so we require it to be the canonical candidate, i.e., the
identity on $V$.
\begin{definition}
  A $\check{Σ}₁$-algebra structure $\check{Σ}₁(G) → G$ is
  \alert{vertical} when its image under the forgetful functor
  $σ\Trans → σ\Alg$ is the identity.  A $\check{Σ}₁$-algebra is called
  \alert{vertical} accordingly. Let $\check{Σ}₁\alg ᵥ$ denote the full
  subcategory of $\check{Σ}₁\alg$ spanning all vertical algebras.
\end{definition}

\begin{theorem}\label{thm:Z}
  The forgetful functor $\check{Σ}₁\alg ᵥ → σ\Trans$ is monadic, and
  the initial $\check{Σ}₁$-algebra, say $𝐙_{Σ₁}$, or $𝐙$ for short
  when $Σ₁$ is clear from context, may be chosen to be vertical, hence
  in particular to also be initial in $\check{Σ}₁\alg ᵥ$.  (In this
  case, $V_𝐙$ is an initial $σ$-algebra.)
\end{theorem}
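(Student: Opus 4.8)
The plan is to realise $\check{Σ}₁\alg ᵥ$ as the category of algebras for an explicit monad on $σ\Trans$, so that monadicity holds by construction, and then to run Adámek's initial-algebra construction to identify $𝐙$ and check its verticality. Throughout, write $U∶ σ\Trans → σ\Alg$ for the vertex functor. The key structural observation is that $\check{Σ}₁$ \emph{covers the identity} on $σ\Alg$: since $Σ₁$ preserves the vertex object (\cref{def:dynsig}) and $ρ\Trans$ does too (\cref{prop:iota:rho:reflection}), we have $U\check{Σ}₁ = U$. Consequently $\check{Σ}₁$ restricts, for each $σ$-algebra $V$, to an endofunctor $F_V$ of the fibre $U^{-1}(V) ≃ \psh[𝔼𝕋]/Δ(V)$; this fibre is a slice of a presheaf category, hence locally presentable. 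Unfolding the definitions, a \emph{vertical} $\check{Σ}₁$-algebra with vertex $V$ is precisely an $F_V$-algebra in this fibre, since verticality is exactly the requirement that the structure map lie over $V$ with identity vertex component.

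First I would build the free monad fibrewise but globally. As $ρ\Trans$ is a left adjoint (\cref{prop:iota:rho:reflection}) and $Σ₁$ is finitary, as it is in the intended applications, $\check{Σ}₁$ is finitary, so in each fibre the free monad $F_V^*$ exists and is computed as the $ℕ$-indexed colimit of the usual chain. I would assemble these into a single monad $\mathbb{T}$ on $σ\Trans$ by setting $\mathbb{T} = \colim_n W_n$ with $W_0 = \id$ and $W_{n+1}(G) = G +_{UG} \check{Σ}₁(W_n G)$, where $+_{UG}$ denotes the coproduct in the fibre over $UG$ (coproduct of edge objects, vertex unchanged). Each $W_n$, and hence $\mathbb{T}$, covers the identity on $σ\Alg$, the assignment is functorial in $G$, and finitarity of $\check{Σ}₁$ makes the colimit compute $F_V^*$ fibrewise.

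It then remains to identify the algebras. Because the unit of $\mathbb{T}$ covers the identity on vertices, the unit law forces the vertex component of any $\mathbb{T}$-algebra structure map to be the identity; hence every $\mathbb{T}$-algebra is vertical, and by fibrewise freeness $\mathbb{T}$-algebras correspond bijectively, compatibly over $σ\Trans$, to vertical $\check{Σ}₁$-algebras. Thus $σ\Trans^{\mathbb{T}} ≅ \check{Σ}₁\alg ᵥ$ over $σ\Trans$, which is exactly monadicity of the forgetful functor. For the initial algebra, I would run Adámek's construction in $σ\Trans$: the initial $\check{Σ}₁$-algebra is the colimit of the chain $I → \check{Σ}₁ I → \check{Σ}₁(\check{Σ}₁ I) → \cdots$, where $I = (S∅, ∅, {!})$ is the initial object of $σ\Trans$ (the initial $σ$-algebra $S∅$ carrying no edges). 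Since $\check{Σ}₁$ covers the identity, every object in this chain has vertex $S∅$, and every connecting map, as well as the resulting structure map $\check{Σ}₁𝐙 → 𝐙$, is on vertices an endomorphism of the \emph{initial} $σ$-algebra $S∅$, hence the identity. Therefore the initial $\check{Σ}₁$-algebra is vertical; being vertical it is also initial in the full subcategory $\check{Σ}₁\alg ᵥ$, and its vertex $V_𝐙 = S∅$ is an initial $σ$-algebra, proving the parenthetical. (Equivalently, $𝐙 ≅ \mathbb{T}(I)$ is the free vertical algebra on $I$, and the left adjoint preserves the initial object.)

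The step I expect to be the main obstacle is the global assembly in the second paragraph: one must verify that the fibrewise coproducts $+_{UG}$ and the colimits vary naturally in $G$, so that $\mathbb{T}$ is a genuine monad on the \emph{total} category $σ\Trans$ rather than merely a fibrewise family, and that its multiplication is well defined. This in turn rests on confirming that the dynamic signature $Σ₁$ — and hence $\check{Σ}₁$ — is finitary. Once these are in place, both verticality computations are routine consequences of $\check{Σ}₁$ covering the identity together with the fact that the initial $σ$-algebra has no non-identity endomorphisms.
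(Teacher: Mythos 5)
Your proposal cannot be checked against the paper's own argument line by line, because the paper's proof of Theorem~\ref{thm:Z} is only a citation to \cite[Theorem 5.18 and Proposition 5.19]{HL}; but your reconstruction follows what is essentially the standard (and almost certainly the intended) route: observe that $\check{Σ}₁$ strictly covers the identity on $σ\Alg$, work fibrewise in the slices $\psh[𝔼𝕋]/Δ(V)$, where vertical algebra structures are exactly endofunctor-algebra structures, assemble the fibrewise free monads into a monad on $σ\Trans$, and note that over the initial $σ$-algebra every vertex component in the Adámek chain is forced to be the identity, so the initial algebra is vertical and hence initial in the full subcategory. The two points you flag yourself are the only real issues, and one is substantive: Definition~\ref{def:dynsig} does not require $Σ₁$ to be finitary or accessible, so your appeal to ``as it is in the intended applications'' is not a deduction from the stated hypotheses --- without some accessibility assumption neither the free monad $\mathbb{T}$ nor the initial $\check{Σ}₁$-algebra need exist. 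This omission is shared by the theorem as stated (it silently presupposes whatever hypothesis \cite{HL} imposes), so it is a gap in the statement rather than a flaw specific to your argument. The second point --- that the fibrewise coproducts and chain colimits vary functorially in $G$ and that the resulting isomorphism $σ\Trans^{\mathbb{T}} ≅ \check{Σ}₁\alg ᵥ$ over $σ\Trans$ also identifies morphisms lying over arbitrary $σ$-algebra maps, not only those within a single fibre --- is routine but does need to be written out. Subject to these two caveats, your argument is correct, including the parenthetical that $V_𝐙$ is an initial $σ$-algebra.
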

\begin{proof}
  Same as \cite[Theorem 5.18 and Proposition 5.19]{HL}.
\end{proof}

\begin{example}\label{ex:srciv}
  For $Σ₁$ as in \cref{ex:srciii}, $𝐙$ is the syntactic transition
  system of~\cref{serguei}.
\end{example}

Let us now collect the static and dynamic part of signatures and their
models.
\begin{definition}
  An \alert{operational signature} consists of a syntactic signature
  $𝐝$, together with a dynamic signature $Σ₁∶ σ(𝐝)\Trans → σ(𝐝)\Trans_𝟚$
  over the generated \enhanced syntax $σ(𝐝)$ (\cref{prop:sigmad}).
  The category of $(𝐝,Σ₁)$-algebras is $\check{Σ}₁\alg ᵥ$.
\end{definition}
By definition, we have:
\begin{proposition}
  The initial vertical $\check{Σ}₁$-algebra is an initial
  $(𝐝,Σ₁)$-algebra.
\end{proposition}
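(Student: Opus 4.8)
The plan is to observe that this proposition is purely a matter of unwinding definitions, the only substantive input being \cref{thm:Z}. By definition of an operational signature, the category of $(𝐝,Σ₁)$-algebras \emph{is} the category $\check{Σ}₁\alg ᵥ$ of vertical $\check{Σ}₁$-algebras. Consequently, ``initial $(𝐝,Σ₁)$-algebra'' and ``initial object of $\check{Σ}₁\alg ᵥ$'', i.e.\ ``initial vertical $\check{Σ}₁$-algebra'', denote literally the same thing, and there is nothing to prove beyond recording this identification.

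First I would invoke \cref{thm:Z}, which supplies everything needed for well-posedness: it guarantees that the initial $\check{Σ}₁$-algebra $𝐙$ exists, may be chosen vertical, and that, so chosen, it is initial in the full subcategory $\check{Σ}₁\alg ᵥ$. Transporting this conclusion along the definitional identification of $\check{Σ}₁\alg ᵥ$ with the category of $(𝐝,Σ₁)$-algebras then yields that $𝐙$ is an initial $(𝐝,Σ₁)$-algebra, as claimed. Uniqueness up to isomorphism of initial objects makes the phrase ``the initial vertical $\check{Σ}₁$-algebra'' unambiguous.

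The only point worth flagging — and it is already discharged by \cref{thm:Z} rather than by the present proposition — is the gap between being initial among \emph{all} $\check{Σ}₁$-algebras and being initial among the \emph{vertical} ones. It is the latter, weaker, notion that the definition of $(𝐝,Σ₁)$-algebra records, so strictly speaking this proposition relies only on the final clause of \cref{thm:Z}, asserting initiality within $\check{Σ}₁\alg ᵥ$. No genuine obstacle arises; the statement is an immediate bookkeeping consequence of the preceding definitions.
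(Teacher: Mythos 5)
Your proof is correct and matches the paper's, which simply prefaces the proposition with ``By definition, we have:''\,---\,i.e.\ the category of $(𝐝,Σ₁)$-algebras is defined to be $\check{Σ}₁\alg ᵥ$, so the claim is a definitional unwinding, with existence of a vertical initial algebra supplied by Theorem~\ref{thm:Z} exactly as you note.
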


\subsection{Congruence of \enhanced bisimilarity}\label{s:substclosedbisim}
In this subsection, we state our main congruence result.  For this, we
need to make an important hypothesis involving so-called functional
flexible bisimulations. These are like a functional version of
bisimulations, where labels are required to be related instead of
identical, much as in Sangiorgi's
BA-bisimulation~\cite{DBLP:conf/fsen/SangiorgiKS07}, which we need to
define both for algebraic transition systems and their diplopic
variant.  The hypothesis will then require that the considered dynamic
signature $Σ₁$ preserve functional flexible bisimulations.  We again
fix a Howe context $ℍ = (𝕍𝕋,𝔼𝕋,𝐬,𝐭,\labels)$ and \anenhanced syntax
$σ$ over it.
\begin{definition}
  A morphism $f∶ R → X$ in $ℍ\Trans_𝟚$ is a \alert{functional flexible
    bisimulation} iff for any $α ∈ 𝔼𝕋$, $r ∈ D_R(\source(α))$,
  $(r₁,…,r_{n_α}) ∈ Δ_\labels(V_R)(α)$, and transition
  $e'∶ f_D(r) \xto{α(f_V(r₁),…,f_V(r_{n_α}))} x'$ there exists
  $e∶ r \xto{α(r₁,…,r_{n_α})} r'$ such that $f_E(e) = e'$.

  A morphism in $σ\Trans_𝟚$ is a functional flexible bisimulation iff
  the underlying morphism in $ℍ\Trans_𝟚$
  is. 
A morphism in $σ\Trans$ is a functional flexible bisimulation
iff its embedding into $σ\Trans_𝟚$ (by $ι\Trans$) is.  In any of these
categories $𝒞$, let $𝐅𝐅𝐁𝐢𝐬𝐢𝐦(𝒞)$ denote the class of all functional
flexible bisimulations.
\end{definition}

\begin{definition}
  A dynamic signature $Σ₁∶ σ\Trans → σ\Trans_𝟚$ \alert{preserves
    functional flexible bisimulations} iff for all morphisms $f$ in
  $σ\Trans$, if $f$ is a functional flexible bisimulations, then so is
  $Σ₁(f)$.  
\end{definition}

Let us introduce a last hypothesis before stating the main result:
\begin{definition}
  A functor is \alert{algebraic} iff it is finitary and preserves wide
  pullbacks and reflexive coequalisers.  A syntactic signature
  $(Σ,Γ,δ)$ is algebraic if the endofunctor $Σ$ is.
\end{definition}
\begin{remark}
  Algebraicity is straightforward to verify in all our applications.
\end{remark}
\begin{theorem}\label{thm:main}
  For any operational signature $(𝐝,Σ₁)$, if $𝐝$ is algebraic and $Σ₁$
  preserves functional flexible bisimulations, then \enhanced
  bisimilarity on the initial vertical $\check{Σ}₁$-algebra is a
  congruence.
\end{theorem}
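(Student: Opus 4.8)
The plan is to follow the strategy of Borthelle et al.~\cite{HL}, abstracting over Howe's method so that it accommodates the two new features (labels as programs, and general operations beyond substitution). The central idea of the classical proof is to build \emph{Howe's closure} of \enhanced bisimilarity $\sim^{\sigma(\mathbf{d})}_{\mathbf{Z}}$, namely the smallest relation that is both \enhanced (closed under $\Gamma$) and a congruence (closed under $\Sigma$), and then to show that this closure is itself \anenhanced bisimulation, whence by maximality it coincides with $\sim^{\sigma(\mathbf{d})}_{\mathbf{Z}}$; since the closure is a congruence by construction, so is bisimilarity. Categorically, the congruence--enhancement closure is obtained as an initial algebra (or free algebra) for the combined operation $\Sigma + \Gamma$ acting on spans, which exists because $\Sigma$ is finitary and $\Gamma$ is left-cocontinuous, exactly as exploited in \cref{prop:bisimilarity} and the subsequent \enhanced variant.

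First I would set up Howe's closure precisely as an object of $\sigma(\mathbf{d})\Trans$, or rather as a span $H \to V_{\mathbf{Z}}^2$ carrying $ST$-algebra structure, obtained by freely closing $\sim^{\sigma(\mathbf{d})}_{\mathbf{Z}}$ under the syntactic and additional operations. The key move, abstracting Bernstein's argument, is to present this closure as a \emph{functional} object: rather than working with a mere relation, I would exhibit a morphism of algebraic transition systems $\phi\colon H \to \mathbf{Z}$ (or a span whose legs are structured morphisms) so that the simulation property can be phrased as the statement that $\phi$ is a functional flexible bisimulation. This is where the hypothesis that $Σ₁$ preserves functional flexible bisimulations enters: the dynamics of $H$ are generated by applying $\check{\Sigma}_1$ to the closure, and preservation guarantees that the freely-generated transitions of $H$ match those of $\mathbf{Z}$ up to related labels, which is precisely the flexible simulation condition. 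The algebraicity hypothesis on $\mathbf{d}$ ensures the relevant free algebras and the initial vertical $\check{\Sigma}_1$-algebra $\mathbf{Z}$ (\cref{thm:Z}) are well-behaved, in particular that $\Sigma$ preserves the monos and pullbacks needed to keep the closure a relation and to run the pullback-style reasoning used for rules such as \textsc{(SA)}.

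The main steps, in order, would then be: (i) construct Howe's closure $H$ as the initial $(\Sigma+\Gamma)$-algebra over $\sim^{\sigma(\mathbf{d})}_{\mathbf{Z}}$ inside the category of spans, so that by construction $H$ is \enhanced and a congruence; (ii) verify the \emph{key lemma}, that $H$ is closed under the $T$-operations compatibly with its $S$-structure, using the distributive law $\delta_n$ and the \enhanced-algebra description of \cref{def:enhanced} to commute additional operations past basic ones at the main occurrence; (iii) prove that the symmetrised closure $H \cap H^{\myop}$, or $H$ together with a transitivity/symmetry saturation, is a flexible bisimulation, deducing this from preservation of functional flexible bisimulations applied to the generating morphism and from the flexible matching condition transferring along $\phi$; (iv) conclude that this saturated closure is \anenhanced bisimulation, so it is contained in $\sim^{\sigma(\mathbf{d})}_{\mathbf{Z}}$ by maximality, while it contains $\sim^{\sigma(\mathbf{d})}_{\mathbf{Z}}$ by construction, forcing equality and hence congruence.

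I expect the hard part to be step (iii), the passage from flexible simulation to ordinary bisimilarity. In the classical setting this is the delicate ``substitutivity of Howe's closure under evaluation'' lemma, and here it is complicated by two things at once: labels may themselves be programs (feature~\ref{item:contlab}), so the matching transition is only required to exist for \emph{related} labels rather than identical ones — this is exactly why flexible bisimulation, rather than plain bisimulation, is the right abstract notion, and why the final symmetrisation/transitivity step (corresponding to the saturation rules in \cref{fig:trans:shiftreset}) is needed to recover a genuine bisimulation. The second complication is that the additional operations $\Gamma$ are more general than substitution (feature~\ref{item:contapp}), so the commutation of operations with reduction can no longer be read off from a substitution lemma but must be extracted abstractly from the incremental structural laws $d_i$ and the induced distributive law $\delta_n$; controlling this commutation while keeping the closure functional is the technical crux, and is precisely the point at which the abstraction over Bernstein's proof does real work beyond~\cite{HL}.
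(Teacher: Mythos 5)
There is a genuine gap, and it is in the very first step: your construction of the Howe closure is not Howe's closure. You define $H$ as the free $(\Sigma+\Gamma)$-algebra over $\sim^{\sigma(\mathbf d)}_{\mathbf Z}$, i.e.\ the smallest \enhanced congruence containing bisimilarity. The paper instead takes the free algebra, over the \emph{empty} span, of the functor $R_0 \mapsto \Sigma_0(R_0) + (R_0 \mathbin{;} {\sim})$ — closure under constructors \emph{and under right-composition with bisimilarity}. That second summand is the defining trick of Howe's method and cannot be dropped: in the inductive simulation argument, matching a transition of $f(\vec x)$ against a related $y$ only produces a target related to the desired one up to a trailing $\sim$, so the induction lands in $H\mathbin{;}{\sim}$ rather than $H$ and does not close unless $H\mathbin{;}{\sim} \subseteq H$ is built in (this is item (ii) of Lemma~\ref{lem:wow:basicprops}). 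Relatedly, in the paper \enhancedness of the closure is \emph{derived} (Proposition~\ref{prop:Wow:augmented}, via the distributive law and \enhancedness of $\sim$), not baked into the generating functor, and containment of $\sim$ follows from reflexivity plus the $(-)\mathbin{;}{\sim}$ action rather than from seeding the closure with $\sim$. Your saturation step is also off: the paper does not take $H \cap H^{\myop}$ (which is not generally a bisimulation) but the relational transitive closure $\emptyset^{\bullet\bar+}$, proves it symmetric (Proposition~\ref{prop:trans:is:sym}), and recovers a plain bisimulation from the flexible one via \emph{reflexivity} (Proposition~\ref{prop:flexbisim:bisim}), not via the weak-bisimulation saturation rules.

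Beyond that, your step (iii) — where you correctly locate the difficulty — is asserted rather than argued. The paper's key lemma (Lemma~\ref{lem:key}) is not a one-line application of the preservation hypothesis to a single morphism $\phi\colon H \to \mathbf Z$: it builds an $\omega$-chain of flexible simulations $\check\Sigma_1^n(\mathbf Z_0) \leftarrow R^n \to \mathbf Z$ over the fixed span $\mathbf Z_0 \leftarrow \emptyset^\odot \to \mathbf Z_0$, shows that the generating endofunctor on such spans preserves flexible simulations (using preservation of functional flexible bisimulations for the top-right square of a pasting, plus the composition and image closure lemmas for flexible simulations), and passes to the colimit using closure of flexible bisimulations under filtered colimits. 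Without this chain construction — or some substitute for it — the hypothesis on $\Sigma_1$ is never actually brought to bear on the closure, so the proposal as written does not yield a proof.
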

\begin{proof}
  See Appendix~\ref{app:proof-main-thm}.
\end{proof}
\subsection{Preservation of functional flexible bisimulations}
In this section, we exhibit a sufficient condition for a dynamic
signature to preserve functional flexible bisimulations, slightly
generalising~\cite[§7]{HL}.  Fixing a Howe context
$ℍ = (𝕍𝕋,𝔼𝕋,\source,\but,\labels)$ and \anenhanced syntax
$σ=(Σ₀,Γ,{δ∶ TS → ST})$ on $\psh[𝕍𝕋]$, we first characterise $ℍ\Trans$
and $ℍ\Trans_𝟚$ as presheaf categories, which allows us to
characterise functional flexible bisimulations as the right class of a
weak factorisation system~\cite{Hovey,riehl} -- we call the left class
\alert{cofibrations}. We then recall familial functors, and define the
notion of rule of a dynamic signature $Σ₁$, and the \alert{border
  arity} of any rule. We finally show that a familial $Σ₁$ preserves
functional flexible bisimulations iff the border arities of all rules
are cofibrations.

Let us characterise transitions systems as presheaves,
recalling~\cref{def:collage}:
\begin{proposition}
  We have $ℍ\Trans ≃ \psh[{𝕍𝕋[𝔼𝕋]}]_{𝐲_𝐬+𝐥+𝐲_𝐭}$, where $𝐲_𝐬+𝐥+𝐲_𝐭∶ 𝔼𝕋 → \psh[𝕍𝕋]$.
\end{proposition}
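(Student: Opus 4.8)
The plan is to derive this as a direct instance of the collage equivalence of \cref{prop:coll}, after recognising $Δ_ℍ$ as a nerve functor. Recall that, by \cref{def:HTrans}, $ℍ\Trans$ is the oplax limit $\psh[𝔼𝕋]/Δ_ℍ$ of the functor $Δ_ℍ∶ \psh[𝕍𝕋] → \psh[𝔼𝕋]$. Instantiating \cref{prop:coll} at $𝕏 = 𝔼𝕋$, $𝕐 = 𝕍𝕋$, and $F = 𝐲_\source+\labels+𝐲_\but ∶ 𝔼𝕋 → \psh[𝕍𝕋]$ yields an equivalence $\psh[𝔼𝕋]/Δ_F ≃ \psh[𝕍𝕋[𝔼𝕋]_F]$, where $Δ_F$ denotes the induced nerve of $F$. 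So everything reduces to checking that this nerve $Δ_F$ coincides, naturally, with $Δ_ℍ$.

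First I would unfold the nerve. For $V ∈ \psh[𝕍𝕋]$ and $α ∈ 𝔼𝕋$, by definition $Δ_F(V)(α) = \psh[𝕍𝕋](𝐲_{\source(α)} + \labels(α) + 𝐲_{\but(α)}, V)$. The one substantive step is that the contravariant hom $\psh[𝕍𝕋](-,V)$ turns this coproduct into a product, giving
\[
Δ_F(V)(α) ≅ \psh[𝕍𝕋](𝐲_{\source(α)},V) × \psh[𝕍𝕋](\labels(α),V) × \psh[𝕍𝕋](𝐲_{\but(α)},V).
\]
The outer two factors collapse by the Yoneda lemma to $V(\source(α)) = Δ_\source(V)(α)$ and $V(\but(α)) = Δ_\but(V)(α)$, while the middle factor is $Δ_\labels(V)(α)$ on the nose. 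Hence $Δ_F(V)(α) ≅ Δ_\source(V)(α) × Δ_\labels(V)(α) × Δ_\but(V)(α) = Δ_ℍ(V)(α)$. Consequently $Δ_F ≅ Δ_ℍ$, so the oplax limits $\psh[𝔼𝕋]/Δ_F$ and $\psh[𝔼𝕋]/Δ_ℍ = ℍ\Trans$ agree, and \cref{prop:coll} identifies the former with $\psh[𝕍𝕋[𝔼𝕋]]_{𝐲_\source+\labels+𝐲_\but}$, which is the claim.

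I expect no real obstacle; the only point deserving care is that the isomorphism $Δ_F ≅ Δ_ℍ$ be natural in $V$ (and in $α$), rather than a mere pointwise bijection, so that transporting along it genuinely identifies the two oplax limits as categories. This is routine, since both the coproduct-to-product isomorphism and the Yoneda isomorphism are natural and assemble into an isomorphism of functors $\psh[𝕍𝕋] → \psh[𝔼𝕋]$. It is worth noting that the argument uses nothing about the shape of $\labels$ beyond its being a functor into $\psh[𝕍𝕋]$: the Howe-context requirement that each $\labels(α)$ be a finite coproduct of representables (\cref{def:howcont}) is not needed for this equivalence, though it does guarantee that $F$ lands in finite coproducts of representables, which matters for the later analysis of cofibrations.
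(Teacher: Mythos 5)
Your proof is correct and follows exactly the paper's route: the paper's entire argument is the observation that $Δ_ℍ$ is the nerve of $𝐲_\source+\labels+𝐲_\but$ followed by an appeal to \cref{prop:coll}. You simply spell out the Yoneda/coproduct-to-product computation that the paper leaves implicit, and your closing remark that the finite-coproduct-of-representables condition is not needed here is accurate.
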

\begin{proof}
  The functor $Δ_ℍ$ is the nerve of $𝐲_𝐬+𝐥+𝐲_𝐭$, so we conclude
  by~\cite[Lemma~4.9]{DBLP:journals/mscs/CarboniJ95}.
\end{proof}

Doing the same for $ℍ\Trans_𝟚$ leads to considering the functor
$𝔼𝕋 → \psh[𝕍𝕋]^𝟚$ mapping any $α$ to the arrow
$𝐲_{𝐬(α)} → 𝐲_{𝐬(α)} + 𝐥(α) + 𝐲_{𝐭(α)}$. But for
\cite[Lemma~4.9]{DBLP:journals/mscs/CarboniJ95} to apply, we need the
codomain of this functor to be a presheaf category. This is in fact
the case up to equivalence:
\begin{lemma}\label{lem:vtvt}
  We have
  $\psh[𝕍𝕋]^𝟚 ≃ \psh[{𝕍𝕋[𝕍𝕋]}_𝐲]$, where
  $𝐲∶ 𝕍𝕋 → \psh[𝕍𝕋]$.
\end{lemma}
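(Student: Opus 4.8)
The plan is to exhibit $\psh[𝕍𝕋]^𝟚$ as an instance of the oplax-limit presentation of presheaves on a collage furnished by \cref{prop:coll}. Concretely, I would instantiate \cref{def:collage} and \cref{prop:coll} with $𝕏 = 𝕐 = 𝕍𝕋$ and $F = 𝐲 ∶ 𝕍𝕋 → \psh[𝕍𝕋]$ the Yoneda embedding. Then the collage is precisely $𝕍𝕋[𝕍𝕋]_𝐲$, and \cref{prop:coll} gives directly $\psh[{𝕍𝕋[𝕍𝕋]}_𝐲] ≃ \psh[𝕍𝕋]/Δ_𝐲$, where the relevant nerve is $Δ_𝐲(V)(x) = \psh[𝕍𝕋](𝐲(x),V)$.

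The crux is then the elementary observation that this particular nerve is the identity up to canonical isomorphism. Indeed, by the Yoneda lemma $Δ_𝐲(V)(x) = \psh[𝕍𝕋](𝐲(x),V) ≅ V(x)$, naturally in both $x ∈ 𝕍𝕋$ and $V ∈ \psh[𝕍𝕋]$, so that $Δ_𝐲 ≅ \id_{\psh[𝕍𝕋]}$.

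Finally, as in \cref{prop:h-trans-comma}, the oplax limit $\psh[𝕍𝕋]/Δ_𝐲$ is by definition the comma category $\id_{\psh[𝕍𝕋]} ↓ Δ_𝐲$. Transporting along the natural isomorphism $Δ_𝐲 ≅ \id_{\psh[𝕍𝕋]}$ yields an (iso)equivalence of comma categories $\id_{\psh[𝕍𝕋]} ↓ Δ_𝐲 ≅ \id_{\psh[𝕍𝕋]} ↓ \id_{\psh[𝕍𝕋]}$, and the latter is exactly the arrow category $\psh[𝕍𝕋]^𝟚$; chaining these equivalences gives the claim. I expect no real obstacle here: the only point needing a moment's care is checking that post-composition with the natural isomorphism $Δ_𝐲 ≅ \id$ is functorial on the comma categories in both directions (so that one genuinely obtains an equivalence rather than a one-sided comparison), together with keeping the object conventions straight, namely that an object of $\id_{\psh[𝕍𝕋]} ↓ Δ_𝐲$ is a pair $(A,V)$ together with a map $A → Δ_𝐲(V)$, which under the isomorphism is just an arrow $A → V$ of $\psh[𝕍𝕋]$.
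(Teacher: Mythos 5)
Your argument is correct and is exactly the intended one: the paper leaves this lemma without an explicit proof, but the surrounding text makes clear it is meant to follow from \cref{prop:coll} applied to $F = 𝐲$, with the Yoneda lemma giving $Δ_𝐲 ≅ \id_{\psh[𝕍𝕋]}$ so that the oplax limit $\psh[𝕍𝕋]/Δ_𝐲$ collapses to the arrow category $\psh[𝕍𝕋]^𝟚$. Your added care about functoriality of transport along $Δ_𝐲 ≅ \id$ and about the direction conventions is appropriate but routine; nothing is missing.
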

\begin{notation}
  For each state type $b ∈ 𝕍𝕋$, the category $𝕍𝕋[𝕍𝕋]$ has an object
  $b_V$ corresponding to the vertex object, an object $b_D$ for the
  distinguished vertex object, and a morphism $b_V →
  b_D$. 
\end{notation}
Gluing along the obtained functor
$α↦ 𝐲_{𝐬(α)_D} + ∑_{i ∈ n_α} 𝐲_{(𝐥^αᵢ)_V} + 𝐲_{𝐭(α)_V}$, we obtain:
\begin{proposition}
  We have $ℍ\Trans_𝟚 ≃\psh[{𝕍𝕋[𝕍𝕋][𝔼𝕋]}]$.
\end{proposition}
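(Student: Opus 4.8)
The plan is to mimic the proof of the immediately preceding proposition characterising $ℍ\Trans$, the only difference being that we now glue over the two-objects-per-type base $𝕍𝕋[𝕍𝕋]$ rather than over $𝕍𝕋$ itself. By definition, $ℍ\Trans_𝟚 = \psh[𝔼𝕋]/Δ_𝟚$ is the oplax limit of the functor $Δ_𝟚 ∶ \psh[𝕍𝕋]^𝟚 → \psh[𝔼𝕋]$. The first step is to transport $Δ_𝟚$ across the equivalence $\psh[𝕍𝕋]^𝟚 ≃ \psh[{𝕍𝕋[𝕍𝕋]}_𝐲]$ of \cref{lem:vtvt}, obtaining a functor $\psh[{𝕍𝕋[𝕍𝕋]}_𝐲] → \psh[𝔼𝕋]$. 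The crux is then to recognise this transported functor as the nerve $Δ_F$ of the functor $F ∶ 𝔼𝕋 → \psh[{𝕍𝕋[𝕍𝕋]}]$ displayed just before the statement, namely $F(α) = 𝐲_{\source(α)_D} + ∑_{i ∈ n_α} 𝐲_{(𝐥^αᵢ)_V} + 𝐲_{\but(α)_V}$. Granting this, \cref{prop:coll} applies with $𝕏 = 𝔼𝕋$ and $𝕐 = 𝕍𝕋[𝕍𝕋]$, yielding $\psh[{𝕍𝕋[𝕍𝕋]}]/Δ_F ≃ \psh[{(𝕍𝕋[𝕍𝕋])[𝔼𝕋]}_F] = \psh[{𝕍𝕋[𝕍𝕋][𝔼𝕋]}]$, which is exactly the claim.

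The key step is the nerve computation. Let $W ∈ \psh[{𝕍𝕋[𝕍𝕋]}_𝐲]$ be the presheaf corresponding under \cref{lem:vtvt} to an object $γ ∶ D → V$ of $\psh[𝕍𝕋]^𝟚$; concretely $W(b_D) = D(b)$, $W(b_V) = V(b)$, and the generating morphism $b_V → b_D$ of $𝕍𝕋[𝕍𝕋]$ is sent by $W$ to $γ_b ∶ D(b) → V(b)$. On one side, unfolding $Δ_𝟚 = (Δ_\source × Δ_\labels × Δ_\but) ∘ ⟨π₁,π₂,π₂⟩$ gives $Δ_𝟚(γ)(α) = D(\source(α)) × (∏_{i ∈ n_α} V(𝐥^αᵢ)) × V(\but(α))$. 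On the other, since $F(α)$ is a finite coproduct of representables, Yoneda computes $\psh[{𝕍𝕋[𝕍𝕋]}](F(α),W) ≅ W(\source(α)_D) × (∏_{i ∈ n_α} W((𝐥^αᵢ)_V)) × W(\but(α)_V)$, which under the dictionary above is precisely the same product. It then remains to check that this isomorphism is natural in $W$ and in $α$, which is routine since everything is assembled from representables and finite (co)products.

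I expect the main obstacle to be bookkeeping rather than genuine mathematical difficulty. One must keep straight the orientation of the distinguished-vertex arrow $γ ∶ D → V$, and the resulting contravariant dictionary $W(b_D) = D(b)$, $W(b_V) = V(b)$ -- this is exactly what makes the source of a transition read off the $D$-component while its labels and target read off the $V$-components, matching the type of $∂_G$. One must also be careful that the iterated collage $(𝕍𝕋[𝕍𝕋])[𝔼𝕋]$ is formed along the functor $F$ landing in $\psh[{𝕍𝕋[𝕍𝕋]}]$, and not accidentally in $\psh[𝕍𝕋]$. Modulo this, the argument is a direct transcription of the single-object-base case, so I would present it tersely, isolating the identification of $Δ_𝟚$ with the nerve of $F$ and invoking \cref{lem:vtvt,prop:coll} for the rest.
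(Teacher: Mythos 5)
Your proposal is correct and follows exactly the route the paper intends: the paper gives no explicit proof, merely introducing the gluing functor $α↦ 𝐲_{𝐬(α)_D} + ∑_{i ∈ n_α} 𝐲_{(𝐥^αᵢ)_V} + 𝐲_{𝐭(α)_V}$ and invoking the collage equivalence, which is precisely your combination of \cref{lem:vtvt}, the identification of $Δ_𝟚$ as the nerve of that functor, and \cref{prop:coll}. Your explicit Yoneda computation of the nerve is the only content the paper leaves implicit, and it checks out.
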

Let us now characterise functional flexible bisimulations by a lifting
property.
\begin{definition}
  In a category $𝒞$, given a class $𝕁$ of morphisms, let
  $\wbotright{𝕁}$ consist of morphisms $f∶ X → Y$ such that
  for any $j∶ A → B$ in $𝕁$, any $(u,v)∶ j → f$ in $𝒞^𝟚$ admits a
  \alert{lifting}, i.e., a morphism $k∶ B → X$ such that $k∘j=u$ and
  $f∘k=v$.  Let $\wbotleft{𝕁}$ consist of all $f$ such that
  any $(u,v)∶ f → j$ in $𝒞^𝟚$ admits a lifting.  A
  \alert{$𝕁$-cofibration} is an element of $\wbotrightleft{𝕁}$.
\end{definition}
\begin{proposition}\label{prop:cx}
    For any $𝕁$,$𝕁$-cofibrations are closed under cobase change and
     composition.
  \end{proposition}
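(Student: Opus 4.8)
The plan is to reduce the statement to a completely standard fact about lifting properties. By definition, a $𝕁$-cofibration is an element of $\wbotrightleft{𝕁} = \wbotleft{(\wbotright{𝕁})}$, i.e.\ a morphism having the left lifting property against every morphism in the fixed class $\mathcal{R} := \wbotright{𝕁}$. So it suffices to prove the more general claim that, for \emph{any} class $\mathcal{R}$ of morphisms of $𝒞$, the class $\wbotleft{\mathcal{R}}$ of morphisms with the left lifting property against all of $\mathcal{R}$ is closed under composition and cobase change; applying this with $\mathcal{R} = \wbotright{𝕁}$ then gives the proposition. Both arguments are diagram chases that assemble the required diagonal filler from the fillers available for the inputs, so I do not expect any serious obstacle beyond bookkeeping.

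For composition, suppose $f \colon X → Y$ and $g \colon Y → Z$ lie in $\wbotleft{\mathcal{R}}$, fix $p \colon W → W'$ in $\mathcal{R}$, and fix a commuting square $(u,v) \colon g∘f → p$ with $u \colon X → W$ and $v \colon Z → W'$. I would first treat the square $(u,\, v∘g) \colon f → p$ (it commutes since $p∘u = v∘g∘f$), obtaining from $f ∈ \wbotleft{\mathcal{R}}$ a filler $h \colon Y → W$ with $h∘f = u$ and $p∘h = v∘g$. The pair $(h,v)$ is then a square from $g$ to $p$, so $g ∈ \wbotleft{\mathcal{R}}$ yields $k \colon Z → W$ with $k∘g = h$ and $p∘k = v$. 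This $k$ is the desired filler for the original square, since $k∘(g∘f) = h∘f = u$.

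For cobase change, suppose $f \colon A → B$ is in $\wbotleft{\mathcal{R}}$ and $g \colon C → D$ is its pushout along some $\ell \colon A → C$, with induced leg $m \colon B → D$ so that $g∘\ell = m∘f$. Given $p ∈ \mathcal{R}$ and a square $(u,v) \colon g → p$, I would transpose it along the pushout: the outer square $(u∘\ell,\, v∘m) \colon f → p$ commutes, so $f ∈ \wbotleft{\mathcal{R}}$ provides $h \colon B → W$ with $h∘f = u∘\ell$ and $p∘h = v∘m$. The pair $(u,h)$ is a cocone on the pushout span $C \xot{\ell} A \xto{f} B$ (its two legs agree on $A$ because $u∘\ell = h∘f$), so the universal property produces a unique $k \colon D → W$ with $k∘g = u$ and $k∘m = h$; this $k$ gives the top triangle of the lift. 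The one point where care is genuinely needed — and the closest thing to a hard step — is verifying the lower triangle $p∘k = v$: rather than computing it directly I would force it by uniqueness in the pushout, noting that $p∘k$ and $v$ agree after precomposition with $g$ (both equal $v∘g$, since $p∘k∘g = p∘u = v∘g$) and with $m$ (both equal $v∘m$, since $p∘k∘m = p∘h = v∘m$), hence are equal. This establishes $g ∈ \wbotleft{\mathcal{R}}$ and completes the reduction.
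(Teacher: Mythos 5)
Your proof is correct: the reduction to the general fact that $\wbotleft{\mathcal{R}}$ is closed under composition and cobase change, followed by the two diagram chases (including forcing the lower triangle via uniqueness out of the pushout), is exactly the standard argument. The paper gives no proof of this proposition, treating it as the well-known closure property of left lifting classes, and your write-up matches that intended argument.
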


For any $α ∈ 𝔼𝕋$, the element
$(in₁ (\id_{𝐬(α)})) ∈ (𝐲_𝐬+𝐥+𝐲_𝐭)(α)(𝐬(α))$, corresponds to a morphism
$s_α∶ 𝐬(α) → α$ in $𝕍𝕋[𝔼𝕋]$, and similarly we get morphisms
$l^αᵢ∶ 𝐥^αᵢ → α$ for all $i ∈ n_α$.
\begin{definition}
  Let $𝕁_σ$ denote the set of all maps $ℒ'(j_α)$ in $σ\Trans$, where
  $ℒ'∶ \psh[{𝕍𝕋[𝔼𝕋]}] → σ\Trans$ is left adjoint to the forgetful
  functor, and $j_α∶ 𝐲_{𝐬(α)} + ∑_{i ∈ n_α} 𝐲_{𝐥^αᵢ} → 𝐲_α$ denotes
  the cotupling $[𝐲_{s_α}, [𝐲_{l^αᵢ}]_{i ∈ n_α}]$, for all $α$.

  Let $𝕁_{𝟚,σ}$ denote the set of all maps $ℒ'_𝟚(j_{𝟚,α})$ in
  $σ\Trans_𝟚$, where $ℒ'_𝟚∶ \psh[{𝕍𝕋[𝕍𝕋][𝔼𝕋]}] → σ\Trans_𝟚$ is left
  adjoint to the forgetful functor, say $𝒰'_𝟚$, and
  $j_{𝟚,α}∶ 𝐲_{𝐬(α)_D} + ∑_{i ∈ n_α} 𝐲_{(𝐥^αᵢ)_V} → 𝐲_α$ denotes the
  analogous cotupling $[𝐲_{s_{𝟚,α}}, [𝐲_{l^{𝟚,α}ᵢ}]_{i ∈ n_α}]$, for
  all $α$.
\end{definition}
\begin{proposition}
  \label{prop:ffbisim-fib}
  We have $𝐅𝐅𝐁𝐢𝐬𝐢𝐦(σ\Trans) = \wbotright{𝕁_σ}$ and
  $𝐅𝐅𝐁𝐢𝐬𝐢𝐦(σ\Trans_𝟚) = \wbotright{𝕁_{𝟚,σ}}$.
\end{proposition}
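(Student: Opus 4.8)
The plan is to prove each equality by a two-step reduction: first transfer the lifting property across the relevant free/forgetful adjunction so as to land in a presheaf category, and then unfold the remaining lifting property by the Yoneda lemma, observing that it is verbatim the defining clause of a functional flexible bisimulation. I would treat the diplopic equality $𝐅𝐅𝐁𝐢𝐬𝐢𝐦(σ\Trans_𝟚) = \wbotright{𝕁_{𝟚,σ}}$ in detail, since the definition of functional flexible bisimulation is stated most directly for $ℍ\Trans_𝟚$; the case of $σ\Trans$ is then entirely analogous, with the rôle of the distinguished-vertex object played by the vertex object itself (recall that $ι\Trans$ sets $D = V$), replacing $j_{𝟚,α}$ and $ℒ'_𝟚$ throughout by $j_α$ and $ℒ'$.

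First I would invoke the standard transfer of lifting properties across an adjunction. Since $𝕁_{𝟚,σ}$ consists precisely of the images $ℒ'_𝟚(j_{𝟚,α})$ of the maps $j_{𝟚,α}$ under the left adjoint $ℒ'_𝟚 ⊣ 𝒰'_𝟚$, a morphism $f$ lies in $\wbotright{𝕁_{𝟚,σ}}$ iff $𝒰'_𝟚(f)$ lies in $\wbotright{\{j_{𝟚,α}\}}$ computed in $\psh[{𝕍𝕋[𝕍𝕋][𝔼𝕋]}] ≃ ℍ\Trans_𝟚$. Indeed, any lifting square $ℒ'_𝟚(j_{𝟚,α}) → f$ transposes, under the adjunction, to a lifting square $j_{𝟚,α} → 𝒰'_𝟚(f)$, and transposed solutions correspond bijectively; this reduces the claim to the underlying presheaf category, which is exactly the category where the functional flexible bisimulation condition is evaluated.

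Next I would unfold the remaining lifting property by Yoneda, using the presheaf presentation $ℍ\Trans_𝟚 ≃ \psh[{𝕍𝕋[𝕍𝕋][𝔼𝕋]}]$ established above. A map from the representable $𝐲_α$ into an object $R$ is an element of $E_R(α)$, i.e. a transition of type $α$; precomposing with $𝐲_{s_{𝟚,α}}$ extracts its source in $D_R$, and precomposing with each $𝐲_{l^{𝟚,α}ᵢ}$ extracts its $i$-th label in $V_R$. Hence a commuting square from $j_{𝟚,α}$ to $𝒰'_𝟚(f)$ is exactly a choice of source $r ∈ D_R(\source(α))$, labels $(r₁,…,r_{n_α}) ∈ Δ_\labels(V_R)(α)$, and a transition $e'∶ f_D(r) \xto{α(f_V(r₁),…,f_V(r_{n_α}))} x'$ in the codomain, while a lifting is precisely a transition $e∶ r \xto{α(r₁,…,r_{n_α})} r'$ with $f_E(e) = e'$. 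This is word for word the defining clause of a functional flexible bisimulation, so the two classes coincide.

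The only delicate point is the bookkeeping in this last step: one must check that the Yoneda correspondence identifies the two legs of the square with the source and the labels in the expected way, using that $s_{𝟚,α}$ and the $l^{𝟚,α}ᵢ$ are the collage morphisms picking out source and labels, and that commutativity of the square encodes exactly the compatibility of $f_D(r)$ and the $f_V(rᵢ)$ with the source and labels of $e'$. Once this dictionary between squares/liftings and transitions is in place, no further computation is needed, and the equivalences $ℍ\Trans ≃ \psh[{𝕍𝕋[𝔼𝕋]}]$ and $ℍ\Trans_𝟚 ≃ \psh[{𝕍𝕋[𝕍𝕋][𝔼𝕋]}]$ do the rest.
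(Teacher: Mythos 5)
Your proof is correct, and it supplies exactly the standard argument that the paper leaves implicit: the paper states Proposition~\ref{prop:ffbisim-fib} without proof, and the intended justification is precisely your two-step reduction, namely transposing lifting problems across the adjunctions $ℒ'_𝟚 ⊣ 𝒰'_𝟚$ (resp.\ $ℒ' ⊣ 𝒰'$) and then reading off, via Yoneda in the collage presheaf category, that a square from $j_{𝟚,α}$ (resp.\ $j_α$) together with a lift is word for word the data and conclusion of the functional flexible bisimulation condition. Your bookkeeping of sources and labels through the collage morphisms $s_{𝟚,α}$ and $l^{𝟚,α}ᵢ$ is the only point requiring care, and you handle it correctly.
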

Let us now introduce border arities.  A functor $F∶ 𝒞 → \psh[𝔻]$ to
some presheaf category is \alert{familial} iff there exists a functor
$E∶ \el(F(1)) → 𝒞$ from the \alert{category of
  elements}~\cite[§I.5]{MM} of $F(1)$, called the \alert{exponent} of
$F$, such that, we have a natural isomorphism
$$F(C)(d) ≅ ∑_{o ∈ F(1)(d)} 𝒞(E(d,o),C).$$  Intuitively, elements
$o ∈ F(1)(d)$ are operations of output arity $d$, and $E(d,o)$ gives
their input arity.  Morphisms $u∶ d → d'$ of $𝔻$ act on $F(C)$ by
precomposition: for any $o' ∈ F(1)(d')$, we have a morphism
$(d,o) \xto{u↾o'} (d',o')$ in $\el(F (1))$, where $o = F(1)(u)(o')$ --
which we write $o = o'·u$; and the map $F (C) (u)∶ F(C)(d') → F(C)(d)$
sends any $(o',ϕ∶ E(d',o') → C)$ to
$(o, E(d,o) \xto{u↾o'} E(d',o') \xto{ϕ} C)$.  This is the basis for
defining border arities.
\begin{definition}\label{def:border}
  Consider a dynamic signature $Σ₁$ such that the composite
  $σ\Trans \xto{Σ₁} σ\Trans_𝟚 \xto{𝒰'_𝟚} \psh[{𝕍𝕋[𝕍𝕋][𝔼𝕋]}]$ is
  familial with exponent $E$.  Let us fix $α ∈ 𝔼𝕋$ and
  $r ∈ 𝒰'_𝟚Σ₁(1)(α)$.  For any $k∶ Aₖ → α$ among
  $I_α ≔ \{ s_{𝟚,α},l^{𝟚,α}₁,…,l^{𝟚,α}_{n_α}\}$, we have
  $E(k ↾ r)∶ E(Aₖ,r·k) → E(α,r)$.  The \alert{border arity} $𝐛ᵣ$ of
  $r$ is the cotupling
  $[E(k↾r)]_{k ∈ I_α}∶ ∑_{k ∈ I_α} E(A,r·k) → E(α,r)$.
\end{definition}

\begin{theorem}\label{thm:cellular}
  For any dynamic signature $Σ₁∶ σ\Trans → σ\Trans_𝟚$ such that
  $𝒰'_𝟚Σ₁$ is familial, $Σ₁$ preserves functional flexible
  bisimulations iff all border arities are $𝕁_σ$-cofibrations.
\end{theorem}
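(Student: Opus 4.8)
The plan is to phrase preservation entirely as a lifting condition and then transport lifting problems across the adjunctions using familiality. By \cref{prop:ffbisim-fib}, $Σ₁$ preserves functional flexible bisimulations exactly when $Σ₁(f) ∈ \wbotright{𝕁_{𝟚,σ}}$ for every $f ∈ \wbotright{𝕁_σ}$. Since $𝕁_{𝟚,σ}$ consists of the maps $ℒ'_𝟚(j_{𝟚,α})$ and $ℒ'_𝟚$ is left adjoint to $𝒰'_𝟚$, transposition turns the condition $Σ₁(f) ∈ \wbotright{𝕁_{𝟚,σ}}$ into the requirement that the presheaf morphism $𝒰'_𝟚Σ₁(f)$ have the right lifting property against each $j_{𝟚,α}$ in $\psh[{𝕍𝕋[𝕍𝕋][𝔼𝕋]}]$. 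This is where familiality of $𝒰'_𝟚Σ₁$ is brought to bear.

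The crux is a bijection, for fixed $f∶ R → X$ in $σ\Trans$, between lifting squares of $j_{𝟚,α}$ against $𝒰'_𝟚Σ₁(f)$ and lifting squares of border arities against $f$. First I would unfold such a square. By Yoneda its bottom edge $𝐲_α → 𝒰'_𝟚Σ₁(X)$ is an element of $𝒰'_𝟚Σ₁(X)(α)$, hence by familiality a pair $(r, ψ∶ E(α,r) → X)$ with $r ∈ 𝒰'_𝟚Σ₁(1)(α)$; its top edge splits over $𝐲_{𝐬(α)_D} + ∑_{i ∈ n_α} 𝐲_{(𝐥^αᵢ)_V}$ into pairs $(o_k, φ_k∶ E(A_k,o_k) → R)$ indexed by $k ∈ I_α$. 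Feeding in the explicit familial action $(r,ψ)·k = (r·k, ψ ∘ E(k↾r))$ of the maps $k∶ A_k → α$, together with naturality of the familial isomorphism in its object variable (so that $𝒰'_𝟚Σ₁(f)$ acts as $(o,φ) ↦ (o, f ∘ φ)$), commutativity of the square forces $o_k = r·k$ and $f ∘ φ_k = ψ ∘ E(k↾r)$ for each $k$. Cotupling the $φ_k$ then produces precisely a commutative square of the border arity $𝐛_r = [E(k↾r)]_{k ∈ I_α}$ against $f$, and conversely every border square arises this way. The same bookkeeping identifies diagonal fillers $𝐲_α → 𝒰'_𝟚Σ₁(R)$ of the original square, read off as pairs $(r, χ∶ E(α,r) → R)$, with fillers $χ$ of the border square. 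Hence $𝒰'_𝟚Σ₁(f)$ lifts against every $j_{𝟚,α}$ iff $𝐛_r$ lifts against $f$ in $σ\Trans$ for all $α$ and all $r ∈ 𝒰'_𝟚Σ₁(1)(α)$.

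The theorem then follows by swapping quantifiers. If every $𝐛_r$ is a $𝕁_σ$-cofibration, i.e.\ lies in $\wbotleft{(\wbotright{𝕁_σ})}$, then each $f ∈ \wbotright{𝕁_σ}$ admits fillers against all $𝐛_r$, so $Σ₁(f) ∈ \wbotright{𝕁_{𝟚,σ}}$ and $Σ₁$ preserves functional flexible bisimulations; this is the ``if'' direction. Conversely, given preservation, fixing a border arity $𝐛_r$ and an arbitrary $f ∈ \wbotright{𝕁_σ}$ we get $Σ₁(f) ∈ \wbotright{𝕁_{𝟚,σ}}$, whence $𝐛_r$ lifts against $f$; since $f$ was arbitrary in $\wbotright{𝕁_σ}$, this gives $𝐛_r ∈ \wbotrightleft{𝕁_σ}$, which is the ``only if'' direction. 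I expect the main obstacle to be the middle paragraph: one must check that the familial decomposition is compatible with the dynamic-signature constraints $V_{Σ₁(G)} = V_G$ and $D_{Σ₁(G)} = V_G + Σ(V_G)$, so that the exponents $E(A_k, r·k)$ and $E(α,r)$ and their coproduct $∑_{k ∈ I_α} E(A_k, r·k)$ are honest objects and morphisms of $σ\Trans$. Once the familial action formula and the naturality of the familial isomorphism are nailed down, the bijection on squares and its compatibility with fillers are essentially forced, and the surrounding quantifier manipulation is routine.
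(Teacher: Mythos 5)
Your proof is correct and follows essentially the same route as the paper's: both reduce preservation to lifting properties via \cref{prop:ffbisim-fib} and the adjunction $ℒ'_𝟚 ⊣ 𝒰'_𝟚$, and then use familiality to set up a bijection between lifting squares of $j_{𝟚,α}$ against $𝒰'_𝟚Σ₁(f)$ and lifting squares of the border arities $𝐛_r$ against $f$, finishing with the quantifier swap over $f ∈ \wbotright{𝕁_σ}$. The only difference is presentational: the paper packages the middle step as a ``cellularity'' lemma phrased via generic-free factorisations (generalising \citet[Lemma~7.28]{HL}), whereas you inline the same bijection using the sum formula $F(C)(d) ≅ ∑_{o} 𝒞(E(d,o),C)$ directly; your closing worry about the exponents landing in $σ\Trans$ is resolved automatically, since the exponent of a familial $𝒰'_𝟚Σ₁$ is by definition a functor $\el(𝒰'_𝟚Σ₁(1)) → σ\Trans$.
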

\begin{proof}[Proof sketch for ``if'', see~\cref{app:proof-cellular}]
  Consider any $(u,v)∶ ℒ'_𝟚(j_{𝟚,α}) → Σ₁(f)$, with $f∶ A → B$ in
  $𝐅𝐅𝐁𝐢𝐬𝐢𝐦(σ\Trans)$. By adjunction, we get
  $(\tilde{u},\tilde{v})∶ j_{𝟚,α} → 𝒰'_𝟚(Σ₁(f))$.  Letting $r$ be the
  composite
  $𝐲_α \xto{\tilde{v}} 𝒰'_𝟚(Σ₁(B)) \xto{𝒰'_𝟚(Σ₁(!))} 𝒰'_𝟚(Σ₁(1))$, we
  use familiality to factor $(\tilde{u},\tilde{v})$ as the solid part
  below.  The result follows from finding $k$ as shown, by
  $𝐛ᵣ ∈ \wbotrightleft{𝕁_σ}$ and $f ∈ \wbotright{𝕁_σ}$.
  \begin{center}
    \hfil
    \diag|baseline=(m-2-1.base)|(.5,1.5){%
      𝐲_{𝐬(α)_D} + ∑_{i ∈ n_α}  𝐲_{(𝐥^αᵢ)_V} \& 𝒰'_𝟚(Σ₁(∑_{k ∈ I_α} E(Aₖ,r·k))) \& 𝒰'_𝟚(Σ₁(A)) \\
      𝐲_α \& 𝒰'_𝟚(Σ₁(E(α,r))) \& 𝒰'_𝟚(Σ₁(B)) %
    }{%
      (m-1-1) edge[loinbat={[𝒰'_𝟚(Σ₁(inₖ)) ∘ (r·k,\id)]_{k ∈ I_α}}{.25}] (m-1-2) %
      edge[labell={j_{𝟚,α}}] (m-2-1) %
      (m-2-1) edge[labelb={(r,\id)}] (m-2-2) %
      (m-2-2) edge[dashed,labelalat={𝒰'_𝟚(Σ₁(k))}{.3}] (m-1-3) %
      (m-1-2) edge[labell={𝒰'_𝟚(Σ₁(𝐛ᵣ))}] (m-2-2) %
      (m-1-2) edge[labela={𝒰'_𝟚(Σ₁(ψ))}] (m-1-3) %
      (m-2-2) edge[labelb={𝒰'_𝟚(Σ₁(ϕ))}] (m-2-3) %
      (m-1-3) edge[labelr={𝒰'_𝟚(Σ₁(f))}] (m-2-3) %
    } \qedhere
  \end{center}
\end{proof}

\begin{example}
  Let us now sketch a proof of \cref{thm:serguei}.  By
  \cref{thm:main,thm:cellular,prop:cx}, it suffices to reconstruct the
  border arity of each rule.  We only treat rule \textsc{(SA)} for
  lack of space: its border arity is the bottom morphism in
\begin{center}
  \Diag{%
    \pbk{m-2-1}{m-2-2}{m-1-2} %
  }{%
    ℒ(0_𝐩 + 0_𝐜) \& ℒ[𝐜] \\
    ℒ(0_𝐯 + 0_𝐩 + 0_𝐜) \& A \rlap{,}
  }{%
    (m-1-1) edge[labela={ℒ[s_{[𝐜]},l_{[𝐜]}]}] (m-1-2) %
    edge[labell={(e₁,E[v\ ▫])}] (m-2-1) %
    (m-2-1) edge[labelb={}] (m-2-2) %
    (m-1-2) edge[labelr={}] (m-2-2) %
  }
\end{center}
with hopefully clear notation.
\end{example}
\begin{example}
  This also works for PCF as in~\cite{DBLP:journals/tcs/Gordon99},
  which we omit for lack of space.
\end{example}

\section{Conclusion and perspectives}\label{sconclu}
We have introduced a categorical framework for applicative
bisimilarity in the presence of operations on terms other than
substitution, and of terms as labels. We have furthermore provided a
notion of signature for generating instances of this framework, and
proved that under suitable hypotheses, notably preservation of
functional flexible bisimulations, applicative bisimilarity in the
generated instance is a congruence. We have finally exhibited a more
concrete sufficient condition in terms of border arities being
cofibrations, which has allowed us to recover congruence of
applicative bisimilarity for $λ$-calculus with delimited control
operators and PCF.

For future work, we would be interested in further generalising the
framework to cover a kind of adaptation of Howe's method that still
eludes our abstraction efforts, namely (early-style) higher-order
process calculi~\cite{DBLP:conf/concur/LengletS15}.

\bibliography{bib}

\newpage

\appendix

\section{Proof of Theorem~\ref{thm:main}}
\label{app:proof-main-thm}

We assume given a Howe context $ℍ = (𝕍𝕋,𝔼𝕋,\source,\but,\labels)$.
To ease readability, we introduce some notations.
\begin{notation}\label{not:diplopic}
    For any $G = (D,V,E,γ,∂) ∈ ℍ\Trans_𝟚$, we let $G_{D,V}$ denote the
    underlying triple $(D,V,{γ∶ D → V}) ∈ \psh[𝕍𝕋]^𝟚$,  $G₀$ denote $V$, $G₁$
    denote $E$, and $Gₛ$ denote $D$.  Furthermore,
    following \cref{not:Delta}, we denote, e.g., by $Δ_{𝟚,𝐬,𝐥}$ the
    functor mapping any $γ∶ D → V$ to $Δ_𝐬(D) × Δ_𝐥(V)$. Finally, we
    sometimes treat the projection $G ↦ G_{D,V}$ as an implicit
    coercion. E.g., we write $Δ_{𝟚,𝐬,𝐥}(G)$ for $Δ_𝐬(D) × Δ_𝐥(V)$.
  \end{notation}

\subsection{Basic properties of flexible bisimulation}
In this section, we establish basic properties of flexible
bisimulations.
  \begin{proposition}\label{prop:Dl:ra}
    The functor $Δ_\labels$ is a right adjoint, hence in particular it
    preserves all limits.
  \end{proposition}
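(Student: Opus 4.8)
The plan is to recognise $Δ_\labels$ as a nerve functor and then to invoke the nerve--realisation adjunction. First I would note that $Δ_\labels∶ \psh[𝕍𝕋] → \psh[𝔼𝕋]$, defined by $Δ_\labels(V)(α) = \psh[𝕍𝕋](\labels(α),V)$, is precisely the nerve functor $Δ_F$ of \cref{prop:coll} for the choice $F = \labels∶ 𝔼𝕋 → \psh[𝕍𝕋]$ (so that the roles of $𝕏$ and $𝕐$ there are played by $𝔼𝕋$ and $𝕍𝕋$, both small since $ℍ$ is a Howe context).

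Next I would recall the standard fact that, for any functor $F∶ ℂ → \mathcal{E}$ from a small category $ℂ$ into a cocomplete category $\mathcal{E}$, the nerve $N_F∶ \mathcal{E} → \psh[ℂ]$, $e ↦ \mathcal{E}(F(-),e)$, is a right adjoint: its left adjoint is the left Kan extension $\Lan_𝐲 F∶ \psh[ℂ] → \mathcal{E}$ of $F$ along the Yoneda embedding $𝐲∶ ℂ → \psh[ℂ]$, which exists because $\mathcal{E}$ is cocomplete. Indeed, $\Lan_𝐲 F$ is the colimit-preserving extension of $F$ through the free cocompletion $𝐲$, and the required natural isomorphism $\mathcal{E}((\Lan_𝐲 F)(P),e) ≅ \psh[ℂ](P,N_F(e))$ is exactly the universal property exhibiting $N_F$ as its right adjoint. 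Applying this with $ℂ = 𝔼𝕋$ and $\mathcal{E} = \psh[𝕍𝕋]$, which is cocomplete as a presheaf category, exhibits $Δ_\labels = N_\labels$ as a right adjoint. Since right adjoints preserve all limits, the stated consequence follows.

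The main point is simply the identification with a nerve, and I expect no real obstacle: the nerve--realisation adjunction is entirely standard and requires none of the finiteness hypotheses on $\labels$. If a more self-contained argument were preferred, one could instead write the left adjoint pointwise as the coend $E ↦ \int^{α ∈ 𝔼𝕋} E(α) · \labels(α)$ and verify the adjunction by hand through the copower--hom adjunction in $\psh[𝕍𝕋]$; using $\labels(α) ≅ ∑_{i ∈ n_α} 𝐲_{𝐥^αᵢ}$, so that $Δ_\labels(V)(α) ≅ ∏_{i ∈ n_α} V(𝐥^αᵢ)$, this check becomes elementary, but it adds nothing essential to the conclusion.
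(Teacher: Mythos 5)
Your proof is correct and takes exactly the paper's route: identify $Δ_\labels$ as the nerve of $\labels∶ 𝔼𝕋 → \psh[𝕍𝕋]$ and exhibit its left adjoint as the left Kan extension of $\labels$ along the Yoneda embedding. The paper's proof is precisely this nerve--realisation argument (presented as a diagram), and you are right that no finiteness hypothesis on $\labels$ is needed for this statement.
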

  \begin{proof}
  The functor $Δ_\labels$ is the nerve functor of
  $\labels$.  It is right adjoint to the left Kan extension
  of $\labels$ along the Yoneda embedding, as in the following diagram.
  \begin{center}
    \begin{tikzcd}[ampersand replacement=\&]
      {𝔼𝕋} \&\&\&\& {\hat{𝔼𝕋}} \\
      \\
      \&\& {\hat{𝕍𝕋}}
      \arrow["{𝐥}"', from=1-1, to=3-3]
      \arrow["{𝐲}", from=1-1, to=1-5]
      \arrow[""{name=0, anchor=center, inner sep=0}, "{\bar{𝐥}}"', curve={height=12pt}, from=1-5, to=3-3]
      \arrow[""{name=1, anchor=center, inner sep=0}, "{Δ_𝐥}"', curve={height=12pt}, from=3-3, to=1-5]
      \arrow["\dashv"{anchor=center, rotate=-45}, draw=none, from=0, to=1]
    \end{tikzcd}
  \end{center}
  \end{proof}
  Regarding preservation of colimits, the fact that any $\labels(c)$
  is a finite coproduct of representables entails:
  \begin{proposition}\label{prop:Deltal:finitary}
    The functor $Δ_\labels$ is algebraic, and preserves epimorphisms.
  \end{proposition}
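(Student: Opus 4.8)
The plan is to exploit the hypothesis on the label functor: each $\labels(α)$ is a finite coproduct of representables, say $\labels(α) ≅ ∑_{i ∈ n_α} 𝐲_{𝐥^αᵢ}$. By the Yoneda lemma, together with the fact that a hom out of a coproduct is the product of the homs, we obtain a natural isomorphism
\[
  Δ_\labels(V)(α) = \psh[𝕍𝕋]\!\left(∑_{i ∈ n_α} 𝐲_{𝐥^αᵢ},V\right) ≅ ∏_{i ∈ n_α} V(𝐥^αᵢ),
\]
exactly as already recorded in the Remark following the definition of $ℍ$-transition systems. Thus, evaluated at each $α ∈ 𝔼𝕋$, the functor $Δ_\labels$ is the \emph{finite} product of the evaluation functors $\ev_{𝐥^αᵢ}∶ \psh[𝕍𝕋] → 𝐒𝐞𝐭$.

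Preservation of wide pullbacks comes for free: by \cref{prop:Dl:ra}, $Δ_\labels$ is a right adjoint, hence preserves all limits. For the remaining, colimit-flavoured properties, I would recall that colimits and epimorphisms in $\psh[𝔼𝕋]$ are computed pointwise, so it suffices to check them after evaluation at each $α$, where $Δ_\labels$ reduces to the finite product $∏_{i ∈ n_α} \ev_{𝐥^αᵢ}$ valued in $𝐒𝐞𝐭$. Each evaluation functor $\ev_c∶ \psh[𝕍𝕋] → 𝐒𝐞𝐭$ preserves all colimits and all epimorphisms, again because both are computed pointwise in presheaf categories. It then remains to see that a finite product of such functors inherits the three properties at stake: finitariness follows from the commutation of filtered colimits with finite limits (and finite products are finite limits) in $𝐒𝐞𝐭$; preservation of reflexive coequalisers follows from the commutation of finite products with reflexive coequalisers in $𝐒𝐞𝐭$; and preservation of epimorphisms follows because a finite product of surjections is a surjection. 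Finiteness of each $n_α$ is essential throughout, since neither filtered colimits nor reflexive coequalisers would commute with infinite products.

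The main obstacle is the reflexive-coequaliser clause. Unlike the filtered-colimit case, which rests on the standard commutation of filtered colimits with finite limits, the commutation of finite products with reflexive coequalisers in $𝐒𝐞𝐭$ is a more delicate fact; the cleanest way to see it is to observe that filtered colimits and reflexive coequalisers are both sifted colimits, and that sifted colimits commute with finite products in $𝐒𝐞𝐭$. Concretely, one also checks that the objectwise product of reflexive pairs is again reflexive, so that the pointwise coequaliser preservation of each $\ev_{𝐥^αᵢ}$ assembles correctly. Once this is granted, combining the pointwise arguments into the statement is routine, using that the displayed isomorphism is natural in $V$ and that every construction is performed objectwise in $α$.
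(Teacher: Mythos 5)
Your proof is correct and follows essentially the same route as the paper's: both arguments reduce to the fact that the covariant hom out of a finite coproduct of representables (equivalently, a finite product of evaluation functors) preserves sifted colimits, i.e.\ filtered colimits and reflexive coequalisers, with wide pullbacks handled separately by right adjointness (\cref{prop:Dl:ra}). The only cosmetic difference is that you unpack the hom into a finite product of evaluations and treat epimorphisms directly as finite products of surjections, where the paper derives epi-preservation from sifted-colimit preservation.
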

  \begin{proof}
    Just for making the proof slicker, we rely on the well-known
    facts~\cite{algebraictheories} that in presheaf categories
    preserving filtered colimits and reflexive coequalisers is
    equivalent to preserving \emph{sifted} colimits. Furthermore, just
    as the covariant hom of any finitely presentable object preserves
    filtered colimits, in a presheaf category the covariant hom of any
    finite coproduct of representable objects preserves sifted
    colimits, hence epimorphisms. The latter fact deals with the second statement.

    For the first, for any sifted colimit $\colimᵢ Xᵢ$ and $c ∈ \psh$:
    \begin{center}
      \hfill $\begin{array}[b]{rcll}
                Δ_\labels(\colimᵢ Xᵢ)(c) & = & \psh[𝕍𝕋](\labels(c),\colimᵢ Xᵢ) \\
                                         & = & \colimᵢ
                                               \psh[𝕍𝕋](\labels(c),Xᵢ)
                                         & \mbox{($\labels(c)$ a finite coproduct of representables)} \\
                                         & = & \colimᵢ Δ_\labels(Xᵢ)(c).
       \end{array}$ \qedhere
     \end{center}
   \end{proof}

   \begin{proposition}\label{prop:algra}
     All functors $Δ, Δ_𝐥, Δ_𝐬, Δ_𝐭, Δ_{𝐬,𝐥},…$ are algebraic right
     adjoints (and preserve epimorphisms).
   \end{proposition}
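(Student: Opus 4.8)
The plan is to handle the three basic functors $Δ_𝐬$, $Δ_𝐥$, $Δ_𝐭$ individually, and then to obtain every functor in the list as a finite pointwise product of these, checking that the relevant properties are stable under such products. First I would note that $Δ_𝐬 = \source^*$ and $Δ_𝐭 = \but^*$ are precomposition functors along $\source,\but∶ 𝔼𝕋 → 𝕍𝕋$. Each admits a left adjoint (left Kan extension), hence is a right adjoint, and is moreover computed pointwise, so it preserves all limits and colimits; in particular it is finitary and preserves wide pullbacks, reflexive coequalisers and epimorphisms. Thus $Δ_𝐬$ and $Δ_𝐭$ are algebraic right adjoints preserving epimorphisms. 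For $Δ_𝐥 = Δ_\labels$ these properties are exactly the content of \cref{prop:Dl:ra,prop:Deltal:finitary}.

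Next I would treat the composite functors such as $Δ_{𝐬,𝐥}$ and $Δ = Δ_{𝐬,𝐥,𝐭}$, each of which is a finite pointwise product of copies of $Δ_𝐬, Δ_𝐥, Δ_𝐭$. It thus suffices to prove that a finite pointwise product of algebraic right adjoints preserving epimorphisms is again one. For the right-adjoint part, given $F,G∶ \psh[𝕍𝕋] → \psh[𝔼𝕋]$ I would factor the pointwise product $H∶ c ↦ F(c) × G(c)$ as $H = {×} ∘ (F × G) ∘ Δ$, where $Δ∶ \psh[𝕍𝕋] → \psh[𝕍𝕋]²$ is the diagonal (right adjoint to the coproduct), $F × G∶ \psh[𝕍𝕋]² → \psh[𝔼𝕋]²$ is the product functor $(c,c') ↦ (F(c),G(c'))$, and ${×}∶ \psh[𝔼𝕋]² → \psh[𝔼𝕋]$ is the binary-product functor (right adjoint to the diagonal). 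As a product of right adjoints, $F × G$ is a right adjoint, so $H$ is a composite of right adjoints, hence itself a right adjoint; in particular $H$ preserves wide pullbacks.

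It then remains to see that $H$ is finitary, preserves reflexive coequalisers, and preserves epimorphisms. The last is immediate: in presheaf categories epimorphisms are pointwise surjections, and a pointwise product of surjections is a surjection, so $H$ preserves epimorphisms as soon as $F$ and $G$ do. For the colimits, the crucial observation is that the binary-product functor ${×}∶ \psh[𝔼𝕋]² → \psh[𝔼𝕋]$ preserves sifted colimits: sifted colimits are by definition those that commute with finite products in $𝐒𝐞𝐭$, and since colimits and finite products in presheaf categories are computed pointwise, the canonical comparison $(\colimᵢ Xᵢ) × (\colimᵢ Yᵢ) → \colimᵢ (Xᵢ × Yᵢ)$ over any sifted diagram is an isomorphism. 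Combining this with the finitariness and preservation of reflexive coequalisers of $F$ and $G$ — and recalling, as in the proof of \cref{prop:Deltal:finitary}, that preserving filtered colimits and reflexive coequalisers is the same as preserving sifted colimits — yields that $H$ preserves filtered colimits and reflexive coequalisers.

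The hard part is precisely this interaction between finite products and colimits: in contrast with limits (which products preserve for free, reflecting the fact that $H$ is a right adjoint), finite products do not commute with general colimits, so the finitariness and reflexive-coequaliser clauses cannot be deduced formally. They rely essentially on the characterisation of sifted colimits as those commuting with finite products, transported pointwise from $𝐒𝐞𝐭$ to the presheaf category. Everything else is inherited either pointwise (for $Δ_𝐬$ and $Δ_𝐭$), from the already-established case (for $Δ_𝐥$), or by formal composition of right adjoints.
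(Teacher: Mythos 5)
Your proposal is correct and follows essentially the same route as the paper: reduce to the three basic functors $Δ_𝐬$, $Δ_𝐭$ (restriction/precomposition functors) and $Δ_𝐥$ (handled by \cref{prop:Dl:ra,prop:Deltal:finitary}), then appeal to closure of algebraic right adjoints under pointwise finite products. The only difference is that you spell out the closure-under-products arguments (the adjoint factorisation through the diagonal, and the commutation of sifted colimits with finite products) which the paper simply asserts.
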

   \begin{proof}
     Let us first deal with algebraicity. Because algebraic functors
     are closed under pointwise products, it suffices to deal with
     each of $Δ_𝐥$, $Δ_𝐬$, and $Δ_𝐭$ in isolation: $Δ_𝐬$ and $Δ_𝐭$
     are, as restriction functors; $Δ_𝐥$ is by
     Proposition~\ref{prop:Deltal:finitary}. Finally, in presheaf
     categories, being algebraic entails preservation of epimorphisms.

     For right adjointness, as right adjoints are closed under
     pointwise products (under (co)completeness conditions satisfied
     here), it suffices to show that each of $Δ_𝐥$, $Δ_𝐬$, and $Δ_𝐭$
     is a right adjoint.  Again, $Δ_𝐬$ and $Δ_𝐭$ are, as restriction
     functors; and $Δ_𝐥$ is by Proposition~\ref{prop:Dl:ra}.
   \end{proof}

   \begin{proposition}\label{prop:algra2}
     All functors $Δ_𝟚, Δ_{𝟚,𝐥}, Δ_{𝟚,𝐬}, Δ_{𝟚,𝐭}, Δ_{𝟚,𝐬,𝐥},…$ are
     algebraic right adjoints and preserve epimorphisms.
   \end{proposition}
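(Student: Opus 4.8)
The plan is to reduce everything to \cref{prop:algra}, together with the observation that each diplopic functor is, up to pointwise products in $\psh[𝔼𝕋]$, a composite of a plain $Δ$-functor with one of the two projections $π₁,π₂∶ \psh[𝕍𝕋]^𝟚 → \psh[𝕍𝕋]$ occurring in the definition of $Δ_𝟚$. Concretely, writing $π₁$ for evaluation at the domain (the distinguished-vertex component $D$) and $π₂$ for evaluation at the codomain ($V$), we have $Δ_{𝟚,𝐬} = Δ_𝐬 ∘ π₁$, $Δ_{𝟚,𝐥} = Δ_𝐥 ∘ π₂$, $Δ_{𝟚,𝐭} = Δ_𝐭 ∘ π₂$, and the remaining functors are pointwise products of these, e.g.\ $Δ_{𝟚,𝐬,𝐥} = (Δ_𝐬 ∘ π₁) × (Δ_𝐥 ∘ π₂)$ and $Δ_𝟚 = (Δ_𝐬 ∘ π₁) × (Δ_𝐥 ∘ π₂) × (Δ_𝐭 ∘ π₂)$.

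First I would record the relevant properties of the projections. Since $\psh[𝕍𝕋]^𝟚 = [𝟚,\psh[𝕍𝕋]]$ is again a presheaf category, and since $π₁$ and $π₂$ are evaluations at a point of $𝟚$, hence restriction functors between presheaf categories, exactly the argument used for $Δ_𝐬$ and $Δ_𝐭$ in the proof of \cref{prop:algra} applies: as restriction functors they have both adjoints (by Kan extension), so in particular they are right adjoints; being computed pointwise they are finitary and preserve wide pullbacks and reflexive coequalisers, hence they are algebraic; and, being algebraic between presheaf categories, they preserve epimorphisms.

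Next I would invoke the closure of the three properties at stake under composition. Right adjoints compose; finitary, wide-pullback-preserving, and reflexive-coequaliser-preserving functors each compose, so algebraic functors compose; and preservation of epimorphisms composes. Thus every functor of the form $Δ_? ∘ π_j$, with $Δ_?$ among $Δ_𝐬,Δ_𝐥,Δ_𝐭$ (algebraic right adjoints preserving epis by \cref{prop:algra}) and $π_j$ among $π₁,π₂$, is again an algebraic right adjoint preserving epimorphisms. Finally, each $Δ_𝟚, Δ_{𝟚,𝐬}, Δ_{𝟚,𝐥}, Δ_{𝟚,𝐭}, Δ_{𝟚,𝐬,𝐥}, …$ is a pointwise product in $\psh[𝔼𝕋]$ of such composites, and algebraic functors and right adjoints are closed under pointwise products — the very closure property already used in \cref{prop:algra} — which moreover preserves preservation of epimorphisms, so the claim follows.

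I do not expect any genuine obstacle: the whole statement is a formal consequence of \cref{prop:algra}. The only step requiring a moment's care is the first one, namely verifying that the projections out of the arrow category $\psh[𝕍𝕋]^𝟚$ are algebraic right adjoints preserving epis, and this becomes immediate once they are recognised as restriction functors between presheaf categories, on the same footing as $Δ_𝐬$ and $Δ_𝐭$.
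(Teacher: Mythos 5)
Your proposal is correct and follows essentially the same route as the paper: both reduce to \cref{prop:algra} by writing each $Δ_{𝟚,x}$ as $Δ_x$ precomposed with a projection $\psh[𝕍𝕋]^𝟚 → \psh[𝕍𝕋]$ (itself a restriction functor, hence an algebraic right adjoint), and then invoke closure of algebraic right adjoints under composition and pointwise finite products, with preservation of epimorphisms coming for free between presheaf categories. No gap.
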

   \begin{proof}
     Algebraic functors between presheaf categories automatically
     preserve epimorphisms, so it suffices to prove that all these
     functors are algebraic right adjoints.
     
     Algebraic right adjoints being closed under pointwise finite
     products, it further suffices to prove that each of $Δ_{𝟚,𝐥}$,
     $Δ_{𝟚,𝐬}$, and $Δ_{𝟚,𝐭}$ is an algebraic right adjoint.  Now each
     of these functors $Δ_{𝟚,x}$ is the corresponding functor $Δₓ$,
     precomposed with one of the projections $\psh[𝕍𝕋]^𝟚 →
     \psh[𝕍𝕋]$. But each $Δₓ$ is an algebraic right adjoint by
     Proposition~\ref{prop:algra}, and projections, being restriction
     functors, are left and right adjoints, hence algebraic right
     adjoints, hence the result.
   \end{proof}

\begin{lemma}
  \label{lem:wpbk}
  In any presheaf category, for any commuting diagram of the form
  \begin{center}
\begin{tikzcd}[ampersand replacement=\&]
	A \& {A'} \& B \\
	{C'} \& C \& D
	\arrow[from=1-1, to=1-2]
	\arrow[from=1-2, to=1-3]
	\arrow[from=1-2, to=2-2]
	\arrow[from=1-3, to=2-3]
	\arrow[from=2-2, to=2-3]
	\arrow[from=1-1, to=2-1]
	\arrow[two heads, from=2-1, to=2-2]
\end{tikzcd}    
  \end{center}
  if the exterior rectangle is a pointwise weak pullback and the
  marked morphism is epi, then so is the right-hand square.
\end{lemma}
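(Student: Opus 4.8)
The plan is to reduce to $\mathbf{Set}$: since weak pullbacks in a presheaf category are tested pointwise, I would fix an arbitrary object $c$ of the base and prove the statement for the sets and functions obtained by evaluating at $c$. I would name the maps $a\colon A\to A'$ (top left), $b\colon A'\to B$ (top right), $p\colon A'\to C$ (middle vertical), $f\colon A\to C'$ (left vertical), $g\colon C'\to C$ (the marked epi, now a surjection), and $d\colon B\to D$, $e\colon C\to D$ (the right-hand cospan). Commutativity of the two inner squares records the identities $p\circ a=g\circ f$ and $d\circ b=e\circ p$.

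Next I would unfold the two weak-pullback conditions. The hypothesis that the exterior rectangle is a weak pullback says: for every $(x,y)\in B\times C'$ with $d(x)=e(g(y))$ there is some $z\in A$ with $b(a(z))=x$ and $f(z)=y$. The desired conclusion, that the right-hand square is a weak pullback, says: for every $(x,w)\in B\times C$ with $d(x)=e(w)$ there is some $z'\in A'$ with $b(z')=x$ and $p(z')=w$.

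The construction then proceeds in three short steps. Given a compatible pair $(x,w)$, I would first use surjectivity of $g$ to pick $y\in C'$ with $g(y)=w$, noting that then $e(g(y))=e(w)=d(x)$, so $(x,y)$ is compatible for the exterior rectangle. I would then invoke its weak-pullback property to obtain $z\in A$ with $b(a(z))=x$ and $f(z)=y$. Finally I would set $z':=a(z)$, so that $b(z')=x$ holds by construction and $p(z')=p(a(z))=g(f(z))=g(y)=w$ by the left-square identity; this $z'$ is the required witness.

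The one step that genuinely carries content is the last computation $p(z')=w$: it is precisely here that the surjectivity hypothesis is consumed (in choosing a lift $y$ with $g(y)=w$ exactly) and that commutativity of the left square is used to transport $f(z)=y$ across $g$. Everything else is a routine unfolding of the pointwise definitions, and since nothing beyond the given weak pullback and the one surjection is needed, the argument goes through verbatim in any presheaf category. I therefore expect no real obstacle, only the need to keep the three transported elements $y$, $z$, $z'$ straight.
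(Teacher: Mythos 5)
Your argument is correct and is exactly the paper's (one-line) proof spelled out: the paper also works pointwise, lifts an element of $C$ along the epi $C'\to C$ using that epis of presheaves are pointwise surjective, and then applies the weak pullback property of the outer rectangle. No gaps.
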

\begin{proof}
  Straightforward, using the fact that any morphism $𝐲_c → C$ from
  some representable presheaf lifts to $C'$ because epis are pointwise
  in presheaf categories.
\end{proof}

\begin{proposition}\label{prop:flexible-pw-pbk}
  A morphism $R→X$ of diplopic $ℍ$-transition systems is a functional flexible bisimulation
  iff the following square is a pointwise weak pullback.
  \begin{center}
    \begin{tikzcd}[ampersand replacement=\&]
      {R₁} \& {X_1} \\
      {Δ_𝐬(Rₛ) × Δ_{𝐥}(R_0)} \& {Δ_𝐬(Xₛ) × Δ_{𝐥}(X_0)}
      \arrow[from=1-1, to=1-2]
      \arrow[from=2-1, to=2-2]
      \arrow[from=1-1, to=2-1]
      \arrow[from=1-2, to=2-2]
    \end{tikzcd}
  \end{center}
  \end{proposition}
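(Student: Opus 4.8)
The plan is to prove the equivalence by a direct unfolding, reducing it to a comparison of two conditions stated object-by-object in $𝔼𝕋$. First I would unfold the meaning of \emph{pointwise weak pullback}: it asserts that for every $α ∈ 𝔼𝕋$ the square of sets obtained by evaluating at $α$ is a weak pullback in $𝐒𝐞𝐭$, i.e.\ every pair consisting of an element of the bottom-left corner and an element of the top-right corner that agree in the bottom-right corner admits a (not necessarily unique) common preimage in the top-left corner. Since the defining clause of a functional flexible bisimulation is itself quantified over $α ∈ 𝔼𝕋$, it suffices to fix $α$ and show that the two conditions coincide there.

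Next I would evaluate the square of the statement at an arbitrary $α ∈ 𝔼𝕋$. Writing $R_s = D_R$, $R_0 = V_R$, $R_1 = E_R$ as in \cref{not:diplopic}, and using the isomorphism $Δ_\labels(V)(α) ≅ ∏_{i ∈ n_α} V(𝐥^αᵢ)$ recalled earlier, the bottom-left corner becomes $D_R(\source(α)) × ∏_{i ∈ n_α} V_R(𝐥^αᵢ)$; the left vertical map sends an edge $e ∈ E_R(α)$ to its source and its label tuple, i.e.\ it is $∂_R$ followed by the projection that discards the target component $Δ_\but(V_R)$; the top horizontal is $f_E$; and the bottom horizontal applies $f_D$ to the source and $f_V$ to each label.

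Then I would spell out the weak-pullback condition at $α$ and match it against the definition. A bottom-left element is a pair $(r,(r₁,…,r_{n_α}))$ with $r ∈ D_R(\source(α))$ and $(r₁,…,r_{n_α}) ∈ Δ_\labels(V_R)(α)$; a top-right element is an edge $e' ∈ X_1(α)$; and the two agree in the bottom-right corner exactly when $e'∶ f_D(r) \xto{α(f_V(r₁),…,f_V(r_{n_α}))} x'$ for some target $x'$. The preimage whose existence the weak pullback demands is then an edge $e ∈ R_1(α)$ whose source and labels are $(r,(r₁,…,r_{n_α}))$ and with $f_E(e)=e'$, that is, a transition $e∶ r \xto{α(r₁,…,r_{n_α})} r'$ satisfying $f_E(e)=e'$. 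This is precisely the defining clause of a functional flexible bisimulation, so the two properties agree for every $α$, establishing the equivalence.

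The only delicate point is bookkeeping rather than genuine mathematical content: I must check that the square legitimately projects away the target component $Δ_\but(V_R)$, so that the target $r'$ of the lifting edge is left unconstrained -- exactly as in the definition, where only the source, the labels, and the $f_E$-image of $e$ are prescribed -- and that the canonical isomorphism identifying label tuples is applied uniformly on both sides. No limit-preservation or nontrivial lifting argument is needed; the entire proof is the translation between the ``existence of a common preimage'' formulation of a weak pullback and the existential quantifier in the definition of functional flexible bisimulation.
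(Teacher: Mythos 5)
Your proof is correct and is exactly the intended argument: the paper states \cref{prop:flexible-pw-pbk} without proof, treating it as an immediate unfolding of the definition of functional flexible bisimulation against the elementwise meaning of a pointwise weak pullback, which is precisely the translation you carry out. Your attention to the bookkeeping points (discarding the $Δ_\but$ component so the target of the lifted edge is unconstrained, and the identification $Δ_\labels(V)(α) ≅ ∏_{i ∈ n_α} V(𝐥^αᵢ)$) is exactly where the content of the check lies.
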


\begin{lemma}\label{lem:bisim:epi}
  For any morphisms $R \xto{f} S \xto{g} X$ in $ℍ\Trans_𝟚$
  such that $f_{D,V}∶ R_{D,V} → S_{D,V}$ is an epi,
  if $gf$ is a functional flexible bisimulation, then so is $g$.
\end{lemma}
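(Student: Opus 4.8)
The plan is to reduce the statement to the pointwise-weak-pullback characterisation of functional flexible bisimulations (\cref{prop:flexible-pw-pbk}) and then to cancel the left-hand factor of a horizontal pasting using \cref{lem:wpbk}. Write $P_G = Δ_\source(Gₛ) × Δ_\labels(G₀)$ for the source-and-label part of the codomain of the border, so that, composing $∂_G$ with the evident projection, every diplopic $ℍ$-transition system $G$ carries a canonical map $G₁ → P_G$. By \cref{prop:flexible-pw-pbk}, a morphism $h∶ G → H$ of diplopic $ℍ$-transition systems is a functional flexible bisimulation precisely when the square with top $h₁∶ G₁ → H₁$ and bottom $P_h∶ P_G → P_H$ is a pointwise weak pullback.

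First I would paste the squares attached to $f$ and to $g$ horizontally:
\begin{center}
\begin{tikzcd}[ampersand replacement=\&]
{R₁} \& {S₁} \& {X₁} \\
{P_R} \& {P_S} \& {P_X}
\arrow[from=1-1, to=1-2]
\arrow[from=1-2, to=1-3]
\arrow[from=1-1, to=2-1]
\arrow[from=1-2, to=2-2]
\arrow[from=1-3, to=2-3]
\arrow[two heads, from=2-1, to=2-2]
\arrow[from=2-2, to=2-3]
\end{tikzcd}
\end{center}
The two squares commute, since $f$ and $g$ commute with borders (being morphisms of diplopic transition systems) and hence with their source-and-label projections. By functoriality of $Δ_\source$ and $Δ_\labels$, the outer rectangle is exactly the square attached to the composite $gf$, with top $(gf)₁ = g₁ f₁$ and bottom $P_{gf} = P_g ∘ P_f$. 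Since $gf$ is assumed to be a functional flexible bisimulation, \cref{prop:flexible-pw-pbk} makes this outer rectangle a pointwise weak pullback.

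Next I would verify that the marked bottom-left map $P_f∶ P_R → P_S$ is epi. As $f_{D,V}$ is epi in $\psh[𝕍𝕋]^𝟚$, which is itself a presheaf category, it is pointwise epi; in particular both its components $fₛ∶ Rₛ → Sₛ$ and $f₀∶ R₀ → S₀$ are epi. By \cref{prop:algra}, the functors $Δ_\source$ and $Δ_\labels$ preserve epimorphisms, so $Δ_\source(fₛ)$ and $Δ_\labels(f₀)$ are epi, and therefore so is their pointwise product $P_f$. Applying \cref{lem:wpbk} with the marked epi taken to be $P_f$ and the exterior rectangle the $gf$-square, I conclude that the right-hand square — with top $g₁$ and bottom $P_g$ — is a pointwise weak pullback. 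By \cref{prop:flexible-pw-pbk} once more, this says exactly that $g$ is a functional flexible bisimulation, as desired.

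I expect the only genuine subtlety to be the direction of the cancellation: for strict pullbacks the pasting lemma would let us cancel the left square only if the right square were already known to be a pullback, and no such hypothesis is at hand. It is precisely \cref{lem:wpbk} — weak-pullback cancellation fuelled by the epi $P_f$ together with the pointwise character of epimorphisms in presheaf categories — that supplies the missing ingredient, and recognising that this lemma is the right tool is the crux of the argument. Everything else (commutation of borders with morphisms, pointwise-ness of epis in $\psh[𝕍𝕋]^𝟚$, and preservation of epis recorded in \cref{prop:algra}) is routine.
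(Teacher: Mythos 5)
Your proof is correct and follows essentially the same route as the paper's: both form the pasted rectangle of border squares, note that the outer rectangle is a pointwise weak pullback because $gf$ is a functional flexible bisimulation, check that the bottom-left leg is epi, and conclude by the cancellation Lemma~\ref{lem:wpbk}. The only cosmetic difference is that the paper invokes Proposition~\ref{prop:algra2} directly for preservation of epis by $Δ_{𝟚,𝐬,𝐥}$, whereas you rederive this componentwise from Proposition~\ref{prop:algra}.
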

\begin{proof}
  We have a diagram
    \begin{center}
    \diag{%
      R₁ \& S₁ \& X₁ \\
      Δ_{𝟚,𝐬,𝐥} R_{D,V} \& Δ_{𝟚,𝐬,𝐥} S_{D,V} \& Δ_{𝟚,𝐬,𝐥} X_{D,V}\rlap{,} %
    }{%
      (m-1-1) edge[labela={}] (m-1-2) %
      (m-1-1) edge[labela={}] (m-2-1) %
      (m-2-1) edge[labelb={}] (m-2-2) %
      (m-1-2) edge[labela={}] (m-1-3) %
      edge[labell={}] (m-2-2) %
      (m-2-2) edge[labelb={}] (m-2-3) %
      (m-1-3) edge[labelr={}] (m-2-3) %
    }
  \end{center}
  and want to prove that the right-hand square is a pointwise weak pullback,
  knowing that the outer rectangle is one: this follows readily
  by Lemma~\ref{lem:wpbk} and Proposition~\ref{prop:algra2}.
\end{proof}

\begin{corollary}\label{cor:bisim:epi}
  For any $X ∈ ℍ\Trans_𝟚$ and span morphism $f∶ R → S$ in $ℍ\Trans_𝟚/X²$
  such that $f_{D,V}$ is an epi, if $R$ is a (bi)simulation, then so
  is $S$.
\end{corollary}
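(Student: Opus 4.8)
The plan is to reduce the statement to Lemma~\ref{lem:bisim:epi} by passing from spans to their legs, exploiting the functional reformulation of (bi)simulation in the spirit of Joyal--Nielsen--Winskel. Recall that a span $j∶ R → X²$ in $ℍ\Trans_𝟚$ is a simulation exactly when its first leg, the composite $R \xto{j} X² \xto{π₁} X$, is a functional flexible bisimulation; unfolding Definition~\ref{def:simulation} through the pointwise-weak-pullback criterion of Proposition~\ref{prop:flexible-pw-pbk} gives precisely this equivalence. A bisimulation is then a span that is a simulation and whose converse (obtained by swapping the two legs) also is.

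Granting this, the simulation case is immediate. Write $j_R∶ R → X²$ and $j_S∶ S → X²$ for the two span structures. Since $f$ is a morphism in $ℍ\Trans_𝟚/X²$, it commutes with the legs, so $(π₁ ∘ j_S) ∘ f = π₁ ∘ j_R$. If $R$ is a simulation, then $π₁ ∘ j_R = (π₁ ∘ j_S) ∘ f$ is a functional flexible bisimulation. Factoring it as $R \xto{f} S \xto{π₁ ∘ j_S} X$, and using that $f_{D,V}$ is epi by hypothesis, Lemma~\ref{lem:bisim:epi} yields that $π₁ ∘ j_S$ is a functional flexible bisimulation; hence $S$ is a simulation. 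For the bisimulation case, observe that the \emph{same} $f$ is also a span morphism between the converse spans of $R$ and $S$ (it commutes with both legs, and $f_{D,V}$ is unchanged, hence still epi). Applying the simulation case to the converses shows that the converse of $S$ is a simulation; combined with $S$ being a simulation, this makes $S$ a bisimulation.

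The only genuine content beyond bookkeeping is the leg-reformulation invoked at the outset, which is where the pointwise-weak-pullback machinery of Lemma~\ref{lem:wpbk} and Propositions~\ref{prop:algra2} and~\ref{prop:flexible-pw-pbk} is really doing the work; once it is in hand, the corollary is a direct application of Lemma~\ref{lem:bisim:epi} (one leg, then its converse). I expect the main -- and only minor -- obstacle to be checking the orientation of the legs against Definition~\ref{def:simulation}, so that ``simulation'' is matched with the correct single leg and the converse clause is matched with the swapped one; everything else is formal manipulation in the slice $ℍ\Trans_𝟚/X²$.
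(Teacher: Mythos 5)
Your proof is correct and is essentially the argument the paper intends: the corollary is stated without proof precisely because it is the immediate consequence of Lemma~\ref{lem:bisim:epi} applied to the first leg of the span (factored through $f$), and then to the converse span for the bisimulation case. One small caveat: the leg-characterisation you invoke at the outset is not an unfolding of Definition~\ref{def:simulation} (whose matching transitions must carry \emph{identical} labels) but is by definition the \emph{flexible} notion for spans (Definition~\ref{def:flexible}), which is the notion the corollary is actually about, as its use in the proof of Proposition~\ref{prop:simliftmax} confirms.
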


\begin{proposition}\label{prop:fib}
  The projection functors $ℍ\Trans → \psh[𝕍𝕋]$ and $ ℍ\Trans_𝟚 → \psh[𝕍𝕋]^𝟚$ are Grothendieck
  fibrations.
\end{proposition}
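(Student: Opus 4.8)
The plan is to produce Cartesian lifts by pullback; I treat the first projection $P∶ ℍ\Trans → \psh[𝕍𝕋]$ in detail, the second being identical with $Δ_𝟚$ in place of $Δ$. Recall from \cref{prop:h-trans-comma} that $ℍ\Trans$ is the comma category $\id_{\psh[𝔼𝕋]} ↓ Δ$: its objects are triples $(E,V,{∂∶ E → Δ(V)})$, a morphism $(E,V,∂) → (E',V',∂')$ is a pair $(g∶ E → E', f∶ V → V')$ with $Δ(f) ∘ ∂ = ∂' ∘ g$, and $P$ projects onto $V$. Since $\psh[𝔼𝕋]$ is a presheaf category, hence complete, all pullbacks exist in it, and this is the only ingredient we need.

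Given an object $X = (E,V,∂)$ and a morphism $f∶ V₀ → V$ in $\psh[𝕍𝕋]$, I would form the pullback of $∂$ along $Δ(f)$ in $\psh[𝔼𝕋]$, yielding an object $E₀$ together with projections $∂₀∶ E₀ → Δ(V₀)$ and $g∶ E₀ → E$. Then $X₀ = (E₀,V₀,∂₀)$ lies over $V₀$, and the pair $(g,f)∶ X₀ → X$ is a morphism of $ℍ\Trans$ lifting $f$, because the square witnessing $Δ(f) ∘ ∂₀ = ∂ ∘ g$ is precisely the pullback square. To see that $(g,f)$ is Cartesian, take any morphism $(h,k)∶ X'' → X$ of $ℍ\Trans$, say $X'' = (E'',V'',∂'')$, together with a factorisation $k = f ∘ m$ in $\psh[𝕍𝕋]$. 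The compatibility of $(h,k)$ reads $∂ ∘ h = Δ(k) ∘ ∂'' = Δ(f) ∘ (Δ(m) ∘ ∂'')$, using functoriality of $Δ$, which is exactly the hypothesis needed to apply the universal property of the pullback to the cone $(h, Δ(m) ∘ ∂'')$. This produces a unique $h₀∶ E'' → E₀$ with $g ∘ h₀ = h$ and $∂₀ ∘ h₀ = Δ(m) ∘ ∂''$, the latter saying that $(h₀,m)∶ X'' → X₀$ is a morphism of $ℍ\Trans$; moreover $(g,f) ∘ (h₀,m) = (h,k)$ by construction, and $h₀$ is unique by the pullback. Hence $(g,f)$ is a Cartesian lift and $P$ is a fibration.

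The second projection $ℍ\Trans_𝟚 → \psh[𝕍𝕋]^𝟚$ is handled verbatim: $ℍ\Trans_𝟚 = \psh[𝔼𝕋]/Δ_𝟚 = \id_{\psh[𝔼𝕋]} ↓ Δ_𝟚$, pullbacks still exist in the complete category $\psh[𝔼𝕋]$, and the same pullback of $∂$ along $Δ_𝟚(f)$, for $f$ a morphism of $\psh[𝕍𝕋]^𝟚$, furnishes the Cartesian lift. Conceptually, both statements are one instance of the standard fact that a comma projection $\id_{𝒞} ↓ Δ → 𝒟$ is the change of base of the codomain fibration $\mathrm{cod}∶ 𝒞^𝟚 → 𝒞$ along $Δ∶ 𝒟 → 𝒞$, and fibrations are stable under change of base; here $𝒞 = \psh[𝔼𝕋]$ has pullbacks, so $\mathrm{cod}$ is a fibration. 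There is no real obstacle: the only points requiring care are to pull back against the correct functor ($Δ$, resp.\ $Δ_𝟚$), and to observe that the commuting square defining a morphism of $ℍ\Trans$ is exactly the compatibility hypothesis of the pullback's universal property.
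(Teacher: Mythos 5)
Your proof is correct and follows essentially the same route as the paper: the paper factors the argument through a general lemma (its Lemma~\ref{lem:fib}) stating that for any $F∶ 𝐀 → 𝐁$ with $𝐁$ having pullbacks, the projection $𝐁/F → 𝐀$ is a fibration, with Cartesian lifts given by pullback along $Ff$ — exactly the construction and universal-property verification you carry out for $Δ$ and $Δ_𝟚$.
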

\begin{proof}
  This follows readily from the next lemma.
\end{proof}
\begin{lemma}\label{lem:fib}
  For any functor $F∶ 𝐀 → 𝐁$ to some category $𝐁$ with pullbacks, the
  projection functor $p∶ 𝐁/F → 𝐀$, mapping any object $b → Fa$ to $a$, is
  a Grothendieck fibration.
\end{lemma}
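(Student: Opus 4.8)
The plan is to construct an explicit cartesian lift for every morphism of $𝐀$, using the pullbacks assumed to exist in $𝐁$, following the standard recipe by which comma categories give rise to fibrations.

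First I recall the structure of $𝐁/F$: an object is a triple $(b,a,φ)$ with $φ∶ b → F(a)$, a morphism $(b,a,φ) → (b',a',φ')$ is a pair $(g∶ b → b', h∶ a → a')$ with $F(h) ∘ φ = φ' ∘ g$, and the projection sends $(b,a,φ) ↦ a$ and $(g,h) ↦ h$. Now fix an object $X = (b',a',φ')$ and a morphism $h∶ a → a'$ of $𝐀$, so that $p(X) = a'$ is the codomain of $h$. I would form the pullback in $𝐁$ of $φ'$ along $F(h)$, producing an object $b$ with projections $φ∶ b → F(a)$ and $g∶ b → b'$ satisfying $F(h) ∘ φ = φ' ∘ g$. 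This commuting square is precisely a morphism $(g,h)∶ (b,a,φ) → X$ in $𝐁/F$ lying over $h$, so $X^* = (b,a,φ)$ is the candidate lift and $(g,h)$ its structure map.

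The key step is to verify that $(g,h)$ is cartesian. Given any morphism $(k,m)∶ (c,a'',ψ) → X$ together with a factorisation $m = h ∘ n$ in $𝐀$, I must produce a unique $(l,n)∶ (c,a'',ψ) → (b,a,φ)$ over $n$ with $(g,h) ∘ (l,n) = (k,m)$. The relevant cone over the cospan $F(a) \xto{F(h)} F(a') \xot{φ'} b'$ is given by $F(n) ∘ ψ∶ c → F(a)$ and $k∶ c → b'$; their compatibility $F(h) ∘ F(n) ∘ ψ = φ' ∘ k$ follows from functoriality, since $F(h) ∘ F(n) = F(m)$, together with the hypothesis that $(k,m)$ is a morphism in $𝐁/F$, which gives $F(m) ∘ ψ = φ' ∘ k$. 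The universal property then yields a unique $l∶ c → b$ with $g ∘ l = k$ and $φ ∘ l = F(n) ∘ ψ$: the former supplies the required factorisation on the $𝐁$-component, and the latter is exactly the condition making $(l,n)$ a morphism in $𝐁/F$. Uniqueness of $l$ is uniqueness in the pullback.

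There is no genuine obstacle here beyond bookkeeping: the only hypothesis used is the existence of pullbacks in $𝐁$, which is invoked both to construct $b$ and to establish cartesianness, while functoriality of $F$ is precisely what makes the compatibility condition for the universal property hold. The mild subtlety, if any, is keeping the two projections of the pullback matched to the $𝐀$- and $𝐁$-components of the morphisms so that the comma-category coherence square is read off correctly.
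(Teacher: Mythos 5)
Your proof is correct and follows exactly the same route as the paper: the cartesian lift of a morphism $h$ at an object $\varphi'\colon b'\to Fa'$ is obtained by pulling back $\varphi'$ along $F(h)$, with cartesianness supplied by the universal property of the pullback. You merely spell out the verification that the paper leaves implicit.
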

\begin{proof}
  Given any object $x∶ b → Fa$ and morphism $f∶ a' → a$, a cartesian
  lifting is given by the following pullback,
  \begin{center}
    \Diag{%
      \stdpbk
    }{%
      \restr{b}{a'} \& b \\
      Fa' \& Fa %
    }{%
      (m-1-1) edge[labela={\mrestr{x}{f}}] (m-1-2) %
      edge[labell={\restr{x}{f}}] (m-2-1) %
      (m-2-1) edge[labelb={Ff}] (m-2-2) %
      (m-1-2) edge[labelr={x}] (m-2-2) %
    }%
  \end{center}
  cartesianness being ensured by universal property of pullback.
\end{proof}
\begin{definition}\label{def:flexible}
  A span $R → X×Y$ of diplopic $ℍ$-transition systems (resp. diplopic
  $σ$-transition systems for any enhanced syntax $σ$) is a \alert{flexible simulation} if its left-hand leg
  $R → X$ is a functional flexible bisimulation, and a \alert{flexible
    bisimulation} when both of its legs are.


  By convention, for any $R ∈ \psh[𝕍𝕋]^𝟚$ and $X ∈ ℍ\Trans_𝟚$, a span
  $R → X_{D,V}²$ is a \alert{flexible bisimulation} when the cartesian
  lifting $R^⇑ → X²$ (in the sense of Proposition~\ref{prop:fib}) of
  $X$ along $R → X_{D,V}²$ is.
\end{definition}

\begin{proposition}\label{prop:simliftmax}
  If a span $R → X²$ is a flexible (bi)simulation, then
  so is the cartesian lifting $R_{D,V}^⇑ → X²$.
\end{proposition}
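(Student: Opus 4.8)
The plan is to factor the given span through the cartesian lifting and then reduce to the epi-stability of functional flexible bisimulations (\cref{lem:bisim:epi}). Treating first the case of diplopic $ℍ$-transition systems, I would write the span as a pair of legs $R → X$ and take its image $R_{D,V} → (X_{D,V})²$ under the fibration $ℍ\Trans_𝟚 → \psh[𝕍𝕋]^𝟚$ of \cref{prop:fib}. By the explicit construction in \cref{lem:fib}, the cartesian lifting $R_{D,V}^⇑$ of $X²$ along this projection comes with a cartesian morphism $c∶ R_{D,V}^⇑ → X²$ lying over $R_{D,V} → (X_{D,V})²$, whose underlying diplopic data has $R_{D,V}$ itself as its $(D,V)$-part, the edge object being the stated pullback of edge objects.

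Next I would produce a comparison morphism. Since the original span $R → X²$ also lies over $R_{D,V} → (X_{D,V})²$, the universal property of the cartesian morphism $c$ yields a unique $w∶ R → R_{D,V}^⇑$ over $\id_{R_{D,V}}$ with $c ∘ w$ equal to the span. In particular $w$ is vertical, so its $(D,V)$-component is $\id_{R_{D,V}}$, hence an epimorphism, and each leg of the span factors as $R \xto{w} R_{D,V}^⇑ \xto{\ell_i} X$, with $\ell_i$ the corresponding leg of $c$.

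Then I would apply \cref{lem:bisim:epi} one leg at a time: taking $f = w$ (whose $(D,V)$-part is epi) and $g = \ell_i$, the hypothesis that $\ell_i ∘ w$ is a functional flexible bisimulation forces $\ell_i$ to be one as well. For a flexible simulation this is used on the left leg only, for a flexible bisimulation on both, yielding that $R_{D,V}^⇑ → X²$ is again a flexible (bi)simulation. The diplopic $σ$-transition-system case then follows verbatim, since functional flexible bisimulations are tested on the underlying diplopic $ℍ$-transition systems and the forgetful functor $σ\Trans_𝟚 → ℍ\Trans_𝟚$ carries this cartesian lifting to the analogous one in $ℍ\Trans_𝟚$.

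The only delicate point is bookkeeping rather than mathematics: one must confirm that $w$ is genuinely over the identity, so that the epi hypothesis of \cref{lem:bisim:epi} is met, and that the cartesian lifting computed in $σ\Trans_𝟚$ has the expected underlying $ℍ\Trans_𝟚$-shape. Once \cref{prop:fib} and \cref{lem:bisim:epi} are in place, no genuinely hard step remains.
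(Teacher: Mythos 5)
Your proposal is correct and follows essentially the same route as the paper: the paper's proof is a one-liner applying Corollary~\ref{cor:bisim:epi} to the span morphism $R → R_{D,V}^⇑$, and your leg-by-leg application of Lemma~\ref{lem:bisim:epi} to the vertical comparison map $w$ (whose $(D,V)$-component is the identity, hence epi) is exactly what that corollary packages. The extra bookkeeping you supply --- existence of $w$ from cartesianness and its verticality --- is precisely what the paper leaves implicit.
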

\begin{proof}
  By Corollary~\ref{cor:bisim:epi} applied to the span morphism
  $R → R_{D,V}^⇑$.
\end{proof}

\begin{lemma}\label{lem:laxlim:sepimono}
  Consider any pullback-preserving functor $F∶ 𝐀 → 𝐁$ between
  categories with pullbacks and (strong epi-mono)
  factorisations. Then:
  \begin{romanenumerate}
  \item \label{item:laxlim:monos} A morphism $(f,g)$ in $𝐁/F$ is monic
    iff both $f$ and $g$ are.
  \item \label{item:laxlim:sepi} A morphism $(f,g)$ in $𝐁/F$ is a
    strong epi iff both $f$ and $g$ are.
  \item \label{item:laxlim:sepimono} The forgetful functor
    $𝐁/F → 𝐁×𝐀$ creates, hence preserves, (strong epi-mono)
    factorisations.
\end{romanenumerate}
\end{lemma}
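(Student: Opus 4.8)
The plan is to analyse everything through the faithful forgetful functor $U∶ 𝐁/F → 𝐁×𝐀$, which sends $(b,a,{φ∶ b → Fa})$ to $(b,a)$ and a morphism $(f,g)$ to itself. Recall that a morphism of $𝐁×𝐀$ is monic (resp. a strong epi) exactly when both components are. The crucial structural input I would establish first is that, since $F$ preserves pullbacks, $U$ \emph{creates} pullbacks: given a cospan in $𝐁/F$, the pullbacks $P_𝐁$ in $𝐁$ and $P_𝐀$ in $𝐀$ of the projected cospans assemble into an object of $𝐁/F$, the structure map $P_𝐁 → F(P_𝐀)$ being obtained from the isomorphism $F(P_𝐀) ≅ F(a₁)×_{F(a₀)}F(a₂)$ induced by $F$ preserving the relevant pullback.

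For \cref{item:laxlim:monos}, the direction ``both monic $⟹$ pair monic'' is immediate by projecting to $𝐁$ and $𝐀$. Conversely, since $U$ creates pullbacks it creates kernel pairs, and a morphism is monic precisely when its kernel pair is trivial; as $U$ preserves isomorphisms, $(f,g)$ monic in $𝐁/F$ forces $U(f,g)=(f,g)$ monic in $𝐁×𝐀$, hence both $f$ and $g$ monic. For the ``if'' direction of \cref{item:laxlim:sepi}, I would characterise strong epis by the left lifting property against monos. Given strong epis $f,g$ and a lifting problem against a mono $(m_𝐁,m_𝐀)$ of $𝐁/F$ — whose components are monic by \cref{item:laxlim:monos} — I project to two solvable lifting problems and obtain fillers $k_𝐁,k_𝐀$; the only point to check is that $(k_𝐁,k_𝐀)$ respects the structure maps, which follows by precomposing the desired identity with the epi $f$ and cancelling. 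Together with the obvious fact that $(f,g)$ is then epi, this shows $(f,g)$ is a strong epi.

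The heart of the argument, and the main obstacle, is building factorisations in $𝐁/F$, which simultaneously yields \cref{item:laxlim:sepimono} and the remaining direction of \cref{item:laxlim:sepi}. Given $(f,g)$, I factor its components as $f = m_𝐁∘e_𝐁$ and $g = m_𝐀∘e_𝐀$ in $𝐁$ and $𝐀$. To promote this to a factorisation in $𝐁/F$ I must equip the intermediate pair $(c,d)$ with a structure map $χ∶ c → Fd$ compatible with both comma squares; this is exactly a diagonal filler for the square whose left edge is the strong epi $e_𝐁$ and whose right edge is $F(m_𝐀)$. Here I crucially use that $F$ preserves pullbacks, hence monos, so that $F(m_𝐀)$ is monic and the filler $χ$ exists and is unique. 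By the ``if'' direction of \cref{item:laxlim:sepi} and by \cref{item:laxlim:monos}, $(e_𝐁,e_𝐀)$ is a strong epi and $(m_𝐁,m_𝐀)$ a mono, so this is a genuine factorisation, and uniqueness of $χ$ shows the lift is unique, giving the creation asserted in \cref{item:laxlim:sepimono}.

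Finally, for the converse of \cref{item:laxlim:sepi}: if $(f,g)$ is a strong epi, then comparing the lifted factorisation above with the trivial factorisation $(f,g)=\id∘(f,g)$ and invoking uniqueness of factorisations forces the mono part $(m_𝐁,m_𝐀)$ to be invertible, whence $m_𝐁$ and $m_𝐀$ are invertible and $f,g$ are strong epis. Preservation of factorisations then follows formally from \cref{item:laxlim:monos,item:laxlim:sepi}, completing \cref{item:laxlim:sepimono}.
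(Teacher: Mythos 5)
Your proposal is correct and follows essentially the same route as the paper's proof: creation of pullbacks by the forgetful functor gives the mono characterisation, componentwise lifting (with a compatibility check that the paper does via uniqueness of orthogonal fillers and you do via epi cancellation) gives the ``if'' direction for strong epis, the intermediate structure map of the factorisation is obtained as the unique diagonal filler against the mono $F(m_𝐀)$, and the ``only if'' direction follows by comparing with the lifted factorisation. The only cosmetic differences are kernel pairs in place of the paper's self-square pullbacks and uniqueness of factorisations in place of an explicit appeal to Lemma~\ref{lem:sepi:mor:iso}.
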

\begin{proof}
  First of all, the forgetful functor creates all colimits, and all
  limits that $F$ preserves, hence in particular pullbacks.
  Furthermore, in any category $𝐂$, a morphism $f∶ X → Y$ is monic iff
  its \alert{self square}
  \begin{center}
    \diag{%
      X \& X \\
      X \& Y %
    }{%
      (m-1-1) edge[identity,labela={}] (m-1-2) %
      edge[identity,labell={}] (m-2-1) %
      (m-2-1) edge[labelb={f}] (m-2-2) %
      (m-1-2) edge[labelr={f}] (m-2-2) %
    }
  \end{center}
  is a pullback.  Thus, a morphism $(f,g)$ in $𝐁/F$ is mono iff its
  self square is a pullback, iff the self squares of $f$ and $g$ are
  both pullbacks, iff $f$ and $g$ are both monic. This
  settles~\cref{item:laxlim:monos}.

  We next deal with the `if' part of~\cref{item:laxlim:sepi}, consider
  any diagram like the solid part in
  \begin{center}
\begin{tikzcd}[ampersand replacement=\&]
	b \&\& d \\
	\& {b'} \&\& {d'} \\
	Fa \&\& Fc \\
	\& {Fa'} \&\& {Fc'\rlap{,}}
	\arrow["x"', from=1-1, to=3-1]
	\arrow["e"', two heads, from=1-1, to=2-2]
	\arrow["Fr"', two heads, from=3-1, to=4-2]
	\arrow["f"{pos=0.2}, from=1-1, to=1-3]
	\arrow["y"'{pos=0.2}, from=1-3, to=3-3]
	\arrow["Fg"{pos=0.2}, from=3-1, to=3-3]
	\arrow["m", hook, from=1-3, to=2-4]
	\arrow["{y'}", from=2-4, to=4-4]
	\arrow["Fj"'{pos=0.2}, from=4-2, to=4-4]
	\arrow["Fs", hook, from=3-3, to=4-4]
	\arrow["k", dashed, from=2-2, to=1-3]
	\arrow["Fl", dashed, from=4-2, to=3-3]
	\arrow["{x'}"'{pos=0.7}, crossing over, from=2-2, to=4-2]
	\arrow["h"'{pos=0.2}, crossing over, from=2-2, to=2-4]
\end{tikzcd}
  \end{center}
  where $e$ and $r$ are strong epis and $m$ and $s$ are monos.  By
  orthogonality, we find unique liftings $k$ and $l$ making all four
  triangles commute (without $F$ on the bottom face). It remains to
  show that the vertical, diagonal square commutes: this follows by
  orthogonality using the fact that $Fs$ is monic (because $F$
  preserves pullbacks, hence monos).
  
  For~\cref{item:laxlim:sepimono}, consider any objects
  $x∶ b → Fa$ and $x'∶ b' → Fa'$, and let $f∶ b → b'$ and $g∶ a → a'$
  make the following diagram commute.
  \begin{center}
    \diag{%
      b \& b' \\
      Fa \& Fa' %
    }{%
      (m-1-1) edge[labela={f}] (m-1-2) %
      edge[labell={x}] (m-2-1) %
      (m-2-1) edge[labelb={Fg}] (m-2-2) %
      (m-1-2) edge[labelr={x'}] (m-2-2) %
    }
  \end{center}
  Let now $b \xonto{e} b'' \xinto{m} b'$ and
  $a \xonto{r} a'' \xinto{s} a'$ be (strong epi-mono) factorisations
  of $f$ and $g$, respectively. Because $F$ preserves monos, $Fs$ is a
  mono, hence by orthogonality we find a unique lifting making both
  squares commute in
  \begin{center}
    \diag{%
      b  \& b''  \& b'  \\
      Fa \& Fa'' \& Fa'\rlap{.} %
    }{%
      (m-1-1) edge[onto,labela={e}] (m-1-2) %
      edge[labell={x}] (m-2-1) %
      (m-2-1) edge[onto,labelb={Fr}] (m-2-2) %
      (m-1-2) edge[dashed,labelr={x''}] (m-2-2) %
      (m-1-2) edge[into,labela={m}] (m-1-3) %
      (m-2-2) edge[into,labelb={Fs}] (m-2-3) %
      (m-1-3) edge[labelr={x'}] (m-2-3) %
    }
  \end{center}
  Furthermore, by~\cref{item:laxlim:monos} and~\cref{item:laxlim:sepi},
  this lifting is in fact a (strong epi-mono) factorisation of
  $(f,g)$, as desired. Preservation follows from (strong epi-mono)
  factorisations being unique up to unique isomorphism and existing in
  $𝐁×𝐀$ by hypothesis.

  Finally, for the `only if' part of~\cref{item:laxlim:sepi}: a
  morphism is a strong epi iff the monic part of its (strong epi-mono)
  factorisation is an isomorphism. So given a strong epi $(e,r)$ in
  $𝐁/F$, we compute its (strong epi-mono) factorisations $m∘e'$ and
  $s∘r'$ of $e$ and $r$, respectively.  By~\cref{item:laxlim:sepimono},
  they lift uniquely to a (strong epi-mono) factorisation
  $(m,s)∘(e',r')$ of $(e,r)$ in $𝐁/F$.  But $(e',r')$ is a strong epi
  by~\cref{item:laxlim:sepi}, and so is $(e,r)$ by hypothesis, and
  $(m,s)$ is a mono between them, hence an isomorphism by
  Lemma~\ref{lem:sepi:mor:iso}. Thus, $m$ and $s$ are both
  isomorphisms, and hence $e$ and $r$ are both strong epis, as
  desired.
\end{proof}

\begin{lemma}\label{lem:diplopic:colims}
The forgetful functor
  $$ℍ\Trans_𝟚 → \psh[𝔼𝕋] × \psh[𝕍𝕋]²$$
  creates all colimits and limits, as well as (strong epi)-mono factorisations.
\end{lemma}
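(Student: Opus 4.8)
The plan is to recognise this forgetful functor as the canonical projection of a comma category of exactly the shape handled by \cref{lem:laxlim:sepimono}, and then to read all three assertions off that lemma together with the preliminary observation in its proof. By the definition of $ℍ\Trans_𝟚$, we have $ℍ\Trans_𝟚 = \psh[𝔼𝕋]/Δ_𝟚$, with $Δ_𝟚∶ \psh[𝕍𝕋]^𝟚 → \psh[𝔼𝕋]$. Setting $𝐀 = \psh[𝕍𝕋]^𝟚$ (the arrow category, whose objects are the diplopic vertex data $γ∶ D → V$), $𝐁 = \psh[𝔼𝕋]$, and $F = Δ_𝟚$, the category $ℍ\Trans_𝟚$ is precisely the comma category $𝐁/F$, whose objects are the borders $∂∶ E → Δ_𝟚(γ)$, and the functor in the statement is the canonical projection $𝐁/F → 𝐁 × 𝐀 = \psh[𝔼𝕋] × \psh[𝕍𝕋]^𝟚$.

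First I would verify the hypotheses of \cref{lem:laxlim:sepimono}. Both $𝐀$ and $𝐁$ are complete, cocomplete, and equipped with (strong epi-mono) factorisations: this is standard for the presheaf category $𝐁 = \psh[𝔼𝕋]$, while the arrow category $𝐀 = \psh[𝕍𝕋]^𝟚$ inherits limits, colimits, and factorisations pointwise (componentwise over the two objects of $𝟚$) from $\psh[𝕍𝕋]$. The one substantial input is that $F = Δ_𝟚$ preserve pullbacks; but \cref{prop:algra2} tells us that $Δ_𝟚$ is an algebraic right adjoint, so that, being a right adjoint, it preserves \emph{all} limits, in particular pullbacks. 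The hypotheses are thus met.

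With these in place, the three claims follow. Creation (hence preservation) of (strong epi-mono) factorisations is exactly point~\cref{item:laxlim:sepimono} of \cref{lem:laxlim:sepimono}. For the (co)limits, I would invoke the opening observation of the proof of \cref{lem:laxlim:sepimono}, namely that the projection $𝐁/F → 𝐁 × 𝐀$ creates all colimits and all limits that $F$ preserves: creation of colimits needs nothing beyond functoriality of $F$, since the comparison border of a colimit is produced by the universal property of the colimit taken in $𝐁$, whereas creation of a limit requires $F$ to preserve that limit. Since $Δ_𝟚$ preserves all limits, the projection creates all limits as well, which combined with the factorisation statement gives the result. The main obstacle here is essentially bookkeeping rather than mathematics: the only genuine hypothesis to discharge is pullback-preservation of $Δ_𝟚$, delivered wholesale by \cref{prop:algra2}, and the sole subtlety is remembering that the structure of $\psh[𝕍𝕋]^𝟚$ is computed pointwise over $𝟚$.
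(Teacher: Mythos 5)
Your proposal is correct and follows essentially the same route as the paper: the paper's own proof likewise treats colimits as immediate, derives creation of limits from the lax-limit/comma description together with the fact that $Δ_𝟚$ preserves all limits (Proposition~\ref{prop:algra2}), and obtains the factorisation statement from Lemma~\ref{lem:laxlim:sepimono}. Your version merely spells out the hypothesis-checking that the paper leaves implicit.
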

\begin{proof}
  This is clear for colimits. For limits, the projection
  $ℍ\Trans_𝟚 → \psh[𝔼𝕋] × \psh[𝕍𝕋]^𝟚$ creates all limits that the functor
  $Δ_𝟚$ preserves (because $ℍ\Trans_𝟚$ is its lax limit), i.e., all of
  them by Proposition~\ref{prop:algra2}.  For (strong epi)-mono
  factorisations, this follows by Lemma~\ref{lem:laxlim:sepimono}.
\end{proof}

\begin{lemma}\label{lem:diplopic:span:colims}
  For any diplopic $ℍ$-transition system $X$, the forgetful functor
  $$ℍ\Trans_𝟚/X² → ℍ\Trans_𝟚 → \psh[𝔼𝕋] × \psh[𝕍𝕋]²$$
  creates all colimits and connected limits.
\end{lemma}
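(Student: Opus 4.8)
The plan is to reduce the statement to \cref{lem:diplopic:colims} together with standard facts about slice (comma) categories. The composite in question factors as the slice projection $ℍ\Trans_𝟚/X² → ℍ\Trans_𝟚$ followed by the forgetful functor $ℍ\Trans_𝟚 → \psh[𝔼𝕋] × \psh[𝕍𝕋]²$, so I would treat each factor separately and then compose.

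First I would recall the general fact that, for any category $𝒞$ and object $A$, the projection $𝒞/A → 𝒞$ creates all colimits and all connected limits. For colimits this is immediate: a colimiting cocone in $𝒞$ over the underlying diagram carries a canonical leg to $A$, obtained by copairing the legs of the individual diagram objects, and this cocone is the colimit in the slice. For connected limits the key point is that the leg to $A$ of a cone over a connected diagram is determined by the leg at any single object together with connectedness, so a limiting cone in $𝒞$ lifts uniquely to the slice; this is precisely why only connected limits, and not for instance the terminal object or binary products, are created. Applying this with $𝒞 = ℍ\Trans_𝟚$ and $A = X²$ shows that the first factor creates all colimits and all connected limits.

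Next I would invoke \cref{lem:diplopic:colims}, which states that the second factor $ℍ\Trans_𝟚 → \psh[𝔼𝕋] × \psh[𝕍𝕋]²$ creates all colimits and all limits, hence in particular all colimits and all connected limits. Finally I would compose the two factors. Creation of colimits (resp. connected limits) of a fixed shape is stable under composition, provided the target admits the relevant (co)limits so that the lifted diagrams genuinely have (co)limits to be created; here $\psh[𝔼𝕋] × \psh[𝕍𝕋]²$ is bicomplete, being a finite product of presheaf categories, so this hypothesis holds. It follows that the composite creates all colimits and all connected limits, as claimed.

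I expect no genuine obstacle: once \cref{lem:diplopic:colims} is in hand, the entire content is the routine observation that slice projections create colimits and connected limits, and that creation is closed under composition. The only point deserving a line of care is the restriction to \emph{connected} limits, which is forced by the slice projection failing to create the terminal object and products.
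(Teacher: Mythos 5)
Your proof is correct and follows essentially the same route as the paper's: factor the functor through the slice projection, use the standard fact that slice projections create colimits and connected limits, invoke \cref{lem:diplopic:colims} for the second factor, and compose. The paper's proof is just a terser version of exactly this argument.
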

\begin{proof}
  The projection $(ℍ\Trans_𝟚)/X² → ℍ\Trans_𝟚$ creates colimits and
  connected limits, as any projection from a slice category does.
  The result thus follows by Lemma~\ref{lem:diplopic:colims}.
\end{proof}

\begin{lemma}\label{lem:filtered:flex:filtered}
  Flexible bisimulations are closed under filtered colimits in
  $ℍ\Trans_𝟚^𝟚$, i.e., in the arrow category of $ℍ\Trans_𝟚$.
\end{lemma}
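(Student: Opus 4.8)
The plan is to reduce the statement, via the pointwise weak pullback characterisation of \cref{prop:flexible-pw-pbk}, to an elementary fact about $𝐒𝐞𝐭$: that a filtered colimit of weak pullback squares is again a weak pullback. Recall that by \cref{prop:flexible-pw-pbk} a morphism $f∶ R → X$ in $ℍ\Trans_𝟚$ is a functional flexible bisimulation precisely when the square with top row $R₁ → X₁$ and bottom row $Δ_𝐬(Rₛ) × Δ_𝐥(R_0) → Δ_𝐬(Xₛ) × Δ_𝐥(X_0)$ is a pointwise weak pullback. Since functional flexible bisimulations are exactly such morphisms, I view them as objects of $ℍ\Trans_𝟚^𝟚$; and since a flexible bisimulation is a span both of whose legs are functional flexible bisimulations, and the legs of a colimit span are computed as the colimits of the legs, it suffices to treat this functional case.

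First I would check that the square above is compatible with filtered colimits. By \cref{lem:diplopic:colims}, colimits in $ℍ\Trans_𝟚$ are created componentwise in $\psh[𝔼𝕋] × \psh[𝕍𝕋]^𝟚$, hence pointwise in each factor; so for a filtered diagram $(f_i∶ R_i → X_i)_i$ the edge, source-vertex, and vertex components of $\colim_i f_i$ are the colimits of those of the $f_i$. As $Δ_𝐬$ and $Δ_𝐥$ (more precisely their diplopic variants) are algebraic, hence finitary, by \cref{prop:algra2}, they commute with these filtered colimits, and, filtered colimits commuting with the finite products appearing in the bottom row, the square attached to $\colim_i f_i$ is exactly the filtered colimit, in the arrow-of-squares sense, of the squares attached to the $f_i$. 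The problem is thereby transported pointwise into $𝐒𝐞𝐭$.

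It then remains to show that in $𝐒𝐞𝐭$ a filtered colimit of weak pullback squares is a weak pullback. I would use the fact that a square is a weak pullback iff the canonical comparison map from its apex into the genuine pullback of its cospan is surjective. Filtered colimits commute with finite limits in $𝐒𝐞𝐭$, so the genuine pullback of the colimit cospan is the colimit of the genuine pullbacks, and the comparison map attached to $\colim_i f_i$ is the filtered colimit of the comparison maps attached to the $f_i$. Each of the latter is surjective by hypothesis, and a filtered colimit of surjections is surjective, so the colimit comparison map is surjective, as required; doing this at every object of the indexing category yields the desired pointwise weak pullback.

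The main obstacle is precisely the passage from ordinary to \emph{weak} pullbacks: commutation of filtered colimits with finite limits only delivers preservation of genuine pullbacks, and the extra ingredient is to rephrase weakness as surjectivity of the comparison map and to invoke stability of surjections, i.e.\ epimorphisms taken pointwise in the presheaf categories, under filtered colimits. Everything else is bookkeeping, ensuring that $Δ_𝐬$ and $Δ_𝐥$ preserve the filtered colimits in play, which is exactly their finitariness recorded in \cref{prop:algra2}.
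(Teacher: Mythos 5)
Your proof is correct; it reaches the conclusion by a genuinely different decomposition than the paper's. The paper works with the lifting formulation of pointwise weak pullbacks: given test maps $p,q$ from a representable $𝐲_c$ into the colimit square, it uses finitariness of $Δ_{𝟚,𝐬,𝐥}$ and finite presentability of $𝐲_c$ to factor them through finite stages, reconciles the two factorisations at a common stage $j$ by filteredness (applied twice, once to merge indices and once to make the relevant square actually commute), extracts a lifting there from the hypothesis on $rⱼ$, and composes with the colimit injection. You instead isolate the abstract fact that a filtered colimit of weak pullback squares of sets is again a weak pullback, proved by recasting weakness as surjectivity of the comparison map into the genuine pullback and invoking commutation of filtered colimits with finite limits together with stability of surjections under filtered colimits. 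Both arguments consume the same inputs --- componentwise computation of the colimit (\cref{lem:diplopic:colims}) and algebraicity, hence finitariness, of $Δ_{𝟚,𝐬,𝐥}$ (\cref{prop:algra2}) --- so neither is more general, but yours is more modular and replaces the paper's explicit diagram chase by two standard facts, at the mild cost of routing everything through \cref{prop:flexible-pw-pbk}, which the paper states without proof. Your opening reduction from spans to single legs is also consistent with the paper, whose proof silently restricts to a filtered diagram of functional flexible bisimulations.
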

\begin{proof}
  Let $(r_∞∶ R_∞ → X_∞) = \colimⱼ (rⱼ∶ Rⱼ → Xⱼ)$ denote the colimit of
  any filtered digaram of functional flexible bisimulations.  By
  Lemma~\ref{lem:diplopic:span:colims} and the fact that colimits are
  pointwise in the arrow category, we have
  \begin{mathpar}
    (R_∞)₀ ≅ \colimⱼ (R (j) ₀) \and (R_∞)ₛ ≅ \colimⱼ (R (j) ₛ) \and
    (R_∞)₁ ≅ \colimⱼ (R (j) ₁)
  \end{mathpar}
  and
  \begin{mathpar}
    (X_∞)₀ ≅ \colimⱼ (X (j) ₀) \and (X_∞)ₛ ≅ \colimⱼ (X (j) ₛ) \and
    (X_∞)₁ ≅ \colimⱼ (X (j) ₁)\rlap{.}
  \end{mathpar}
  Furthermore,
all morphisms
\begin{mathpar}
γ_{R_∞}∶ (R_∞)ₛ → (R_∞)₀ \and
  ∂_{R_∞}∶ (R_∞)₁ → Δ_𝟚 (R_∞) \\
γ_{X_∞}∶ (X_∞)ₛ → (X_∞)₀ \and
  ∂_{X_∞}∶ (X_∞)₁ → Δ_𝟚 (X_∞) \\
  (r_∞)₁∶ (R_∞)₁ → (X_∞)₁ \and (r_∞)ₛ∶ (R_∞)ₛ → (X_∞)ₛ \and (r_∞)₀∶ (R_∞)₀ → (X_∞)₀
\end{mathpar}
are induced by universal property.

Now consider any $p$ and $q$ making
the following diagram commute.
  \begin{center}
\begin{tikzcd}[ampersand replacement=\&]
	{𝐲_c} \&\& {(X_∞)₁} \\
	\\
	{Δ_{𝟚,𝐬,𝐥}(R_∞)} \&\& {Δ_{𝟚,𝐬,𝐥}(X_∞)}
	\arrow["{∂_{X_∞} }", from=1-3, to=3-3]
	\arrow["{Δ_{𝟚,𝐬,𝐥}r_∞}"', from=3-1, to=3-3]
	\arrow["q", from=1-1, to=1-3]
	\arrow["p"', from=1-1, to=3-1]
\end{tikzcd}    
\end{center}
The functor $Δ_{𝟚,𝐬,𝐥}$ is finitary, and the object $𝐲_c$ finitely
presentable, so $p$ factors through some $Δ_{𝟚,𝐬,𝐥}(Rₖ)$, say as $pₖ$,
and  $q$ factors through some $(Xₗ)₁$, say as $qₗ$.
Furthermore, by filteredness, we find $h$ and morphisms $k \xto{f} h \xot{g} l$, so that we may
define $pₕ$ and $qₕ$ as in the following diagram.
\begin{center}
\begin{tikzcd}[ampersand replacement=\&]
	\&\& {(X_h)₁} \\
	\& {(X_l)_1} \\
	{𝐲_c} \&\&\&\& {Δ_{\mathbb{2},𝐥,𝐭}(X_h)} \\
	\& {Δ_{\mathbb{2},𝐬,𝐥}(Rₖ)} \\
	\&\& {Δ_{\mathbb{2},𝐥,𝐭}(R_h)}
	\arrow["{∂_{X_h}}", from=1-3, to=3-5]
	\arrow["{Δ_{\mathbb{2},𝐬,𝐥}((r_h)_{D,V})}"', from=5-3, to=3-5]
	\arrow["{qₕ}", curve={height=-30pt}, from=3-1, to=1-3]
	\arrow["{Δ_{\mathbb{2},𝐬,𝐥}(R_f)}"{pos=0.8}, from=4-2, to=5-3]
	\arrow["{p_k}", from=3-1, to=4-2]
	\arrow["{pₕ}"', curve={height=30pt}, from=3-1, to=5-3]
	\arrow["{(X_g)_1}"', from=2-2, to=1-3]
	\arrow["{q_l}"', from=3-1, to=2-2]
\end{tikzcd}
\end{center}
Because the following diagram commutes,
\begin{center}
\begin{tikzcd}[ampersand replacement=\&]
	\&\&\& {(X_∞)₁} \\
	\&\& {(X_h)₁} \\
	{𝐲_c} \&\& {?} \& {Δ_{\mathbb{2},𝐥,𝐭}(X_h)} \&\& {Δ_{\mathbb{2},𝐥,𝐭}(X_∞)} \\
	\&\& {Δ_{\mathbb{2},𝐥,𝐭}(R_h)} \\
	\&\&\& {Δ_{\mathbb{2},𝐥,𝐭}(R_∞)}
	\arrow["{∂_{X_h}}"{pos=0.8}, curve={height=-6pt}, from=2-3, to=3-4]
	\arrow["{Δ_{\mathbb{2},𝐬,𝐥}((r_h)_{D,V})}"'{pos=1}, curve={height=6pt}, from=4-3, to=3-4]
	\arrow["{qₕ}"', curve={height=-6pt}, from=3-1, to=2-3]
	\arrow["{pₕ}", curve={height=6pt}, from=3-1, to=4-3]
	\arrow[from=3-4, to=3-6]
	\arrow["q", curve={height=-18pt}, from=3-1, to=1-4]
	\arrow["{∂_{X_∞}}", curve={height=-18pt}, from=1-4, to=3-6]
	\arrow["p"', curve={height=18pt}, from=3-1, to=5-4]
	\arrow["{Δ_{\mathbb{2},𝐬,𝐥}(r_∞)}"', curve={height=18pt}, from=5-4, to=3-6]
	\arrow[from=2-3, to=1-4]
	\arrow[from=4-3, to=5-4]
\end{tikzcd}
\end{center}
by filteredness, we find some $j$ and morphism $u∶ h → j$ such that
$Δ_{𝟚,𝐬,𝐥}Xᵤ$ coequalises the question marked parallel pair above.
We then define $pⱼ$ and $qⱼ$ by composition to obtain a commuting diagram
as the following
\begin{center}
\begin{tikzcd}[ampersand replacement=\&]
	\&\&\& {(X_j)₁} \\
	\&\& {(X_h)₁} \\
	{𝐲_c} \&\& {?} \& {Δ_{\mathbb{2},𝐥,𝐭}(X_h)} \&\& {Δ_{\mathbb{2},𝐥,𝐭}(X_j)} \\
	\&\& {Δ_{\mathbb{2},𝐥,𝐭}(R_h)} \\
	\&\&\& {Δ_{\mathbb{2},𝐥,𝐭}(R_j)}
	\arrow["{∂_{X_h}}"{pos=0.8}, curve={height=-6pt}, from=2-3, to=3-4]
	\arrow["{Δ_{\mathbb{2},𝐬,𝐥}((r_h)_{D,V})}"'{pos=1}, curve={height=6pt}, from=4-3, to=3-4]
	\arrow["{qₕ}"', curve={height=-6pt}, from=3-1, to=2-3]
	\arrow["{pₕ}", curve={height=6pt}, from=3-1, to=4-3]
	\arrow["{Δ_{\mathbb{2},𝐥,𝐭}(X_u)}", from=3-4, to=3-6]
	\arrow["{q_j}", curve={height=-18pt}, from=3-1, to=1-4]
	\arrow["{∂_{X_j}}", curve={height=-18pt}, from=1-4, to=3-6]
	\arrow["{p_j}"', curve={height=18pt}, from=3-1, to=5-4]
	\arrow["{Δ_{\mathbb{2},𝐬,𝐥}(r_j)}"', curve={height=18pt}, from=5-4, to=3-6]
	\arrow["{(X_u)_1}"', from=2-3, to=1-4]
	\arrow["{Δ_{\mathbb{2},𝐥,𝐭}(R_u)}"{pos=0.9}, from=4-3, to=5-4]
\end{tikzcd}
\end{center}
(where again the question marked parallel pair may not commute but the exterior does).

We thus obtain a situation like
  \begin{center}
\begin{tikzcd}[ampersand replacement=\&]
	{𝐲_c} \\
	\\
	{(R_j)₁} \&\&\& {(X_j)_1} \\
	\& {(R_∞)₁} \&\&\& {(X_∞)₁} \\
	{Δ_{\mathbb{2},𝐬,𝐥}(Rⱼ)} \&\&\& {Δ_{\mathbb{2},𝐬,𝐥}(Xⱼ)} \\
	\& {Δ_{\mathbb{2},𝐥,𝐭}(R_∞)} \&\&\& {Δ_{\mathbb{2},𝐥,𝐭}(X_∞).}
	\arrow["{∂_{R_∞}}"{pos=0.2}, from=4-2, to=6-2]
	\arrow["{∂_{X_∞}}", from=4-5, to=6-5]
	\arrow["{Δ_{\mathbb{2},𝐬,𝐥}((r_∞)_{D,V})}"', from=6-2, to=6-5]
	\arrow["{(r_∞)₁}"{pos=0.2}, from=4-2, to=4-5]
	\arrow["q", curve={height=-24pt}, from=1-1, to=4-5]
	\arrow[from=5-1, to=6-2]
	\arrow["{p_j}"', curve={height=24pt}, from=1-1, to=5-1]
	\arrow["{∂_{Rⱼ}}", from=3-1, to=5-1]
	\arrow[from=3-1, to=4-2]
	\arrow["m", dashed, from=1-1, to=3-1]
	\arrow["{Δ_{\mathbb{2},𝐬,𝐥}((rⱼ)_{D,V})}"'{pos=0.8}, from=5-1, to=5-4]
	\arrow[from=5-4, to=6-5]
	\arrow["{∂_{Xⱼ}}"'{pos=0.2}, from=3-4, to=5-4]
	\arrow["{(rⱼ)₁}", from=3-1, to=3-4]
	\arrow[from=3-4, to=4-5]
	\arrow["{q_j}"', from=1-1, to=3-4]
	\arrow["p"{pos=0.2}, fore, curve={height=-6pt}, from=1-1, to=6-2]
      \end{tikzcd}
    \end{center}
  But $rⱼ∶ Rⱼ → Xⱼ$ is a functional flexible bisimulation, so we find a mediating arrow
  $m$ as shown.  The composite
  $$𝐲_c \xto{m} (Rⱼ)₁ → (R_∞)₁$$
  finally provides the desired mediating arrow.
\end{proof}

\begin{corollary}\label{cor:filtered:flex:filtered}
  For any diplopic $ℍ$-transition system $X$, flexible bisimulations $R → X²$
  over $X$ are closed under filtered colimits in $ℍ\Trans_𝟚/X²$.
\end{corollary}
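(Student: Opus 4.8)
The plan is to reduce the statement to \cref{lem:filtered:flex:filtered} by projecting the span onto each of its two legs. First I would recall from \cref{lem:diplopic:span:colims} that the forgetful functor $ℍ\Trans_𝟚/X² → ℍ\Trans_𝟚$ creates all colimits, so that the colimit of a filtered diagram $(R_j → X²)_j$ of flexible bisimulations over $X$ is computed as $R_∞ → X²$, where $R_∞ = \colim_j R_j$ is taken in $ℍ\Trans_𝟚$ and the two legs to $X$ are induced by universal property.

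Next I would observe that, by \cref{def:flexible}, saying that each $R_j → X²$ is a flexible bisimulation means precisely that both of its legs $R_j → X$ are functional flexible bisimulations. Fixing one of the two projections, the resulting maps $R_j → X$ assemble into a filtered diagram in the arrow category $ℍ\Trans_𝟚^𝟚$ whose codomain component is the constant diagram at $X$. Because the indexing category is filtered, hence connected, the colimit of this constant codomain component is again $X$; and since filtered colimits in an arrow category are computed pointwise, the colimit of the diagram is exactly the induced leg $R_∞ → X$.

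By \cref{lem:filtered:flex:filtered}, functional flexible bisimulations are closed under filtered colimits in $ℍ\Trans_𝟚^𝟚$, so this colimit $R_∞ → X$ is again a functional flexible bisimulation. Applying the same argument to the other projection shows that the second leg $R_∞ → X$ is a functional flexible bisimulation as well. By \cref{def:flexible}, the span $R_∞ → X²$ is therefore a flexible bisimulation, which is what we wanted.

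There is no serious obstacle here, the real work having been carried out in \cref{lem:filtered:flex:filtered}. The only point that requires care is the reduction itself: one must check that each leg of the colimit span agrees with the filtered colimit, taken in the arrow category, of the corresponding legs of the individual spans. This is exactly where the constancy of the codomain component, the pointwise computation of filtered colimits in arrow categories, and the creation of colimits from \cref{lem:diplopic:span:colims} come together.
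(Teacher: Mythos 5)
Your proposal is correct and matches the paper's intent: the corollary is stated without proof precisely because it follows from Lemma~\ref{lem:filtered:flex:filtered} by the leg-by-leg reduction you describe, using creation of colimits (Lemma~\ref{lem:diplopic:span:colims}) and the fact that a filtered colimit of the constant diagram at $X$ is $X$. Nothing is missing.
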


\begin{lemma}\label{lem:filtered:flex:compo}
  For any diplopic $ℍ$-transition system $X$, flexible bisimulations $R → X²$
  over $X$ are closed under span composition.
\end{lemma}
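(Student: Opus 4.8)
The plan is to reduce the statement to two closure properties of the class of functional flexible bisimulations and then to read off the result for span composites. Recall from Definition~\ref{def:flexible} that a span $R → X^2$ of diplopic $ℍ$-transition systems is a flexible bisimulation exactly when both of its legs $R → X$ are functional flexible bisimulations. Write $l_R,r_R∶ R → X$ and $l_S,r_S∶ S → X$ for the legs of two flexible bisimulations over $X$. Their span composite is the pullback $P = R ×_X S$ of $r_R$ along $l_S$, equipped with the legs $l_R ∘ p_R$ and $r_S ∘ p_S$, where $p_R∶ P → R$ and $p_S∶ P → S$ are the projections. By Lemma~\ref{lem:diplopic:colims}, this pullback is created by the forgetful functor to $\psh[𝔼𝕋] × \psh[𝕍𝕋]^2$, so it is computed componentwise on edge, distinguished-vertex, and vertex objects.

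First I would prove that functional flexible bisimulations are closed under composition. Morally these are the usual closure properties of the right class of a weak factorisation system, but I would verify them directly through the weak-pullback characterisation. By Proposition~\ref{prop:flexible-pw-pbk}, a morphism is a functional flexible bisimulation iff the associated square, comparing its edge component with $Δ_\source × Δ_\labels$ applied to its source- and vertex-components, is a pointwise weak pullback. For composable $R → S → X$, the square attached to the composite is exactly the horizontal pasting of the two squares attached to the factors, so it suffices to check that the horizontal pasting of two pointwise weak pullbacks of sets is again a pointwise weak pullback. This is a routine diagram chase: given a cone on the outer rectangle, one lifts first against the right-hand square and then against the left-hand one.

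Next I would prove stability under pullback. Let $g∶ S → X$ be a functional flexible bisimulation and $h∶ R → X$ arbitrary, with pullback projection $p∶ R ×_X S → R$. Since the functors $Δ_\source$ and $Δ_\labels$ occurring in the square of Proposition~\ref{prop:flexible-pw-pbk} preserve pullbacks, being right adjoints (Proposition~\ref{prop:algra2}), the square attached to $p$ is the base change along $h$ of the square attached to $g$. It then suffices to check that the base change of a pointwise weak pullback square is again a pointwise weak pullback, which is another elementary set-level computation.

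Combining the two closure properties gives the result. For the left leg $l_R ∘ p_R$ of $P → X^2$, the projection $p_R$ is the base change of the functional flexible bisimulation $l_S$ along $r_R$, hence is a functional flexible bisimulation, and $l_R$ is one by hypothesis, so their composite is; symmetrically $r_S ∘ p_S$ is a functional flexible bisimulation, using that $r_R$ is one. Thus both legs of $P → X^2$ are functional flexible bisimulations, i.e., the composite span is a flexible bisimulation, as required. The one genuinely non-formal point is the pair of set-level lemmas in the two middle paragraphs: for \emph{weak} pullbacks the pasting and base-change properties do not follow from the usual pasting lemma for strict pullbacks, so one must check by hand that the non-unique lifts can actually be produced, taking care that $Δ_\labels$ really does preserve the relevant pullbacks.
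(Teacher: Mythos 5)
Your proposal is correct and follows essentially the same route as the paper: the paper likewise factors the square for the composite's leg through the factor $R$, handles the outer square by hypothesis, and establishes the inner square (your ``base change'' step) via a cube whose top and bottom faces are the componentwise pullbacks defining span composition and whose right face is the weak pullback for the other factor's leg, finally pasting the two weak pullbacks together (citing \citet[Lemma~9.26]{HL} for the elementary weak-pullback facts you rightly flag as the only non-formal content). The only cosmetic difference is that you package the argument as two reusable closure properties of functional flexible bisimulations and treat both legs explicitly, where the paper treats one leg and leaves the other to symmetry.
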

\begin{proof}
  We need to show that the square
  \begin{center}
\begin{tikzcd}[ampersand replacement=\&]
	{(R;S)₁} \&\& {X₁} \\
	{Δ_{𝟚,𝐬,𝐥}(R;S)} \&\& {Δ_{𝟚,𝐬,𝐥}X}
	\arrow[from=1-1, to=2-1]
	\arrow[from=1-3, to=2-3]
	\arrow["{π₁}", from=1-1, to=1-3]
	\arrow["{Δ_{𝟚,𝐬,𝐥}π₁}"', from=2-1, to=2-3]
\end{tikzcd}
  \end{center}
  is a pointwise weak pullback.
  By construction, this square factors as
  \begin{center}
\begin{tikzcd}[ampersand replacement=\&]
	{(R;S)₁} \& {R₁} \& {X₁} \\
	{Δ_{𝟚,𝐬,𝐥}(R;S)} \& {Δ_{𝟚,𝐬,𝐥}R} \& {Δ_{𝟚,𝐬,𝐥}X}
	\arrow[from=1-1, to=2-1]
	\arrow[from=1-3, to=2-3]
	\arrow["{Δ_{𝟚,𝐬,𝐥}π₁}"', from=2-1, to=2-2]
	\arrow["{π₁}", from=1-1, to=1-2]
	\arrow["{π₁}", from=1-2, to=1-3]
	\arrow["{Δ_{𝟚,𝐬,𝐥}π₁}"', from=2-2, to=2-3]
	\arrow[from=1-2, to=2-2]
\end{tikzcd}
  \end{center}
  where the right-hand square is a pointwise weak pullback by
  hypothesis.  Now the left-hand square is the left-hand face in the
  following diagram,
\begin{center}
  \Diag{%
    \pbk[2.5em]{m-2-2}{m-1-1}{m-1-3} %
    \pbk[2.5em]{m-4-2}{m-3-1}{m-3-3} %
    \pullbackk{m-3-3}{m-1-3}{m-2-4}{draw,dashed} %
  }{%
	{(R;S)₁} \&\& {S₁} \\
	\& {R₁} \&\& {X₁} \\
	{Δ_{𝟚,𝐬,𝐥}(R;S)} \&\& {Δ_{𝟚,𝐬,𝐥}S} \\
	\& {Δ_{𝟚,𝐬,𝐥}R} \&\& {Δ_{𝟚,𝐬,𝐥}X}
  }{%
    (m-1-1) edge[labela={}] (m-1-3) %
    edge[labell={}] (m-3-1) %
    (m-3-1) edge[labelb={}] (m-3-3) %
    (m-1-3) edge[labelr={}] (m-3-3) %
    (m-2-2) edge[fore,labelaat={π₂}{.2}] (m-2-4) %
    edge[fore,labell={}] (m-4-2) %
    (m-4-2) edge[labelb={Δ_{𝟚,𝐬,𝐥}π₂}] (m-4-4) %
    (m-2-4) edge[labelr={}] (m-4-4) %
    (m-1-1) edge[labela={}] (m-2-2) %
    (m-3-1) edge[labela={}] (m-4-2) %
    (m-1-3) edge[labelar={π₁}] (m-2-4) %
    (m-3-3) edge[labelar={Δ_{𝟚,𝐬,𝐥}π₁}] (m-4-4) %
  }%
  \end{center}
  whose top and bottom faces are pullbacks by
  Lemma~\ref{lem:diplopic:span:colims} and the fact that
  $Δ_{𝟚,𝐬,𝐥}∶ \psh[𝕍𝕋]^𝟚 → \psh[𝔼𝕋]$, being a right adjoint, is continuous.
  Since the right-hand face is a weak pullback by hypothesis,
  so is the left face by \citet[Lemma~9.26, (i),
  then (ii)]{HL}.  We finally conclude by \citet[Lemma~9.26,
  (i)]{HL}.
\end{proof}

\begin{lemma}\label{lem:spancomp:flex}
  For any diplopic $ℍ$-transition system $X$, flexible bisimulations $R → X_{D,V}²$
  over $X$ are closed under span composition.
\end{lemma}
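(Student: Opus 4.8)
The plan is to deduce this from the corresponding statement for full diplopic transition systems, namely Lemma~\ref{lem:filtered:flex:compo}, by transporting it along the cartesian-lifting convention of Definition~\ref{def:flexible}. Recall that, for $R ∈ \psh[𝕍𝕋]^𝟚$, the span $R → X_{D,V}²$ is declared to be a flexible bisimulation exactly when its cartesian lifting $R^⇑ → X²$ along the fibration $ℍ\Trans_𝟚 → \psh[𝕍𝕋]^𝟚$ of Proposition~\ref{prop:fib} is one in $ℍ\Trans_𝟚$. So the two ingredients I would combine are: (i) closure of flexible bisimulations over $X$ under span composition in $ℍ\Trans_𝟚$ (Lemma~\ref{lem:filtered:flex:compo}), and (ii) stability of flexible bisimilarity under cartesian lifting of the vertex part (Proposition~\ref{prop:simliftmax}).

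Concretely, given flexible bisimulations $R → X_{D,V}²$ and $S → X_{D,V}²$, I would first form their cartesian liftings $R^⇑ → X²$ and $S^⇑ → X²$, which are flexible bisimulations over $X$ with $R^⇑_{D,V} = R$ and $S^⇑_{D,V} = S$ by construction. By Lemma~\ref{lem:filtered:flex:compo}, the span composite $R^⇑ ; S^⇑ → X²$ is again a flexible bisimulation. The key identification is that its vertex part recovers the composite span computed in $\psh[𝕍𝕋]^𝟚$: since span composition is a pullback over the shared middle leg into $X$ (the right leg of $R^⇑$ against the left leg of $S^⇑$), hence a connected limit, and the projection $ℍ\Trans_𝟚 → \psh[𝕍𝕋]^𝟚$ preserves connected limits (Lemma~\ref{lem:diplopic:span:colims}), we get $(R^⇑ ; S^⇑)_{D,V} ≅ R ; S$. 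Applying Proposition~\ref{prop:simliftmax} to the flexible bisimulation $R^⇑ ; S^⇑ → X²$ then shows that the cartesian lifting of its vertex part, which is $(R ; S)^⇑ → X²$, is a flexible bisimulation. By the convention of Definition~\ref{def:flexible}, this is precisely the assertion that $R ; S → X_{D,V}²$ is a flexible bisimulation.

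The main obstacle I expect is the bookkeeping around the identification $(R^⇑ ; S^⇑)_{D,V} ≅ R ; S$, together with the fact that cartesian lifting sends it to $(R ; S)^⇑$ up to canonical isomorphism, so that Proposition~\ref{prop:simliftmax} genuinely delivers $(R ; S)^⇑$ and not merely some isomorphic object. This rests on the reindexing being computed by the pullback formula of Lemma~\ref{lem:fib} together with the preservation of connected limits. An equivalent route, unpacking Proposition~\ref{prop:simliftmax}, would be to use the universal property of the cartesian lifting to produce the canonical vertical span morphism $R^⇑ ; S^⇑ → (R ; S)^⇑$ --- whose vertex component is an identity, hence epi --- and then invoke Lemma~\ref{lem:bisim:epi} on each leg; but routing through Proposition~\ref{prop:simliftmax} packages exactly this argument and keeps the proof short.
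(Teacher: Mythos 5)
Your proposal is correct and follows essentially the same route as the paper: both reduce to Lemma~\ref{lem:filtered:flex:compo} applied to $R^⇑;S^⇑$ and then transfer the property to $(R;S)^⇑$ via the canonical comparison map over the identity on $(R;S)_{D,V}$ — the paper constructs this projection $π∶ R^⇑;S^⇑ → (R;S)^⇑$ explicitly by interchange of limits and concludes with Lemma~\ref{lem:wpbk}, which is exactly what your invocation of Proposition~\ref{prop:simliftmax} packages. The identification $(R^⇑;S^⇑)_{D,V} ≅ R;S$ via creation of connected limits is the right justification for the step the paper leaves implicit.
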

\begin{proof}
  Given any two flexible bisimulations, say $R$ and $S$, we observe
  that there is a projection $π∶ R^⇑;S^⇑ → (R;S)^⇑$ making the
  following diagram commute.
  \begin{center}
    \begin{tikzcd}[ampersand replacement=\&]
      {R^⇑;S^⇑} \\
      \& {(R;S)^⇑} \& {X₁²} \\
      {Δ_𝟚(R);Δ_𝟚(S)} \& {Δ_𝟚(R;S)} \& {Δ_𝟚X²}
      \arrow["{π}"', from=1-1, to=2-2]
      \arrow[from=2-2, to=3-2]
      \arrow[from=3-2, to=3-3]
      \arrow["{X_∂}", from=2-3, to=3-3]
      \arrow[from=2-2, to=2-3]
      \arrow["\lrcorner"{anchor=center, pos=0.125}, draw=none, from=2-2, to=3-3]
      \arrow[from=1-1, to=3-1]
      \arrow["{≅ }"', from=3-1, to=3-2]
      \arrow[curve={height=-12pt}, from=1-1, to=2-3]
    \end{tikzcd}    
  \end{center}
  To see this, we observe that by interchange of limits
  $R^⇑;S^⇑$ is the limit of
  \begin{center}
\begin{tikzcd}[ampersand replacement=\&]
	{X₁} \&\&\&\& {X₁} \&\&\&\& {X₁} \\
	\&\& {Δ_𝟚R} \&\&\&\& {Δ_𝟚S} \\
	{Δ_𝟚X} \&\&\&\& {Δ_𝟚X} \&\&\&\& {Δ_𝟚X}
	\arrow["{X_∂}"', from=1-1, to=3-1]
	\arrow["{Δ_𝟚π₁}", from=2-3, to=3-1]
	\arrow["{Δ_𝟚π₂}"', from=2-3, to=3-5]
	\arrow["{Δ_𝟚π₁}", from=2-7, to=3-5]
	\arrow["{Δ_𝟚π₂}"', from=2-7, to=3-9]
	\arrow["{X_∂}"', from=1-5, to=3-5]
	\arrow["{X_∂}", from=1-9, to=3-9]
\end{tikzcd}    
  \end{center}
  while $(R;S)^⇑$ is the limit of
  the following subdiagram.
  \begin{center}
\begin{tikzcd}[ampersand replacement=\&]
	{X₁} \&\&\&\&\&\&\&\& {X₁} \\
	\&\& {Δ_𝟚R} \&\&\&\& {Δ_𝟚S} \\
	{Δ_𝟚X} \&\&\&\& {Δ_𝟚X} \&\&\&\& {Δ_𝟚X}
	\arrow["{X_∂}"', from=1-1, to=3-1]
	\arrow["{Δ_𝟚π₁}", from=2-3, to=3-1]
	\arrow["{Δ_𝟚π₂}"', from=2-3, to=3-5]
	\arrow["{Δ_𝟚π₁}", from=2-7, to=3-5]
	\arrow["{Δ_𝟚π₂}"', from=2-7, to=3-9]
	\arrow["{X_∂}", from=1-9, to=3-9]
\end{tikzcd}    
  \end{center}
  Finally, by Lemma~\ref{lem:filtered:flex:compo}, we know that $R^⇑;S^⇑$ is
  a flexible bisimulation, hence so is $R;S$ by Lemma~\ref{lem:wpbk}.
\end{proof}

\begin{lemma}\label{lem:sepi:mor:iso}
  For any strong epis $e∶ X → Y$ and $e'∶ X → Z$, any
  mono $f∶ Y → Z$ such that $f∘e = e'$ is an isomorphism.
\end{lemma}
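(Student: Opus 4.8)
The plan is to exhibit a two-sided inverse for $f$ directly, exploiting that $e'$ is a strong epimorphism while $f$ is a monomorphism. First I would assemble the commuting square whose top edge is $e\colon X \to Y$, left edge is $e'\colon X \to Z$, right edge is $f\colon Y \to Z$, and bottom edge is $\id_Z$; it commutes precisely because $f \circ e = e'$ by hypothesis (and $\id_Z \circ e' = e'$). Since $e'$ is a strong epi and $f$ is a mono, the lifting property defining strong epimorphisms supplies a diagonal filler $d\colon Z \to Y$ with $d \circ e' = e$ and, crucially, $f \circ d = \id_Z$.

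The equation $f \circ d = \id_Z$ already exhibits $f$ as a split epimorphism with section $d$. To upgrade this to an isomorphism I would invoke monicity of $f$: from $f \circ d \circ f = (f \circ d) \circ f = \id_Z \circ f = f = f \circ \id_Y$ and left-cancellation of the mono $f$, we obtain $d \circ f = \id_Y$. Combined with $f \circ d = \id_Z$, this makes $d$ a two-sided inverse of $f$, so $f$ is an isomorphism.

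I do not expect any genuine obstacle, as the argument is a textbook application of the lifting property of strong epis followed by the elementary fact that a monomorphism which splits is invertible. The only point demanding mild care is the orientation of the lifting square, namely ensuring that the strong epi $e'$ sits as the left-hand leg and the mono $f$ as the right-hand leg, so that the lifting property is applied in the intended direction and produces a section of $f$ rather than of $e'$.
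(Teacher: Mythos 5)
Your proof is correct and follows essentially the same route as the paper: both obtain the section $d$ (the paper calls it $l$) by lifting in the square with the strong epi $e'$ on the left, the mono $f$ on the right, and $\id_Z$ on the bottom. The only cosmetic difference is the final step, where you cancel the mono $f$ directly from $f\circ d\circ f = f$, while the paper invokes uniqueness of liftings in a second square — two phrasings of the same one-line argument.
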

\begin{proof}
  We find a section of $f$ by lifting as in
  \begin{center}
    \diag{%
      X \& Y \\
      Z \& Z\rlap{.} %
    }{%
      (m-1-1) edge[labela={e}] (m-1-2) %
      edge[labell={e'}] (m-2-1) %
      (m-2-1) edge[identity,labelb={}] (m-2-2) %
      (m-1-2) edge[labelr={f}] (m-2-2) %
      (m-2-1) edge[dashed,labelal={l}] (m-1-2) %
    }
  \end{center}
  But $l$ is in fact an inverse by uniqueness of lifting in

  \hfill
\begin{tikzcd}[ampersand replacement=\&,baseline=(\tikzcdmatrixname-3-1.base)]
	X \&\& Y \\
	\& Z \\
	Y \&\& Z\rlap{.}
	\arrow["e"', from=1-1, to=3-1]
	\arrow["e", from=1-1, to=1-3]
	\arrow["{e'}"', from=1-1, to=2-2]
	\arrow["f"', from=3-1, to=3-3]
	\arrow["f", from=1-3, to=3-3]
	\arrow["f", from=3-1, to=2-2]
	\arrow["l", from=2-2, to=1-3]
	\arrow[Rightarrow, no head, from=2-2, to=3-3]
\end{tikzcd}
\end{proof}

\begin{lemma}\label{lem:images:flex}
  For any diplopic $ℍ$-transition system $X$, flexible bisimulations $R → X²$
  over $X$ are closed under images.
\end{lemma}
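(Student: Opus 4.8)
The plan is to realise the image of the span as a (strong epi)-mono factorisation and then transport the flexible bisimulation property across the strong epi using \cref{lem:bisim:epi}. Concretely, let $\langle p₁,p₂\rangle∶ R → X²$ be a flexible bisimulation in $ℍ\Trans_𝟚$, so that both legs $p_i∶ R → X$ are functional flexible bisimulations, and form the (strong epi)-mono factorisation $R \xonto{e} I \xinto{m} X²$ in $ℍ\Trans_𝟚$. Writing $q_i∶ I → X$ for the two legs of the resulting subobject $I → X²$, we have $q_i ∘ e = p_i$ for $i = 1,2$, so it suffices to prove that each $q_i$ is a functional flexible bisimulation.

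First I would check that the strong epi $e$ is surjective where it matters. By \cref{lem:diplopic:colims}, the forgetful functor $ℍ\Trans_𝟚 → \psh[𝔼𝕋] × \psh[𝕍𝕋]²$ creates (strong epi)-mono factorisations, so the factorisation above is computed componentwise; in particular the $(D,V)$-component $e_{D,V}$ is a strong epi, hence an epi, in $\psh[𝕍𝕋]²$.

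I would then invoke \cref{lem:bisim:epi} directly, taking $f = e$ and $g = q_i$: since $e_{D,V}$ is an epi and the composite $g ∘ f = q_i ∘ e = p_i$ is a functional flexible bisimulation by hypothesis, the lemma yields that $g = q_i$ is a functional flexible bisimulation. Carrying this out for both $i = 1$ and $i = 2$ shows that both legs of $I → X²$ are functional flexible bisimulations, i.e. that $I → X²$ is a flexible bisimulation, which is exactly the claimed closure under images.

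The only delicate point is the middle step, namely that passing to the image does not spoil surjectivity on the distinguished-vertex and vertex objects; everything hinges on factorisations in $ℍ\Trans_𝟚$ being created componentwise (\cref{lem:diplopic:colims}, ultimately \cref{lem:laxlim:sepimono}). Once $e_{D,V}$ is known to be an epi, \cref{lem:bisim:epi} does all the remaining work, and no further analysis of labels or transitions is required.
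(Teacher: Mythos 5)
Your proof is correct and follows essentially the same route as the paper's: both realise the image as the componentwise (strong epi)-mono factorisation created by the forgetful functor (Lemma~\ref{lem:diplopic:colims}, via Lemma~\ref{lem:laxlim:sepimono}) and then cancel the resulting epi against the pointwise-weak-pullback condition. The only cosmetic difference is that you invoke Lemma~\ref{lem:bisim:epi} as a prepackaged form of this cancellation, whereas the paper applies Lemma~\ref{lem:wpbk} together with Proposition~\ref{prop:algra2} directly.
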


\begin{proof}
  Both $\psh[𝔼𝕋]$ and $\psh[𝕍𝕋]^𝟚$ are (isomorphic to) presheaf categories,
  hence images are computed as (strong epi-mono) factorisations.
  Furthermore, $Δ_𝟚$ preserves pullbacks by
  Proposition~\ref{prop:algra2}, hence by
  Lemma~\ref{lem:laxlim:sepimono} the forgetful functor
  $ℍ\Trans_𝟚 → \psh[𝔼𝕋] × \psh[𝕍𝕋]^𝟚$ creates (strong epi-mono)
  factorisations, hence images.

  Now, consider any flexible bisimulation $p∶ R → X²$.  As we just
  saw, we obtain a (strong epi-mono) factorisation of $p$ by factoring
  $p₁$ and $p_{D,V}$.  We then need to show that the square
  \begin{center}
\begin{tikzcd}[ampersand replacement=\&]
	{\im(R_1)} \& {X₁} \\
	{Δ_{𝟚,𝐬,𝐥}(\im(R_{D,V}))} \& {Δ_{𝟚,𝐬,𝐥}X}
	\arrow["{Δ_{𝟚,𝐬,𝐥}π₁}"', from=2-1, to=2-2]
	\arrow["{X_∂}", from=1-2, to=2-2]
	\arrow[from=1-1, to=2-1]
	\arrow[from=1-1, to=1-2]
      \end{tikzcd}
    \end{center}
    is a pointwise weak pullback.
    But by Proposition~\ref{prop:algra2}, $Δ_{𝟚,𝐥,𝐬}$ preserves epimorphisms. Thus, since the exterior of
    \begin{center}
\begin{tikzcd}[ampersand replacement=\&]
	{R₁} \& {\im(R_1)} \& {X₁} \\
	{Δ_{𝟚,𝐬,𝐥}(R_{D,V})} \& {Δ_{𝟚,𝐬,𝐥}(\im(R_{D,V}))} \& {Δ_{𝟚,𝐬,𝐥}X}
	\arrow["{Δ_{𝟚,𝐬,𝐥}π₁}"', from=2-2, to=2-3]
	\arrow["{X_∂}", from=1-3, to=2-3]
	\arrow[from=1-2, to=2-2]
	\arrow[from=1-2, to=1-3]
	\arrow[two heads, from=2-1, to=2-2]
	\arrow[from=1-1, to=2-1]
	\arrow[from=1-1, to=1-2]
\end{tikzcd}
\end{center}
is a pointwise weak pullback by hypothesis, we conclude by Lemma~\ref{lem:wpbk}.
\end{proof}

\subsection{Composition of flexible and rigid simulations}

Our goal in this subsection is to prove the following.
\begin{lemma}\label{lem:flex:rigid:compo}
  For any $ℍ$-transition system $X$, diplopic flexible simulation
  $R → X²$, and simulation $S₀ → X₀²$, equipped with a span
  morphism $ρ∶ R₀;S₀ → R₀$, the relation $\im(R; θS₀)$ is a
  flexible simulation, hence so is $\im(R_{D,V};S₀)^⇑$.
\end{lemma}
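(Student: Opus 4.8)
The plan is to mirror the proof architecture of \cref{lem:filtered:flex:compo,lem:images:flex}, now composing one \emph{flexible} leg, coming from $R$, with one \emph{rigid} leg, coming from $S₀$. By \cref{def:flexible} it suffices to show that the left leg of $\im(R; θS₀) → X²$ is a functional flexible bisimulation, and by \cref{prop:flexible-pw-pbk} this reduces to checking that a single square involving $Δ_{𝟚,𝐬,𝐥}$ is a pointwise weak pullback.

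First I would analyse the span composition $R; θS₀$ before taking its image. Its edge object is the pullback $(R; θS₀)₁ = R₁ ×_{X₁} (θS₀)₁$ formed over the middle copy of $X$, where the right leg of $R$ meets the left leg of $θS₀$. Exactly as in the proof of \cref{lem:filtered:flex:compo}, I would factor the left-leg square of the composite horizontally through $R₁$ and $Δ_{𝟚,𝐬,𝐥}R$. The right-hand factor is the square witnessing that $R$ is a flexible simulation, hence a pointwise weak pullback by \cref{prop:flexible-pw-pbk}. The left-hand factor is one face of the cube assembled from the two span-composition pullbacks, whose top and bottom faces are genuine pullbacks since $Δ_{𝟚,𝐬,𝐥}$, being a right adjoint (\cref{prop:algra2}), preserves them; its remaining face is the left-leg square of $θS₀$, which is a pointwise weak pullback because $θS₀$ is the rigid lifting of the simulation $S₀$. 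Pasting these through \citet[Lemma~9.26]{HL} shows that $R; θS₀$ is itself a flexible simulation.

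I would then pass to the image by reusing the argument of \cref{lem:images:flex}: images in $ℍ\Trans_𝟚$ are created at the edge-and-vertex level (\cref{lem:diplopic:colims}) and $Δ_{𝟚,𝐬,𝐥}$ preserves epimorphisms (\cref{prop:algra2}), so by \cref{lem:wpbk} the pointwise weak pullback of the left-leg square is inherited by $\im(R; θS₀)$; since this argument concerns only the left leg, it applies unchanged to simulations. For the concluding clause, I would invoke \cref{prop:simliftmax}, by which the cartesian lifting of the underlying vertex span of a flexible simulation is again a flexible simulation; it then remains to identify that vertex span, which follows because $(-)_{D,V}$ creates images (\cref{lem:diplopic:colims}) and commutes with span composition and with $θ$, so that $(\im(R; θS₀))_{D,V} ≅ \im(R_{D,V}; S₀)$, making $\im(R_{D,V}; S₀)^⇑$ a flexible simulation.

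The step I expect to be the main obstacle is the cube argument: orienting the factorisation so that the flexible contribution of $R$ and the rigid contribution of $S₀$ land on the right-hand square and on the cube's remaining face respectively, and verifying that $θS₀$ really does supply a pointwise weak pullback there. This relies on $Δ_{𝟚,𝐬,𝐥}$ preserving the span-composition pullbacks and on the rigidity of $S₀$ producing, for each transition of $X$ over a related source and label, a matching transition of $θS₀$ with an exactly equal label, so that pasting it onto the weak pullback coming from $R$ yields precisely the label-flexibility that the composite's left leg must enjoy.
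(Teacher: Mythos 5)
There is a genuine gap, and it sits exactly where you predicted trouble: the cube. In \cref{lem:filtered:flex:compo} the right-hand face of the cube is the square $S₁ → X₁$ over $Δ_{𝟚,𝐬,𝐥}S → Δ_{𝟚,𝐬,𝐥}X$, which uses both that the label object of the second factor is $S₀$ and that the second factor is \emph{flexible}. Here the second factor is $θS₀$, whose label object is $X₀$, not $S₀$, so you face a dilemma. If you take the right-hand face to be the rigid-simulation square of $θS₀$ (with $Δ_𝐬(S₀)×Δ_𝐥(X₀)$ in the bottom row), the top and bottom faces do paste as pullbacks, but the bottom pullback then computes the composite's label object as $Δ_𝐥(R₀)×_{Δ_𝐥(X₀)}Δ_𝐥(X₀) ≅ Δ_𝐥(R₀)$; the pasted rectangle only yields lifts for $R₀$-related labels, whereas flexibility of the composite span requires a weak pullback over $Δ_𝐥(R₀;S₀)$. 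If instead you force the second factor's label object to be $S₀$ so that the bottom face produces $Δ_𝐥(R₀;S₀)$, the right-hand face becomes the \emph{flexibility} square of $S₀$, which is not a weak pullback, since $S₀$ is only assumed to be a rigid simulation. Either way the pasting does not establish the square you need, and the subsequent image and lifting steps then concern a span whose label object is (the image of) $R₀$ rather than $R₀;S₀$.

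The tell-tale sign is that your argument never uses the hypothesis $ρ∶ R₀;S₀ → R₀$ (nor reflexivity of $S₀$), and that hypothesis is precisely what bridges the two label objects. The paper's proof is organised around this point: it introduces triplopic transition systems so that the label object can be chosen independently of source and target, observes that the composite with label object $R₀$ (essentially $R;θS$) is a composite of triplopic simulations and hence a triplopic simulation, and then transfers simulation-hood to the composite with label object $R₀;S₀$ (written $R;ιS$) via a retraction whose label component is $ρ$ and whose section is induced by reflexivity of $S₀$; since images of retracts coincide, \cref{lem:triplopic:retraction,prop:triplopic:images} conclude, and only then do \cref{prop:triplopic:sim:reflection,prop:simliftmax} bring the result back to diplopic systems. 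Your outline would need this witness-conversion step; no purely diagrammatic pasting of the two given squares can supply it.
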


In order to prove this smoothly, we introduce the following notion of
triplopic transition system.
\begin{definition}
  Let $ℍ\Trans_3$ denote the lax limit of
  $\psh[𝕍𝕋]³ \xto{Δ_𝐬×Δ_𝐥×Δ_𝐭} \psh[𝔼𝕋]$.
  Objects of $ℍ\Trans_3$ are called \alert{triplopic transition systems}.
\end{definition}
\begin{notation}
  We denote by $Δ₃$, $Δ_{3,𝐬}$, $Δ_{3,𝐬,𝐥}$,… the functors analogous
  to $Δ_𝟚$, $Δ_{𝟚,𝐬}$, $Δ_{𝟚,𝐬,𝐥}$,…,
  and often treat the projection $ℍ\Trans₃ → \psh[𝕍𝕋]³$ as an implicit coercion,
  thus writing, e.g., $Δ_{3,𝐬,𝐥}X$ for any $X ∈ ℍ\Trans₃$, meaning
  $Δ_𝐬(Xₛ) × Δ_𝐥(Xₗ)$.
\end{notation}
A triplopic transition system $X$ thus consists of presheaves
$Xₛ,Xₗ,Xₜ ∈ \psh[𝕍𝕋]$ and $X₁ ∈ \psh[𝔼𝕋]$, together with a morphism
$X₁ → Δ_𝐬(Xₛ)×Δ_𝐥(Xₗ)×Δ_𝐭(Xₜ)$.
\begin{remark}
  We use a boldface $𝟚$ in $ℍ\Trans_𝟚$ and a normal $3$ in $ℍ\Trans₃$,
  to reflect the fact that any diplopic transition system
  $X ∈ ℍ\Trans_𝟚$ comes with a morphism $Xₛ → X₀$, while there is no
  such requirement for triplopic transition systems.
\end{remark}

Let us readily notice the following useful facts.
\begin{proposition}\label{prop:algra3}
     All functors $Δ_3, Δ_{3,𝐥}, Δ_{3,𝐬}, Δ_{3,𝐭}, Δ_{3,𝐬,𝐥},…$ are
     algebraic right adjoints and preserve epimorphisms.
   \end{proposition}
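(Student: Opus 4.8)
The plan is to mirror the proof of Proposition~\ref{prop:algra2} essentially verbatim: the only structural difference is that the domain is now $\psh[𝕍𝕋]³$ rather than $\psh[𝕍𝕋]^𝟚$, so there are three projections to account for instead of two. Concretely, I would first recall that algebraic functors between presheaf categories automatically preserve epimorphisms; this disposes of the ``preserve epimorphisms'' clause and reduces the whole statement to showing that each functor in the list is an algebraic right adjoint.

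Next, since algebraic right adjoints are closed under pointwise finite products (the requisite (co)completeness conditions being available in these presheaf categories), it suffices to treat the three atomic functors $Δ_{3,𝐬}$, $Δ_{3,𝐥}$, and $Δ_{3,𝐭}$ in isolation. Here I would observe that each $Δ_{3,x}$ is precisely the corresponding functor $Δ_x$ of Proposition~\ref{prop:algra}, precomposed with one of the three projections $\psh[𝕍𝕋]³ → \psh[𝕍𝕋]$ (onto the $Xₛ$, $Xₗ$, and $Xₜ$ components, respectively). By Proposition~\ref{prop:algra} each $Δ_x$ is an algebraic right adjoint; and each projection, being a restriction functor, is simultaneously a left and a right adjoint, hence in particular an algebraic right adjoint. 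As algebraic right adjoints are closed under composition, each $Δ_{3,x}$ is one, and the conclusion follows.

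The main obstacle is that there really isn't one: the statement is a near-verbatim reprise of Proposition~\ref{prop:algra2}, and the sole conceptual change (three projections in place of two, reflecting the extra target component of a triplopic transition system) is immaterial to the argument. The only point worth a moment's care is checking that the decomposition $Δ_{3,x} = Δ_x \circ (\text{projection})$ is correct for each of $𝐬$, $𝐥$, $𝐭$, which is immediate from the definition of $Δ_3$ as the composite built from $Δ_𝐬 × Δ_𝐥 × Δ_𝐭$ on $\psh[𝕍𝕋]³$ together with \cref{not:Delta}.
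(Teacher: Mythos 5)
Your proposal is correct and follows the paper's own proof essentially verbatim: reduce to the three atomic functors via closure of algebraic right adjoints under pointwise finite products, then decompose each $Δ_{3,x}$ as $Δ_x$ precomposed with a projection, exactly as in Proposition~\ref{prop:algra2}. Nothing is missing.
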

   \begin{proof}
     Algebraic functors between presheaf categories automatically
     preserve epimorphisms, so it suffices to prove that all these
     functors are algebraic right adjoints.
     
     Algebraic right adjoints being closed under pointwise finite
     products, it further suffices to prove that each of $Δ_{3,𝐥}$,
     $Δ_{3,𝐬}$, and $Δ_{3,𝐭}$ is an algebraic right adjoint.  Now each
     of these functors $Δ_{3,x}$ is the corresponding functor $Δₓ$,
     precomposed with one of the projections $\psh[𝕍𝕋]^3 →
     \psh[𝕍𝕋]$. But each $Δₓ$ is an algebraic right adjoint by
     Proposition~\ref{prop:algra}, and projections, being restriction
     functors, are left and right adjoints, hence algebraic right
     adjoints, hence the result.
   \end{proof}

\begin{lemma}\label{lem:triplopic:colims}
The forgetful functor
  $$ℍ\Trans_3 → \psh[𝔼𝕋] × \psh[𝕍𝕋]²$$
  creates all colimits and limits, as well as (strong epi)-mono
  factorisations.
\end{lemma}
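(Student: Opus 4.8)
The plan is to read off every clause of the statement from the fact that $ℍ\Trans_3$ is, by definition, the lax limit---equivalently the comma category $\id_{\psh[𝔼𝕋]} ↓ Δ_3$---of the functor $Δ_3 = Δ_𝐬 × Δ_𝐥 × Δ_𝐭 ∶ \psh[𝕍𝕋]³ → \psh[𝔼𝕋]$, and then to feed in the good behaviour of $Δ_3$ recorded in \cref{prop:algra3}. In fact the argument is word-for-word the one used for the diplopic case in \cref{lem:diplopic:colims}, with $Δ_𝟚$ and $\psh[𝕍𝕋]²$ uniformly replaced by $Δ_3$ and $\psh[𝕍𝕋]³$; so the task is really just to check that each ingredient of that proof still applies.

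For colimits I would use that the projection out of any comma category $\id_𝐁 ↓ F$ to $𝐁 × 𝐀$ creates them, with no hypothesis on $F$. Concretely, a colimiting cocone is obtained by forming the colimit of the edge-part in $\psh[𝔼𝕋]$ and of the triple of vertex-parts in $\psh[𝕍𝕋]³$ separately, the structure morphism then being the unique map induced by the canonical comparison $\colim_i Δ_3(A_i) → Δ_3(\colim_i A_i)$, where $A_i ∈ \psh[𝕍𝕋]³$ is the vertex-part of the $i$-th object. Hence colimits are computed componentwise and are created by the forgetful functor.

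For limits I would invoke the dual fact that the projection out of $\id_𝐁 ↓ F$ creates exactly those limits that $F$ preserves. Since \cref{prop:algra3} tells us that $Δ_3$ is an algebraic right adjoint, it preserves all limits, and therefore the forgetful functor creates all of them. For (strong epi)-mono factorisations I would appeal directly to \cref{lem:laxlim:sepimono}: its hypotheses are met because $Δ_3$ preserves pullbacks---again by \cref{prop:algra3}, being a right adjoint---while both $\psh[𝕍𝕋]³$ and $\psh[𝔼𝕋]$ are presheaf categories and so possess pullbacks and (strong epi-mono) factorisations.

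I do not expect a genuinely hard step: the whole content is that $ℍ\Trans_3$ is a comma category over a functor that is at once an algebraic right adjoint and pullback-preserving. The only point needing care is to confirm that \cref{prop:algra3} indeed delivers both of the properties of $Δ_3$ used above---preservation of arbitrary limits for the limit clause, and preservation of pullbacks for the factorisation clause---which it does; everything else is a formal consequence of the lax-limit presentation of $ℍ\Trans_3$.
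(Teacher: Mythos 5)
Your proof is correct and is exactly the paper's argument: the paper proves this lemma by the words ``Just as Lemma~\ref{lem:diplopic:colims}'', and that earlier proof is precisely the three-step decomposition you give (colimits created by the projection from the lax limit unconditionally, limits created because $Δ_3$ preserves them all by Proposition~\ref{prop:algra3}, factorisations via Lemma~\ref{lem:laxlim:sepimono} using pullback-preservation of $Δ_3$). You have merely spelled out the substitution of $Δ_3$ and $\psh[𝕍𝕋]³$ for $Δ_𝟚$ and $\psh[𝕍𝕋]^𝟚$ that the paper leaves implicit.
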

\begin{proof}
  Just as Lemma~\ref{lem:diplopic:colims}.
\end{proof}

The idea of triplopic transition systems is to unify flexible and
rigid bisimulation into a single framework, while allowing maximal
flexibility in the choice of input and output states, and labels.  Let
us now define (bi)simulation in triplopic transition systems.  We will
then describe embeddings of transition systems and diplopic transition
systems into triplopic transition systems, proving in each case that
the embedding preserves and reflects bisimulation.

\begin{definition}
  A morphism $f∶ R → X$ of triplopic transition systems is a \alert{functional bisimulation}
  iff the square
  \begin{center}
    \diag{%
      R₁ \& X₁ \\
      Δ_{3,𝐬,𝐥}R \& Δ_{3,𝐬,𝐥}X
    }{%
      (m-1-1) edge[labela={}] (m-1-2) %
      edge[labell={}] (m-2-1) %
      (m-2-1) edge[labelb={}] (m-2-2) %
      (m-1-2) edge[labelr={}] (m-2-2) %
    }
  \end{center}
  is a pointwise weak pullback.  Spans and relations in $ℍ\Trans_3$
  are called simulations and bisimulations analogously to the case of
  $ℍ\Trans_𝟚$.
\end{definition}

\begin{proposition}
  Mapping any diplopic transition system
  $$(X₁,γ∶Xₛ→X₀,∂∶X₁ → Δ_𝐬(Xₛ) × Δ_{𝐥,𝐭}(X₀))$$ to
  $$(X₁,Xₛ,X₀,X₀,∂∶X₁ → Δ_𝐬(Xₛ) × Δ_{𝐥,𝐭}(X₀))$$
  yields an embedding $ι∶ ℍ\Trans_𝟚 → ℍ\Trans₃$.
\end{proposition}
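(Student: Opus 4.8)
The plan is to recognise $ι$ as an instance of comma-category functoriality, so that no direct verification of the functor axioms is needed. Recall that $ℍ\Trans_𝟚 = \psh[𝔼𝕋]/Δ_𝟚$, and that $ℍ\Trans₃$ is the corresponding lax limit $\psh[𝔼𝕋]/Δ₃$ of $Δ₃ = Δ_𝐬 × Δ_𝐥 × Δ_𝐭$; moreover $Δ_𝟚$ is by definition the composite $Δ₃ ∘ ⟨π₁,π₂,π₂⟩$, where $⟨π₁,π₂,π₂⟩∶ \psh[𝕍𝕋]^𝟚 → \psh[𝕍𝕋]³$ sends each $γ∶ Xₛ → X₀$ to $(Xₛ,X₀,X₀)$ and each commuting square, i.e. pair $(fₛ,f₀)$, to $(fₛ,f₀,f₀)$. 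First I would observe that, under this factorisation, a diplopic system $(X₁,γ,∂∶X₁→Δ_𝟚(γ))$ is literally the same datum as the triplopic system $(X₁,⟨π₁,π₂,π₂⟩(γ),∂∶X₁→Δ₃(⟨π₁,π₂,π₂⟩(γ)))$, which unwinds to $(X₁,Xₛ,X₀,X₀,∂)$; this is exactly the stated action on objects, and on morphisms $ι$ sends $(f₁,(fₛ,f₀))$ to $(f₁,(fₛ,f₀,f₀))$. Being the change-of-base functor $\psh[𝔼𝕋]/(Δ₃∘⟨π₁,π₂,π₂⟩) → \psh[𝔼𝕋]/Δ₃$, it preserves identities and composition automatically, since $⟨π₁,π₂,π₂⟩$ does and the $\psh[𝔼𝕋]$-component is carried along unchanged.

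It then remains to establish faithfulness, which I would reduce to faithfulness of $⟨π₁,π₂,π₂⟩$. Concretely, if two diplopic morphisms $(f₁,(fₛ,f₀))$ and $(g₁,(gₛ,g₀))$ have the same image under $ι$, then $f₁ = g₁$ and $(fₛ,f₀,f₀) = (gₛ,g₀,g₀)$, whence $fₛ=gₛ$ and $f₀=g₀$; so the two morphisms agree. This is the whole of the faithfulness argument, and it is what justifies calling $ι$ an embedding.

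I do not expect a serious obstacle here; the one point demanding care is the precise strength of the word ``embedding''. The functor $ι$ is \emph{not} full: a morphism $ι(X)→ι(Y)$ in $ℍ\Trans₃$ has independent source, label and target components, so its label and target components need not coincide, and it is under no constraint to respect the maps $γ_X,γ_Y$, which $ℍ\Trans₃$ has discarded; hence it need not lie in the image of $ι$. For the same reason $ι$ forgets $γ$ and so is not injective on objects either. Consequently I would read ``embedding'' as \emph{faithful}, and faithfulness is exactly what the later development uses: a diplopic morphism is determined by its triplopic image, which is what lets one transport the (functional) (bi)simulation conditions between $ℍ\Trans_𝟚$ and $ℍ\Trans₃$ in the subsequent propositions.
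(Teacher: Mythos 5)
Your proof is correct; the paper's own proof is literally just the word ``Straightforward'', and your identification of $ι$ as the change-of-base functor along $⟨π₁,π₂,π₂⟩∶ \psh[𝕍𝕋]^𝟚 → \psh[𝕍𝕋]³$ (using that $Δ_𝟚 = Δ₃ ∘ ⟨π₁,π₂,π₂⟩$ by definition) is exactly the intended reading, with functoriality and faithfulness following as you say. Your caveat that $ι$ is faithful but neither full nor injective on objects (since $γ$ is discarded) is also apt: contrast the earlier $ι\Trans∶ σ\Trans → σ\Trans_𝟚$, which the paper explicitly flags as a \emph{full} reflective embedding, whereas no such claim is made here, and only faithfulness together with the fact that $ι$ leaves the border square untouched is used in the subsequent reflection of (bi)simulations.
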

\begin{proof}
  Straightforward.
\end{proof}
\begin{notation}
  By composition with $ℍ\Trans ↪ ℍ\Trans_𝟚$, we obtain a further
  embedding $ℍ\Trans ↪ ℍ\Trans₃$. Treating the former as an implicit
  coercion, we thus often also merely denote the composite by $ι$.
\end{notation}

\begin{proposition}\label{prop:triplopic:sim:reflection}
  A morphism (resp.\ a span) of diplopic transition systems is a
  functional bisimulation (resp.\ a simulation or bisimulation) iff
  its embedding into triplopic transition systems is.
\end{proposition}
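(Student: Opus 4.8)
The plan is to reduce the whole statement to one bookkeeping observation: under the embedding $ι$, the pointwise weak pullback square that defines a triplopic functional bisimulation is, on the nose, the diplopic square of Proposition~\ref{prop:flexible-pw-pbk}.

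First I would handle the morphism case. Recall from Proposition~\ref{prop:flexible-pw-pbk} that a morphism $f∶ R → X$ in $ℍ\Trans_𝟚$ is a functional flexible bisimulation exactly when the square with top edge $R₁ → X₁$ and bottom edge $Δ_𝐬(Rₛ) × Δ_𝐥(R₀) → Δ_𝐬(Xₛ) × Δ_𝐥(X₀)$ is a pointwise weak pullback, i.e.\ when $f₁$ and $Δ_{𝟚,𝐬,𝐥}(f)$ together form such a square. By definition, $ι(f)$ is a functional bisimulation in $ℍ\Trans₃$ exactly when the analogous square built from $f₁$ and $Δ_{3,𝐬,𝐥}(ι f)$ is a pointwise weak pullback. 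It then suffices to unfold $ι$: for a diplopic $R = (R₁,Rₛ,R₀,γ,∂)$ the embedding is $ι R = (R₁,Rₛ,R₀,R₀,∂)$, so $(ι R)ₛ = Rₛ$ and $(ι R)ₗ = R₀$, whence $Δ_{3,𝐬,𝐥}(ι R) = Δ_𝐬(Rₛ) × Δ_𝐥(R₀) = Δ_{𝟚,𝐬,𝐥}(R)$, and likewise for $X$. Since $f₁$ and the vertical maps are unchanged by $ι$, the two squares agree together with all their arrows, so one is a pointwise weak pullback iff the other is.

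The span case is then immediate. By Definition~\ref{def:flexible}, a span $R → X × Y$ of diplopic transition systems is a flexible simulation iff its left leg is a functional flexible bisimulation, and a flexible bisimulation iff both legs are; the same definition applies verbatim to spans in $ℍ\Trans₃$. As $ι$ is a (full, faithful) embedding sending each leg of a span to the corresponding leg of the embedded span, the morphism case gives that the left leg $R → X$ is a functional flexible bisimulation iff $ι(R) → ι(X)$ is a functional bisimulation, and similarly on the right. The desired equivalences for simulations and bisimulations follow at once.

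I do not expect a genuine obstacle here. The only point needing care is the identity $Δ_{3,𝐬,𝐥} ∘ ι = Δ_{𝟚,𝐬,𝐥}$, which holds precisely because $ι$ copies the vertex object $X₀$ into both the label and target slots, so the label projection seen by the triplopic functor is exactly $X₀$, matching the diplopic characterisation of Proposition~\ref{prop:flexible-pw-pbk}. All the content is in recognising that the source-and-label data is preserved on the nose.
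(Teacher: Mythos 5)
Your proof is correct and is exactly the unfolding the paper has in mind — its own proof is just the word ``Straightforward'', and the content is precisely your observation that $ι$ duplicates $X₀$ into the label and target slots, so $Δ_{3,𝐬,𝐥}∘ι = Δ_{𝟚,𝐬,𝐥}$ and the two defining pointwise-weak-pullback squares coincide on the nose, with the span cases reducing to the legs.
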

\begin{proof}
  Straightforward.
\end{proof}

Beyond the embedding $ℍ\Trans ↪ ℍ\Trans₃$ that we saw above,
there is the following embedding of spans:
\begin{proposition}
  For any $X ∈ ℍ\Trans$, mapping any span $R₀ → X₀²$ in $\psh[𝕍𝕋]$ to
  the triplopic transition system $θ(R₀)$ given by $(R₀,X₀,R₀)$ and
  $θ(R₀)₁ = R₀^↑$, i.e., given 
  by the pullback
  \begin{center}
    \Diag{%
      \stdpbk %
    }{%
      R₀^↑ \& X₁² \\
      Δ_𝐬(R₀)×Δ_𝐥(X₀) × Δ_𝐭(R₀) \& ΔX₀²\rlap{,}
    }{%
      (m-1-1) edge[labela={}] (m-1-2) %
      edge[labell={}] (m-2-1) %
      (m-2-1) edge[labelb={}] (m-2-2) %
      (m-1-2) edge[labelr={}] (m-2-2) %
    }
  \end{center}
  extends to an embedding $θ∶ \psh[𝕍𝕋]/X₀² → ℍ\Trans_3/X²$, which we
  call the \alert{thin} embedding.
\end{proposition}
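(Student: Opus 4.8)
The plan is to define $θ$ on morphisms by means of the universal property of the defining pullback, then to check functoriality and compatibility with the projections down to $X²$, and finally to read off faithfulness and injectivity on objects.

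First I would treat the action on morphisms. A morphism of $\psh[𝕍𝕋]/X₀²$ is a map $f∶ R₀ → R₀'$ over $X₀²$. I set $θ(f)$ to be the morphism of triplopic transition systems whose source and target components are both $f$, whose label component is $\id$, and whose edge component is the unique arrow $R₀^↑ → {R₀'}^↑$ supplied by the universal property of the pullback defining ${R₀'}^↑$.

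To invoke that universal property, I would exhibit a cone from $R₀^↑$ over the cospan $X₁² → ΔX₀² ← Δ_𝐬(R₀') × Δ_𝐥(X₀) × Δ_𝐭(R₀')$. Its first leg is the projection $R₀^↑ → X₁²$ coming from the pullback defining $R₀^↑$; its second leg is the composite of the border $R₀^↑ → Δ_𝐬(R₀) × Δ_𝐥(X₀) × Δ_𝐭(R₀)$ with $Δ_𝐬(f) × \id × Δ_𝐭(f)$. These two legs agree over $ΔX₀²$: since $f$ lies over $X₀²$, the structure maps $R₀ → X₀²$ factor as $R₀ \xto{f} R₀' → X₀²$, so $Δ_𝐬(f)$ and $Δ_𝐭(f)$ commute with the downward maps into $ΔX₀²$, and combining this with the commutativity of the square defining $R₀^↑$ yields the required equality. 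The edge component obtained in this way is, by construction, compatible with the borders of $θ(R₀)$ and $θ(R₀')$, so $θ(f)$ is a genuine morphism of triplopic transition systems.

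It then remains to see that $θ(f)$ lies over $X²$ and that $θ$ is functorial and an embedding. For the first point, the source and target components of the projection $θ(R₀') → X²$ precomposed with $f$ recover the span map of $R₀$ (again because $f$ is over $X₀²$), the label component is the identity, and the edge component factors through $X₁²$ by the very construction of the induced arrow; functoriality then follows from the uniqueness clause in the universal property of pullbacks. Finally, $θ$ is faithful because $f$ is recovered as the source (equivalently, target) component of $θ(f)$, and injective on objects because $R₀$ is the source object of $θ(R₀)$; hence $θ$ is an embedding. The only step that is not purely formal is the verification of the cone condition that licenses the universal property, and even this reduces directly to the hypothesis that $f$ is a morphism of the slice $\psh[𝕍𝕋]/X₀²$.
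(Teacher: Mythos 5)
Your verification is correct and is exactly the routine argument the paper leaves implicit (the proposition is stated without proof, being regarded as straightforward): the edge component of $θ(f)$ is induced by the universal property of the pullback defining $R₀'^↑$, the cone condition holds because $f$ lives over $X₀²$, and functoriality, compatibility with the projections to $X²$, faithfulness, and injectivity on objects all follow as you describe. Nothing is missing.
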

\begin{remark}
  Thinness here refers to labels, which are forced to agree on both
  sides of any transition in $θ(R₀)$.
\end{remark}

The thin embedding enables the following characterisation of
bisimulation in $ℍ$-transition systems in terms of bisimulation in
triplopic $ℍ$-transition systems:
\begin{proposition}
  For any $X ∈ ℍ\Trans$, a span $R₀ → X₀²$ is a simulation (resp.\
  bisimulation) iff $θ(R₀) → X²$ is one.
\end{proposition}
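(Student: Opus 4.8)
The plan is to unfold both conditions pointwise at each transition type and observe that they spell out literally the same matching requirement; the entire content is setting up the correct dictionary between the weak-pullback lifting data and the element-wise clause of \cref{def:simulation}. First I would recall the concrete shape of the thin embedding: $θ(R₀)$ has both its source and target presheaves equal to $R₀$, its label presheaf equal to $X₀$, and its edge presheaf $θ(R₀)₁ = R₀^↑$ given by the displayed pullback. Unwinding that pullback objectwise, for each $α ∈ 𝔼𝕋$ an element of $R₀^↑(α)$ is a pair of transitions $(e,e') ∈ X₁(α)²$ whose sources are related by $R₀$, whose labels coincide (this equality is the \emph{thin} constraint, coming from the diagonal on the label leg), and whose targets are related by $R₀$. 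By definition, the span $θ(R₀) → ι(X)²$ is a simulation iff its left leg $θ(R₀) → ι(X)$ is a functional bisimulation, i.e.\ iff the square with top edge the first projection $R₀^↑ → X₁$ and verticals the source-and-label parts $Δ_{3,\source,\labels}$ of the two borders is a pointwise weak pullback.

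Next I would read off this weak pullback at each $α ∈ 𝔼𝕋$. An element of the bottom-left object $Δ_{\source}(R₀)×Δ_{\labels}(X₀)$ at $α$ is a pair $(r,\vec l)$ with $r ∈ R₀(\source(α))$ a witness that $x ≔ π₁(r)$ is related to $y ≔ π₂(r)$, and $\vec l ∈ Δ_{\labels}(X₀)(α)$ a label tuple; an element of the top-right object is a transition $e ∈ X₁(α)$; the two have the same image in the bottom-right iff $\source(e) = x$ and $e$ carries the labels $\vec l$. A lift into $R₀^↑(α)$ is then a pair $(e,e')$ whose first component is the given $e$, whose source witness is the given $r$, and whose labels are $\vec l$; since the source witness of such a pair is precisely an element of $R₀$ relating $\source(e)$ to $\source(e')$, exhibiting the lift amounts to producing a second transition $e'$ with $\source(e') = y$, the same labels $\vec l$, and $\but(e)\mathrel{R₀}\but(e')$. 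Reading the quantifiers off, the weak-pullback condition says exactly: for every transition $e∶ x \xto{α(\vec l)} x'$ and every witness $r∶ x \mathrel{R₀} y$ there is a transition $e'∶ y \xto{α(\vec l)} y'$ with $x' \mathrel{R₀} y'$, which is verbatim \cref{def:simulation} for the span $R₀ → X₀²$ in $X$. As this unfolding is a genuine logical equivalence, it settles the simulation case in both directions at once.

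Finally, the bisimulation case follows by symmetry: a span is a bisimulation precisely when it and its converse are both simulations, and one checks directly that $θ$ commutes with taking converses (the converse of $θ(R₀)$ is $θ$ of the converse of $R₀$, since swapping the two legs of $θ(R₀)$ just swaps the two factors of $X₁²$ in the pullback defining $R₀^↑$). Applying the simulation equivalence to the converse span therefore yields the claim for the second leg as well. The hard part is not any computation but getting this dictionary exactly right: one must verify that the three factors of the pullback defining $R₀^↑$ encode, respectively, $R₀$-relatedness of sources, \emph{equality} of labels, and $R₀$-relatedness of targets, and that it is the diagonal on the label component that forces the matching transition to carry the identical label $\vec l$ rather than a merely related one. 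This is precisely the feature that isolates ordinary (rigid) bisimulation, captured here through the thin embedding, from the flexible bisimulation considered elsewhere in the development.
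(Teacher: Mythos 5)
Your proposal is correct and follows essentially the same route as the paper: the paper's proof simply asserts that both conditions amount to the displayed square (first projection $R₀^↑ → X₁$ over $Δ_𝐬(R₀)×Δ_𝐥(X₀) → Δ_{𝐬,𝐥}X₀$) being a pointwise weak pullback, and your argument is a careful element-wise verification of exactly that identification, plus the converse-compatibility of $θ$ for the bisimulation half. No gap; you have merely written out the dictionary the paper leaves implicit.
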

\begin{proof}
  Both statements mean that the square
  \begin{center}
    \diag{%
      R₀^↑ \& X₁ \\
      Δ_𝐬(R₀)×Δ_𝐥(X₀) \& Δ_{𝐬,𝐥}X₀
    }{%
      (m-1-1) edge[labela={π₁}] (m-1-2) %
      edge[labell={}] (m-2-1) %
      (m-2-1) edge[labelb={π₁}] (m-2-2) %
      (m-1-2) edge[labelr={}] (m-2-2) %
    }
  \end{center}
  is a pointwise weak pullback.
\end{proof}

Finally, we have the easy
\begin{proposition}
  (Bi)simulations are closed under span composition in $ℍ\Trans₃$.
\end{proposition}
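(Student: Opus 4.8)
The plan is to mirror the diplopic argument of \cref{lem:filtered:flex:compo}, transported to the triplopic setting. By the convention inherited from \cref{def:flexible}, a span $R → X²$ in $ℍ\Trans₃$ is a simulation exactly when its left leg $R → X$ is a functional bisimulation, and a bisimulation when both legs are. Since the condition on the right leg of a span is the condition on the left leg of its converse, converses of bisimulations are bisimulations, and $(R;S)^{\myop} = S^{\myop};R^{\myop}$, it suffices to prove: for (bi)simulations $R → X²$ and $S → X²$, the left leg of the composite $R;S$ is a functional bisimulation; the right-leg statement, hence the bisimulation case, then follows by applying this to converses. Writing the composite as the pullback gluing the right leg of $R$ to the left leg of $S$, I must show that the square with horizontal edges $(R;S)₁ → X₁$ and $Δ_{3,𝐬,𝐥}(R;S) → Δ_{3,𝐬,𝐥}X$ induced by the left leg $R;S → R → X$ is a pointwise weak pullback.

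First I would factor this square horizontally through $R₁$ and $Δ_{3,𝐬,𝐥}R$, since the left leg of $R;S$ is the projection $R;S → R$ followed by the left leg of $R$. The right-hand square of this factorisation is a pointwise weak pullback by the hypothesis that $R$ is a (bi)simulation. For the left-hand square, induced by the projection $R;S → R$, I would display it as one vertical face of a cube whose top and bottom faces are the pullbacks computing $(R;S)₁$ and $Δ_{3,𝐬,𝐥}(R;S)$ from the data of $R$ and $S$. These are genuine pullbacks: the forgetful functor of \cref{lem:triplopic:colims} creates connected limits, so span composites, being pullbacks, are computed componentwise, exactly as \cref{lem:diplopic:span:colims} was obtained in the diplopic case; and $Δ_{3,𝐬,𝐥}$ is a right adjoint by \cref{prop:algra3}, hence preserves them. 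The opposite vertical face of the cube is the square induced by the left leg of $S$, which is a pointwise weak pullback because $S$ is a (bi)simulation. By the pasting properties of weak pullbacks established in \citet[Lemma~9.26]{HL}, the face in question is a weak pullback; pasting it with the right-hand square yields the desired pointwise weak pullback.

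I expect no genuine obstacle here: the whole argument runs in parallel with \cref{lem:filtered:flex:compo}, and the triplopic setting is if anything cleaner, there being no compatibility morphism $γ∶ Xₛ → X₀$ to track. The only points requiring care are the two structural facts already isolated above — that connected limits in $ℍ\Trans₃$, and in the slices $ℍ\Trans₃/X²$, are created componentwise, and that $Δ_{3,𝐬,𝐥}$ preserves them — after which the conclusion is a purely diagrammatic consequence of \citet[Lemma~9.26]{HL}.
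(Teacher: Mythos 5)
Your proof is correct and follows essentially the same route as the paper's: reduce to the simulation case by symmetry, factor the relevant square horizontally through $R₁$, and treat the projection square as one face of a cube whose top and bottom faces are genuine pullbacks (created componentwise and preserved by the right adjoint $Δ_{3,𝐬,𝐥}$), concluding via the weak-pullback pasting results of \citet[Lemma~9.26]{HL}. The only differences are cosmetic: you spell out the converse/symmetry reduction explicitly and invoke the triplopic creation lemma where the paper cites its diplopic counterpart.
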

\begin{proof}
  By symmetry it suffices to show that simulations are closed under
  span composition.  Let us thus consider any simulations $R$ and $S$
  over some $X ∈ ℍ\Trans₃$.
  We must show that the square
  \begin{center}
    \diag{%
      (R;S)₁ \& X₁ \\
      Δ_{3,𝐬,𝐥} (R;S) \& Δ_{3,𝐬,𝐥} X
    }{%
      (m-1-1) edge[labela={π₁}] (m-1-2) %
      edge[labell={}] (m-2-1) %
      (m-2-1) edge[labelb={Δ_{3,𝐬,𝐥}π₁}] (m-2-2) %
      (m-1-2) edge[labelr={}] (m-2-2) %
    }
  \end{center}
  is a pointwise weak pullback.
  This square factors as
  \begin{center}
    \diag{%
      (R;S)₁ \& R₁ \& X₁ \\
      Δ_{3,𝐬,𝐥} (R;S) \& Δ_{3,𝐬,𝐥} (R) \& Δ_{3,𝐬,𝐥} X\rlap{,}
    }{%
      (m-1-1) edge[labela={π₁}] (m-1-2) %
      edge[labell={}] (m-2-1) %
      (m-2-1) edge[labelb={Δ_{3,𝐬,𝐥}π₁}] (m-2-2) %
      (m-1-2) edge[labelr={}] (m-2-2) %
      (m-1-2) edge[labela={π₁}] (m-1-3) %
      (m-2-2) edge[labelb={Δ_{3,𝐬,𝐥}π₁}] (m-2-3) %
      (m-1-3) edge[labelr={}] (m-2-3) %
    }
  \end{center}
  where the right-hand square is a pointwise weak pullback by hypothesis, and
  the left-hand square is the left-hand face in
  \begin{center}
  \Diag{%
    \pullbackk[2.5em][2.5em](6pt){m-2-2}{m-1-1}{m-1-3}{draw,-} %
    \pullbackk[3.5em][3.5em](12pt){m-4-2}{m-3-1}{m-3-3}{draw,-} %
    \pullbackk{m-3-3}{m-1-3}{m-2-4}{draw,dashed} %
  }{%
	{(R;S)₁} \&\& {S₁} \\
	\& {R₁} \&\& {X₁} \\
	{Δ_{𝟛,𝐬,𝐥}(R;S)} \&\& {Δ_{𝟛,𝐬,𝐥}S} \\
	\& {Δ_{𝟛,𝐬,𝐥}R} \&\& {Δ_{𝟛,𝐬,𝐥}X}
  }{%
    (m-1-1) edge[labela={}] (m-1-3) %
    edge[labell={}] (m-3-1) %
    (m-3-1) edge[labelb={}] (m-3-3) %
    (m-1-3) edge[labelr={}] (m-3-3) %
    (m-2-2) edge[fore,labelaat={π₂}{.2}] (m-2-4) %
    edge[fore,labell={}] (m-4-2) %
    (m-4-2) edge[labelb={Δ_{𝟛,𝐬,𝐥}π₂}] (m-4-4) %
    (m-2-4) edge[labelr={}] (m-4-4) %
    (m-1-1) edge[labela={}] (m-2-2) %
    (m-3-1) edge[labela={}] (m-4-2) %
    (m-1-3) edge[labelar={π₁}] (m-2-4) %
    (m-3-3) edge[labelar={Δ_{𝟛,𝐬,𝐥}π₁}] (m-4-4) %
  }%
  \end{center}
  whose top and bottom faces are pullbacks by
  Lemma~\ref{lem:diplopic:span:colims} and the fact that $Δ_{3,𝐬,𝐥}$,
  being a right adjoint, is continuous.  Since the right-hand face is
  a pointwise weak pullback by hypothesis, so is the left-hand face by
  \citet[Lemma~9.26, (i), then (ii)]{HL}.  The whole rectangle thus is
  a pointwise weak pullback by \citet[Lemma~9.26, (i)]{HL}, as desired.
\end{proof}

\begin{proposition}\label{prop:triplopic:images}
  Triplopic (bi)simulations are closed under images.
\end{proposition}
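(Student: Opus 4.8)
The plan is to transcribe the proof of \cref{lem:images:flex} into the triplopic setting, replacing the diplopic ingredients by their triplopic counterparts: \cref{lem:triplopic:colims} will play the role of \cref{lem:diplopic:colims}, and \cref{prop:algra3} that of \cref{prop:algra2}.

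First I would observe that the codomain of the forgetful functor of \cref{lem:triplopic:colims} is a product of presheaf categories, in which images are computed as (strong epi-mono) factorisations. Since that functor creates (strong epi-mono) factorisations, it creates images, so the image of any span (and the product $X²$, being a limit) is computed componentwise, on the edge component $X₁$ and on the three vertex components separately. By symmetry it then suffices to treat simulations, for which I only need to control the left leg. So I would take any triplopic simulation $p∶ R → X²$, write its componentwise factorisation $R₁ \twoheadrightarrow \im(R)₁ \hookrightarrow X₁$ together with the analogous factorisations on vertices, and assemble them, by the previous remark, into a factorisation $R \twoheadrightarrow \im(R) \hookrightarrow X²$ in $ℍ\Trans_3$.

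It then remains to show that $\im(R)$ is a simulation, i.e.\ that its left leg is a functional bisimulation, i.e.\ that the image square with top edge $\im(R)₁ → X₁$ and bottom edge $Δ_{3,𝐬,𝐥}(\im(R)) → Δ_{3,𝐬,𝐥}X$ is a pointwise weak pullback. For this I would precompose with the strong-epi part of the factorisation, producing a rectangle whose exterior is the original functional bisimulation square for $R$ — a pointwise weak pullback by hypothesis — and whose lower-left map $Δ_{3,𝐬,𝐥}R \twoheadrightarrow Δ_{3,𝐬,𝐥}(\im(R))$ is epi, because $Δ_{3,𝐬,𝐥}$ preserves epimorphisms by \cref{prop:algra3}. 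Then \cref{lem:wpbk} yields that the right-hand (image) square is a pointwise weak pullback, as required. For bisimulations I would simply run the same argument on both legs.

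The only genuine content beyond bookkeeping is the appeal to \cref{prop:algra3} (preservation of epis by $Δ_{3,𝐬,𝐥}$) and to \cref{lem:wpbk}; everything else is the observation, via \cref{lem:triplopic:colims}, that factorisations are componentwise. I therefore expect no real obstacle: the argument is a faithful copy of \cref{lem:images:flex}, and the triplopic lemmas \cref{lem:triplopic:colims,prop:algra3} are already in place to support each step.
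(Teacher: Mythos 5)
Your proof is correct and follows essentially the same route as the paper's: both reduce to simulations by symmetry, use \cref{lem:triplopic:colims} to compute the image, and conclude that the image square is a pointwise weak pullback via \cref{lem:wpbk} together with the fact that $Δ_{3,𝐬,𝐥}$ preserves epimorphisms (\cref{prop:algra3}). The paper's proof is just a terser version of the same transcription of \cref{lem:images:flex} into the triplopic setting.
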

\begin{proof}
  By symmetry it suffices to treat the case of simulations.
  Let $R → X²$ be any triplopic simulation.
  Then by Proposition~\ref{lem:triplopic:colims} we need to prove that
  the right-hand square below is a pointwise weak pullback,
  \begin{center}
    \diag{%
      R₁ \& \im(R₁) \& X₁² \\
      Δ_{3,𝐬,𝐥}R \& Δ_{3,𝐬,𝐥}\im(R) \& Δ_{3,𝐬,𝐥}X² %
    }{%
      (m-1-1) edge[onto,labela={}] (m-1-2) %
      edge[labell={}] (m-2-1) %
      (m-2-1) edge[onto,labelb={}] (m-2-2) %
      (m-1-2) edge[labelr={}] (m-2-2) %
      (m-1-2) edge[into,labela={}] (m-1-3) %
      (m-2-2) edge[into,labelb={}] (m-2-3) %
      (m-1-3) edge[labelr={}] (m-2-3) %
    }
  \end{center}
  which is the case by Lemma~\ref{lem:wpbk} and the fact that
  $Δ_{3,𝐬,𝐥}$ preserves epis by algebraicity
  (Lemma~\ref{prop:algra3}).
\end{proof}

\begin{lemma}\label{lem:triplopic:retraction}
  Given a retraction $R ↠ S$ over any $X²$ in $ℍ\Trans₃$,
  if $S$ is a simulation, then so is $\im(R)$.
\end{lemma}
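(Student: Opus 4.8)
The plan is to reduce the statement to \cref{prop:triplopic:images}, which already establishes that triplopic simulations are closed under images, by showing that $\im(R)$ and $\im(S)$ coincide as subobjects of $X²$. Intuitively, the retraction collapses $R$ onto $S$ without altering the image in $X²$, so computing the image of $R$ is the same as computing the image of the simulation $S$.

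Concretely, I would first unpack the hypothesis. A retraction $r∶ R ↠ S$ over $X²$ is in particular a split, hence strong, epimorphism in $ℍ\Trans₃$, and, being a morphism over $X²$, it commutes with the two span legs, i.e.\ $p_S ∘ r = p_R$, where $p_R∶ R → X²$ and $p_S∶ S → X²$ denote the legs of $R$ and $S$. I would also recall (\cref{lem:triplopic:colims}) that the forgetful functor out of $ℍ\Trans₃$ creates (strong epi)-mono factorisations, so that images in $ℍ\Trans₃$ exist and are computed componentwise.

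The key step is then to show $\im(R) = \im(S)$ as subobjects of $X²$. Factoring $p_S$ as $S ↠ \im(S) ↪ X²$, the equation $p_R = p_S ∘ r$ exhibits $p_R$ as the composite of the strong epimorphism $R ↠ \im(S)$ (a composite of strong epis, which are closed under composition) followed by the mono $\im(S) ↪ X²$. By uniqueness of (strong epi)-mono factorisations, this is precisely the image factorisation of $p_R$, whence $\im(R) = \im(S)$. Since $S$ is a simulation, \cref{prop:triplopic:images} yields that $\im(S)$ is a simulation, and therefore so is $\im(R)$.

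I do not expect a genuine obstacle: the only points needing care are that a retraction is automatically a strong epimorphism, that strong epis compose, and that factorisations in $ℍ\Trans₃$ are created componentwise (\cref{lem:triplopic:colims}); in fact the section of the retraction is never used beyond ensuring that $r$ is epi. The sole mild subtlety is bookkeeping, namely keeping \emph{image} throughout understood as a subobject of $X²$ rather than of any intermediate object.
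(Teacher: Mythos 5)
Your proposal is correct and follows essentially the same route as the paper: both reduce to \cref{prop:triplopic:images} by identifying $\im(R)$ with $\im(S)$ as subobjects of $X²$. The only (immaterial) difference is how that identification is obtained — the paper uses the retraction and its section to produce morphisms $\im(R) → \im(S)$ and $\im(S) → \im(R)$ and concludes by antisymmetry of the subobject order, whereas you observe that the split (hence strong) epi $R ↠ S$ composed with $S ↠ \im(S) ↪ X²$ already is the (strong epi)-mono factorisation of $R → X²$, so the two images coincide by uniqueness of factorisations.
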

\begin{proof}
  The given retraction and its section yield morphisms
  \begin{center}
  $\im(R) → \im(S)$ \qquad \qquad and \qquad \qquad $\im(S) → \im(R)$,
\end{center}
hence $\im(R) ≅ \im(S)$, so
  we conclude by Lemma~\ref{prop:triplopic:images}.
\end{proof}

\begin{proof}[Proof of Lemma~\ref{lem:flex:rigid:compo}]
  The morphism
   $\tilde{ρ}∶ R;ιS → R;θS$ defined by the triple
  \begin{mathpar}
    \id∶ R₀;S₀ → R₀;S₀ \and
    ρ∶ R₀;S₀ → R₀ \and
    \id∶ R₀;S₀ → R₀;S₀
  \end{mathpar}
  admits a section, namely the morphism $R;θS → R;ιS$ defined by
  \begin{mathpar}
    \id∶ R₀;S₀ → R₀;S₀ \and
    R₀ ≅ R₀;X₀ → R₀;S₀ \and
    \id∶ R₀;S₀ → R₀;S₀
  \end{mathpar}
  (induced by reflexivity of $S₀$). Thus, $\im(R;ιS)$ is
  a triplopic simulation by Lemma~\ref{lem:triplopic:retraction},
  hence a diplopic one by
  Proposition~\ref{prop:triplopic:sim:reflection}.  Finally,
  $\im(R_{D,V};S₀)^⇑$ is a flexible simulation by
  Proposition~\ref{prop:simliftmax}.
\end{proof}

\subsection{Fundamental property of flexible bisimulation}
In this section, we reduce the theorem to a certain result involving
flexible bisimulations, using the following fundamental property of
flexible bisimulation:
\begin{proposition}\label{prop:flexbisim:bisim}
  For any $X ∈ ℍ\Trans$ and reflexive, flexible bisimulation $R → X²$, $R₀ → X₀²$ is a
  bisimulation.
\end{proposition}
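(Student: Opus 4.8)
The plan is to exploit reflexivity to turn the flexible matching condition into the rigid one. First I would reduce to a single direction: since both legs of a flexible bisimulation are functional flexible bisimulations and reflexivity of the vertex span $R_0 → X_0²$ (a section $X_0 → R_0$ over the diagonal) is preserved under taking converses, it suffices to show that $R_0 → X_0²$ is a simulation. Concretely, given a witness $w∶ x \mathrel{R_0} x'$ and a transition $e∶ x \xto{α(l_1,…,l_{n_α})} y$ in $X$, I must produce a transition $e'∶ x' \xto{α(l_1,…,l_{n_α})} y'$ with $y \mathrel{R_0} y'$; the crucial point is that $e'$ must carry the \emph{same} labels $l_i$, not merely $R_0$-related ones.

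This is exactly where reflexivity enters. Using the reflexive section $X_0 → R_0$, each label $l_i ∈ X_0$ lifts to a diagonal witness $l_i \mathrel{R_0} l_i$. I would then invoke the functional flexible bisimulation property of the first leg $p_1∶ R → X$ in the pointwise-weak-pullback form of \cref{prop:flexible-pw-pbk}: the transition $e$ at $p_1(w) = x$, together with the source witness coming from $w$ and these diagonal label witnesses, determines a transition $\tilde{e}$ in $R$ from $w$ carrying the diagonal label witnesses, with $p_1(\tilde{e}) = e$. Pushing $\tilde{e}$ down along the second leg $p_2$ then yields the desired $e' = p_2(\tilde{e})∶ x' → y'$ with target witness $y \mathrel{R_0} y'$. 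Since a diagonal label witness projects to the same $l_i$ under both $p_1$ and $p_2$, the labels of $e'$ are exactly $l_1,…,l_{n_α}$ — this is precisely the step where reflexivity converts flexibility into rigidity.

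Rather than arguing elementwise, I would make this formal through the triplopic machinery, which decouples source, label and target and hence is tailored to the present manipulation. Recall that $R_0 → X_0²$ is a (bi)simulation iff its thin embedding $θ(R_0) → X²$ is one, and that the flexibility of $R$ is recorded by the embedding $ι(R) → X²$. The reflexive section then induces a span morphism relating $θ(R_0)$ (thin labels, forced to agree) and the triplopic image of $R$ (flexible labels), along which the (bi)simulation property can be transferred using closure of triplopic (bi)simulations under images and retractions (\cref{prop:triplopic:images,lem:triplopic:retraction}).

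The hard part will be the diplopic bookkeeping. The source of a transition in $R$ lives in the distinguished-vertex object $R_s = D_R$, whereas the witness $w$ for $x \mathrel{R_0} x'$ lives in the vertex object $R_0 = V_R$, so I cannot simply promote $w$ to a source witness; I must mediate between the two via $γ_R$ and the cartesian lifting $R^{⇑}$ of \cref{prop:fib}, and check that the square of \cref{prop:flexible-pw-pbk} sits in the right place for the lift to exist. Keeping the roles of source, label and target aligned across $\psh[𝕍𝕋]$, $ℍ\Trans_𝟚$ and $ℍ\Trans_3$ while applying the reflexive section is the only genuine subtlety; once that alignment is set up, the matching transition $e'$ and its target witness drop out of the pointwise weak pullback.
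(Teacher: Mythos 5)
Your first two paragraphs are, read elementwise, exactly the paper's argument: reduce to the simulation direction by symmetry, use the reflexive section to turn each label $l_i$ into a diagonal witness $l_i \mathrel{R_0} l_i$, lift the given transition through the functional-flexible-bisimulation square of the first leg, and project along the second leg to obtain a matching transition with the \emph{same} labels and an $R_0$-related target. The paper packages precisely this as a single pasting of pointwise weak pullbacks: the square characterising $θ(R_0) → X^2$ as a simulation (with apex $R_0^\uparrow$) is decomposed into the defining pullback of $R_0^\uparrow$, an easy product-projection pullback, and the square of \cref{prop:flexible-pw-pbk} for the first leg, and \cref{lem:damier:wpbk} is invoked to conclude; reflexivity enters only as the map $Δ_\labels(X_0) → Δ_\labels(R_0)$ along which labels are diagonalised in the bottom rows.

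The gap is in your third paragraph. There is no canonical span morphism between $ι(R)$ and $θ(R_0)$ in either direction: an edge map $R_1 → R_0^\uparrow$ would have to send a flexible witness, whose two projected transitions carry merely $R_0$-related labels, to a pair with \emph{equal} labels, while an edge map $R_0^\uparrow → R_1$ would have to lift an arbitrary label-equal pair of transitions with $R_0$-related endpoints into $R_1$ --- which is essentially the statement being proved. So \cref{prop:triplopic:images} and \cref{lem:triplopic:retraction} have nothing to latch onto here; those tools are deployed elsewhere (in \cref{lem:flex:rigid:compo}), where an actual retraction is constructed. Replacing that paragraph with the weak-pullback pasting above (or with a direct verification, which your second paragraph already contains in substance) closes the proof; your final paragraph's worry about $R_s$ versus $R_0$ is then absorbed by composing the source component with $γ_R$ before applying \cref{prop:flexible-pw-pbk}.
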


We need the following lemma.

\begin{lemma}\label{lem:damier:wpbk}
  Consider any commuting diagram of the following form
  \begin{center}
\begin{tikzcd}[ampersand replacement=\&]
	A \& B \& C \& D \\
	X \& Y \& Z \\
	T \& U \& V \& W
	\arrow["f", from=1-1, to=1-3, curve={height=-18pt}]
	\arrow["g", from=1-2, to=1-3]
	\arrow["j", from=2-1, to=2-2]
	\arrow["y", from=1-2, to=2-2]
	\arrow["x"', from=1-1, to=2-1]
	\arrow["t"', from=2-1, to=3-1]
	\arrow["l"', from=3-1, to=3-2]
	\arrow["u", from=2-2, to=3-2]
	\arrow["m"', from=3-2, to=3-3]
        \arrow["n"', from=3-3, to=3-4]
	\arrow["h", from=1-3, to=1-4]
	\arrow["w", from=1-4, to=3-4]
	\arrow["z", from=1-3, to=2-3]
	\arrow["k", from=2-2, to=2-3]
        \arrow["v", from=2-3, to=3-3]
      \end{tikzcd}
    \end{center}
    (i.e., all three squares and the rectangle commute, plus $zf=kjx$),
    such that
    all three squares below are weak pullbacks.
    \begin{center}
\begin{tikzcd}[ampersand replacement=\&]
	X \& Y \\
	T \& U
	\arrow["j", from=1-1, to=1-2]
	\arrow["t"', from=1-1, to=2-1]
	\arrow["l"', from=2-1, to=2-2]
	\arrow["u", from=1-2, to=2-2]
\end{tikzcd} \hfil      
\begin{tikzcd}[ampersand replacement=\&]
	A \&  \& C \\
	X \& Y \& Z
	\arrow["f", from=1-1, to=1-3]
	\arrow["j"', from=2-1, to=2-2]
	\arrow["x"', from=1-1, to=2-1]
	\arrow["z", from=1-3, to=2-3]
	\arrow["k"', from=2-2, to=2-3]
      \end{tikzcd}
      \hfil
\begin{tikzcd}[ampersand replacement=\&]
	B \& C \& D \\
	Y \\
	U \& V \& W
	\arrow["g", from=1-1, to=1-2]
	\arrow["y"', from=1-1, to=2-1]
	\arrow["u"', from=2-1, to=3-1]
	\arrow["m"', from=3-1, to=3-2]
        \arrow["n"', from=3-2, to=3-3]
	\arrow["h", from=1-2, to=1-3]
	\arrow["w", from=1-3, to=3-3]
\end{tikzcd}      
    \end{center}
    Then, the exterior is again a weak pullback.
\end{lemma}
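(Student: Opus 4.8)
The plan is to verify the weak universal property of the exterior square directly, by a generalised-element chase that consumes each of the three hypothesised weak pullbacks exactly once, in the order of the third, first, and second displayed square. Recall that a commuting square is a weak pullback precisely when every competing cone admits a filler, not required to be unique; hence it suffices to fix an arbitrary test object and exhibit one filler, and the ``weak'' in weak pullback never gets in the way, since the argument only ever invokes existence. (In the applications the ambient category is a presheaf category and ``weak pullback'' means ``pointwise weak pullback'', so the chase below is simply read off pointwise in $\mathbf{Set}$, where it is literally an element argument.) Before chasing, I would record that the exterior square does commute: combining the rectangle $wh = nvz$, the hypothesis $zf = kjx$, and the squares $vk = mu$ and $uj = lt$ yields $whf = nvzf = nvkjx = nmujx = nmltx$, which is exactly $w\circ(hf) = (nml)\circ(tx)$.

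For the universal property, fix a test object $P$ together with $\delta\colon P\to D$ and $\tau\colon P\to T$ satisfying $w\delta = nml\,\tau$; the goal is a map $\alpha\colon P\to A$ with $hf\,\alpha = \delta$ and $tx\,\alpha = \tau$. First I would feed the cone $(\delta,\ l\tau)$ into the third weak pullback (corner $B$, completing $D\xto{w}W\xot{nm}U$); it is compatible because $w\delta = nm(l\tau)$, so it yields $\beta\colon P\to B$ with $hg\,\beta = \delta$ and $uy\,\beta = l\tau$. Next I would feed $(y\beta,\ \tau)$ into the first weak pullback (corner $X$, completing $Y\xto{u}U\xot{l}T$); it is compatible because $u\,y\beta = l\tau$, so it yields $\xi\colon P\to X$ with $j\xi = y\beta$ and $t\xi = \tau$. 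Finally I would feed $(g\beta,\ \xi)$ into the second weak pullback (corner $A$, completing $C\xto{z}Z\xot{kj}X$); here compatibility is the one step that genuinely uses a commuting square, namely $z\,g\beta = k\,y\beta = kj\,\xi$ using $zg = ky$ and $y\beta = j\xi$, and it yields $\alpha\colon P\to A$ with $f\alpha = g\beta$ and $x\alpha = \xi$. Then $hf\,\alpha = h\,g\beta = \delta$ and $tx\,\alpha = t\xi = \tau$, so $\alpha$ is the desired filler.

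The only real obstacle is bookkeeping rather than mathematics: one must orient each of the three weak-pullback cospans correctly so that the two projections produced at each step are precisely the compatible cone required at the next, and one must check the single nontrivial compatibility (feeding into the $B$-square, where the two legs must be shown to agree in $Z$). Everything else is routine, and in particular no cancellation or pasting lemma for pullbacks is needed, which is fortunate because such lemmas are delicate for weak pullbacks; the existence-only nature of the chase sidesteps them entirely.
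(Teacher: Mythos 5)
Your proof is correct and follows essentially the same route as the paper's: the same three-step chase, producing fillers through the weak pullbacks at $B$, then $X$, then $A$, with the same compatibility checks. If anything, yours is slightly cleaner, since you omit the paper's preliminary construction of a map $i\colon A\to B$ with $gi=f$ (which the stated weak universal property of $B$ only yields up to postcomposition with $h$, and which turns out to be unnecessary anyway), and you verify explicitly that the exterior square commutes.
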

\begin{proof}
  First, we find $i∶ A → B$ such that $gi = f$ and $uyi = ltx$, by
  weak universal property of $B$.

  Now, consider any cone $(p,q)$ as shown below.
  \begin{center}
\begin{tikzcd}[ampersand replacement=\&]
	E \\
	\&\& A \& B \& C \& D \\
	\&\& X \& Y \& Z \\
	\&\& T \& U \& V \& W
	\arrow["i"', from=2-3, to=2-4]
	\arrow["g"', from=2-4, to=2-5]
	\arrow["h", from=3-3, to=3-4]
	\arrow["y", from=2-4, to=3-4]
	\arrow["x"', from=2-3, to=3-3]
	\arrow["t"', from=3-3, to=4-3]
	\arrow["l"', from=4-3, to=4-4]
	\arrow["u", from=3-4, to=4-4]
        \arrow["v", from=3-5, to=4-5]
	\arrow["h"', from=2-5, to=2-6]
	\arrow["w", from=2-6, to=4-6]
	\arrow["z", from=2-5, to=3-5]
	\arrow["k", from=3-4, to=3-5]
	\arrow["p", curve={height=-18pt}, from=1-1, to=2-6]
	\arrow["q"', curve={height=18pt}, from=1-1, to=4-3]
	\arrow["r", dashed, from=1-1, to=2-4]
	\arrow["s"', dashed, from=1-1, to=3-3]
	\arrow["gr"{description, pos=0.6}, curve={height=-12pt}, dashed, from=1-1, to=2-5]
	\arrow["m"', from=4-4, to=4-5]
        \arrow["n"', from=4-5, to=4-6]
	\arrow["d"{description, pos=0.7}, dashed, from=1-1, to=2-3]
\end{tikzcd}
  \end{center}
  By weak universal property of $B$, we find $r∶ E → B$ such that
  $hgr = p$ and $vyr = lq$.  By weak universal property of $X$, we
  then find a morphism $s∶ E → X$ such that $us = q$ and $hs = yr$.
  Finally, by weak universal property of $A$, we find the desired
  morphism $d∶ E → A$ such that $xd = s$ and $gid = gr$.  Please note
  that nothing here guarantees that $id = r$, nor that $yi=hx$, but
  this does invalidate the result.
\end{proof}

\begin{proof}[Proof of Proposition~\ref{prop:flexbisim:bisim}]
  By symmetry, it suffices to check that the first projection
  $π₁∶ R₀ → X₀$ is a simulation.  For any $c ∈ 𝔼𝕋$, we form the following diagram,
  \begin{center}
\ajustedroit{
\begin{tikzcd}[ampersand replacement=\&]
	{R_0^\uparrow(c)} \& {R_1(c)} \& {(X_1 \times X_1)(c)} \& {X_1(c)} \\
	{(Δ_𝐬(R_0) × Δ_𝐥(X_0) × Δ_𝐭(R_0))(c)} \& {(Δ_𝐬(R_0) × Δ_𝐥(R_0) × Δ_𝐭(R_0))(c)} \& {(Δ_𝐬(X_0)² × Δ_𝐥(X_0)² × Δ_𝐭(X_0)²)(c)} \\
	{(Δ_𝐬(R_0) × Δ_𝐥(X_0))(c)} \& {(Δ_𝐬(R_0) × Δ_𝐥(R_0))(c)} \& {(Δ_𝐬(X_0)² × Δ_𝐥(X_0)²)(c)} \& {(Δ_𝐬(X_0) × Δ_𝐥(X_0))(c)}
	\arrow[from=1-3, to=2-3]
	\arrow[from=1-1, to=2-1]
	\arrow[from=2-1, to=3-1]
	\arrow[from=2-3, to=3-3]
	\arrow["{\pi_1}", from=1-3, to=1-4]
	\arrow[from=1-4, to=3-4]
	\arrow["{ \pi_1 \times \pi_1}"', from=3-3, to=3-4]
	\arrow[from=2-1, to=2-2]
	\arrow[from=2-2, to=2-3]
	\arrow[from=1-2, to=2-2]
	\arrow[from=1-2, to=1-3]
	\arrow[from=2-2, to=3-2]
	\arrow[from=3-1, to=3-2]
	\arrow[from=3-2, to=3-3]
	\arrow[curve={height=-18pt}, from=1-1, to=1-3]
      \end{tikzcd}
    }
  \end{center}
and conclude by
  Lemma~\ref{lem:damier:wpbk}. To check that it applies, we observe
  that
  \begin{itemize}
  \item the first requirement holds easily (the bottom left square is
    easily seen to be a pullback);
  \item the second requirement holds by construction of $R₀^⇑$; and
  \item the last requirement holds by hypothesis that $R$ is a
    flexible bisimulation. \qedhere
  \end{itemize}
\end{proof}

Let us now use the fundamental property
(Proposition~\ref{prop:flexbisim:bisim}) of flexible bisimulation to
reduce congruence of bisimilarity to the search for a suitable
flexible \augmented bisimulation.
\begin{corollary}\label{cor:reduce1}
  Consider any syntactic signature $𝐝 = (Σ,(Γᵢ,dᵢ)_{i ∈ n})$.
  Let $σ$ denote the generated enhanced syntax $σ(𝐝)$.
  Let $X$ be any $σ$-transition system, and suppose that there exists a
  reflexive, \augmented, flexible bisimulation relation
  $R → X²$ such that ${∼_X^σ} ⊆ R₀$ and $R₀$ is a
  congruence. Then \augmented bisimilarity ${∼_X^σ}$ is a congruence.
\end{corollary}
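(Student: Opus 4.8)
The plan is to close a chain of inclusions of subobjects of $X₀²$, the only substantial ingredient being the fundamental property of flexible bisimulation (\cref{prop:flexbisim:bisim}). One inclusion, ${∼_X^σ} ⊆ R₀$, is given by hypothesis, so it suffices to establish the converse $R₀ ⊆ {∼_X^σ}$ and then transport the congruence structure from $R₀$ to ${∼_X^σ}$.

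First I would apply \cref{prop:flexbisim:bisim} to the reflexive flexible bisimulation $R → X²$ (over the underlying $ℍ$-transition system of $X$), concluding that its vertex part $R₀ → X₀²$ is a bisimulation; moreover $R₀$ is a relation, since a span is monic in a diplopic transition category iff its components are (by \cref{lem:laxlim:sepimono}), so $R$ being a relation forces $R₀$ to be one. Next I would unpack the hypothesis that $R$ is \augmented: its \augmentedness is precisely \anaugmentedness of its vertex component in the sense of \cref{def:enhanced-span}, since $Γ(-,X₀)$ only acts on the vertex part. Combining these two points, $R₀$ is \anaugmented bisimulation, whence by maximality of \augmented bisimilarity — that is, terminality of ${∼_X^σ}$ in $𝐁𝐢𝐬𝐢𝐦^σ(X)$ — the relation $R₀$ factors through ${∼_X^σ}$, i.e.\ $R₀ ⊆ {∼_X^σ}$.

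Together with the assumed inclusion ${∼_X^σ} ⊆ R₀$, antisymmetry of the subobject order yields $R₀ ≅ {∼_X^σ}$ as subobjects of $X₀²$. Since $R₀$ is a congruence by hypothesis, it carries a morphism $Σ(R₀) → R₀$ making the first diagram of \cref{fig:congenhance} commute; conjugating this morphism by $Σ$ applied to the above isomorphism and the isomorphism itself, and using that the isomorphism lies over $X₀²$, I would transport the structure to ${∼_X^σ}$, so that ${∼_X^σ}$ is a congruence, as claimed.

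I expect no serious obstacle here: the genuine work is already discharged in \cref{prop:flexbisim:bisim}, and the corollary merely assembles the resulting inclusions into the squeeze $R₀ = {∼_X^σ}$. The only point demanding care is the second step above, namely verifying that the \augmentedness attached to the flexible bisimulation $R$ coincides with \anaugmentedness of its vertex part $R₀$, so that $R₀$ legitimately lies in $𝐁𝐢𝐬𝐢𝐦^σ(X)$ and can be compared against ${∼_X^σ}$.
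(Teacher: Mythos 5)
Your proposal is correct and follows essentially the same route as the paper: apply \cref{prop:flexbisim:bisim} to see that $R₀$ is \anaugmented bisimulation, deduce $R₀ ⊆ {∼_X^σ}$ by terminality, and combine with the hypothesised inclusion ${∼_X^σ} ⊆ R₀$ to transfer the congruence structure. The paper phrases the last step as the composite $Σ₀(∼_X^σ) → Σ₀(R₀) → R₀ → {∼_X^σ}$ over $X₀²$ rather than via an isomorphism of subobjects, but this is only a cosmetic difference.
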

\begin{proof}
  Consider any reflexive, \augmented, flexible bisimulation relation
  $R → X²$ such that $R₀$ contains \augmented
  bisimilarity and is a congruence. By
  Proposition~\ref{prop:flexbisim:bisim}, $R₀$ is \anaugmented
  bisimulation, so by terminality of $∼_X^σ$, we have $R₀ ⊆ {∼_X^σ}$,
  hence morphisms
  $$Σ₀(∼_X^σ) → Σ₀(R₀) → R₀ → {∼_X^σ}$$
  over $X₀²$.
\end{proof}

\subsection{Howe closure: basic properties}
In this section, we introduce our candidate reflexive,
\augmented, flexible bisimulation relation $R → 𝐙²$ such that ${∼_𝐙} ⊆ R₀$ and
$R₀$ is a congruence.
As is standard, we
\begin{itemize}
\item construct it directly as a congruence, 
\item prove that it is reflexive and \augmented (relatively easily),
  and, finally,
\item struggle to prove that it (or rather its transitive closure) is
  a flexible bisimulation.
\end{itemize}
\begin{definition}
  Let the \alert{Howe functor} $\Fwow∶ \psh[𝕍𝕋]/X₀² → \psh[𝕍𝕋]/X₀²$ map any $R₀ → X₀²$
  to the coproduct span $Σ₀(R₀) + (R₀;{∼_X})$, where the second term more concretely
  denotes the following composite span.
  \begin{center}
\begin{tikzcd}[ampersand replacement=\&]
	\&\& {R_0;{\sim_X}} \\
	\& {R_0} \&\& {{\sim_X}} \\
	{X_0} \&\& {X_0} \&\& {X_0}
	\arrow["{\pi_1}"', from=2-4, to=3-3]
	\arrow["{\pi_2}", from=2-4, to=3-5]
	\arrow[from=1-3, to=2-2]
	\arrow[from=1-3, to=2-4]
	\arrow["{\pi_2}", from=2-2, to=3-3]
	\arrow["{\pi_1}"', from=2-2, to=3-1]
	\arrow["\lrcorner"{anchor=center, pos=0.125, rotate=-45}, draw=none, from=1-3, to=3-3]
      \end{tikzcd}
    \end{center}
  
    Let the \alert{proof-relevant Howe closure} $R₀^⊙$ be the free
    $\Fwow$-algebra on $R₀$, and the \alert{(proof-irrelevant, or relational)
      Howe closure} $R₀^•$ denote the image of $R₀^⊙ → X₀²$.
\end{definition}
The Howe functor is a finitary endofunctor on a presheaf category,
so we have~\cite{Reiterman}:
\begin{proposition}\label{prop:wow:initialchain}
  The free $\Fwow$-algebra on any $R₀$ exists and is computed by the
  standard initial chain, and the forgetful functor
  $\Fwow\alg → \psh[𝕍𝕋]/X₀²$ is finitary monadic.
\end{proposition}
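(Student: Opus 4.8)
The plan is to reduce the statement to the standard theory of free algebras for finitary endofunctors on locally finitely presentable categories, as recorded in~\cite{Reiterman}: once we know that $\psh[𝕍𝕋]/X₀²$ is such a category and that $\Fwow$ is finitary, all three assertions (existence, computation by the initial chain, finitary monadicity) follow at once. So the two things I would verify are the niceness of the base category and the finitariness of the Howe functor.

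First I would observe that the slice $\psh[𝕍𝕋]/X₀²$ is itself a presheaf category, hence locally finitely presentable: by the standard equivalence $\psh[ℂ]/P ≃ \psh[\el(P)]$ we have $\psh[𝕍𝕋]/X₀² ≃ \psh[\el(X₀²)]$. Moreover, the domain projection $\psh[𝕍𝕋]/X₀² → \psh[𝕍𝕋]$ creates all colimits, so that filtered colimits in the slice are computed on underlying presheaves; consequently $\Fwow$ is finitary as an endofunctor of the slice exactly when the underlying assignment $R₀ ↦ Σ₀(R₀) + (R₀;{∼_X})$ preserves filtered colimits of presheaves.

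The heart of the argument is then to check this finitariness. Coproducts are colimits, so it suffices to treat each summand separately. The functor $Σ₀$ is finitary by the defining condition on enhanced syntax (\cref{def:enhanced:syntax}). For the composite span, recall that $R₀;{∼_X}$ is computed as the pullback $R₀ ×_{X₀} {∼_X}$; since $\psh[𝕍𝕋]$ is a presheaf category, filtered colimits commute with finite limits there, and in particular with this pullback, so $R₀ ↦ R₀ ×_{X₀} {∼_X}$ is finitary as well. Hence $\Fwow$ is a finitary endofunctor, and applying~\cite{Reiterman} gives that the free $\Fwow$-algebra on any $R₀$ exists, is the colimit of the initial chain (which, $\Fwow$ being finitary, stabilises at stage $ω$), and that the forgetful functor $\Fwow\alg → \psh[𝕍𝕋]/X₀²$ is finitary monadic. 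The only genuinely nontrivial point is the finitariness of the span-composition summand, which rests precisely on the commutation of filtered colimits with finite limits in a presheaf category.
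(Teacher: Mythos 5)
Your proposal is correct and follows essentially the same route as the paper, which simply observes that $\Fwow$ is a finitary endofunctor on a presheaf category and invokes~\cite{Reiterman}. You have merely filled in the details the paper leaves implicit (the slice being a presheaf category, and finitariness of the span-composition summand via commutation of filtered colimits with pullbacks), all of which check out.
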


\begin{proposition}\label{prop:relwow:alt}
  Let $𝒰∶ 𝐒𝐮𝐛(X₀²) ↪ \psh[𝕍𝕋]/X₀²$ denote the canonical embedding, and
  let $Σ = \Fwow$ just for this proposition.  The composite
  endofunctor $\im ∘ Σ^* ∘ 𝒰$ on $𝐒𝐮𝐛(X₀²)$ is a monad, which is in
  fact the free monad on $\im ∘ Σ ∘ 𝒰$.  Consequently, the relational
  Howe closure $(𝒰R)^•$ on a relation $R ∈ 𝐒𝐮𝐛(X₀²)$ is the free
  $(\im ∘ Σ ∘ 𝒰)$-algebra over $R$.
\end{proposition}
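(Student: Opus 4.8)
The plan is to exploit the reflection between relations and spans. First I would record that the image construction $\im∶ \psh[𝕍𝕋]/X₀² → 𝐒𝐮𝐛(X₀²)$ is left adjoint to the embedding $𝒰$, with $\im ∘ 𝒰 ≅ \id$, so that $𝐒𝐮𝐛(X₀²)$ sits inside $\psh[𝕍𝕋]/X₀²$ as a reflective subcategory. Concretely, the reflection unit $\eta_Z∶ Z → 𝒰\im Z$ is the strong epi part $Z ↠ \im Z$ of the (strong epi, mono) factorisation of $Z → X₀²$ computed in $\psh[𝕍𝕋]$. In particular $\im$ inverts every strong epi $g∶ A → B$ of $\psh[𝕍𝕋]/X₀²$: since $g$ is epi, the image of $A → X₀²$ agrees with the image of $A \xto{g} B → X₀²$, which is the image of $B → X₀²$, so $\im g$ is invertible. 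Being a left adjoint, $\im$ is moreover cocontinuous, and $𝐒𝐮𝐛(X₀²)$ is a complete lattice, its filtered colimits being unions.

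The crux is the intertwining $\im ∘ Σ ≅ F ∘ \im$, where $F ≔ \im ∘ Σ ∘ 𝒰$. For this I would check that $Σ = \Fwow$ preserves strong epis: $Σ₀$ does, being algebraic; span composition $R₀ ↦ R₀;{∼_X}$ does, because epis are pointwise and pullback-stable in presheaf categories, so a surjection $R₀ ↠ R₀'$ induces a surjection on the defining pullbacks; and coproducts preserve epis; hence so does $Σ₀(-) + (-\,;{∼_X})$. Applying $Σ$ to the strong epi $\eta_Z$ thus yields a strong epi $Σ\eta_Z$, which $\im$ inverts by the previous paragraph. Naturally in $Z$ this gives $\im Σ Z ≅ \im Σ 𝒰 \im Z = F(\im Z)$, that is, $\im Σ ≅ F\im$.

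With the intertwining in hand I would run the initial chain. By \cref{prop:wow:initialchain} the free $Σ$-algebra on $𝒰R$ is $Σ^*(𝒰R) = \colim_n Σ_n(𝒰R)$ for the standard chain $Σ_0(A) = A$, $Σ_{n+1}(A) = A + Σ(Σ_n(A))$. Setting $a_n ≔ \im Σ_n(𝒰R)$ and using cocontinuity of $\im$ together with $\im 𝒰 ≅ \id$ and $\im Σ ≅ F\im$, an immediate induction gives $a_0 ≅ R$ and $a_{n+1} ≅ \im(𝒰R + Σ(Σ_n 𝒰R)) ≅ R + F(a_n)$ (coproduct in $𝐒𝐮𝐛(X₀²)$, i.e.\ the join). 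Thus $(a_n)_n$ is exactly the initial chain computing the free $F$-algebra on $R$, and since $\im$ preserves the defining colimit we obtain $\im Σ^* 𝒰 (R) ≅ \colim_n a_n$, the free $F$-algebra on $R$, naturally in $R$. Transporting the free-monad structure of $Σ^*$ along these isomorphisms shows that $\im ∘ Σ^* ∘ 𝒰$ is a monad and is the free monad on $F = \im ∘ Σ ∘ 𝒰$. The final claim is then immediate, since $(𝒰R)^• = \im(Σ^*(𝒰R)) = (\im Σ^* 𝒰)(R)$ is this free $F$-algebra on $R$.

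The main obstacle is the intertwining $\im Σ ≅ F\im$: once it holds everything reduces to running the initial chain and matching the two recurrences. That isomorphism rests on the two facts that $\im$ inverts strong epis and that $\Fwow$ preserves them, the latter requiring the routine but not entirely trivial verification that span composition with ${∼_X}$ preserves epimorphisms.
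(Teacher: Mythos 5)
Your proof is correct, but it takes a genuinely different route from the paper's. Both arguments hinge on the same key input, namely that $Σ = \Fwow$ preserves (strong) epimorphisms, but they package it differently. The paper uses it to build a \emph{functor distributive law} $δ_R∶ Σ𝒰\im R → 𝒰\im Σ R$ by lifting the strong epi $Σ(R ↠ 𝒰\im R)$ against the mono $𝒰\im ΣR ↪ X₀²$, and then invokes a general transport lemma for reflective embeddings of posets (Lemma~\ref{lem:initalgtransport}): subterminality upgrades $δ$ to a functor--monad distributive law, Beck's correspondence turns it into a monad distributive law $Σ^*T → TΣ^*$, and a chain of equivalences of algebra structures identifies $(\im ∘ Σ^* ∘ 𝒰)$-algebras with $(\im ∘ Σ ∘ 𝒰)$-algebras over $𝐒𝐮𝐛(X₀²)$. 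Your intertwining $\im ∘ Σ ≅ (\im ∘ Σ ∘ 𝒰) ∘ \im$ is essentially the observation that $\im δ_R$ is invertible, so the two proofs share their mathematical core; but you conclude by matching the two free-algebra chains termwise, which is more elementary and bypasses distributive laws entirely. The paper's route buys a reusable lemma and an explicit handle on the monad structure; yours buys brevity and concreteness. The one step you leave implicit is that $F = \im ∘ Σ ∘ 𝒰$ preserves colimits of $ω$-chains in $𝐒𝐮𝐛(X₀²)$, which is needed for your chain $(b_n)$ to actually compute the free $F$-algebra; this is routine ($𝒰$ preserves unions of chains of subobjects, $Σ$ is finitary, $\im$ is a left adjoint), but it should be stated.
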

\begin{proof}
  Using algebraicity of $Σ₀$, it is straightforward to show that $Σ$
  preserves epimorphisms. For any $R ∈ \psh[𝕍𝕋]/X₀²$, letting
  $T = 𝒰 ∘ \im$ denote the monad induced by the adjunction $\im ⊣ 𝒰$,
  we thus have by unique lifting a morphism $δ_R∶ ΣTR → TΣR$ as in the
  following diagram,
  \begin{center}
\begin{tikzcd}[ampersand replacement=\&]
	{\Sigma R} \&\& {𝒰 \im Σ R } \\
	{Σ  𝒰 \im R} \& {Σ (X₀²)} \& {X₀²}
	\arrow["{Σ e}"', two heads, from=1-1, to=2-1]
	\arrow["{Σ m}"', from=2-1, to=2-2]
	\arrow[from=2-2, to=2-3]
	\arrow[two heads, from=1-1, to=1-3]
	\arrow[tail, from=1-3, to=2-3]
	\arrow["{δ_R}"{pos=0.3}, dashed, from=2-1, to=1-3]
\end{tikzcd}
  \end{center}
  where $R  → X₀²$ factors as $e ∘ m$ and the last horizontal morphism is 
  $$\Fwow(X₀²) = Σ₀(X₀²) + (X₀²);∼_X \xto{[⟨a ∘ Σ₀(π₁),a ∘ Σ₀(π₂)⟩, ⟨π₁∘π₁, π₂∘π₂⟩]} X₀².$$
  The result thus follows from the next lemma.
\end{proof}

\begin{lemma}
  \label{lem:initalgtransport}
  Consider a full, reflective embedding $U∶ 𝒟 ↪ 𝒞$ from some poset $𝒟$
  into a locally finitely presentable category $𝒞$, say with left
  adjoint $L∶ 𝒞 → 𝒟$, together with a finitary endofunctor $Σ$ on $𝒞$.
  Furthermore, assume given a functor distributive law, i.e., a
  natural transformation $δ∶ Σ T → T Σ$, where $T ≔ UL$ denotes the
  induced monad.
  Then, $LΣ^*U$ is the free monad on $LΣU$, hence in particular
  the free $LΣU$-algebra on any $D ∈ 𝒟$ is $LΣ^*UD$.
\end{lemma}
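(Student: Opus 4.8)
Write $G ≔ LΣU$, the endofunctor of $𝒟$ whose free monad we must identify. The plan is to exploit that $𝒟$ is a poset in order to bypass all coherence bookkeeping: a monad on a poset is determined by its action on objects, the monad and algebra laws holding vacuously, and the free monad $G^*$ sends each $d$ to the free $G$-algebra on $d$, i.e.\ to the least object admitting both a morphism from $d$ and a $G$-algebra structure. So it suffices to show, for every $d ∈ 𝒟$, that $LΣ^*Ud$ is this least object; the identification $G^* = LΣ^*U$ as monads then follows since two monotone endofunctors of a poset agreeing on objects coincide. Throughout I use that $Σ^*$ exists and that $Σ^*Ud$ is the free $Σ$-algebra on $Ud$ in $𝒞$ (by local presentability of $𝒞$ and finitarity of $Σ$), and that, $U$ being a full reflective embedding, the counit yields $LU ≅ \id_𝒟$ while the unit yields $η∶ \id_𝒞 → T$.

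First I would exhibit $LΣ^*Ud$ as a $G$-algebra equipped with a morphism from $d$. The morphism $d → LΣ^*Ud$ comes from applying $L$ to the free-monad unit $Ud → Σ^*Ud$ and using $LU ≅ \id_𝒟$. The $G$-algebra structure $G(LΣ^*Ud) → LΣ^*Ud$ is where the distributive law enters: rewriting $ULΣ^*Ud = TΣ^*Ud$, the composite
\[ Σ U L Σ^* U d = Σ T Σ^* U d \xto{δ_{Σ^* U d}} T Σ Σ^* U d = U L Σ Σ^* U d , \]
followed by $L$ and $LU ≅ \id_𝒟$, gives a morphism $LΣULΣ^*Ud → LΣΣ^*Ud$; postcomposing with the image under $L$ of the structure map $ΣΣ^*Ud → Σ^*Ud$ of the free $Σ$-algebra produces the desired $G(LΣ^*Ud) = LΣULΣ^*Ud → LΣ^*Ud$.

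Next I would prove minimality: for any $G$-algebra $e$, i.e.\ a morphism $LΣUe → e$, together with a morphism $d → e$, I would produce a morphism $LΣ^*Ud → e$. The key step is to convert the $G$-algebra structure of $e$ into a $Σ$-algebra structure on $Ue$ in $𝒞$, namely the composite $ΣUe \xto{η} ULΣUe = TΣUe → Ue$ whose last map is $U$ applied to $LΣUe → e$. Freeness of $Σ^*Ud$ as the free $Σ$-algebra on $Ud$, applied to this $Σ$-algebra and to the morphism $Ud → Ue$ obtained from $d → e$, yields a $Σ$-algebra morphism $Σ^*Ud → Ue$; applying $L$ and $LU ≅ \id_𝒟$ gives $LΣ^*Ud → e$. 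Combined with the previous paragraph, this shows $LΣ^*Ud$ is the least $G$-algebra over $d$, hence the free one, as required.

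I expect the one genuine subtlety to be conceptual rather than computational, namely checking that freeness of $G^*$ really does reduce, in the poset setting, to the pointwise universal property verified above, and that the resulting object assignment is literally the functor $LΣ^*U$. Both are immediate once one observes that in a poset every structure morphism is unique and every monad- and algebra-equation holds automatically; in particular no naturality or coherence axiom for $δ$ is needed beyond its being a natural transformation of the stated type. The load-bearing hypotheses are thus exactly the existence of $Σ^*$, the reflection isomorphism $LU ≅ \id_𝒟$, and the transformation $δ$ used to commute $Σ$ past $T$.
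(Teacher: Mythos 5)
Your proof is correct, but it takes a genuinely different route from the paper's. The paper first upgrades $δ$ to a functor\emph{-monad} distributive law (via Lemma~\ref{lem:subterm}, subterminality making the unit and multiplication compatibilities automatic), transports it through $Σ\alg ≅ Σ^*\Alg$ using Beck's correspondence between distributive laws and liftings to obtain a monad distributive law $\bar{δ}∶ Σ^*T → TΣ^*$, uses $\bar{δ}$ to write down an explicit monad structure on $LΣ^*U$, and only then identifies $(LΣ^*U)\Alg$ with $(LΣU)\alg$ over $𝒟$ through a chain of equivalences of algebra structures. You instead work pointwise: you exhibit $LΣ^*UD$ directly as the free $LΣU$-algebra on $D$ — the least $G$-algebra above $D$ in the poset — and let the free monad assemble itself from these free algebras, all coherence being vacuous. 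The two arguments share their computational core (your minimality step is exactly the paper's adjunction between $Σ$-algebra structures on $UE$ and $LΣU$-algebra structures on $E$, and your construction of the $G$-algebra structure on $LΣ^*UD$ is where a single component of $δ$ does the work that $\bar{δ}$ does for the paper), but yours is leaner: it needs neither the subterminality lemma, nor Beck's theorem, nor the explicit monad multiplication. What the paper's route buys is the lifted law $\bar{δ}$ and the global statement $(LΣ^*U)\Alg ≅ (LΣU)\alg$ over $𝒟$ in the form in which they are consumed downstream (e.g.\ in Proposition~\ref{prop:relwow:alt}); your version recovers the same consequences, since over a poset algebraic freeness is immediate from the pointwise description of free algebras.
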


\begin{lemma}\label{lem:subterm}
  In the setting of Lemma~\ref{lem:initalgtransport}, all objects of
  the form $UD ∈ 𝒞$ are \alert{subterminal}, in the sense that any two
  parallel morphisms to $UD$ are equal.
\end{lemma}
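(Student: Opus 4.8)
The plan is to exploit the adjunction $L ⊣ U$ to transport the question from $𝒞$, where subterminality is not immediately visible, to the poset $𝒟$, where it becomes trivial. The key reformulation is that $UD$ is subterminal precisely when, for every object $X ∈ 𝒞$, the hom-set $𝒞(X, UD)$ has at most one element.

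First I would invoke the natural bijection furnished by the adjunction $L ⊣ U$, namely
$$𝒞(X, UD) ≅ 𝒟(LX, D),$$
valid for every $X ∈ 𝒞$. Since $𝒟$ is a poset, there is at most one morphism $LX → D$, so the right-hand side is empty or a singleton. Hence so is $𝒞(X, UD)$, which is exactly the assertion that any two parallel morphisms $X → UD$ coincide. As $X$ was arbitrary, $UD$ is subterminal.

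There is essentially no obstacle here: the result follows immediately once one observes that being a poset caps every hom-set of $𝒟$ at a single element, and that the adjunction transfers this bound to all hom-sets landing in $UD$. The only point to keep in mind is that subterminality quantifies over \emph{all} sources $X$, not merely those of the form $UD'$; but the adjunction treats arbitrary $X$ uniformly, so this causes no difficulty. Note in particular that full faithfulness of $U$ is not needed for this lemma---only the adjunction and the poset hypothesis on $𝒟$ are used.
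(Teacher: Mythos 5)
Your proof is correct and is essentially identical to the paper's: both transpose a pair of parallel morphisms $C → UD$ across the adjunction $L ⊣ U$ to morphisms $LC → D$, which coincide because $𝒟$ is a poset, and then conclude by bijectivity of the transposition. Your closing observation that full faithfulness of $U$ is not needed is accurate and a nice touch.
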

\begin{proof}
  Consider any $f,g∶ C→ UD$. By adjunction, these correspond
  bijectively to morphisms $\tilde{f},\tilde{g}∶ LC → D$, which,
  because $𝒟$ is a poset, are equal.
\end{proof}

\begin{proof}[Proof of Lemma~\ref{lem:initalgtransport}]
  By~\citet{Reiterman}, $Σ$ admits a free monad $Σ^*$.

  Furthermore, by Lemma~\ref{lem:subterm}, the given functor
  distributive law $δ$ is in fact a \alert{functor-monad distributive
    law}, in the sense that it commutes with the unit and
  multiplication of $T$.
  
  Now, by a reasoning analogous to \citet{BeckDistlaws}, functor-monad
  distributive laws $δ∶ ΣT → TΣ$ correspond bijectively to liftings of
  the monad $T$ to $Σ\alg$, i.e., monads $Tᵟ$ on $Σ\alg$ making the
  following square commute,
  \begin{center}
    \diag{%
      Σ\alg \& Σ\alg \\
      𝒞 \& 𝒞 %
    }{%
      (m-1-1) edge[labela={Tᵟ}] (m-1-2) %
      edge[labell={}] (m-2-1) %
      (m-2-1) edge[labelb={T}] (m-2-2) %
      (m-1-2) edge[labelr={}] (m-2-2) %
    }
  \end{center}
  whose multiplication and unit are mapped by the forgetful functor to
  those of $T$.  The given functor-monad distributive law $δ$ thus
  corresponds to such a lifting.  But $Σ\alg ≅ Σ^*\Alg$ over $𝒞$,
  hence we get a lifting of $T$ to $Σ^*\Alg$, which by \citet{BeckDistlaws}
  again amounts to a monad distributive law, say
  $\bar{δ}∶ Σ^*T → T Σ^*$.

  From this, using the fact that the counit is an isomorphism (which follows
  from full faithfulness of $U$), we equip the composite $L Σ^* U$ with
  monad structure:
  \begin{itemize}
  \item the unit is the composite
    $R \xto{(εᵀ)^{-1}} LUR \xto{Lη^{Σ^*}} LΣ^*UR$, 
  \item the multiplication is
    $$LΣ^*U LΣ^*U R = L Σ^* T Σ^* U R \xto{L \bar{δ}} L T Σ^*Σ^* U R
    \xto{ε U μ^{Σ^*} U R} L Σ^* U R\rlap{,}$$
  \item and the monad laws hold automatically since $𝒟$ is a poset.
  \end{itemize}

  Moreover, given any $R ∈ 𝒟$, the following are equivalent
  \begin{itemize}
  \item $Σ^*$-algebra structure (in the monad sense) on $UR$,
  \item $Σ^*$-algebra structure (in the functor sense) on $UR$,
  \item $Σ$-algebra structure on $UR$,
  \item $LΣ^*U$-algebra structure (in the monad sense) on $R$,
  \item $LΣ^*U$-algebra structure (in the functor sense) on $R$,    
  \item $LΣU$-algebra structure on $R$.
  \end{itemize}
  Indeed,
  \begin{itemize}
  \item $Σ$-algebra structure $ΣUR → UR$ corresponds by adjunction to
    $LΣU$-algebra structure $LΣUR → R$;
  \item $Σ$-algebra structure $ΣUR → UR$ corresponds by universal property of $Σ^*$
    to $Σ^*$-algebra structure $Σ^*UR → UR$ in the monad sense;
  \item by subterminality,  $Σ^*$-algebra structures $Σ^*UR → UR$
    in the monad and functor sense are equivalent;
  \item by adjunction again, $Σ^*$-algebra structure $Σ^*UR → UR$ in the functor sense
    is equivalent to $LΣ^*U$-structure  $LΣ^*UR → R$ in the functor sense; 
  \item and finally, because $𝒟$ is a poset, $LΣ^*U$-structures
    $LΣ^*UR → R$ in the functor and monad sense are equivalent.
  \end{itemize}

  We thus in particular get $(L Σ^* U)\Alg ≅ (LΣU)\alg$ over $𝒟$,
  hence the result.
\end{proof}

\begin{definition}
  Let $S^{{+};{∼_X}}$ denote the monad induced by $\Fwow$ on
  $\psh[𝕍𝕋]/X₀²$.
\end{definition}

\begin{lemma}\label{lem:wow:basicprops}
  Let $R₀'$ be the proof-relevant (resp.\ proof-irrelevant) Howe closure $R₀^⊙$ (resp.\ $R₀^•$)
  of (resp.\ a relation) $R₀$. It satisfies the following properties.
  \begin{romanenumerate}
  \item \label{item:alg} $R₀'$ is a $Σ₀$-algebra; 
  \item \label{item:act} there exists an action $R₀';{∼_X} → R₀'$
    over $X₀²$.

  Furthermore, if $X₀ = Σ₀^*(∅)$ is the initial $Σ₀$-algebra, we have:
  \item $R₀'$ is reflexive, 
  \item \label{item:wow:contains:bisim} there exists a morphism
    ${∼}_X → R₀'$ over $X₀²$.
  \end{romanenumerate}
\end{lemma}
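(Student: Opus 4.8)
The plan is to derive (i) and (ii) directly from the (free) algebra structure carried by the Howe closure, then (iii) from initiality of $X₀$ together with the congruence property established in (i), and finally (iv) by combining (ii) and (iii).

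For (i) and (ii), recall that the proof-relevant closure $R₀^⊙$ is by definition a $\Fwow$-algebra, whereas the proof-irrelevant closure $R₀^•$ is, by \cref{prop:relwow:alt}, the free $(\im ∘ \Fwow ∘ 𝒰)$-algebra on $R₀$. In both cases, the key point is that $\Fwow(R₀') = Σ₀(R₀') + (R₀';{∼_X})$ is a \emph{binary coproduct} in $\psh[𝕍𝕋]/X₀²$, and the structure map respects it: for $R₀^⊙$ it is literally a map $\Fwow(R₀^⊙) → R₀^⊙$, while for $R₀^•$ one uses that $\im$, being a left adjoint, preserves this coproduct, so the structure map again splits over the two summands (composing the second with the image quotient $R₀^•;{∼_X} ↠ \im(R₀^•;{∼_X})$). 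Restricting along the first injection yields a $Σ₀$-algebra structure $Σ₀(R₀') → R₀'$, which proves (i); restricting along the second yields an action $R₀';{∼_X} → R₀'$, which proves (ii). Since all of this lives in the slice over $X₀²$, both maps are automatically over $X₀²$; in particular the $Σ₀$-structure makes $R₀'$ a \emph{congruence} in the sense of \cref{fig:congenhance}.

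For (iii), I would exploit this congruence structure. Unfolding the congruence square shows that the two legs $R₀' → X₀$ of the span are themselves $Σ₀$-algebra morphisms: the congruence condition states exactly that each leg $l$ satisfies $l ∘ s = a ∘ Σ₀(l)$, where $s∶ Σ₀(R₀') → R₀'$ is the structure from (i) and $a∶ Σ₀(X₀) → X₀$ is the algebra structure of $X₀$. Assuming now $X₀ = Σ₀^*(∅)$ is the initial $Σ₀$-algebra, initiality provides a unique $Σ₀$-algebra morphism $f∶ X₀ → R₀'$. Each composite $l ∘ f∶ X₀ → X₀$ is then a $Σ₀$-algebra endomorphism of the initial algebra, hence the identity; therefore the span map of $R₀'$ sends $f$ to $⟨\id,\id⟩$, i.e.\ the diagonal. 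Thus $f$ is a morphism over $X₀²$ from the diagonal span to $R₀'$, which is precisely reflexivity.

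For (iv), reflexivity is a span morphism from the diagonal span $X₀$ to $R₀'$ over $X₀²$; since the diagonal span is the unit for span composition, functoriality of $(-);{∼_X}$ gives ${∼_X} ≅ X₀;{∼_X} → R₀';{∼_X}$, and postcomposing with the action from (ii) produces the desired morphism ${∼_X} → R₀'$ over $X₀²$. The main obstacle is the uniform handling of the two closures in (i)--(ii): for $R₀^•$ one must check that the free $(\im ∘ \Fwow ∘ 𝒰)$-algebra structure of \cref{prop:relwow:alt} genuinely restricts to a $Σ₀$-action and a ${∼_X}$-action at the level of relations, which relies on $\im$ preserving the coproduct $\Fwow = Σ₀(-) + (-;{∼_X})$ and on images being strong-epi quotients (hence on algebraicity of $Σ₀$). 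Everything else is then formal, using only initiality of $X₀$ and the unit law for span composition.
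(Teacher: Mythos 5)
Your proposal is correct and follows essentially the same route as the paper's proof: (i) and (ii) by restricting the $\Fwow$-algebra structure along the two coproduct injections, (iii) by initiality of $X₀$ giving a $Σ₀$-algebra map $X₀ → R₀'$ over the diagonal, and (iv) by composing reflexivity with the action from (ii). You merely spell out two points the paper leaves implicit — the transfer to the proof-irrelevant closure via \cref{prop:relwow:alt} and the check that the initiality-induced map lands over the diagonal — and both elaborations are sound.
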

\begin{proof} We prove the properties for the proof-relevant Howe closure -- they follow
  easily for the proof-irrelevant one.
  \begin{description}
  \item[\cref{item:alg}] By definition $R₀^⊙$ is an $\Fwow$-algebra,
    hence in particular a $Σ₀$-algebra, or more correctly an algebra
    for the obvious lifting of $Σ₀$ to $\psh[𝕍𝕋]/X₀²$.
  \item[\cref{item:act}] As an $\Fwow$-algebra, $R₀^⊙$ is an algebra for
    the second term functor, i.e., a morphism of the desired form
    $R₀^⊙;{∼_X} → R₀^⊙$.
  \end{description}
  Let us now assume that $X₀$ is the initial $Σ₀$-algebra.  Then, by
  initiality of $X₀$ and~\cref{item:alg}, there is a unique
  $Σ₀$-algebra morphism $X₀ → R₀^⊙$, which witnesses reflexivity.

  We then use reflexivity and~\cref{item:act} to construct the
  following composite
  $${∼}_X ≅ {X₀ ; {∼_X}} → {R₀^⊙;{∼_X}} → R₀^⊙\rlap{,}$$
  which proves the second point.
\end{proof}

A further crucial property is:
\begin{proposition}\label{prop:Wow:augmented}
  If $X₀$ is an $ST$-algebra, then the proof-relevant Howe closure
  $R₀^⊙$ on any $R₀$ is an $ST$-algebra, and $R₀^⊙ → X₀²$ is a
  morphism of $ST$-algebras.  Furthermore, the relational Howe closure
  $R₀^•$ is \augmented.
\end{proposition}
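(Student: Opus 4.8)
The plan is to equip $R₀^⊙$ with $σ$-algebra structure by exhibiting it as an \emph{enhanced algebra} in the sense of \cref{def:enhanced}, and then to read off \augmentedness of the image. First observe that $σ\Alg = ST\Alg$ is monadic over $\psh[𝕍𝕋]$, hence has all limits, created by the forgetful functor; in particular the product $X₀²$ is itself an $ST$-algebra, with structure computed pointwise from that of $X₀$. Consequently the first two claims amount to lifting $R₀^⊙$, together with its projection to $X₀²$, along the forgetful functor $ST\Alg/X₀² → \psh[𝕍𝕋]/X₀²$; and by the equivalence $σ(𝐝)\Alg ≅ 𝐝\Alg$ over $\psh[𝕍𝕋]$, this reduces to producing on $R₀^⊙$ a $Σ₀$-algebra structure $a$ and a $Γ$-algebra structure $b∶ Γ(R₀^⊙,R₀^⊙) → R₀^⊙$, both lying over $X₀²$ and jointly making the coherence square of \cref{def:enhanced} commute.

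The $Σ₀$-structure is free: the first summand of the Howe functor $\Fwow$ is the slice lift of $Σ₀$, so the summand inclusion $Σ₀ ↪ \Fwow$ induces a monad morphism from the free monad $S$ into the Howe monad $S^{{+};{∼_X}}$. Every $\Fwow$-algebra, and in particular $R₀^⊙$, is thereby an $S$-algebra over $X₀²$, which supplies $a$ (and $\bar a∶ SR₀^⊙ → R₀^⊙$) and settles the $S$-part of the $ST$-structure.

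The heart of the proof — and the main obstacle — is the construction of $b$ together with the verification of the coherence square; this is the categorical incarnation of Howe's substitutivity (``key'') lemma. I would build $b$ by recursion along the initial chain computing $R₀^⊙$ as a free $\Fwow$-algebra (\cref{prop:wow:initialchain}), using left-cocontinuity of $Γ$ in its first argument to present $Γ(R₀^⊙,R₀^⊙)$ as the colimit of the $Γ(P_n,R₀^⊙)$ over the chain $P_n$. On the $Σ₀$-summand $Γ(Σ₀P_n,R₀^⊙)$ one commutes $Γ$ past the head constructor by means of the incremental structural law $d$ packaged in $δ∶ TS → ST$, thereby reducing to strictly smaller arguments; on the $(-;{∼_X})$-summand one pushes $Γ$ through the bisimilarity factor using that $∼_X$ is \augmented, i.e.\ that it carries $Γ({∼_X},X₀) → {∼_X}$, together with the $Γ$-structure of $X₀$ to absorb the auxiliary (second) occurrences. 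Cleanly organised, this is a distributive law of the slice lift $\hat T$ of $T$ over $\Fwow$, transported to the free $\Fwow$-algebra exactly as in \cref{lem:initalgtransport}. The two delicate points are that every clause is taken over $X₀²$ — so that the resulting $b$ automatically makes $R₀^⊙ → X₀²$ an $ST$-algebra morphism, which is the second claim — and that the enhanced-algebra coherence square commutes, which I expect to follow by a diagram chase from naturality of $δ$ and the defining equation of $d$.

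Finally, for the \augmentedness addendum I would note that the enhancement action $Γ(R₀^⊙,X₀) → R₀^⊙$ required by \cref{def:enhanced-span} is obtained by the very same construction as $b$, with the auxiliary argument specialised to $X₀$ (so that no recursion on it is needed and the $X₀$-clauses suffice), and that its square commutes since everything lies over $X₀²$. Enhanced spans are stable under image factorisation — the action pushes through the defining (strong epi, mono) factorisation by left-cocontinuity of $Γ$ together with orthogonality, the strong-epi part being an $ST$-algebra morphism because $Σ₀$, hence $ST$, preserves epimorphisms under the algebraicity hypothesis. Since $R₀^•$ is the image of $R₀^⊙ → X₀²$, it is therefore again \augmented, as desired.
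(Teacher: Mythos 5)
Your proposal is correct and follows essentially the same route as the paper: the $Σ₀$-structure comes for free from the Howe functor, the $T$-structure is obtained by lifting the incremental structural laws to the slice over $X₀²$ (handling the $({-};{∼_X})$-summand via the span-distributivity of $Γ$ and the enhancement of $∼_X$, which is exactly the paper's Lemma~\ref{lem:wdistrib}), and enhancement of $R₀^•$ is transferred along the image factorisation using that $Γ$ preserves epimorphisms in its first argument together with (strong epi, mono)-orthogonality. The only visible difference is presentational: you unfold the distributive-law argument as a recursion along the free-$\Fwow$-algebra chain, where the paper packages the same content as an application of Proposition~\ref{prop:sigmad} to the lifted signature followed by constant-freeness of $\bar{T}_{n+1}$ at $∅$.
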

In order to prove this, we need a few intermediate steps.

   \begin{definition} For any bifunctor $F$ on
    a category $𝐂$ and $FΔ$-algebra $X$, let $\bar{F}$ denote the
    lifting of $F$ to $𝐂/X²$, which maps any $U → X²$ and $V → X²$ to the composite
    $$F (U,V) → F(X²,X²) → F(X,X)² → X².$$
  \end{definition}

  \begin{lemma}\label{lem:wdistrib}
    For any bifunctor $Γ$ on a category $𝐂$ with pullbacks, object
    $X ∈ 𝐂$, and spans $uᵢ∶ Uᵢ → X²$, for $i ∈ 3$, there is a
    morphism $$Γ ((U₁;U₂),U₃) → Γ (U₁,U₃) ; Γ (U₂,X)$$
    of spans over $X$.
  \end{lemma}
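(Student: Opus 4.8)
The plan is to construct the morphism explicitly via the universal property of the pullback defining the codomain span, and to check compatibility directly from the pullback square that defines $U₁;U₂$. Throughout, I write the two legs of a span $u∶ U → X²$ as $π₁ u, π₂ u∶ U → X$, and recall that the span composite $U₁;U₂$ is the pullback of $π₂ u₁$ and $π₁ u₂$, with projections $p∶ U₁;U₂ → U₁$ and $q∶ U₁;U₂ → U₂$ satisfying the defining equation $(π₂ u₁)∘p = (π₁ u₂)∘q$, and with legs $π₁(U₁;U₂) = (π₁ u₁)∘p$ and $π₂(U₁;U₂) = (π₂ u₂)∘q$. Here $Γ$ on spans means the lifting $\bar{Γ}$ of the preceding definition, so that for spans $u∶ U → X²$ and $v∶ V → X²$ the $k$-th leg of $\bar{Γ}(U,V)$ is $a ∘ Γ(π_k u, π_k v)$, where $a∶ Γ(X,X) → X$ is the $ΓΔ$-algebra structure on $X$, and the object $X$ appearing as a second argument is the diagonal span, both of whose legs are $\id_X$.

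First I would define the two would-be projections of the target map. Using functoriality of $Γ$ in each argument, set $φ₁ ≔ Γ(p,\id_{U₃})∶ Γ(U₁;U₂,U₃) → Γ(U₁,U₃)$ and $φ₂ ≔ Γ(q,π₂ u₃)∶ Γ(U₁;U₂,U₃) → Γ(U₂,X)$. The essential choice is to reindex the second argument of $φ₂$ along the \emph{target} leg $π₂ u₃$ of $U₃$, reflecting that once the main operation has passed from $U₁$ to $U₂$ the side arguments have already reached their $U₃$-targets. To see that $(φ₁,φ₂)$ factors through the pullback $Γ(U₁,U₃);Γ(U₂,X)$, I would check the single required equation: the target leg of $\bar{Γ}(U₁,U₃)$ postcomposed with $φ₁$ is $a ∘ Γ((π₂ u₁)∘p,\,π₂ u₃)$, whereas the source leg of $\bar{Γ}(U₂,X)$ postcomposed with $φ₂$ is $a ∘ Γ((π₁ u₂)∘q,\,π₂ u₃)$, and these coincide precisely by the pullback equation $(π₂ u₁)∘p = (π₁ u₂)∘q$. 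The universal property of the pullback then yields a unique $φ∶ Γ(U₁;U₂,U₃) → Γ(U₁,U₃);Γ(U₂,X)$ whose two projections are $φ₁$ and $φ₂$.

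It then remains to verify that $φ$ is a morphism of spans over $X$, which I would do by computing both legs via functoriality and the pullback identities: the source leg of the composite span equals $(π₁\text{-leg of }\bar{Γ}(U₁,U₃))∘φ₁ = a ∘ Γ((π₁ u₁)∘p,\,π₁ u₃) = a ∘ Γ(π₁(U₁;U₂),\,π₁ u₃)$, which is the source leg of $\bar{Γ}(U₁;U₂,U₃)$; symmetrically, the target leg equals $(π₂\text{-leg of }\bar{Γ}(U₂,X))∘φ₂ = a ∘ Γ((π₂ u₂)∘q,\,π₂ u₃) = a ∘ Γ(π₂(U₁;U₂),\,π₂ u₃)$, the target leg of $\bar{Γ}(U₁;U₂,U₃)$. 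This is essentially the whole argument; there is no serious obstacle, since everything reduces to functoriality of $Γ$ together with the one pullback equation defining span composition. The only point requiring care — and the one place where the statement would fail under a careless choice — is that the reindexing in $φ₂$ is along $π₂ u₃$ rather than $π₁ u₃$, which is exactly what makes the middle legs of the two composable factors agree.
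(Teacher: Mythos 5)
Your proof is correct and follows essentially the same route as the paper's: the paper's proof is precisely the diagram chase you spell out, constructing the map into the pullback from the two components $Γ(p,U₃)$ and $Γ(q,π₂u₃)$ (the latter appearing in the paper as $Γ(U₂,π₂)∘Γ(\pi_?,U₃)$), with compatibility following from the pullback equation defining $U₁;U₂$. Your identification of the reindexing along $π₂u₃$ as the one essential choice matches the paper's $Γ(U₂,π₂)$ arrow exactly.
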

  \begin{proof}
    We construct the desired morphism by universal property of
    pullback, as in the following diagram.
    \begin{center}
\begin{tikzcd}[ampersand replacement=\&]
	{\Gamma_{}((U₁;{U₂}),U₃)} \&\& {\Gamma_{}({U₂},U₃)} \\
	\& {\Gamma_{}(U₁,U₃);\Gamma_{}({U₂},X)} \&\& {\Gamma_{}({U₂},X)} \\
	{\Gamma_{}(U₁,U₃)} \&\& {\Gamma_{}(X,U₃)} \\
	\& {\Gamma_{}(U₁,U₃)} \&\& {\Gamma_{}(X,X)}
	\arrow[from=1-1, to=3-1]
	\arrow[from=1-1, to=1-3]
	\arrow["{\Gamma_{}(\pi_1,U₃)}"'{pos=0.2}, from=1-3, to=3-3]
	\arrow["{\Gamma_{}(\pi_2,U₃)}"'{pos=0.3}, from=3-1, to=3-3]
	\arrow[""{name=0, anchor=center, inner sep=0, pos=.15}, "{\pi_1}"{pos=0.8}, fore, from=2-2, to=4-2]
	\arrow[""{name=1, anchor=center, inner sep=0, pos=-.2}, "{\pi_2}"'{pos=0.2}, fore, from=2-2, to=2-4]
	\arrow["{\Gamma_{}(\pi_1,X)}", fore, from=2-4, to=4-4]
	\arrow["{\Gamma_{}(\pi_2,\pi_2)}"', from=4-2, to=4-4]
	\arrow[Rightarrow, no head, from=3-1, to=4-2]
	\arrow["{\Gamma_{}(X,\pi_2)}"'{pos=0}, from=3-3, to=4-4]
	\arrow["{\Gamma_{}({U₂},\pi_2)}"{pos=0.9}, from=1-3, to=2-4]
	\arrow[dashed, from=1-1, to=2-2]
	\arrow[shorten <=6pt, shorten >=6pt, no head, from=0, to=1,
        to path={-| (\tikztotarget)}]        
\end{tikzcd}
    \end{center}    
  \end{proof}

  \begin{lemma}
    Assume that $X₀$ is a $σ$-algebra with structure given by
    \begin{mathpar}
      𝐚∶ Σ₀X₀ → X₀ \and … \and 𝐛ᵢ∶ Γᵢ(X₀,X₀) → X₀ \and … \rlap{ ,}
    \end{mathpar}
    and let the derived monad algebra structures be as follows.
    \begin{mathpar}
      \bar{𝐚}∶ SX₀ → X₀ \and … \and \bar{𝐛}_{<i}∶ TᵢX₀ → X₀ \and …\rlap{.}
    \end{mathpar}

    Then, for all $i ∈ n$, the incremental structural law
     $$dᵢ∶ Γᵢ(Σ₀A,B) → STᵢ (Γᵢ(A,STᵢB)+ A + B)$$
     lifts to an incremental structural law
     $$\bar{dᵢ}∶ \bar{Γᵢ}(Σ₀^{{+};{∼_X}}A,B) → S^{{+};{∼_X}}\bar{Tᵢ}
     (\bar{Γᵢ}(A,S^{{+};{∼_X}}\bar{Tᵢ}B) + A + B)\rlap{.}$$ 
   \end{lemma}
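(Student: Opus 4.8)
The plan is to define $\bar{dᵢ}$ by cases on the coproduct structure of the Howe functor. Writing $\bar{Σ₀}$ for the obvious lifting of $Σ₀$ to $\psh[𝕍𝕋]/X₀²$, we have $Σ₀^{{+};{∼_X}}A = \bar{Σ₀}A + (A;{∼_X})$ as spans over $X₀²$; and since $\bar{Γᵢ}$ is the lifting of the left-cocontinuous $Γᵢ$ (colimits in $\psh[𝕍𝕋]/X₀²$ are created by the forgetful functor), it remains cocontinuous in its first argument, so
\[
  \bar{Γᵢ}(Σ₀^{{+};{∼_X}}A,B) ≅ \bar{Γᵢ}(\bar{Σ₀}A,B) + \bar{Γᵢ}(A;{∼_X},B).
\]
It therefore suffices to define $\bar{dᵢ}$ on each summand, landing in the common codomain $C ≔ S^{{+};{∼_X}}\bar{Tᵢ}(\bar{Γᵢ}(A,S^{{+};{∼_X}}\bar{Tᵢ}B)+A+B)$.

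On the summand $\bar{Γᵢ}(\bar{Σ₀}A,B)$, whose underlying object is $Γᵢ(Σ₀A,B)$, I would lift the given $dᵢ$ and postcompose with the monad morphism $\bar{S}=\bar{Σ₀}^* → S^{{+};{∼_X}}$ induced by the inclusion $Σ₀ ↪ \Fwow$ (both at the head and inside the argument $\bar{Γᵢ}(A,-)$), so as to turn the target $\bar{S}\bar{Tᵢ}(\bar{Γᵢ}(A,\bar{S}\bar{Tᵢ}B)+A+B)$ of the lifted $dᵢ$ into $C$. The only nontrivial point is that this underlying map is a morphism of spans over $X₀²$: computing the left (resp.\ right) leg of $C$ after $dᵢ$ and comparing with $bᵢ∘Γᵢ(a∘Σ₀π₁,π₁)$ (resp.\ with the $π₂$-analogue) is exactly the enhanced-algebra coherence that $X₀$ satisfies as a $σ$-algebra (the square of \cref{def:enhanced}, valid since $σ(𝐝)\Alg ≅ 𝐝\Alg$).

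The genuinely new summand is $\bar{Γᵢ}(A;{∼_X},B)$. Here I would apply the span-distributivity morphism of \cref{lem:wdistrib} with $U₁=A$, $U₂={∼_X}$, $U₃=B$, giving $\bar{Γᵢ}(A;{∼_X},B) → \bar{Γᵢ}(A,B);\bar{Γᵢ}({∼_X},X₀)$; then use enhancedness of ${∼_X}$ (\cref{def:enhanced-span}), i.e.\ the $i$-th component of the action $Γ({∼_X},X₀) → {∼_X}$, to reach $\bar{Γᵢ}(A,B);{∼_X}$. Sending $\bar{Γᵢ}(A,B)$ into $Z ≔ \bar{Tᵢ}(\bar{Γᵢ}(A,S^{{+};{∼_X}}\bar{Tᵢ}B)+A+B)$ by units and coproduct injection, and finally composing with the free-monad generator $Z;{∼_X} → \Fwow(Z) → S^{{+};{∼_X}}(Z)$, lands $\bar{Γᵢ}(A,B);{∼_X}$ in $C$. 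This is where the extra summand $(-;{∼_X})$ of the Howe functor is essential, and it is the reason the codomain uses $S^{{+};{∼_X}}$ rather than $\bar{S}$: the free monad on $\Fwow$ is precisely what absorbs the residual composition with ${∼_X}$.

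I expect the main obstacle to be the bookkeeping in this last summand: checking that the composite is natural in $A$ and $B$ (which reduces to naturality of \cref{lem:wdistrib} and of the unit and injection maps) and tracking the legs through the pullback defining \cref{lem:wdistrib} to confirm it is a morphism over $X₀²$ (automatic once everything is read in $\psh[𝕍𝕋]/X₀²$, but easy to get wrong). Granting these verifications, $\bar{Γᵢ}$ together with $\bar{dᵢ}$ is a finitary, left-cocontinuous bifunctor equipped with a natural transformation of the prescribed shape, hence an incremental structural law on $\psh[𝕍𝕋]/X₀²$, as required.
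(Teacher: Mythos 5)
Your proposal is correct and matches the paper's (one-line) proof, which cites exactly the same three ingredients: left-cocontinuity of $Γᵢ$ to split over the coproduct $Σ₀^{{+};{∼_X}}A ≅ \bar{Σ₀}A + (A;{∼_X})$, Lemma~\ref{lem:wdistrib} for the new summand, and enhancedness of $∼_X$ to absorb the resulting $Γᵢ({∼_X},X₀)$ factor. Your unfolding of the first summand via the coherence square of Definition~\ref{def:enhanced} is a detail the paper leaves implicit, but it is the right justification.
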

   \begin{proof}
     By Lemma~\ref{lem:wdistrib}, using left-cocontinuity of $Γᵢ$, and
     the fact that $∼_X$ is \augmented.
   \end{proof}

   \begin{proof}[Proof of Proposition~\ref{prop:Wow:augmented}]
     By Proposition~\ref{prop:sigmad}, there exists a distributive law
     $$\bar{T}_{n+1}S^{{+};{∼_X}} → S^{{+};{∼_X}}\bar{T}_{n+1}$$ and
     $\bar{T}_{n+1}$ is constant-free, hence the natural
     transformation $S^{{+};{∼_X}} → S^{{+};{∼_X}} \bar{T}_{n+1}$ is
     an isomorphism at $∅$.  The proof-relevant Howe closure
     $R₀^⊙ = S^{{+};{∼_X}}∅$ thus acquires a canonical
     $S^{{+};{∼_X}}\bar{T}_{n+1}$-algebra structure. The terminal
     object also is one, of course, and the unique morphism to it is a
     $S^{{+};{∼_X}}\bar{T}_{n+1}$-algebra morphism, which completes
     the proof of the first point.
     
     The proof-relevant Howe closure is in particular \augmented via
     $$Γᵢ(R₀^⊙,X) →Γᵢ(R₀^⊙,R₀^⊙) → R₀^⊙\rlap{,}$$
     which entails \augmentedness for the relational Howe closure by
     the fact that each $Γᵢ$, being left-cocontinuous, preserves
     epimorphisms in its first argument, and that all epimorphisms are
     strong in presheaf categories.  Indeed, we find the desired
     morphism by lifting as in the following diagram.
     \begin{center}
       \begin{tikzcd}[ampersand replacement=\&]
         {\Gamma(R_0^\odot,X_0)} \& {\Gamma(R_0^\odot,R_0^\odot)} \&\& {R_0^\odot} \\
         {\Gamma(R_0^\bullet,X_0)} \&\&\& {R_0^\bullet} \\
         {\Gamma(X_0^2,X_0)} \& {\Gamma(X_0^2,X_0^2)} \& {\Gamma(X_0,X_0)^2} \& {X_0^2}
         \arrow[two heads, from=1-1, to=2-1]
         \arrow[from=2-1, to=3-1]
         \arrow[from=3-3, to=3-4]
         \arrow[two heads, from=1-4, to=2-4]
         \arrow[hook, from=2-4, to=3-4]
         \arrow[from=1-1, to=1-2]
         \arrow[from=1-2, to=1-4]
         \arrow[from=1-2, to=3-2]
         \arrow[from=3-2, to=3-3]
         \arrow[from=3-1, to=3-2]
         \arrow[dashed, hook, from=2-1, to=2-4]
       \end{tikzcd}
     \end{center}
   \end{proof}

   A final basic property is about symmetry of the relational transitive closure of the relational Howe closure
   on the syntactic transition system (Proposition~\ref{prop:trans:is:sym} below). \tom{work with relations from earlier on?}
   \begin{definition}[{\cite[Definition~9.5]{HL}}]\label{def:relational:transitive:closure}
     The \alert{relational transitive closure} $R₀^{\bar{+}}$ of a
     span $R₀ → X₀²$ is the union $⋃_{n >0} \im(R₀^{;n})$, where $(-)^{;n}$
     denotes iterated self-composition of spans.
   \end{definition}

   \begin{proposition}\label{prop:trans:action}
     For any span $R₀ → X₀²$, the relational transitive closure
     $R₀^{\bar{+}}$ is equipped with an action
     $R₀;R₀^{\bar{+}} → R₀^{\bar{+}}$ over $X₀²$.
   \end{proposition}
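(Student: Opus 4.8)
The plan is to exploit that $R₀^{\bar{+}}$ is a subobject of $X₀²$ (being, by \cref{def:relational:transitive:closure}, a union of images, hence a union of subobjects), so that producing an action $R₀;R₀^{\bar{+}} → R₀^{\bar{+}}$ over $X₀²$ reduces to a containment of relations. Indeed, if the image of the map $R₀;R₀^{\bar{+}} → X₀²$ is contained in $R₀^{\bar{+}} ↪ X₀²$, then that map factors uniquely through the mono $R₀^{\bar{+}} ↪ X₀²$, producing the desired span morphism over $X₀²$. Writing $R \star S ≔ \im(R;S)$ for relational composition of subobjects of $X₀²$, it thus suffices to prove $R₀ \star R₀^{\bar{+}} ⊆ R₀^{\bar{+}}$.

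First I would record two facts about $\psh[𝕍𝕋]$, both checkable pointwise in $𝐒𝐞𝐭$ since colimits, images, epimorphisms and monomorphisms are all computed pointwise there. First, left span composition $R₀;(-)$ preserves colimits and epimorphisms: pointwise it is relational composition of sets, and epimorphisms are stable under pullback in a topos. Second, relational composition distributes over unions in its second argument, $R₀ \star ⋃_α S_α = ⋃_α (R₀ \star S_α)$; this follows from the first fact together with the union of subobjects being the image of the corresponding coproduct, along which $R₀;(-)$ commutes.

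Two per-index identities then finish the argument. Since $R₀;(-)$ preserves the canonical epimorphism $R₀^{;n} ↠ \im(R₀^{;n})$, the spans $R₀;R₀^{;n}$ and $R₀;\im(R₀^{;n})$ share a common image, so $R₀ \star \im(R₀^{;n}) = \im(R₀;\im(R₀^{;n})) = \im(R₀^{;n+1})$. Recalling $R₀^{\bar{+}} = ⋃_{n > 0}\im(R₀^{;n})$, the conclusion is then a direct computation, where the final containment holds because $⋃_{n > 0}\im(R₀^{;n+1})$ is a sub-union of $R₀^{\bar{+}}$:
$$R₀ \star R₀^{\bar{+}} = R₀ \star ⋃_{n > 0}\im(R₀^{;n}) = ⋃_{n > 0}\bigl(R₀ \star \im(R₀^{;n})\bigr) = ⋃_{n > 0}\im(R₀^{;n+1}) ⊆ R₀^{\bar{+}}.$$
This yields the factorisation, hence the action.

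The only genuinely structural input is the pair of facts about $\psh[𝕍𝕋]$ (colimit preservation by span composition and distributivity of relational composition over unions), both of which are routine once reduced to the pointwise situation in $𝐒𝐞𝐭$. Consequently I expect no real obstacle beyond the bookkeeping of the index shift $n ↦ n+1$, which is exactly what makes the transitive closure absorb one further step of $R₀$.
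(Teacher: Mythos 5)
Your proof is correct and follows essentially the same route as the paper's: both reduce the action to the containment $\im(R_0;R_0^{\bar{+}}) \subseteq R_0^{\bar{+}}$ via an epi--mono composite, with the key input being that span composition preserves epimorphisms and colimits (hence relational composition distributes over unions), which is exactly the paper's Lemma~\ref{lem:seqcomp:cocont}, followed by the index shift $n \mapsto n+1$. The only difference is presentational: you re-derive the distributivity lemma inline by a pointwise argument, whereas the paper cites it and proves it via cocontinuity of base change in a locally cartesian closed regular category.
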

   The proof relies on the following lemma.
   \begin{lemma}\label{lem:seqcomp:cocont}
     In any complete, cocomplete, regular, and locally cartesian closed
     category, hence in particular in any presheaf category,
     \begin{romanenumerate}
     \item \label{item:spancomp:cocont} span composition preserves all
       colimits, on both sides, and
     \item \label{item:seqcomp:cocont} sequential composition of
       relations preserves all unions, on both sides.
     \end{romanenumerate}
   \end{lemma}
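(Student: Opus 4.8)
The plan is to derive both statements from the single fact that, in a locally cartesian closed category, pullback functors are left adjoints (their right adjoints being the dependent products). I begin with~\cref{item:spancomp:cocont}. Fixing three objects $X,Y,Z$, recall that a span from $X$ to $Y$ is an object of $𝒞/(X×Y)$, and that composition with a fixed span $S → Y×Z$ sends any $R → X×Y$ to the span $R;S → X×Z$ obtained by pulling $R$ and $S$ back over $Y$. Writing $π_{XY},π_{YZ},π_{XZ}$ for the evident projections out of $X×Y×Z$, the point is that this operation decomposes as
\[
  R \;\longmapsto\; Σ_{π_{XZ}}\bigl(π_{XY}^{*}R × π_{YZ}^{*}S\bigr),
\]
where $Σ$ denotes dependent sum (postcomposition), $(-)^{*}$ denotes pullback, and $×$ is the binary product in the slice $𝒞/(X×Y×Z)$. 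A direct computation of the fibre product shows that this indeed recovers $R;S$.

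Each of the three functors in this decomposition preserves colimits: $π_{XY}^{*}$ is a left adjoint because $𝒞$ is locally cartesian closed, so it admits the right adjoint $Π_{π_{XY}}$; the functor $(-) × π_{YZ}^{*}S$ is a left adjoint because the slice $𝒞/(X×Y×Z)$ is cartesian closed, again by local cartesian closedness; and $Σ_{π_{XZ}}$ is left adjoint to $π_{XZ}^{*}$. Hence the composite preserves all colimits, which gives preservation in the left argument; the right argument is symmetric, varying $S$ in place of $R$. This settles~\cref{item:spancomp:cocont}.

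For~\cref{item:seqcomp:cocont}, I combine~\cref{item:spancomp:cocont} with two standard facts about regular categories: the union $⋃_i R_i$ of subobjects of $X×Y$ is the image of the cotupling $∑_i R_i → X×Y$, and relational composition of $R$ and $S$ is the image $\im(R;S)$ of the span composite. By~\cref{item:spancomp:cocont}, span composition on the left preserves all colimits, hence in particular coproducts and coequalisers; since every regular epimorphism is (effectively) the coequaliser of its kernel pair, it therefore preserves regular epimorphisms. Span-composing the canonical regular epimorphism $∑_i R_i → ⋃_i R_i$ with $S$ thus yields a regular epimorphism
\[
  ∑_i (R_i;S) \;≅\; \Bigl(∑_i R_i\Bigr);S \;\longrightarrow\; \Bigl(⋃_i R_i\Bigr);S .
\]
Taking images and using that $\im(g∘e)=\im(g)$ whenever $e$ is a regular epimorphism, we obtain $\im\bigl((⋃_i R_i);S\bigr)=\im\bigl(∑_i(R_i;S)\bigr)$. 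Finally, since coproducts of regular epimorphisms are again regular epimorphisms (coproducts commute with coequalisers), this last image equals $⋃_i \im(R_i;S)$, which is exactly $⋃_i$ of the relational composites $R_i$ with $S$. The other side is symmetric, using the right-argument case of~\cref{item:spancomp:cocont}.

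I expect the only genuine subtlety to lie in~\cref{item:seqcomp:cocont}, namely in keeping the bookkeeping between span composition, image factorisation, and unions straight: everything turns on the regular-category facts that unions are images of coproducts and that precomposing with a regular epimorphism leaves the image unchanged, together with the preservation of regular epimorphisms handed to us by~\cref{item:spancomp:cocont}. Part~\cref{item:spancomp:cocont} itself is essentially forced once the displayed decomposition is written down, the one thing to check being that the formula really computes $R;S$.
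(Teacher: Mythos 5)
Your proof is correct and follows essentially the same route as the paper's (much terser) argument: local cartesian closedness makes the pullback functors left adjoints, hence cocontinuous, for part (i), and part (ii) reduces to the interaction of span composition with image factorisations and unions-as-images-of-coproducts in a regular category. The only cosmetic difference is that you derive preservation of regular epis from the cocontinuity established in (i), whereas the paper invokes pullback-stability of regular epis in a regular category directly; both are valid.
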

   \begin{proof}
     The pullback functor (along the relevant projection), being a left
     adjoint, is cocontinuous, which directly entails the first point.
     For the second point, in a regular category, the pullback functor
     preserves regular epis and monos, hence image factorisations.
   \end{proof}
   \begin{proof}[Proof of Proposition~\ref{prop:trans:action}]
     We have $$R₀ ; ⋃_{n>0} \im(R₀^{;n}) ↠ \im(R₀) ; ⋃_{n>0} \im(R₀^{;n}) ≅ ⋃_{n>1} \im(R₀^{;n}) ↪ ⋃_{n>0} \im(R₀^{;n})\rlap{,}$$
     where the isomorphism holds by Lemma~\ref{lem:seqcomp:cocont}\cref{item:seqcomp:cocont}.
   \end{proof}

   \begin{proposition}\label{prop:trans:is:sym}
     Let again $X₀ = Σ₀^*(∅)$. Then the relational transitive closure
     $∅^{•\bar{+}}$ of the proof-irrelevant Howe closure of $∅$ is
     symmetric.
   \end{proposition}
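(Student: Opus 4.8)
The plan is to reduce the statement to the single inclusion $∅^• ⊆ \op{(∅^{•\bar{+}})}$, where $\op{R}$ denotes the converse of a span $R → X₀²$, and to derive this inclusion from the \emph{universal property} of the Howe closure rather than by an explicit transfinite induction. A relation is symmetric as soon as it is contained in its converse, and $∅^{•\bar{+}}$ is transitive (being a relational transitive closure, \cref{def:relational:transitive:closure}), so its converse $K ≔ \op{(∅^{•\bar{+}})}$ is transitive as well, whence $K^{\bar{+}} = K$. Thus it suffices to prove $∅^• ⊆ K$: this yields $∅^{•\bar{+}} = (∅^•)^{\bar{+}} ⊆ K^{\bar{+}} = K = \op{(∅^{•\bar{+}})}$, which gives symmetry.

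To establish $∅^• ⊆ K$, I would use that, by \cref{prop:relwow:alt}, $∅^•$ is the free $(\im ∘ \Fwow ∘ 𝒰)$-algebra on the least relation $∅ ∈ 𝐒𝐮𝐛(X₀²)$, hence the least subobject $H ↪ X₀²$ closed under the two summands of $\Fwow$, i.e. satisfying $\im(Σ₀ H) ⊆ H$ and $\im(H ; {∼_X}) ⊆ H$. It therefore suffices to check that $K$ enjoys these two closure properties. For the second, given witnesses $a \mathrel{K} b'$ and $b' ∼_X b$, we have $b' \mathrel{∅^{•\bar{+}}} a$ by definition of $K$; symmetry of $∼_X$ gives $b ∼_X b'$, and since ${∼_X} ⊆ ∅^• ⊆ ∅^{•\bar{+}}$ (\cref{lem:wow:basicprops}, \cref{item:wow:contains:bisim}) and $∅^{•\bar{+}}$ is transitive, we get $b \mathrel{∅^{•\bar{+}}} a$, i.e. $a \mathrel{K} b$; so $\im(K ; {∼_X}) ⊆ K$. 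For the first property, I would observe that the lifted $Σ₀$ commutes with converses, $Σ₀(\op{R}) ≅ \op{(Σ₀ R)}$, because the two legs of $Σ₀ R$ are $Σ₀$ applied to those of $R$; hence $K$ is a congruence if and only if $∅^{•\bar{+}}$ is.

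The main obstacle is thus the remaining sublemma: \emph{the relational transitive closure of a congruence is again a congruence}. Here $∅^•$ is a congruence by \cref{lem:wow:basicprops}, \cref{item:alg}, so it remains to propagate this through transitive closure. Since $Σ₀$ is algebraic (it preserves wide pullbacks and epimorphisms, and is finitary), it distributes over span composition: preservation of pullbacks yields a comparison $Σ₀(R ; S) → Σ₀ R ; Σ₀ S$ over $X₀²$, using the algebra map $Σ₀ X₀ → X₀$ on the shared vertex object, exactly in the spirit of \cref{lem:wdistrib}, and by induction $Σ₀(R^{;n}) → (Σ₀ R)^{;n}$. Preservation of epimorphisms lets this pass to images, and finitariness lets $Σ₀$ commute with the directed union $⋃_{n>0}$ defining $(-)^{\bar{+}}$ (\cref{def:relational:transitive:closure}, via \cref{lem:seqcomp:cocont}). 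Combining these with the congruence inclusion $Σ₀(∅^•) → ∅^•$ produces $Σ₀(∅^{•\bar{+}}) → (Σ₀ ∅^•)^{\bar{+}} → ∅^{•\bar{+}}$, so $∅^{•\bar{+}}$ is a congruence, $K$ is a congruence, and the argument closes. I expect the span-composition distributivity and its interaction with images and unions to be the only genuinely technical point; the symmetry argument proper is then a short application of the freeness of the Howe closure.
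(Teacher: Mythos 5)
Your proposal is correct and takes essentially the same route as the paper: both arguments reduce symmetry to equipping the converse of $∅^{•\bar{+}}$ with algebra structure for the endofunctor $S ↦ \im(\Fwow(S))$ (you reprove the reduction that the paper imports as \cite[Lemma~9.10]{HL}), handle the $({-};{∼_X})$-summand via symmetry of $∼_X$, the containment ${∼_X} ⊆ ∅^{•}$ and the transitivity action, and handle the $Σ₀$-summand by propagating congruence through span composition, images and the defining chain of the transitive closure. The only divergence is presentational: the paper lifts the chain defining $∅^{•†\bar{+}}$ to $\bar{Σ}₀$-algebras via creation of filtered colimits and subterminality, whereas you make the underlying distributivity $Σ₀(R;S) → Σ₀(R);Σ₀(S)$ explicit — the same computation organised differently.
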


   \begin{lemma}[{\cite[Lemma~9.10]{HL}}]
     For any span $R₀ → 𝐙₀²$, if there exists a span morphism
     $R₀ → R₀^{\bar{+}†}$, then $R₀^{\bar{+}}$ is symmetric.
   \end{lemma}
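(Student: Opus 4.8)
The plan is to exploit the good interaction of converse with composition, image, and union, reducing symmetry of $R₀^{\bar{+}}$ to the fact that any relational transitive closure is transitive. First I would record that converse is an order-preserving involution on the relations over $𝐙₀²$: being induced by the swap automorphism of $𝐙₀²$, it preserves unions and images, and it anti-commutes with span composition, $(A;B)^† ≅ B^†;A^†$, so that on self-composites $(R₀^{;n})^† ≅ (R₀^†)^{;n}$. Combining these observations with \cref{def:relational:transitive:closure} yields the key rewriting
\[
  R₀^{\bar{+}†} = \Bigl(⋃_{n>0}\im(R₀^{;n})\Bigr)^† = ⋃_{n>0}\im\bigl((R₀^†)^{;n}\bigr) = (R₀^†)^{\bar{+}},
\]
i.e.\ the converse of the transitive closure is the transitive closure of the converse.

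Next I would check that any relational transitive closure $Q^{\bar{+}}$ is transitive, meaning $\im(Q^{\bar{+}};Q^{\bar{+}}) ⊆ Q^{\bar{+}}$. Distributing composition over the defining unions by \cref{lem:seqcomp:cocont}\cref{item:seqcomp:cocont}, and using $\im(\im(A);\im(B)) = \im(A;B)$ together with $\im(Q^{;m};Q^{;n}) = \im(Q^{;(m+n)})$, the composite becomes the union of the subobjects $\im(Q^{;(m+n)})$, each of which already sits inside $Q^{\bar{+}}$. (Equivalently, this transitivity can be obtained by iterating the action supplied by \cref{prop:trans:action}.)

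With these two facts the argument is short. The hypothesised span morphism $R₀ → R₀^{\bar{+}†}$ factors $R₀ → 𝐙₀²$ through the relation $R₀^{\bar{+}†} = (R₀^†)^{\bar{+}}$, giving $\im(R₀) ⊆ (R₀^†)^{\bar{+}}$. Since $(R₀^†)^{\bar{+}}$ is transitive and span composition is monotone, a straightforward induction on $n$ shows $\im(R₀^{;n}) ⊆ (R₀^†)^{\bar{+}}$ for all $n>0$, and taking the union gives $R₀^{\bar{+}} ⊆ (R₀^†)^{\bar{+}} = R₀^{\bar{+}†}$. Applying the converse involution to this inclusion and using $R₀^{\bar{+}††} = R₀^{\bar{+}}$ yields the reverse inclusion $R₀^{\bar{+}†} ⊆ R₀^{\bar{+}}$, whence $R₀^{\bar{+}} = R₀^{\bar{+}†}$, that is, $R₀^{\bar{+}}$ is symmetric.

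I expect the main obstacle to be the bookkeeping behind the auxiliary facts rather than any conceptual subtlety: one must verify with care that converse commutes with the strong-epi/mono factorisations computing images and with the unions and composites of \cref{def:relational:transitive:closure}, and that in the transitivity step the indices $m+n$ range exactly over the union defining $Q^{\bar{+}}$. The genuinely load-bearing step is the rewriting $R₀^{\bar{+}†} = (R₀^†)^{\bar{+}}$, which converts the hypothesis into the statement that $R₀$ is absorbed into the (transitive) closure of its converse, after which the conclusion is immediate.
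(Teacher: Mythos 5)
Your proof is correct and takes essentially the same route as the paper's source for this lemma: the paper defers the proof to \cite[Lemma~9.10]{HL}, and your load-bearing rewriting $R₀^{\bar{+}†} ≅ R₀^{†\bar{+}}$ is exactly \cite[Corollary~9.8]{HL}, which the paper itself invokes a few lines later, while your transitivity and union-distribution steps are supplied by Lemma~\ref{lem:seqcomp:cocont}\cref{item:seqcomp:cocont} and (equivalently) the action of Proposition~\ref{prop:trans:action}. The auxiliary facts you flag as bookkeeping do hold as stated, since converse is postcomposition with the swap isomorphism of $𝐙₀²$ and therefore preserves the (strong epi)-mono factorisations and unions involved.
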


   \begin{lemma}
     If a span $R$ is symmetric, in the sense that there is a morphism
     $R^† → R$ over $X₀²$, then so is its induced relation.
   \end{lemma}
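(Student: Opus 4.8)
The plan is to exploit functoriality of the (strong epi, mono) factorisation system, which exists in the presheaf category $\psh[𝕍𝕋]$, hence in the slice $\psh[𝕍𝕋]/X₀²$. Write $p∶ R → X₀²$ for the leg of the span and $σ∶ X₀² → X₀²$ for the swap isomorphism, so that the converse span $R^†$ has leg $σ ∘ p$, and a morphism $R^† → R$ over $X₀²$ is a map $f∶ R → R$ with $p ∘ f = σ ∘ p$. The induced relation is the mono part $\im(R) ↪ X₀²$ of the factorisation $R ↠ \im(R) ↪ X₀²$ of $p$.

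First I would observe that taking images commutes with converse. Indeed, if $p = m ∘ e$ with $e$ a strong epi and $m$ a mono, then $σ ∘ p = (σ ∘ m) ∘ e$ is again such a factorisation, because $σ$ is an isomorphism and hence $σ ∘ m$ is still a mono; thus $\im(R^†)$ is the subobject $σ ∘ m$, which is exactly the converse $\im(R)^†$ of the relation $\im(R)$.

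Next I would recall that image is functorial and lands in the slice over $X₀²$: any morphism of spans $g∶ S → T$ over $X₀²$ induces a morphism $\im(S) → \im(T)$ over $X₀²$, obtained as the diagonal fill-in in the square whose sides are the strong epi $S ↠ \im(S)$, the mono $\im(T) ↪ X₀²$, and the relevant composites. The fill-in exists and is unique by orthogonality of strong epis to monos, and it commutes with the legs into $X₀²$ precisely because those legs are monic — this is the same reasoning as in the proof of \cref{lem:images:flex}.

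Finally I would apply this functoriality to the given morphism $f∶ R^† → R$ over $X₀²$, producing a morphism $\im(R^†) → \im(R)$ over $X₀²$. Combined with the identification $\im(R^†) = \im(R)^†$ from the first step, this yields a morphism $\im(R)^† → \im(R)$ over $X₀²$, i.e.\ a witness that the induced relation $\im(R)$ is symmetric. The argument is entirely routine; the only point requiring a little care is checking that the induced map on images respects the structure maps into $X₀²$, which, as noted, follows from monicity.
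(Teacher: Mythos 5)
Your proposal is correct and follows essentially the same route as the paper: the paper's proof is exactly the orthogonality fill-in between the strong epi $R \twoheadrightarrow \im R$ and the mono $\im R \hookrightarrow X_0^2$, using that postcomposing the mono with the swap identifies $\im(R^\dagger)$ with $\im(R)^\dagger$. (Only minor quibble: the induced map on images lies over $X_0^2$ by the lower triangle of the lifting itself, not because the legs are monic.)
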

   \begin{proof}
     We proceed as in the following diagram.
     \begin{center}
       \hfill
\begin{tikzcd}[ampersand replacement=\&,baseline=(\tikzcdmatrixname-4-5.base)]
	R \&\&\&\&\&\& R \\
	{im R} \&\& {X²} \&\&\&\& {im R} \\
	\\
	\&\&\&\& {X^2}
	\arrow["s", from=1-1, to=1-7]
	\arrow["{⟨ π₁, π₂ ⟩}"'{pos=0.7}, from=1-7, to=4-5]
	\arrow["e", two heads, from=1-7, to=2-7]
	\arrow["m", hook, from=2-7, to=4-5]
	\arrow["{⟨π₂,π₁⟩}"{pos=0.6}, from=2-3, to=4-5]
	\arrow["{⟨ π₁, π₂ ⟩ }"{pos=0.7}, from=1-1, to=2-3]
	\arrow["e", two heads, from=1-1, to=2-1]
	\arrow["m", hook, from=2-1, to=2-3]
	\arrow["{⟨π₂,π₁⟩ ∘ m}"', hook, from=2-1, to=4-5]
	\arrow[curve={height=12pt}, dashed, from=2-1, to=2-7]
\end{tikzcd}
\qedhere
     \end{center}
   \end{proof}

   \begin{proof}[Proof of Proposition~\ref{prop:trans:is:sym}]
     By the lemma, it suffices to construct a morphism
     $∅^• → ∅^{•\bar{+}†}$.  Thus, by
     Proposition~\ref{prop:relwow:alt}, it suffices to endow
     $∅^{•\bar{+}†}$ with algebra structure for the endofunctor
     $S ↦ \im(\Fwow(S))$ on $𝐒𝐮𝐛(X₀²)$.
     For this, because $\Fwow$ is algebraic, it suffices to endow 
     $∅^{•\bar{+}†}$ with $\Fwow$-algebra structure.

     We first equip it with $({-};{∼_X})$-algebra structure.  We need
     to find a morphism $∅^{•\bar{+}†};{∼_X} → ∅^{•\bar{+}†}$ over
     $X₀²$, or equivalently by applying the involution $(-)^†$, a
     morphism ${∼_X}^†;∅^{•\bar{+}} → ∅^{•\bar{+}}$. We pick the
     composite
  $${∼_X}^†;∅^{•\bar{+}} → {∼_X};∅^{•\bar{+}} → ∅^{•};∅^{•\bar{+}}  → ∅^{•\bar{+}}\rlap{,}$$
  where
  \begin{itemize}
  \item the first morphism is symmetry of $∼_X$, 
  \item the second morphism is that of Lemma~\ref{lem:wow:basicprops}, 
  \item the last morphism is the action from Proposition~\ref{prop:trans:action}.
  \end{itemize}

  This leaves us with the task of equipping $∅^{•\bar{+}†}$ with
  algebra structure for the lifting of $Σ₀$ to $𝐒𝐮𝐛(X₀²)$, for which
  it suffices, by algebraicity of $Σ₀$, to equip it with algebra
  structure for the lifting of $Σ₀$ to $\psh[𝕍𝕋]/X₀²$, say
  $\bar{Σ}₀$. By \citet[Corollary~9.8]{HL}, we have
  $∅^{•\bar{+}†} ≅ ∅^{•†\bar{+}}$, and by \citet[Lemma~9.9]{HL},
  $∅^{•†\bar{+}}$ is the colimit of the chain
  \[
    X₀ → \im (∅^{•†}) ≅ \im (∅^{•†};X₀) → \im (∅^{•†};∅^{•†}) ≅ \im (∅^{•†};∅^{•†};X₀) → \im (∅^{•†};∅^{•†};∅^{•†}) → {…}
  \]
  in $\psh[𝕍𝕋]/X₀²$.  But $\bar{Σ}₀$ is algebraic, hence the forgetful
  functor $\bar{Σ}₀\alg → \psh[𝕍𝕋]/X₀²$ creates filtered colimits,
  hence in particular colimits of chains.  It thus suffices to lift
  the above chain to $\bar{Σ}₀\alg$.  Furthermore, because all objects
  of the chain are relations, they are subterminal, hence all
  morphisms will automatically lift to $\bar{Σ}₀\alg$ if the objects
  do.  Finally, the forgetful functor $\bar{Σ}₀\alg → \psh[𝕍𝕋]/X₀²$
  creates limits, and $∅^{•}$ possesses $Σ₀$-algebra structure by
  Lemma~\ref{lem:wow:basicprops}, hence so does $∅^{•†}$.  \end{proof}

   To conclude this section, we use the basic facts we just proved to
   reduce the main result to the fact that $∅^•$ is a flexible
   simulation.

   \begin{proposition}\label{prop:reduce2}
     Consider any syntactic signature $𝐝 = (Σ,(Γᵢ,dᵢ)_{i ∈ n})$, and
     suppose that $∅^•_𝐙$ is a flexible simulation.  Then \augmented
     bisimilarity on $𝐙$ is a congruence.
   \end{proposition}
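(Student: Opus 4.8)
The plan is to apply \cref{cor:reduce1} to $X = 𝐙$, taking as the required relation the relational transitive closure $R₀ ≔ ∅^{•\bar{+}}_𝐙$ of the relational Howe closure $∅^•_𝐙$ of the empty relation. By \cref{thm:Z}, the vertex object $V_𝐙 = S∅ = Σ₀^*(∅)$ is the initial $Σ₀$-algebra, so that \cref{lem:wow:basicprops}, \cref{prop:Wow:augmented}, and \cref{prop:trans:is:sym} all apply to $𝐙$. It then remains to check that $R₀$ is reflexive, contains $∼_𝐙^σ$, is a congruence, is \augmented, and that (in the sense of \cref{def:flexible}) it is a flexible bisimulation; \cref{cor:reduce1} then delivers the conclusion.

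First I would dispose of the properties inherited directly from $∅^•_𝐙$. By \cref{lem:wow:basicprops}, $∅^•_𝐙$ is reflexive and there is a morphism $∼_𝐙^σ → ∅^•_𝐙$ over $V_𝐙²$; as $∅^•_𝐙 ⊆ R₀$ by definition of the transitive closure, $R₀$ is likewise reflexive and contains $∼_𝐙^σ$. For congruence, I would reuse the construction in the proof of \cref{prop:trans:is:sym}, which equips $∅^{•\bar{+}†}_𝐙$ with $\Fwow$-algebra structure, hence in particular with $Σ₀$-algebra structure; since $R₀$ is symmetric by \cref{prop:trans:is:sym} it coincides, as a subobject of $V_𝐙²$, with its converse $∅^{•\bar{+}†}_𝐙$, so this yields a morphism $Σ₀(R₀) → R₀$, i.e.\ $R₀$ is a congruence. \Augmentedness I would obtain as in the proof of \cref{prop:Wow:augmented}: the inclusion $Γ(∅^•_𝐙,V_𝐙) ⊆ ∅^•_𝐙$ established there, combined with the span distributivity of \cref{lem:wdistrib} and left-cocontinuity of $Γ$, propagates along iterated span composition and, by left-cocontinuity again, along the union defining $R₀$, yielding $Γ(R₀,V_𝐙) ⊆ R₀$.

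The crux is to show that $R₀$ is a flexible bisimulation. By hypothesis $∅^•_𝐙$ is a flexible simulation, and although \cref{lem:spancomp:flex}, \cref{lem:images:flex}, and \cref{cor:filtered:flex:filtered} are phrased for flexible bisimulations, their proofs treat the two legs of a span separately and hence equally show that flexible \emph{simulations} over $𝐙²$ are closed under span composition, images, and filtered colimits. Since $R₀ = ⋃_{n>0}\im\bigl((∅^•_𝐙)^{;n}\bigr)$ is obtained from $∅^•_𝐙$ exactly by these three operations -- the inclusions realising the union as a filtered colimit being supplied by reflexivity of $∅^•_𝐙$ -- it follows that $R₀$ is a flexible simulation; here \cref{prop:simliftmax} and the comparison $R^⇑;S^⇑ → (R;S)^⇑$ from the proof of \cref{lem:spancomp:flex} ensure that these operations interact correctly with the cartesian lifting. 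Finally, $R₀$ is symmetric by \cref{prop:trans:is:sym}, so its right leg is, up to the converse isomorphism, its left leg; both legs are therefore functional flexible bisimulations, and $R₀$ is a flexible bisimulation. I expect this last step to be the main obstacle: one must verify that the one-sided closure arguments genuinely transfer from bisimulations to simulations and survive the passage through the cartesian lifting. With all five properties in hand, \cref{cor:reduce1} yields that $∼_𝐙^σ$ is a congruence.
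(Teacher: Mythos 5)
Your proposal is correct and follows essentially the same route as the paper: the same candidate relation $∅^{•\bar{+}}_𝐙$, the same appeal to Corollary~\ref{cor:reduce1}, Proposition~\ref{prop:trans:is:sym}, Lemma~\ref{lem:wow:basicprops} and Proposition~\ref{prop:Wow:augmented}, and the same closure argument (span composition, images, filtered colimits) for flexibility, which the paper merely packages into the intermediate Lemma~\ref{lem:rafss}. Your explicit remark that the closure lemmas, though stated for flexible bisimulations, apply leg-by-leg to flexible simulations is a point the paper's own proof of Lemma~\ref{lem:rafss} relies on without comment, so it is a welcome clarification rather than a deviation.
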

   \begin{remark}
     Let us recall that by Definition~\ref{def:flexible}, $R₀ → X₀²$
     in $\psh[𝕍𝕋]$ is a flexible simulation when its cartesian lifting
     $R₀^{⇑} → X²$ is.
   \end{remark}
   We will rely on the following lemmas.
\begin{lemma}\label{lem:colimitcreation:cancellation}
  Consider any commutative diagram of functors between locally small categories
  \begin{center}
    \diag{%
      𝐀 \& \& 𝐁 \\
      \& 𝐂 %
    }{%

      (m-1-1) edge[labela={U}] (m-1-3) %
      edge[labelbl={V}] (m-2-2) %
      (m-1-3) edge[labelbr={W}] (m-2-2) %
    }
  \end{center}
  If $V$ and $W$ create colimits of a certain shape $D$,
  and $W$ preserves them (typically if $𝐂$ has them), then $U$ creates
  them.
\end{lemma}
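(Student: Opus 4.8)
The plan is to unfold the definition of creation of colimits and chase the three hypotheses through the commuting triangle $V = W \circ U$. Recall that $U$ \emph{creates} colimits of shape $D$ when, for every diagram $F \colon D \to A$ and every colimit cocone $(b,\lambda)$ of $UF$ in $B$, there is a \emph{unique} cocone $(a,\bar\lambda)$ of $F$ in $A$ with $U(a,\bar\lambda) = (b,\lambda)$, and this $(a,\bar\lambda)$ is moreover a colimit cocone of $F$. So I would fix such an $F$ together with a colimit cocone $(b,\lambda)$ of $UF$, and produce the required lift.

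First I would push everything down to $C$. Since $W$ preserves colimits of shape $D$, the image $(Wb, W\lambda)$ is a colimit cocone of $W(UF) = VF$ in $C$. Then, because $V$ creates colimits of shape $D$ and $VF$ now has a known colimit, there is a unique cocone $(a,\nu)$ of $F$ in $A$ lifting $(Wb, W\lambda)$ along $V$, and it is a colimit cocone of $F$. This already produces the candidate colimit in $A$.

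The key remaining point---and the one place where creation, rather than mere preservation, by $W$ is needed---is to check that $(a,\nu)$ lifts the \emph{original} cocone $(b,\lambda)$ along $U$. The cocone $(Ua, U\nu)$ of $UF$ satisfies $W(Ua,U\nu) = (Va, V\nu) = (Wb, W\lambda)$, and $(b,\lambda)$ maps to the same cocone under $W$. Since $W$ creates colimits of shape $D$, the lift of the colimit cocone $(Wb, W\lambda)$ of $W(UF)$ is unique, forcing $(Ua, U\nu) = (b,\lambda)$; hence $U(a,\nu) = (b,\lambda)$, so $(a,\nu)$ is the desired colimit lift. Finally, uniqueness of the $U$-lift follows from uniqueness of the $V$-lift: any cocone $(a',\nu')$ of $F$ with $U(a',\nu') = (b,\lambda)$ satisfies $V(a',\nu') = W(b,\lambda) = (Wb, W\lambda)$, whence $(a',\nu') = (a,\nu)$ by creation along $V$.

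The argument is essentially bookkeeping in the definition of creation; the only subtlety I would emphasize is that the uniqueness clause of creation along $W$ is exactly what identifies the two cocones $(b,\lambda)$ and $(Ua,U\nu)$ of $UF$, so that both $V$ and $W$ creating---together with $W$ preserving---are genuinely used. As the parenthetical in the statement notes, the preservation hypothesis on $W$ is automatic whenever $C$ has the relevant colimits, which is the case in all our applications, where $C$ is a presheaf category.
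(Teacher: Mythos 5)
Your proof is correct and follows the same route as the paper's: preserve down along $W$, lift up along $V$'s creation, then use the uniqueness clause of creation along $W$ to identify $(Ua,U\nu)$ with $(b,\lambda)$, and derive uniqueness of the $U$-lift from that of the $V$-lift. The only difference is notational (cocones versus the paper's $D^\top$-diagram formulation).
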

\begin{proof}
  Consider any functor $J∶ D → 𝐀$ and colimiting cocone $K∶ D^⊤ → 𝐁$
  for $U∘J$.  Because $W$ preserves colimits of shape $D$, $W∘K$ is
  colimiting for $W∘U∘J$, hence because $V$ creates colimits, we find
  a unique lifting $J^↑$ such that $J^↑∘I = J$ and $V ∘ J^↑ = L ≔ W∘K$, as in the following diagram.
  \begin{center}
    \begin{tikzcd}[ampersand replacement=\&]
      D \&\& {𝐀} \&\& {𝐁} \\
      \\
      {D^⊤} \&\& {𝐂} \&\& {𝐂}
      \arrow["U", from=1-3, to=1-5]
      \arrow["I"', hook, from=1-1, to=3-1]
      \arrow["J", from=1-1, to=1-3]
      \arrow["K"'{pos=0.3}, curve={height=6pt}, from=3-1, to=1-5]
      \arrow["L"', from=3-1, to=3-3]
      \arrow["{J^↑}", dashed, from=3-1, to=1-3]
      \arrow["V"{pos=0.3}, from=1-3, to=3-3]
      \arrow["W", from=1-5, to=3-5]
      \arrow[Rightarrow, no head, from=3-3, to=3-5]
    \end{tikzcd}
  \end{center}
  But now $U ∘ J^↑$ and $K$ both are candidate liftings for the outer
  rectangle, so by uniqueness in the creation of colimits by $W$ they
  are equal, and thus $J^↑$ is a lifting for the original square
  $(J,K)$.

  Furthermore, any lifting for $(J,K)$ induces one for $(J,L)$, hence
  should be equal to $J^↑$, which proves uniqueness.

  Finally, $J^↑$ is colimiting because $V$ creates colimits of shape
  $D$.
\end{proof}

\begin{definition}
  Given a bifunctor $Γ∶ 𝐂² → 𝐂$ and an object $X$, a
  \alert{$(Γ,X)$-premodule} is an object $M$ equipped with an
  \alert{action}, i.e., a morphism $r∶ Γ(M,X) → M$.  A morphism of
  $(Γ,X)$-premodules is a morphism commuting with action.  We let $(Γ,X)\Mod$
  denote the category of $(Γ,X)$-premodules.
\end{definition}
\begin{terminology}
  When $Γ$ is clear from context, we often omit it and talk about
  $X$-premodules and $X\Mod$.
\end{terminology}
\begin{remark}
 An enhanced span $R→X²$ as in Definition~\ref{def:enhanced-span} is a span in the category of $X$-premodules.
\end{remark}
\begin{lemma}\label{lem:Xmod}
  If $Γ$ is left-cocontinuous and $𝐂$ is locally finitely presentable
  and regular, then the category $X\Mod$ is regular and the forgetful
  functor $X\Mod → 𝐂$ creates all limits and colimits, as well as
  image factorisations.
\end{lemma}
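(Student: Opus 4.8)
The plan is to recognise $X\Mod$ as a category of endofunctor algebras and then transport all the required structure from $𝐂$. Writing $F ≔ Γ(-,X)∶ 𝐂 → 𝐂$ for the endofunctor obtained by fixing the second argument of $Γ$, a $(Γ,X)$-premodule is precisely an $F$-algebra, so $X\Mod ≅ F\alg$ over $𝐂$ and the forgetful functor $U∶ X\Mod → 𝐂$ is the usual one. Since $Γ$ is left-cocontinuous, $F$ is cocontinuous; in particular $F$ is finitary, and, being coequaliser-preserving, it preserves regular (indeed all) epimorphisms. This last property is the one that will do the real work.

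I would first dispatch limits and colimits, which are standard. For \emph{any} endofunctor, $U∶ F\alg → 𝐂$ creates all limits existing in $𝐂$: given a diagram of algebras whose underlying diagram has limit $L$ with projections $π_j$, the maps $a_j ∘ F(π_j)∶ FL → A_j$ form a cone (using that each component is an algebra morphism) and so induce a unique algebra structure on $L$ making the $π_j$ algebra morphisms, which is limiting. As $𝐂$ is locally finitely presentable, hence complete, this creates all limits. Dually, because $F$ preserves all colimits, $U$ creates every colimit that $𝐂$ admits, and $𝐂$ is cocomplete, so $U$ creates all colimits.

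For image factorisations I would lift the (regular epi, mono) factorisation from $𝐂$, which exists as $𝐂$ is regular. Given an algebra morphism $h∶ (A,a) → (B,b)$, factor its underlying map as $A \xonto{e} I \xinto{m} B$. The square with left edge $F(e)$, top edge $e ∘ a$, bottom edge $b ∘ F(m)$, and right edge $m$ commutes, both composites equalling $h ∘ a = b ∘ F(h)$. Since $F$ preserves regular epimorphisms, $F(e)$ is a regular, hence strong, epimorphism, while $m$ is a monomorphism; orthogonality of strong epis to monos supplies a unique diagonal $r∶ FI → I$ making both triangles commute, and this $r$ is exactly an algebra structure on $I$ for which $e$ and $m$ are algebra morphisms. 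That $m$ is monic in $X\Mod$ follows because $U$ creates limits, hence is conservative and preserves and reflects monos. That $e$ is a regular epi in $X\Mod$ follows because $U$ creates colimits: $e$ is the coequaliser in $𝐂$ of its kernel pair, that kernel pair lifts to $X\Mod$, and its coequaliser computed in $X\Mod$ is carried by $U$ to $e$; by uniqueness of the created algebra structure this coequaliser is $(I,r)$ with underlying map $e$.

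Finally, regularity of $X\Mod$: finite limits and coequalisers of kernel pairs are created from $𝐂$, so they exist, and it only remains to check that regular epimorphisms are pullback-stable. I would prove that a morphism $f$ of $X\Mod$ is a regular epi iff $Uf$ is one. The forward direction holds since $U$ preserves colimits, hence coequalisers; conversely, if $Uf$ is a regular epi, factoring $f = m ∘ e$ as above forces $Um$ to be an isomorphism (a strong epi factoring through a mono), hence $m$ to be an isomorphism by conservativity of $U$, so $f$ is a regular epi. Pullbacks being created by $U$ and regular epis being pullback-stable in the regular category $𝐂$, pullback-stability transfers to $X\Mod$. The one genuinely delicate point, and the expected main obstacle, is reconciling the two descriptions of the algebra structure on the image — the orthogonal lift $r$ and the structure arising from creation of the coequaliser — so as to conclude \emph{simultaneously} that $e$ is a regular epi and $m$ a mono in $X\Mod$; this hinges entirely on $F$ preserving regular epimorphisms, which is exactly where left-cocontinuity of $Γ$ is used.
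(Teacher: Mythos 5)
Your proposal is correct and follows essentially the same route as the paper: identify $X\Mod$ as the category of algebras for the cocontinuous endofunctor $Γ({-},X)$, deduce creation of all limits and colimits, lift image factorisations by lifting the kernel pair and its created coequaliser, and transfer pullback-stability of regular epis from $𝐂$. The only (harmless) local difference is that you first put the algebra structure on the image object by an orthogonality/diagonal-fill argument using that $Γ({-},X)$ preserves regular epis, whereas the paper obtains that structure directly as the created coequaliser of the lifted kernel pair and recovers $m$ as the mediating morphism; the reconciliation you flag as delicate is automatic from the uniqueness clause in the creation of colimits.
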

\begin{proof}
  Creation of limits and colimits follows easily from the fact that
  $X\Mod$ is the category of algebras for the cocontinuous endofunctor
  $Γ({-},X)$.

  In particular, $X\Mod$ is complete and cocomplete, hence regularity
  reduces to showing that regular epis are stable under pullback.
  
  Let us first appeal to~\cite[§1.6.5]{HL} for definitions and
  preliminary results about images. Notably, in a locally finitely
  presentable category, (strong epi)-mono factorisations yield image
  factorisations, and union may be computed by cotupling followed by
  (strong epi)-mono factorisation.

  Let us then consider any pullback square
  \begin{center}
    \Diag{%
      \stdpbk %
    }{%
      A \& B \\
      C \& D
    }{%
      (m-1-1) edge[labela={u}] (m-1-2) %
      edge[labell={v}] (m-2-1) %
      (m-2-1) edge[labelb={g}] (m-2-2) %
      (m-1-2) edge[labelr={f}] (m-2-2) %
    }
  \end{center}
  in $X\Mod$, with $f$ a regular epi, and show that $v$ must also be a
  regular epi.  By creation, hence preservation, of limits and
  colimits, the given pullback square is also a pullback in $𝐂$ and
  $f$ is a regular epi there too. So by regularity of $𝐂$, $v$ is a
  regular epi in $𝐂$. Equivalently, it is a coequaliser of its kernel
  pair. But by creation of limits the kernel pair uniquely lifts to a
  kernel pair in $X\Mod$, and by creation of colimits $v$ is a
  coequaliser there too. This shows that $X\Mod$ is regular.


  Finally, let us prove that the forgetful functor creates image
  factorisations.  Given $A,C ∈ X\Mod$, let us consider any image
  factorisation $A \xarrow[onto]{e} B \xarrow[into]{m} C$ in $𝐂$ of a
  morphism $f∶ A → C$ in $X\Mod$, i.e., $e$ is a regular epi and $m$
  is a mono in $𝐂$.  In this situation, $e$ is the coequaliser of its
  kernel pair in $𝐂$, but, as we just saw, this kernel pair lifts to a
  kernel pair in $X\Mod$, whose coequaliser is created by the
  forgetful functor, hence $e$ is a coequaliser, hence a regular epi
  in $X\Mod$. Finally, $f$ also coequalises the kernel pair, hence the
  existence of a unique mediating morphism $B → C$ in $X\Mod$, which
  must be $m$ by faithfulness of the forgetful functor $X\Mod →
  𝐂$. Thus, $m$ is also a morphism in $X\Mod$. Finally, its monicity
  follows again by faithfulness of the forgetful functor.
\end{proof}
\begin{lemma}\label{lem:augmented:images}
  If $𝐂$ is regular, then \augmented spans are stable under
  images, that is if $p∶ R → X²$ is \augmented, then so is
  $\im(p)∶ \im(R) ↪ X²$.
\end{lemma}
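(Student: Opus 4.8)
The plan is to read the statement through the remark preceding Lemma~\ref{lem:Xmod}: \augmented spans $R → X²$ are exactly spans in the category $X\Mod$ of $(Γ,X)$-premodules, where $X²$ carries the canonical premodule structure $Γ(X²,X) → Γ(X,X)² → X²$ (which is its product structure in $X\Mod$, since the forgetful functor creates limits). Under this identification an \augmented span is simply a morphism $p∶ R → X²$ in $X\Mod$, and forming its image amounts to taking the (regular epi)-mono factorisation of $p$ in $𝐂$. Since Lemma~\ref{lem:Xmod} asserts that the forgetful functor $X\Mod → 𝐂$ creates image factorisations, this factorisation lifts uniquely to $X\Mod$; in particular $\im(R)$ acquires a premodule structure and the monic part $\im(R) ↪ X²$ is a premodule morphism, which is precisely the assertion that $\im(R) → X²$ is \augmented.

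For the reader's benefit I would also spell out the underlying lifting, which mirrors the final diagram of Proposition~\ref{prop:Wow:augmented}. First I would record the \augmentedness of $R$ as an action $r∶ Γ(R,X) → R$ for which $p∘r = a∘Γ(p,X)$, where $a∶ Γ(X²,X) → X²$ denotes the canonical action on $X²$. Next I would take an image factorisation $R \xto{e} \im(R) \xto{m} X²$ of $p$ in $𝐂$, with $e$ a regular epi and $m$ a mono. The decisive observation is that, $Γ$ being left-cocontinuous, it preserves colimits---hence regular epis---in its first argument, so $Γ(e,X)∶ Γ(R,X) → Γ(\im(R),X)$ is again a regular epi. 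I would then consider the square with left leg $Γ(e,X)$, top edge $e∘r$, right leg $m$, and bottom edge $a∘Γ(m,X)∶ Γ(\im(R),X) → X²$; its commutativity is a short chase, both composites reducing to $p∘r$ via $Γ(p,X) = Γ(m,X)∘Γ(e,X)$ and the \augmentedness equation for $R$. Since $𝐂$ is regular, regular epis are orthogonal to monos, so the square admits a unique diagonal filler $\bar{r}∶ Γ(\im(R),X) → \im(R)$, and its lower triangle $m∘\bar{r} = a∘Γ(m,X)$ exhibits $m$ as a premodule morphism, as required.

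The only genuine obstacle is the preservation of regular epis by $Γ(e,X)$: this is exactly where left-cocontinuity of $Γ$ in its first argument is essential, since a general bifunctor need not send $e$ to an epi, and it is complemented by regularity of $𝐂$, which supplies both the image factorisation and the orthogonal lifting. Everything else is formal.
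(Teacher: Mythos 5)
Your proposal is correct and takes essentially the same route as the paper, whose entire proof is the single citation ``By Lemma~\ref{lem:Xmod} (creation of image factorisations)'' --- exactly your first paragraph. Your second paragraph merely unwinds that creation statement into an explicit orthogonal-lifting argument (correctly upgrading the ``all epis are strong in presheaf categories'' step of Proposition~\ref{prop:Wow:augmented} to ``$Γ(e,X)$ is a \emph{regular} epi, hence orthogonal to monos'' as required in a general regular category), which is a sound elaboration rather than a different proof.
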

\begin{proof}
  By Lemma~\ref{lem:Xmod} (creation of image factorisations).
\end{proof}

\begin{lemma}\label{lem:augmented:coprod}
  The forgetful functor $X\Mod/X² → 𝐂/X²$ creates colimits. Hence, in
  particular (by cocompleteness of $𝐂/X²$), \augmented spans are
  closed under all colimits in $𝐂/X²$.
\end{lemma}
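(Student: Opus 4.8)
The plan is to reduce everything to Lemma~\ref{lem:Xmod}, which already contains the substance (that the forgetful functor $G∶ X\Mod → 𝐂$ creates all colimits), together with the formal fact that any projection out of a slice category creates colimits, as already used in the proof of Lemma~\ref{lem:diplopic:span:colims}. Concretely, I would set up the commuting triangle
\begin{center}
  \diag{
    X\Mod/X² \& \& 𝐂/X² \\
    \& 𝐂
  }{
    (m-1-1) edge[labela={U}] (m-1-3)
    edge[labelbl={V}] (m-2-2)
    (m-1-3) edge[labelbr={W}] (m-2-2)
  }
\end{center}
in which $U$ is the forgetful functor under study, $W$ is the slice projection, and $V = W ∘ U$ sends any enhanced span $R → X²$ to its underlying object $R ∈ 𝐂$; the goal is then to apply the cancellation criterion of Lemma~\ref{lem:colimitcreation:cancellation}.

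To invoke that lemma I would check its three hypotheses. First, $W∶ 𝐂/X² → 𝐂$ creates colimits, being a slice projection, and it preserves them because $𝐂$ is cocomplete (it is locally finitely presentable). Second, $V$ creates colimits: factoring it as $X\Mod/X² → X\Mod \xto{G} 𝐂$, the first leg is again a slice projection and hence creates colimits, while $G$ creates colimits by Lemma~\ref{lem:Xmod}; and a composite of colimit-creating functors creates colimits, since a colimiting cocone in $𝐂$ lifts uniquely and colimitingly first through $G$ and then through the slice projection. Lemma~\ref{lem:colimitcreation:cancellation} then gives that $U$ creates colimits, which is the first assertion.

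For the ``hence'' clause, I would observe that $𝐂/X²$ is cocomplete (as $𝐂$ is), so that creation of colimits by $U$ means precisely that every diagram of enhanced spans admits a colimit whose underlying span is the colimit computed in $𝐂/X²$, and that this colimit is again enhanced. Since, by the remark following Definition~\ref{def:enhanced-span}, enhanced spans are exactly the objects of $X\Mod/X²$, this is the statement that enhanced spans are closed under all colimits in $𝐂/X²$.

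I do not anticipate a real obstacle here: the entire difficulty has been discharged in Lemma~\ref{lem:Xmod}, and what remains is routine two-out-of-three bookkeeping for creation of colimits. The only points that require a little attention are verifying that the triangle commutes strictly and that the intermediate functors genuinely \emph{create} colimits rather than merely reflect or preserve them; routing the argument through the explicit cancellation lemma, rather than asserting the conclusion directly, is what keeps this honest.
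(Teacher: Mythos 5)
Your proposal is correct and follows essentially the same route as the paper: the paper also forms the commuting square relating $X\Mod/X^2$, $𝐂/X^2$, $X\Mod$, and $𝐂$, notes that both slice projections and the bottom functor (by Lemma~\ref{lem:Xmod}) create colimits, that $𝐂/X^2 → 𝐂$ preserves them by cocompleteness, and concludes via Lemma~\ref{lem:colimitcreation:cancellation}. Your extra care about the composite $V$ creating colimits and about the ``hence'' clause only makes explicit what the paper leaves implicit.
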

\begin{proof}
  Consider the following commutative diagram in $𝐂𝐀𝐓$.
  \begin{center}
    \begin{tikzcd}[ampersand replacement=\&]
      {X\Mod/X²} \&\& {𝐂/X²} \\
      X\Mod \&\& {𝐂}
      \arrow[from=1-1, to=1-3]
      \arrow[from=1-1, to=2-1]
      \arrow[from=2-1, to=2-3]
      \arrow[from=1-3, to=2-3]
    \end{tikzcd}    
  \end{center}
  Colimits are created by both (vertical) projection functors, and
  also by the bottom functor by Lemma~\ref{lem:Xmod}.
  Furthermore, $𝐂$ being cocomplete, the projection functor $𝐂/X² → 𝐂$
  preserves all colimits, hence by
  Lemma~\ref{lem:colimitcreation:cancellation} the top functor
  creates them.
\end{proof}

   \begin{lemma}\label{lem:rafss}
     For any syntactic signature $𝐝 = (Σ,(Γᵢ,dᵢ)_{i ∈ n})$ and
     $X ∈ σ(𝐝)\Trans$, if $R₀ ↪ X₀²$ in $\psh[𝕍𝕋]$ is a reflexive,
     \augmented flexible simulation relation, then so is
     $R₀^{\bar{+}}$.
   \end{lemma}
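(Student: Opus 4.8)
The plan is to verify separately the four defining properties of $R_0^{\bar{+}}$—that it is a relation, reflexive, \augmented, and a flexible simulation—exploiting the description $R_0^{\bar{+}} = \bigcup_{n>0}\im(R_0^{;n})$ of \cref{def:relational:transitive:closure} together with the closure properties already established. Being a union of images, $R_0^{\bar{+}}$ is a relation, and since $R_0 = \im(R_0^{;1})$ occurs in the union, reflexivity of $R_0$ yields reflexivity of $R_0^{\bar{+}}$. Reflexivity also supplies a section $X_0 → R_0$ of the diagonal, hence inclusions $\im(R_0^{;n}) ↪ \im(R_0^{;n+1})$; thus the $\im(R_0^{;n})$ form a chain and $R_0^{\bar{+}}$ is its filtered colimit, a fact I will use below.

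For \augmentedness, I would use that an \augmented span over $X_0$ is, by \cref{def:enhanced-span} (see the remark before \cref{lem:Xmod}), exactly a span in the premodule category $X_0\Mod$. By \cref{lem:Xmod} the forgetful functor $X_0\Mod → \psh[𝕍𝕋]$ creates limits, colimits, and image factorisations, and $X_0\Mod$ is regular; hence span composition (computed as a pullback), images, and unions of \augmented spans are again \augmented, the last two by \cref{lem:augmented:images,lem:augmented:coprod}. Applying these in turn shows successively that each $R_0^{;n}$, then each $\im(R_0^{;n})$, and finally $R_0^{\bar{+}}$ is \augmented.

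The substantive part is flexibility, i.e.\ that the cartesian lifting $(R_0^{\bar{+}})^⇑$ is a flexible simulation. Here I would assemble three closure results for flexible simulations. Since $X ∈ σ(𝐝)\Trans$ is non-diplopic, I first embed it diplopically with distinguished vertices $D = V = X_0$, so that a relation $R_0 → X_0^2$ becomes a span $R_0 → X_{D,V}^2$ of the shape handled by the appendix lemmas. Then: closure under span composition is the content of \cref{lem:spancomp:flex}, whose proof establishes that the first-projection square is a pointwise weak pullback and so applies to a single leg, giving that each $R_0^{;n}$ is a flexible simulation; closure under images (\cref{lem:images:flex}, again read on the single leg) gives that each $\im(R_0^{;n})$ is one; and closure under filtered colimits (\cref{cor:filtered:flex:filtered}) applied to the chain above gives that $R_0^{\bar{+}} = \colim_n \im(R_0^{;n})$ is one.

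The one delicate point, which I expect to be the main obstacle, is the passage between the slice $\psh[𝕍𝕋]/X_0^2$—where the composites, images, and union defining $R_0^{\bar{+}}$ live—and the lifted setting in which flexibility is tested. For composition and images this interplay is already internalised in the statements of \cref{lem:spancomp:flex,lem:images:flex}. For the colimit step I must check that cartesian lifting preserves the filtered colimit $\colim_n \im(R_0^{;n})$: the edge component of $R_0^⇑$ is computed, via \cref{lem:fib}, as a pullback involving $Δ(R_0)$, the functor $Δ$ is finitary by \cref{prop:algra}, and colimits are universal in the presheaf topos, so the lifting commutes with the filtered colimit and $\colim_n (\im(R_0^{;n}))^⇑ ≅ (R_0^{\bar{+}})^⇑$. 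The desired conclusion for $R_0^{\bar{+}}$ then follows from \cref{cor:filtered:flex:filtered}, completing the verification that $R_0^{\bar{+}}$ is a reflexive, \augmented flexible simulation relation.
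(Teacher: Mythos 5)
Your proof is correct and follows essentially the same route as the paper's: reflexivity is immediate, \augmentedness follows from closure of \augmented spans under composition, images and colimits, and flexibility is obtained by presenting $R_0^{\bar{+}}$ as a filtered colimit of the chain of $\im(R_0^{;n})$ and applying \cref{lem:spancomp:flex,lem:images:flex,cor:filtered:flex:filtered} in turn. The only (harmless) divergences are that the paper justifies closure of \augmentedness under span composition via \cref{lem:wdistrib} rather than creation of limits by the premodule forgetful functor, and that your explicit check that cartesian lifting commutes with the filtered colimit is left implicit in the paper.
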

   \begin{proof}
     Reflexivity is clear. For \augmentedness, we have seen in
     Lemmas~\ref{lem:augmented:images} and~\ref{lem:augmented:coprod}
     that \augmented spans are closed under images and coproducts.
     Furthermore, closedness under span composition follows directly
     by Lemma~\ref{lem:wdistrib}.  Finally, in order to show that
     $R₀^{\bar{+}}$ is a flexible simulation, we adopt the
     characterisation of \citet[Lemma~9.9]{HL}, by which
     $R₀^{\bar{+}}$ is the colimit of the chain
     \[
       X₀ → \im (R₀) ≅ \im (R₀;X₀) → \im (R₀;R₀) ≅ \im (R₀;R₀;X₀) → \im (R₀;R₀;R₀) → {…}
     \]
     in $\psh[𝕍𝕋]/X₀²$.  By Corollary~\ref{cor:filtered:flex:filtered}, it suffices
     to show that each $\im (R₀^{{;}n})$ is a flexible simulation.  By
     Lemma~\ref{lem:images:flex}, it further suffices to show that
     each $R₀^{{;}n}$ is a flexible simulation.  By induction and
     Lemma~\ref{lem:spancomp:flex}, it finally suffices to show that
     $R₀$ is a flexible simulation, which it is by hypothesis.
   \end{proof}
   
   \begin{proof}[Proof of Proposition~\ref{prop:reduce2}]
     By hypothesis $∅^•$ is a flexible simulation. It is also
     \augmented by Proposition~\ref{prop:Wow:augmented}. Let now
     $R₀ ≔ ∅^{•\bar{+}}$, which is again a flexible \augmented
     simulation by Lemma~\ref{lem:rafss}.
     By Proposition~\ref{prop:trans:is:sym}, $R₀$ is moreover
     symmetric. But any symmetric simulation is in fact a
     bisimulation, so $R₀$ is a flexible \augmented bisimulation.
     Furthermore, $R₀$ contains $∼_𝐙$ by
     Lemma~\ref{lem:wow:basicprops}\cref{item:wow:contains:bisim}, and
     is a congruence by Lemma~\ref{lem:wow:basicprops}\cref{item:alg}.
     We thus conclude by Corollary~\ref{cor:reduce1}.
   \end{proof}
   
\subsection{The key lemma}
We at last introduce the key lemma, which will directly lead us to a
proof of Theorem~\ref{thm:main}.

\begin{lemma}\label{lem:key}
  For any syntactic signature $𝐝 = (Σ,(Γᵢ,dᵢ)_{i ∈ n})$, if $Σ₁$
  preserves functional flexible bisimulations, then the cartesian
  lifting $∅^{•⇑}_𝐙$ of $∅^•_𝐙$ is a flexible simulation.
\end{lemma}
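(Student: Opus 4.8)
The plan is to prove Howe's fundamental lemma in the present abstract setting, namely that the Howe closure, lifted to an algebraic transition system, flexibly simulates $𝐙$. First I would reduce from the proof-irrelevant closure to the proof-relevant one. Since $∅^•$ is by definition the image of $∅^⊙$, the comparison $∅^{⊙⇑} → ∅^{•⇑}$ is epic on vertices, so \cref{lem:bisim:epi} lets me transfer the conclusion: it suffices to show that $∅^{⊙⇑}$ is a flexible simulation, equivalently, by \cref{def:flexible}, that the span $∅^⊙ → V_𝐙²$ is one. Working with $∅^⊙$ is convenient because, by the earlier results, it carries all the structure I shall need: it is a $Σ₀$-algebra with an action ${∅^⊙};{∼_𝐙} → ∅^⊙$, is reflexive, and contains $∼_𝐙$ (\cref{lem:wow:basicprops}), and is moreover an $ST$-algebra over $V_𝐙²$, in particular enhanced (\cref{prop:Wow:augmented}).

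By \cref{prop:wow:initialchain}, $∅^⊙$ is the (initial, hence free on $∅$) algebra for the Howe functor $R ↦ Σ₀(R) + (R;{∼_𝐙})$, so by Lambek's lemma $Σ₀(∅^⊙) + ({∅^⊙};{∼_𝐙}) ≅ ∅^⊙$. I would then verify the defining weak-pullback condition of \cref{prop:flexible-pw-pbk} for the left leg $∅^{⊙⇑} → 𝐙$ by induction on the generation of transitions of $𝐙$, which is legitimate because $𝐙$ is the initial vertical $\check{Σ}₁$-algebra (\cref{thm:Z}); abstractly, this induction is realised by equipping a transition system recording ``matched transitions'' with $\check{Σ}₁$-algebra structure and invoking initiality. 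Given a transition of $𝐙$ out of a source $x$ that is ${∅^⊙}$-related to some $y$, with ${∅^⊙}$-related labels, the fixpoint isomorphism above splits the matching into two cases according to how $x$ relates to $y$.

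The first case, where $x \mathrel{({∅^⊙};{∼_𝐙})} y$, is exactly the composition of a flexible simulation with a rigid one: applying \cref{lem:flex:rigid:compo} with $S_0 = {∼_𝐙}$ — a genuine bisimulation on $V_𝐙$, hence lifting to a rigid bisimulation on $𝐙$ — and with $ρ$ the action ${∅^⊙};{∼_𝐙} → ∅^⊙$, yields that $\im({∅^⊙};θ{∼_𝐙})$, and hence $\im({∅^⊙};{∼_𝐙})^⇑$, is a flexible simulation, so the outgoing transition is matched.

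The crux, and the step I expect to be the main obstacle, is the second case: a transition out of a constructor-headed source $x = σ₀(\vec a)$, with $\vec a$ related to the subterms of $y = σ₀(\vec b)$ and related labels, must be matched by a transition from $σ₀(\vec b)$ with ${∅^⊙}$-related target. This is where both remaining hypotheses are used. Structuralness — recorded by the diplopic presentation $σ\Trans_𝟚$, whose conclusions have a distinguished head constructor as source — guarantees that such a transition is produced by a rule whose premises are transitions of the subterms together with label conditions; the inductive hypothesis supplies matchings for these premises; and feeding them through $Σ₁$ and invoking the hypothesis that $Σ₁$ preserves functional flexible bisimulations (transported between $σ\Trans$ and $σ\Trans_𝟚$ along the reflection $ρ\Trans ⊣ ι\Trans$ of \cref{prop:iota:rho:reflection}) transports the matching of the premises to a matching of the conclusion fired from $σ₀(\vec b)$. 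I expect the delicate points to be: interleaving the induction on transitions with the fixpoint structure of $∅^⊙$ so that the inductive hypothesis is available precisely on the premises; recognising the ``rule fires from $σ₀(\vec b)$'' step purely as an instance of preservation of functional flexible bisimulations, using \cref{prop:flexible-pw-pbk} to read off the required matchings as pointwise weak pullbacks; and checking that the functors in play preserve the epimorphisms and (weak) pullbacks used, which is where algebraicity of $𝐝$ enters via \cref{prop:algra,prop:algra2} and \cref{lem:wpbk}. Once both cases are discharged, the left leg of $∅^{⊙⇑}$ is a functional flexible bisimulation, hence $∅^{⊙⇑}$, and therefore $∅^{•⇑}$, is a flexible simulation, as required.
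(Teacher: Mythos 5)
Your overall strategy is the paper's: reduce to the proof-relevant closure $∅^⊙$, realise the induction on derivations of transitions of $𝐙$ through initiality of $𝐙$ as a vertical $\check{Σ}₁$-algebra, split according to the two summands $Σ₀(∅^⊙) + (∅^⊙;{∼})$ of the Howe functor, discharge the constructor case with preservation of functional flexible bisimulations, and the composite case by composing with the rigid bisimulation $∼$. These are exactly the ingredients the paper assembles.

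There is, however, a genuine gap at precisely the point you flag as the main obstacle, and your first case is circular as written. You invoke \cref{lem:flex:rigid:compo} with $R$ the lift of $∅^⊙$ and $S₀ = {∼_𝐙}$; but that lemma takes as \emph{hypothesis} that $R$ is already a flexible simulation, which is the statement of \cref{lem:key} itself. The classical escape is that a transition from $x$ with $x \mathrel{∅^⊙} z ∼ y$ is first matched through $z$ by an inductive hypothesis and only then through $y$ using that $∼$ is a bisimulation; so the induction must be on the derivation depth of the \emph{transition}, with the Howe-witness analysis nested inside it, and the ``interleaving'' you defer is the crux rather than a detail. The paper resolves it by not inducting on Howe witnesses at all: it works in a weighted-limit category $𝐒𝐩𝐚𝐧/∅^⊙$ of spans $\check{Σ}₁ⁿ(𝐙₀) ← Rⁿ → 𝐙$ whose vertex part is constantly $∅^⊙$, and defines a successor endofunctor $F$ sending the edge object $X₁$ to $Σ₁(X)₁;(∼^*)^↑$ --- a rule application to already-matched premises followed by a trailing bisimilarity step, with source object $Σ₀^?(∅^⊙);∼^*$ covering both summands at once. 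A single lemma, ``$F$ preserves flexible simulations,'' then packages your two cases as a three-square pointwise-weak-pullback decomposition (one square for $Σ₁$-preservation, one for $∼^*$ being a bisimulation, one for the fixpoint and reflexivity bookkeeping on sources and labels), and the passage to all of $𝐙$ is by the colimit of the initial $F$-chain, which requires the separately proved closure of flexible bisimulations under filtered colimits (\cref{lem:filtered:flex:filtered}) before concluding with \cref{lem:bisim:epi}. Your alternative realisation via a subalgebra of ``matched transitions'' could probably be made to work, but you would still have to build the trailing $;{∼}$ into the closure condition to break the circularity, and to verify that the universally quantified matching predicate is a vertical $\check{Σ}₁$-subalgebra --- which is the same verification the paper performs for $F$.
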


Before proving the lemma, let us prove the main theorem,
as promised:
\begin{proof}[Proof of Theorem~\ref{thm:main}]  By
  Proposition~\ref{prop:reduce2}, it suffices to prove that $∅^•_𝐙$ is a
  flexible simulation, which is the case by Lemma~\ref{lem:key}.
\end{proof}

The rest of this section is a proof of Lemma~\ref{lem:key}.

\begin{notation}
  We abbreviate $∅^⊙_𝐙$ to $∅^⊙$ and $∼^{σ(𝐝)}_𝐙$ to $∼$.
\end{notation}

In order to prove that $∅^•_𝐙$ is a simulation, it suffices
to prove that $∅^⊙_𝐙$ is, by Lemma~\ref{lem:wpbk}.

Briefly, we will construct an $ω$-chain of flexible simulations of the
form
$$\check{Σ}₁ⁿ(𝐙₀) ← Rⁿ → 𝐙\rlap{,}$$
whose projection to $\psh[𝕍𝕋]$ is the constant chain on
\begin{equation}
𝐙₀ ← ∅^⊙ → 𝐙₀\rlap{.}\label{eq:Wowspan}
\end{equation}
By construction, the colimit of this chain will be a flexible
simulation
$$𝐙 ← R^∞ → 𝐙$$
with projection
$$𝐙₀ ← ∅^⊙ → 𝐙₀\rlap{,}$$
which entails by Lemma~\ref{lem:bisim:epi} that $∅^⊙$ is a flexible
simulation as desired.

For this, let us construct a category whose objects are spans of a similar form.
\begin{definition}
  Let $𝐒𝐩𝐚𝐧/∅^⊙$ denote the limit of the diagram
  $$\psh[𝔼𝕋]/Δ𝐙₀ \xot{\psh[𝔼𝕋]/Δπ₁} \psh[𝔼𝕋]/Δ∅^⊙ \xto{\psh[𝔼𝕋]/Δπ₂} \psh[𝔼𝕋]/Δ𝐙₀ \xot{𝐙} 1$$
  weighted by
    $$𝟚 \xot{0} 1 \xto{0} 𝟚 \xot{1} 1$$
\end{definition}
\begin{remark}
  A weighted cone from some category $A$ is thus a diagram of the form
  \begin{center}
\begin{tikzcd}[ampersand replacement=\&]
	\&\& A \&\& 1 \\
	\\
	{\psh[𝔼𝕋]/Δ𝐙₀} \&\& {\psh[𝔼𝕋]/Δ∅^⊙} \&\& {\psh[𝔼𝕋]/Δ𝐙₀}
	\arrow["{\psh[𝔼𝕋]/Δπ₁}", from=3-3, to=3-1]
	\arrow["{\psh[𝔼𝕋]/Δπ₂}"', from=3-3, to=3-5]
	\arrow[""{name=0, anchor=center, inner sep=0}, "{𝐙}", from=1-5, to=3-5]
	\arrow[""{name=1, anchor=center, inner sep=0}, "Y"', from=1-3, to=3-1]
	\arrow[""{name=2, anchor=center, inner sep=0}, "X"{pos=0.8}, from=1-3, to=3-3]
	\arrow["{!}", from=1-3, to=1-5]
	\arrow[shorten <=8pt, shorten >=8pt, Rightarrow, from=2, to=1]
	\arrow[shorten <=16pt, shorten >=16pt, Rightarrow, from=2, to=0]
\end{tikzcd}
  \end{center}

  Hence, objects of the weighted limit are spans of the form $Y ← X → 𝐙$
  over~\eqref{eq:Wowspan}, and a morphism from such a span to some
  span $Y' ← X' → 𝐙$ is a pair $(g∶ Y₁ → Y'₁, f∶ X₁ → X'₁)$ of
  morphisms in $\psh[𝔼𝕋]$ making the following diagram commute.
  \begin{center}
\begin{tikzcd}[ampersand replacement=\&]
	{Y₁} \&\& {X_1} \&\& {𝐙₁} \\
	\& {Y'₁} \&\& {X'₁} \\
	{Δ𝐙₀} \&\& {Δ∅^⊙} \&\& {Δ𝐙₀}
	\arrow[from=1-3, to=1-1]
	\arrow["f", from=1-3, to=2-4]
	\arrow["g", from=1-1, to=2-2]
	\arrow[from=1-1, to=3-1]
	\arrow[from=2-2, to=3-1]
	\arrow[from=1-3, to=3-3]
	\arrow[from=2-4, to=3-3]
	\arrow[from=3-3, to=3-1]
	\arrow[from=1-3, to=1-5]
	\arrow[from=2-4, to=1-5]
	\arrow[from=1-5, to=3-5]
	\arrow[from=3-3, to=3-5]
	\arrow[from=2-4, to=2-2, fore]
\end{tikzcd}    
  \end{center}
\end{remark}

\begin{proposition}
  The forgetful functor to $\psh[𝔼𝕋]²$ mapping any span
  $Y ← X → 𝐙$ to $(Y₁,X₁)$ creates colimits and connected limits.
\end{proposition}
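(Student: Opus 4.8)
The plan is to unfold the weighted limit explicitly and then verify creation directly, in the spirit of \cref{lem:diplopic:colims,lem:diplopic:span:colims}. Unfolding the definition, an object of $𝐒𝐩𝐚𝐧/∅^⊙$ amounts to a pair $(Y₁,X₁)$ of presheaves in $\psh[𝔼𝕋]$ together with border maps $p∶X₁→Δ∅^⊙$ and $q∶Y₁→Δ𝐙₀$ and legs $f∶X₁→Y₁$, $g∶X₁→𝐙₁$ subject to the two equations $q∘f=Δπ₁∘p$ and $∂_𝐙∘g=Δπ₂∘p$, where $𝐙₁$, $∂_𝐙∶𝐙₁→Δ𝐙₀$, $Δ𝐙₀$ and $Δ∅^⊙$ are fixed. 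A morphism is a pair $(g∶Y₁→Y'₁,f∶X₁→X'₁)$ commuting with all of this structure, and the functor $U$ under consideration retains only $(Y₁,X₁)$. The point is thus that the remaining data — two borders into fixed objects, one leg between the two components, and one leg into the fixed object $𝐙₁$ — can be created from a (co)limit of $(Y₁,X₁)$ by universal property.

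For colimits, I would start from a diagram $D∶J→𝐒𝐩𝐚𝐧/∅^⊙$ and a colimit $(\bar Y₁,\bar X₁)$ of $UD$ in $\psh[𝔼𝕋]²$, computed pointwise. The families $p^{(j)}$ and $q^{(j)}$ are cocones into the fixed objects $Δ∅^⊙$, $Δ𝐙₀$, hence induce unique borders $\bar p∶\bar X₁→Δ∅^⊙$ and $\bar q∶\bar Y₁→Δ𝐙₀$; the legs $f^{(j)}$ induce $\bar f∶\bar X₁→\bar Y₁$ by functoriality of colimits, and the $g^{(j)}$, forming a cocone over the constant object $𝐙₁$, induce $\bar g∶\bar X₁→𝐙₁$. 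Each of the two equations holds after precomposition with every colimit coprojection, hence holds on the nose by uniqueness; the same uniqueness shows that this lift is the only one over the given colimit and that the lifted cocone is colimiting. Thus $U$ creates colimits.

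For connected limits the argument is formally dual, and this is where the sole genuine point of the proof lies — and the reason only connected limits are created. Given a connected $J$ and a limit $(\bar Y₁,\bar X₁)$ of $UD$ with projections $π_j$, one must produce e.g. the border $\bar p∶\bar X₁→Δ∅^⊙$; the natural candidate is $p^{(j)}∘π_j$, and border-compatibility together with the cone equations give $p^{(j')}∘π_{j'}=p^{(j)}∘π_j$ whenever $j$ and $j'$ are joined by a morphism, so that this composite is independent of $j$ precisely because $J$ is connected (a constant diagram over a connected category has its evident limit). The same remark produces $\bar g∶\bar X₁→𝐙₁$ from the $g^{(j)}∘π_j$, while $\bar q$ and $\bar f$ come directly from the limit universal property; the two equations transfer projection-by-projection as before, and the lifted cone is the unique limiting one. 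I expect this handling of the fixed leg $𝐙$ and of the constant base objects to be the only subtlety, and it is exactly what forces the restriction to \emph{connected} limits. As an alternative matching the paper's machinery, one could instead exhibit $U$ as a composite of projections from slices over fixed objects and from comma categories along the continuous functors $Δπ₁,Δπ₂,∂_𝐙$ — continuous since $Δ$ is a right adjoint by \cref{prop:algra} — each creating colimits and connected limits, and conclude by the cancellation \cref{lem:colimitcreation:cancellation} and its evident dual for connected limits.
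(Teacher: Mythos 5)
Your proof is correct and is essentially the argument the paper intends (its own proof is just ``Straightforward''): unfold the weighted limit and create the extra structure by universal property, with connectedness exactly where maps into the fixed objects $Δ∅^⊙$, $Δ𝐙₀$, $𝐙₁$ must be produced from a limit. One cosmetic slip: $\bar q∶\bar Y₁→Δ𝐙₀$ is also a map into a fixed object, so it needs the same connectedness argument as $\bar p$ and $\bar g$ rather than coming ``directly from the limit universal property''; since you have already given that argument, nothing is missing in substance.
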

\begin{proof}
  Straightforward.
\end{proof}

\begin{proposition}
  The category $𝐒𝐩𝐚𝐧/∅^⊙$ has as initial object the span
  $𝐙₀ ← ∅^⊙ → 𝐙$.
\end{proposition}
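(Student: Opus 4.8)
The plan is to read off the explicit description of $𝐒𝐩𝐚𝐧/∅^⊙$ from the preceding Remark and reduce everything to initiality of the empty presheaf in $\psh[𝔼𝕋]$. Recall that an object of $𝐒𝐩𝐚𝐧/∅^⊙$ amounts to a transition system $X₁ → Δ∅^⊙$ over $∅^⊙$, a transition system $Y₁ → Δ𝐙₀$ over $𝐙₀$, together with edge morphisms $X₁ → Y₁$ over $Δπ₁$ and $X₁ → 𝐙₁$ over $Δπ₂$ into the fixed right leg $𝐙$; a morphism is a pair $(g∶ Y₁ → Y'₁, f∶ X₁ → X'₁)$ in $\psh[𝔼𝕋]$ commuting with all this structure. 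Since the two slices $\psh[𝔼𝕋]/Δ∅^⊙$ and $\psh[𝔼𝕋]/Δ𝐙₀$ fix the vertex objects to $∅^⊙$ and $𝐙₀$, the only data left free in an object are the edge presheaves and the edge-level structure maps.

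First I would exhibit the candidate $𝐙₀ ← ∅^⊙ → 𝐙$ concretely: both legs carry the empty edge presheaf, i.e.\ $X₁ = Y₁ = ∅$, where $∅$ denotes the initial object of $\psh[𝔼𝕋]$, the borders $∅ → Δ∅^⊙$ and $∅ → Δ𝐙₀$ being the unique ones, and the two comparison maps $∅ → Y₁$ and $∅ → 𝐙₁$ likewise being the unique maps out of $∅$. This is manifestly a well-formed object, and its projection to $\psh[𝕍𝕋]$ is the span \eqref{eq:Wowspan}, as required by the chain construction.

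Then I would verify initiality directly. Given any object $(Y' ← X' → 𝐙)$, a morphism from the candidate is a pair $(g∶ ∅ → Y'₁, f∶ ∅ → X'₁)$ subject to the compatibility conditions of the Remark. Since $∅$ is initial in $\psh[𝔼𝕋]$, both $g$ and $f$ exist and are unique, and every required square---compatibility of $f$ over $Δ∅^⊙$, of $g$ over $Δ𝐙₀$, of the pair with the left leg $X₁ → Y₁$, and with the fixed right leg $X₁ → 𝐙₁$---commutes automatically, being an equality of morphisms out of $∅$. Hence there is exactly one morphism from the candidate to each object, so it is initial.

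There is essentially no hard step: the only point requiring care is the faithful unwinding of the weighted-limit description from the Remark, so as to confirm that the edge presheaves are the single piece of free data and that fixing them to $∅$ forces the comparison maps uniquely. Once that bookkeeping is done, initiality follows immediately from the universal property of the empty presheaf.
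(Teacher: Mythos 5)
Your proof is correct and matches what the paper intends: the paper states this proposition without proof, relying implicitly on the immediately preceding proposition that the forgetful functor $𝐒𝐩𝐚𝐧/∅^⊙ → \psh[𝔼𝕋]²$ creates colimits (hence the initial object, as the empty colimit, lifts uniquely from $(∅,∅)$). Your direct verification is exactly the unwinding of that argument, with the key points correctly identified: the vertex data are fixed by the slice structure, the edge presheaves of the candidate are empty, and all compatibility squares commute automatically as parallel morphisms out of the initial presheaf.
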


Since $𝐙₀ = \check{Σ}₁ⁿ(𝐙₀)$, this span has the desired form, and its
left-hand leg is trivially a functional flexible bisimulation, so we
may take it as our $R⁰$.

\begin{definition}
  Let $F$ denote the endofunctor on $𝐒𝐩𝐚𝐧/∅^⊙$ that maps any object
  \begin{center}
    \diag{%
      Y₁ \& X₁ \& 𝐙₁ \\
      Δ𝐙₀ \& Δ∅^⊙ \& 𝐙₀ %
    }{%
      (m-1-1) edge[<-,labela={}] (m-1-2) %
      edge[labell={}] (m-2-1) %
      (m-2-1) edge[<-,labelb={}] (m-2-2) %
      (m-1-2) edge[labela={}] (m-1-3) %
      edge[labell={}] (m-2-2) %
      (m-2-2) edge[labelb={}] (m-2-3) %
      (m-1-3) edge[labelr={}] (m-2-3) %
    }
  \end{center}
to
  \begin{center}
    \diag{%
      Σ₁(Y)₁ \& Σ₁(X)₁;(∼^*)^↑ \& 𝐙₁ \\
      Δ_{𝟚,𝐬,𝐥,𝐭}(Σ₀^?(𝐙₀),𝐙₀) \& Δ_𝐬(Σ₀^?(∅^⊙);∼^*) × Δ_𝐥(∅^⊙;𝐙₀) × Δ_𝐭(∅^⊙;∼^*) \& Δ_{𝟚,𝐬,𝐥,𝐭}(Σ₀^?(𝐙₀),𝐙₀) \\
      Δ𝐙₀ \& Δ∅^⊙ \& 𝐙₀. %
    }{%
      (m-1-1) edge[<-,labela={}] (m-1-2) %
      edge[labell={}] (m-2-1) %
      (m-2-1) edge[<-,labelb={}] (m-2-2) %
      (m-1-2) edge[labela={}] (m-1-3) %
      edge[labell={}] (m-2-2) %
      (m-2-2) edge[labelb={}] (m-2-3) %
      (m-1-3) edge[labelr={}] (m-2-3) %
      (m-2-1)
      edge[labell={}] (m-3-1) %
      (m-3-1) edge[<-,labelb={}] (m-3-2) %
      (m-2-2)
      edge[labell={}] (m-3-2) %
      (m-3-2) edge[labelb={}] (m-3-3) %
      (m-2-3) edge[labelr={}] (m-3-3) %

    }
  \end{center}
\end{definition}

\begin{definition}
Let $Rⁿ$ be the initial $F$-chain.
\end{definition}

\begin{lemma}
  The endofunctor $F$ preserves flexible simulations.
\end{lemma}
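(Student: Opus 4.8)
The plan is to fix an object $c = (Y ← X → 𝐙)$ of $𝐒𝐩𝐚𝐧/∅^⊙$ whose left-hand leg $g∶ X → Y$ is a functional flexible bisimulation — which, by Definition~\ref{def:flexible}, is exactly what it means for $c$ to be a flexible simulation — and to show that the left-hand leg of $F(c)$, namely the composite $Σ₁(X);(∼^*)^↑ → Σ₁(X) \xto{Σ₁(g)} Σ₁(Y)$, is again a functional flexible bisimulation. The right-hand legs and the projection to $∅^⊙$ are determined by the shape of $F$, so nothing needs checking there beyond naturality; by Proposition~\ref{prop:flexible-pw-pbk} the whole task reduces to the pointwise-weak-pullback property of the source-and-label square of this composite leg.

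First I would dispatch the purely syntactic half. Since $g$ is a functional flexible bisimulation in $σ\Trans$ and $Σ₁$ preserves functional flexible bisimulations by the hypothesis of~\cref{lem:key}, the morphism $Σ₁(g)∶ Σ₁(X) → Σ₁(Y)$ is a functional flexible bisimulation in $σ\Trans_𝟚$. Read through Proposition~\ref{prop:flexible-pw-pbk}, this says precisely that the source-and-label square of $Σ₁(g)$ is a pointwise weak pullback, so that $Σ₁(X)$, regarded as a span over $Σ₁(Y) × 𝐙$ with right leg factoring through the $\check{Σ}₁$-algebra structure of $𝐙$, is a flexible simulation.

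It then remains to absorb the composition with $(∼^*)^↑$ on the $𝐙$ side. The transition system $(∼^*)^↑$ is the cartesian lifting of the simulation $∼^*$ on $𝐙₀$, and is rigid in the sense that its labels are forced to agree, hence factors through the thin embedding $θ$. I would therefore compose the flexible simulation $Σ₁(X)$ with the rigid one $(∼^*)^↑$, in the spirit of Lemma~\ref{lem:flex:rigid:compo}, whose hypotheses call for an action $ρ∶ Σ₁(X)₀;∼^* → Σ₁(X)₀$; this is available because $Σ₁(X)₀ = ∅^⊙$ carries the action of bisimilarity on the Howe closure from Lemma~\ref{lem:wow:basicprops}\cref{item:act}, extended to $∼^*$ by transitivity. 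Concretely, I would factor the source-and-label square of the composite leg as the juxtaposition of the pointwise weak pullback obtained above with the genuine pullback arising from span composition, and then paste, using Lemma~\ref{lem:wpbk} and \citet[Lemma~9.26]{HL} exactly as in the proof of Lemma~\ref{lem:spancomp:flex}; taking images via Lemma~\ref{lem:images:flex} yields the relational form where needed.

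The step I expect to be the main obstacle is bookkeeping rather than conceptual. The endpoints here are heterogeneous — the span lands in $Σ₁(Y) × 𝐙$ rather than a single $W²$ — so Lemma~\ref{lem:flex:rigid:compo} does not apply verbatim, and I would either re-run its argument inside the triplopic framework $ℍ\Trans₃$ of the previous subsection, where source, target, and labels may vary independently, or verify the composite pointwise weak pullback directly. A further delicate point is the distinguished-vertex object $Σ₀^?(𝐙₀)$ appearing in the borders of $F(c)$: one must check that the source maps land compatibly in $𝐙₀ + Σ₀(∅^⊙)$ on both sides, so that $F(c)$ is genuinely a morphism of diplopic transition systems before the flexibility argument is run.
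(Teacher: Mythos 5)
Your overall architecture matches the paper's: reduce to a pointwise weak pullback via Proposition~\ref{prop:flexible-pw-pbk}, peel off the square handled by the hypothesis that $Σ₁$ preserves functional flexible bisimulations, and treat the composition with $(∼^*)^↑$ by a span-composition argument. The paper handles that second piece not by invoking Lemma~\ref{lem:flex:rigid:compo} but by a cube whose top and bottom faces are pullbacks by construction and whose right face is a pointwise weak pullback because $∼^*$ is a bisimulation, i.e.\ exactly the pattern of Lemma~\ref{lem:spancomp:flex}; your plan to re-run that argument is sound, and the action from Lemma~\ref{lem:wow:basicprops} is indeed what makes $F$ well defined.

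The gap is that the required square has a \emph{third} component that your decomposition does not account for. The left leg of $F(c)$ is a morphism in $𝐒𝐩𝐚𝐧/∅^⊙$, so its flexibility is measured against the fixed bottom border $Δ_{𝐬,𝐥}(∅^⊙) → Δ_{𝐬,𝐥}(𝐙₀)$, not against the intermediate borders $Δ_𝐬(Σ₀^?(∅^⊙);∼^*)×Δ_𝐥(∅^⊙;𝐙₀)$ and $Δ_{𝟚,𝐬,𝐥}(Σ₀^?(𝐙₀),𝐙₀)$ at which your two squares stop. There remains a bottom rectangle, comparing the collapse $Σ₀^?(∅^⊙);∼^* → ∅^⊙$ (via the $\Fwow$-algebra structure of $∅^⊙$) with the collapse $Σ₀^?(𝐙₀) → 𝐙₀$, and this rectangle must itself be shown to be a pointwise weak pullback. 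That is not the well-definedness check you flag at the end: the paper proves it by splitting the domain along the coproduct $Σ₀^? = \id + Σ₀$, observing that the $Σ₀$-summand yields an isomorphism of arrows, and that the identity summand --- the square with top edge $∅^⊙;∼^* → 𝐙₀$ and bottom edge $∅^⊙ → 𝐙₀$ --- is a weak pullback precisely because $∼^*$ is reflexive, which supplies a cone morphism from the genuine pullback. Without this step the outer rectangle is not covered, so the proof as proposed is incomplete; the missing ingredient (reflexivity of $∼^*$ and the coproduct splitting of $Σ₀^?$) is local and repairable, but it is a genuine obligation, not bookkeeping.
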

\begin{proof}
  Given any span $Y ← X → 𝐙$ over~\eqref{eq:Wowspan}, the left leg of
  its image under $F$ is functional flexible simulation iff the following pasting is a pointwise weak pullback.
  \begin{center}
\begin{tikzcd}[ampersand replacement=\&]
	{Σ₁(X)₁;(∼^*)^↑} \& {Σ₁(X)₁} \& {Σ₁(Y)_1} \\
	\\
	{Δ_𝐬(Σ₀^?(∅^⊙);∼^*) × Δ_𝐥(∅^⊙;𝐙₀)} \& {Δ_{\mathbb{2},𝐬,𝐥}(Σ₀^?(∅^⊙),∅^⊙)} \& {Δ_{\mathbb{2},𝐬,𝐥}(Σ₀^?(𝐙₀),𝐙₀)} \\
	{Δ_{𝐬,𝐥}∅^⊙} \&\& {Δ_{𝐬,𝐥}𝐙₀}
	\arrow[from=4-1, to=4-3]
	\arrow[from=3-1, to=4-1]
	\arrow[from=3-3, to=4-3]
	\arrow[from=1-3, to=3-3]
	\arrow[from=3-1, to=3-2]
	\arrow[from=3-2, to=3-3]
	\arrow[from=1-2, to=3-2]
	\arrow[from=1-2, to=1-3]
	\arrow[from=1-1, to=3-1]
	\arrow[from=1-1, to=1-2]
\end{tikzcd}
  \end{center}

  For this, by \citet[Lemma~9.26,(i)]{HL}, it suffices to prove that
  all three inner polygons are pointwise weak pullbacks.  The top
  right square is a pointwise weak pullback because $Σ₁$ preserves
  functional flexible bisimulations.  The top left square also is a
  pointwise weak pullback, as the left face of the following cube.
    \begin{center}
\begin{tikzcd}[ampersand replacement=\&]
	{Σ₁(X)₁;(∼^*)^↑} \&\& {(∼^*)^↑} \\
        \& {Σ₁(X)₁} \&\& {𝐙₁} \\
	{Δ_𝐬(Σ₀^?(∅^⊙);∼^*) × Δ_𝐥(∅^⊙;𝐙₀)} \&\& {Δ_𝐬(∼^*)×Δ_𝐥(𝐙₀)} \\
	\& {Δ_{\mathbb{2},𝐬,𝐥}(Σ₀^?(∅^⊙),∅^⊙)} \&\& {Δ_{𝐬,𝐥}(𝐙₀)}
	\arrow[from=3-1, to=4-2]
	\arrow[from=2-2, to=4-2]
	\arrow[from=1-1, to=2-2]
	\arrow[from=2-2, to=2-4]
	\arrow[from=4-2, to=4-4]
	\arrow[from=2-4, to=4-4]
	\arrow[from=1-3, to=2-4]
	\arrow[from=1-3, to=3-3]
	\arrow[from=3-3, to=4-4]
	\arrow[from=1-1, to=1-3]
	\arrow[from=3-1, to=3-3]
	\arrow[from=1-1, to=3-1]
\end{tikzcd}
    \end{center}
    Indeed, the top and bottom faces are pullbacks by construction,
    and the right face is a pointwise weak pullback because $∼^*$ is a
    bisimulation. The left face being a pointwise weak pullback thus
    follows by \citet[Lemma~9.26,(i)]{HL}.

    Finally, for the bottom rectangle, its domain is 
    $$
    \begin{array}{rcl}
      Δ_𝐬(Σ₀^?(∅^⊙);∼^*) × Δ_𝐥(∅^⊙;𝐙₀)             & ≅ & Δ_𝐬(Σ₀^?(∅^⊙);∼^*) × Δ_𝐥(∅^⊙) \\
                                     & ≅ & Δ_𝐬(Σ₀(∅^⊙);∼^* + ∅^⊙;∼^*) × Δ_𝐥(∅^⊙) \\
                                     & ≅ & (Δ_𝐬(Σ₀(∅^⊙);∼^*) + Δ_𝐬(∅^⊙;∼^*)) × Δ_𝐥(∅^⊙) \\
                                     & ≅ & Δ_𝐬(Σ₀(∅^⊙);∼^*) × Δ_𝐥(∅^⊙) + Δ_𝐬(∅^⊙;∼^*)× Δ_𝐥(∅^⊙) \\
                                     & ≅ & Δ_{𝟚,𝐬,𝐥}((Σ₀(∅^⊙);∼^*),∅^⊙) +     Δ_{𝟚,𝐬,𝐥}((∅^⊙;∼^*),∅^⊙).
    \end{array}$$
    Similarly, we have
    $Δ_{𝟚,𝐬,𝐥}(Σ₀^?𝐙₀,𝐙₀) ≅ Δ_{𝟚,𝐬,𝐥}(Σ₀𝐙₀,𝐙₀) + Δ_{𝟚,𝐬,𝐥}(Σ₀^?𝐙₀,𝐙₀)$.
    The rectangle is thus obtained by applying $Δ_{𝟚,𝐬,𝐥}$ to the (vertical) copairing of
    the following two squares.
    \begin{center}
\begin{tikzcd}[ampersand replacement=\&]
	{Σ₀(∅^⊙);∼^*} \& {Σ₀𝐙₀} \\
	{∅^⊙} \& {𝐙₀}
	\arrow["{≅}"', from=1-1, to=2-1]
	\arrow[from=1-1, to=1-2]
	\arrow["{≅}", from=1-2, to=2-2]
	\arrow[from=2-1, to=2-2]
      \end{tikzcd}
      \quad
\begin{tikzcd}[ampersand replacement=\&]
	{∅^⊙;∼^*} \& {𝐙₀} \\
	{∅^⊙} \& {𝐙₀}
	\arrow[from=1-1, to=1-2]
	\arrow[from=1-1, to=2-1]
	\arrow[from=2-1, to=2-2]
	\arrow[Rightarrow, no head, from=1-2, to=2-2]
      \end{tikzcd}
    \end{center}
    Because pointwise weak pullbacks are closed under (vertical)
    copairing and preserved by $Δ_{𝟚,𝐬,𝐥}$, it thus suffices to show
    that both squares are pointwise weak pullbacks.  The left square
    is one as an isomorphism in the arrow category.  The right square
    is one because it admits a cone morphism from the actual pullback,
    using reflexivity of $∼^*$ as in
    \begin{center}
\begin{tikzcd}[ampersand replacement=\&]
	{∅^⊙} \\
	\& {∅^⊙;∼^*} \& {𝐙₀} \\
	\& {∅^⊙} \& {𝐙₀\rlap{.}}
	\arrow[Rightarrow, no head, from=1-1, to=3-2]
	\arrow[from=1-1, to=2-3]
	\arrow[from=1-1, to=2-2]
	\arrow[from=2-2, to=2-3]
	\arrow[from=2-2, to=3-2]
	\arrow[from=3-2, to=3-3]
	\arrow[Rightarrow, no head, from=2-3, to=3-3]
\end{tikzcd}            
    \end{center}
\end{proof}

\section{Proof of Theorem~\ref{thm:cellular}}
\label{app:proof-cellular}

We assume some basic knowledge of familial functors.  In particular,
there is a well-known alternative characterisation in terms of
\alert{generic-free} factorisation, across which border arities are
characterised as follows.
\begin{proposition}
  In the setting of~\cref{def:border}, for any $α ∈ 𝔼𝕋$ and
  $r ∈ 𝒰'_𝟚Σ₁(1)(α)$, the border arity $𝐛ᵣ$ is isomorphic to the
  morphism $ϕ$ obtained by first factoring $r$ as
  $𝐲_α \xto{ξᵣ} 𝒰'_𝟚Σ₁(B) \xto{𝒰'_𝟚Σ₁(!)} 𝒰'_𝟚Σ₁(1)$ with $ξᵣ$
  generic, and then $ξᵣ ∘ j_{𝟚,α}$ as $F(ϕ)∘χᵣ$ with $χᵣ$ generic.
  \begin{center}
    \diag{%
      𝐲_{𝐬(α)_D} + ∑_{i ∈ n_α} 𝐲_{(𝐥^αᵢ)_V} \& 𝐲_α \\
      𝒰'_𝟚(Σ₁(A)) \& 𝒰'_𝟚(Σ₁(B)) %
    }{%
      (m-1-1) edge[labela={j_{𝟚,α}}] (m-1-2) %
      edge[labell={χᵣ}] (m-2-1) %
      (m-2-1) edge[labelb={𝒰'_𝟚(Σ₁(ϕ))}] (m-2-2) %
      (m-1-2) edge[labelr={ξᵣ}] (m-2-2) %
    }
  \end{center}
\end{proposition}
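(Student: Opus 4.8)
The plan is to use the standard dictionary between the familial structure of $F ≔ 𝒰'_𝟚Σ₁$ (encoded by its exponent $E$) and the generic-free factorisation system, reducing the whole statement to the explicit formula for the action of morphisms of $𝕍𝕋[𝕍𝕋][𝔼𝕋]$ on $F$ recalled just after the definition of familiality. Throughout I identify a map $𝐲_d → F(C)$ with the corresponding element of $F(C)(d) ≅ ∑_{o ∈ F(1)(d)} 𝒞(E(d,o),C)$, and I rely on the well-known fact that under this identification the generic-free factorisation of an element $(o,h∶ E(d,o) → C)$ is $F(h) ∘ χ$, where $χ∶ 𝐲_d → F(E(d,o))$ is the generic corresponding to $(o,\id_{E(d,o)})$.

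First I would unfold the two factorisations. Factoring $r ∈ F(1)(α)$, which corresponds to $(r,{!}∶ E(α,r) → 1)$, yields $B ≅ E(α,r)$ and identifies $ξᵣ$ with the generic $(r,\id_{E(α,r)}) ∈ F(E(α,r))(α)$. I would then compute $ξᵣ ∘ j_{𝟚,α}$ componentwise over the coproduct $𝐲_{𝐬(α)_D} + ∑_{i ∈ n_α} 𝐲_{(𝐥^αᵢ)_V} = ∑_{k ∈ I_α} 𝐲_{Aₖ}$, where $Aₖ$ is the domain of $k∶ Aₖ → α$. Since $j_{𝟚,α} = [𝐲_k]_{k ∈ I_α}$, the $k$-th component of $ξᵣ ∘ j_{𝟚,α}$ is $ξᵣ ∘ 𝐲_k$, which by Yoneda equals the presheaf action $F(B)(k)(ξᵣ)$. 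Applying the action formula with $o' = r$, target $B = E(α,r)$, and map part $\id_{E(α,r)}$, this image is exactly $(r·k,\ E(k↾r)∶ E(Aₖ,r·k) → E(α,r))$. Hence each component factors as $F(E(k↾r)) ∘ χ_{r·k}$, with $χ_{r·k}∶ 𝐲_{Aₖ} → F(E(Aₖ,r·k))$ the generic associated to $r·k$.

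It then remains to assemble these componentwise factorisations into a single generic-free factorisation of $ξᵣ ∘ j_{𝟚,α}$ and to recognise the free part. For this I would first establish the small closure lemma that a coproduct of generics is generic: if each $χ_{r·k}$ is generic, then $χᵣ ≔ [F(inₖ) ∘ χ_{r·k}]_{k ∈ I_α}∶ ∑_{k ∈ I_α} 𝐲_{Aₖ} → F(∑_{k} E(Aₖ,r·k))$ is generic. This follows directly from Weber's universal property of generic morphisms: given a test square, one restricts it to each summand, invokes genericity of $χ_{r·k}$ to obtain a unique diagonal $Aₖ$-component, and cotuples these into the required unique diagonal. Setting $ϕ ≔ 𝐛ᵣ = [E(k↾r)]_{k ∈ I_α}∶ ∑_{k} E(Aₖ,r·k) → E(α,r) = B$, a check on components gives $F(𝐛ᵣ) ∘ χᵣ = ξᵣ ∘ j_{𝟚,α}$, because $F(𝐛ᵣ) ∘ F(inₖ) ∘ χ_{r·k} = F(E(k↾r)) ∘ χ_{r·k}$ is precisely the $k$-th component computed above. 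Thus $(χᵣ,𝐛ᵣ)$ is a generic-free factorisation of $ξᵣ ∘ j_{𝟚,α}$, and by uniqueness of such factorisations up to isomorphism we conclude $ϕ ≅ 𝐛ᵣ$.

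The only step that is not pure bookkeeping with the familial data is the closure of generic morphisms under coproducts, needed to glue the componentwise factorisations into the factorisation of the cotupling $ξᵣ ∘ j_{𝟚,α}$. I expect this to be the main obstacle, although the sketch above shows the argument is short; everything else is a direct translation between the exponent description of $E(k↾r)$ and the generic-free description, after which the identification $ϕ ≅ 𝐛ᵣ$ is forced by uniqueness of the factorisation.
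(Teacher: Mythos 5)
Your proof is correct. The paper actually states this proposition without proof, presenting it as part of the ``well-known'' dictionary between the exponent presentation of familial functors and generic-free factorisations, so there is nothing to compare against; your argument is the standard one and fills the gap soundly. In particular you correctly isolate the one non-bookkeeping step --- that a cotupling $[F(in_k)\circ\chi_k]_k$ of generics composed with coproduct injections is again generic --- and your sketch of its proof (restrict the test square to each summand, use genericity of each $\chi_k$, cotuple the resulting diagonals, and get uniqueness by restricting any candidate diagonal back to the summands) is exactly what is needed; the rest is the action formula for $F(B)(k)$ applied to $(r,\id_{E(\alpha,r)})$ plus uniqueness of generic-free factorisations up to isomorphism.
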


\begin{lemma}\label{lem:characfib}
  For any categories with generating cofibrations $(𝒜,𝕁)$ and $(ℬ,𝕂)$,
 a familial functor $F∶ 𝒜 → ℬ$
  preserves fibrations iff it is \alert{cellular}, in the sense that
  for all commuting squares
  \begin{equation}
    \hfil \diag{%
      C \& D \\
      F(X) \& F(Y)%
    }{%
      (m-1-1) edge[labela={k}] (m-1-2) %
      edge[labell={ξ}] (m-2-1) %
      (m-2-1) edge[labelb={F(δ)}] (m-2-2) %
      (m-1-2) edge[labelr={χ}] (m-2-2) %
    }
    \label{eq:gensquare}
  \end{equation}
  with $k ∈ 𝕂$ and $ξ$ and $χ$ generic,
  $δ$ is a cofibration (i.e., $δ ∈ \wbotrightleft{𝕁}$).
\end{lemma}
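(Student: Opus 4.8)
The plan is to run the whole argument through the universal property of generic morphisms together with the generic–free factorisation that every familial functor enjoys (recalled just above): a morphism $ξ∶ C → F(X)$ is generic exactly when every commuting square $F(p)∘ξ = F(q)∘u$ (with $u∶ C → F(A)$, $p∶ X → B$, $q∶ A → B$) admits a \emph{unique} diagonal $w∶ X → A$ with $F(w)∘ξ = u$ and $p∘w = q$, and every $u∶ C → F(X)$ factors as $F(a)∘ξ$ with $ξ$ generic. Throughout, ``fibration'' means membership in $\wbotright{𝕁}$ (resp.\ $\wbotright{𝕂}$) and ``cofibration'' in $\wbotrightleft{𝕁}$.

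For the ``if'' direction I would fix $f ∈ \wbotright{𝕁}$ and a lifting problem $(u,v)∶ k → F(f)$ with $k ∈ 𝕂$, and factor $u = F(a)∘ξ$ and $v = F(b)∘χ$ generically. Rearranging $F(f)∘u = v∘k$ into a commuting square over the generic $ξ$ and applying its universal property produces a unique $δ$ with $F(δ)∘ξ = χ∘k$ and $b∘δ = f∘a$. This is precisely a square of the shape \cref{eq:gensquare}, with top edge $k ∈ 𝕂$ and generic verticals $ξ,χ$, so cellularity gives $δ ∈ \wbotrightleft{𝕁}$. The square $b∘δ = f∘a$ then pits a cofibration against a fibration, hence admits a lift $\ell$, and $F(\ell)∘χ$ solves the original problem, the two triangle identities reducing to $\ell∘δ = a$, $f∘\ell = b$, and $χ∘k = F(δ)∘ξ$.

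For the ``only if'' direction I assume $F$ preserves fibrations, take a square \cref{eq:gensquare}, and must show $δ ∈ \wbotrightleft{𝕁}$, i.e.\ solve an arbitrary lifting problem $b∘δ = f∘a$ against some $f ∈ \wbotright{𝕁}$. Since $F(f) ∈ \wbotright{𝕂}$ by hypothesis, the image square $(F(a)∘ξ,\,F(b)∘χ)∶ k → F(f)$ — which commutes because $χ∘k = F(δ)∘ξ$ and $f∘a = b∘δ$ — admits a lift $m∶ D → F(X')$. Genericity of $χ$ applied to $F(f)∘m = F(b)∘χ$ yields a unique $\ell∶ Y → X'$ with $F(\ell)∘χ = m$ and $f∘\ell = b$; this is the candidate lift. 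One triangle holds by construction, and the other, $\ell∘δ = a$, follows from genericity of $ξ$: both $\ell∘δ$ and $a$ satisfy $F(-)∘ξ = F(a)∘ξ$ and $f∘(-) = f∘a$, so the uniqueness clause forces them equal.

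I expect no genuine obstruction beyond careful type bookkeeping: the entire proof is a ballet of the \emph{uniqueness} halves of genericity, used to transport equalities across the two generic–free factorisations and between the two categories. The one point that deserves attention is verifying, in the ``if'' direction, that the square manufactured from $ξ$, $χ$ and $δ$ really is of the form \cref{eq:gensquare} (same $k$, both verticals generic), so that the cellularity hypothesis applies verbatim; once this is checked, the surrounding diagram chases are routine.
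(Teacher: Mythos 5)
Your proof is correct: both directions check out, including the two uniqueness-of-genericity arguments used to verify the triangle identities, and the point you flag (that the manufactured square really has $k ∈ 𝕂$ on top and generics on both sides) does hold. The paper itself gives no argument here — it defers to \citet[Lemma~7.28]{HL} ``mutatis mutandis'' — and your write-up is precisely the standard generic--free-factorisation proof that that citation stands for, so this is essentially the same approach, just spelled out.
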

\begin{proof}
  This is a straightforward generalisation of~\citet[Lemma~7.28]{HL},
  whose proof applies \emph{mutatis mutandis}.
\end{proof}

\begin{proof}[Proof of Theorem~\ref{thm:cellular}]
  We assume given a dynamic signature $Σ₁∶ σ\Trans → σ\Trans_𝟚$ such that
  $𝒰_𝟚Σ₁$ is familial. 
  By that that 
  By Proposition~\ref{prop:ffbisim-fib},
  $Σ₁$ preserves functional flexible bisimulations if and only if $𝒰_𝟚Σ₁$ maps 
  $𝕁_σ$-fibrations to $𝕁_{𝟚,σ}$-fibrations, or equivalently, by
  Lemma~\ref{lem:characfib},
  if it is cellular.

  Clearly, $𝒰_𝟚Σ₁$ is cellular, then the border
  arities of all rules are $𝕁_σ$-cofibrations, by a straightforward
  instantiation
  of Diagram~\ref{eq:gensquare}.
  Conversely, assume that all rules are $𝕁_σ$-cofibrations and consider a
  commuting square as in Diagram~\ref{eq:gensquare}, taking $F=𝒰_𝟚Σ₁$, specialised to the
  involved sets of cofibrations:
  \begin{center}
  \diag{%
    𝐲_{𝐬(α)_D} + ∑_{i ∈ n_α} 𝐲_{(𝐥^αᵢ)_V} \& 𝐲_α \\
    F(A) \& F(B)\rlap{.} %
  }{%
    (m-1-1) edge[labela={j_{𝟚,α}}] (m-1-2) %
    edge[labell={χᵣ}] (m-2-1) %
    (m-2-1) edge[labelb={F(ϕ)}] (m-2-2) %
    (m-1-2) edge[labelr={ξᵣ}] (m-2-2) %
  }
\end{center}
The result follows by considering the rule $𝐲_α \xto{ξᵣ}F(B) \xto{F(!)} F(1)$
and exploiting uniqueness (up to isomorphism) of generic
factorisations~\cite[Remark 7.19]{HL}.
\end{proof}
\end{document}